\keywords{Automata, Discounted-sum, Quantitative verification, NMDA, NDA}
\newcommand{\cmark}{\textcolor{green}{\ding{51}}}
\newcommand{\xmark}{\textcolor{red}{\ding{55}}}
\newcommand{\qmark}{\textcolor{blue}{\textbf{?}}}
\newcommand{\ST}{~{\big |}\:}
\newcommand{\St}{~|~}
\newcommand{\con}{\cdot}
\newcommand{\tuple}[1]{\langle #1  \rangle}
\newcommand{\Nat}{\ensuremath{\mathbb{N}}\xspace}
\newcommand{\Rat}{\ensuremath{\mathbb{Q}}\xspace}
\newcommand{\A}{{\mathcal{A}}}
\newcommand{\B}{{\mathcal{B}}}
\renewcommand{\C}{{\mathcal{C}}}
\newcommand{\C}{{\mathcal{C}}}
\newcommand{\D}{{\mathcal{D}}}
\newcommand{\M}{{\mathcal{M}}}                      
\newcommand{\T}{{\mathcal{T}}}
\newcommand{\N}{{\mathcal{N}}}
\newcommand{\emptyword}{\varepsilon}
\newlength{\dhatheight}
\newcommand{\doublehat}[1]{%
	\settoheight{\dhatheight}{\ensuremath{\hat{#1}}}%
	\addtolength{\dhatheight}{-0.35ex}%
	\hat{\vphantom{\rule{1pt}{\dhatheight}}%
		\smash{\hat{#1}}}}
\newcommand{\Func}[1]{\mathtt{#1}}
\newcommand{\image}{\Func{image}}
\newcommand{\Gap}{\Func{gap}}
\newcommand{\Cost}{\Func{cost}}
\newcommand{\Val}{\Func{val}}
\newcommand{\inc}{\mbox{\sc inc}\xspace}
\newcommand{\dec}{\mbox{\sc dec}\xspace}
\newcommand{\goto}{\mbox{\sc goto }}
\newcommand{\halt}{\mbox{\sc halt}\xspace}
\newcommand{\jz}[3]{\mbox{\sc if $#1$=0 goto $#2$ else goto $#3$}\xspace}
\newcommand{\CMrun}{\psi}
\newcommand{\numof}[2]{\#(#1,#2)}
\newcommand{\qfr}{q_{\mathsf{freeze}}}
\newcommand{\qhalt}{q_{\mathsf{halt}}}
\newcommand{\qhc}{q_{\mathsf{HC}}}
\newcommand{\qzc}{q_{\mathsf{ZC}}^c}
\newcommand{\qpc}[1]{q_\mathsf{PC#1}^c}
\newcommand{\incdec}{\Sigma^{\inc\dec}}
\newcommand{\allgoto}{\Sigma^{\mbox{\sc goto}}}
\newcommand{\withouthalt}{\Sigma^{\mbox{\sc nohalt}}}
\newcommand{\pref}{\mbox{\sc pref}\xspace}
\newcommand{\checker}[2]{\vspace{5pt}\noindent{\bf #1 Checker#2.}}
\newcommand{\lchecker}[2]{\vspace{5pt}\noindent{\it #1 Checker#2.}}
\newcommand*\circled[1]{\tikz[baseline=(char.base)]{\node[shape=circle,draw,inner sep=1pt] (char) {#1};}}
\crefname{figure}{Figure}{Figures}
\crefname{defi}{Definition}{Definitions}
\crefname{thm}{Theorem}{Theorems}
\crefname{lem}{Lemma}{Lemmas}
\crefname{cor}{Corollary}{Corollaries}
\crefname{prop}{Proposition}{Propositions}
\crefname{enumi}{Item}{Items}
\crefname{rem}{Remark}{Remarks}
\begin{document}

\title{Discounted-Sum Automata with Multiple Discount Factors}
\titlecomment{The article extends \cite{BH21} and parts of \cite{BH23}.}

\author[U.~Boker]{Udi Boker\lmcsorcid{0000-0003-4322-8892}}
\author[G.~Hefetz]{Guy Hefetz\lmcsorcid{0000-0002-4451-6581}}
\thanks{Research supported by the Israel Science Foundation grant 2410/22.}
\address{Reichman University, Herzliya, Israel}	
\email{udiboker@runi.ac.il, ghefetz@gmail.com}






\begin{abstract}
Discounting the influence of future events is a key paradigm in economics and it is widely used in computer-science models, such as games, Markov decision processes (MDPs), reinforcement learning, and automata. 
While a single game or MDP may allow for several different discount factors, nondeterministic discounted-sum automata (NDAs) were only studied with respect to a single discount factor. 
It is known that every class of NDAs with an integer as the discount factor has good computational properties: It is closed under determinization and under the algebraic operations min, max, addition, and subtraction, and there are algorithms for its basic decision problems, such as automata equivalence and containment.
Extending the integer discount factor to an arbitrary rational number, loses most of these good properties.

We define and analyze nondeterministic discounted-sum automata in which each transition can have a different integral discount factor (integral \emph{NMDAs}). 
We show that integral NMDAs with an arbitrary choice of discount factors are not closed under determinization and under algebraic operations and that their containment problem is undecidable.
We then define and analyze a restricted class of integral NMDAs, which we call \emph{tidy NMDAs}, in which the choice of discount factors depends on the prefix of the word read so far. Among their special cases are NMDAs that correlate discount factors to actions (alphabet letters) or to the elapsed time. 
We show that for every function $\theta$ that defines the choice of discount factors, the class of $\theta$-NMDAs enjoys all of the above good properties of NDAs with a single integral discount factor, as well as the same complexity of the required decision problems. Tidy NMDAs are also as expressive as deterministic integral NMDAs with an arbitrary choice of discount factors.

All of our results hold for both automata on finite words and automata on infinite words.
\end{abstract}

\maketitle

\section{Introduction} \label{sec:intro}

Exponential growth and decay are natural physical phenomena (nuclear radiation over time, signal strength along distance, population along generations, etc.), and are central in economics (e.g., money value over time, considering inflation or interest rates.)

As a result, discounted summation, which formulates accumulation in the presence of exponential growth or decay (depending on whether we look forward or backward), is a central valuation function in various computational models, such as games (e.g.,\  \cite{ZP96,DiscountingInSystems,Andersson06,ACSU24}), Markov decision processes (e.g,\ \cite{DiscountedMarkov,DiscountedDeterministicMarkov,MultiObjectiveDiscountedReward,ECC22}), reinforcement learning (e.g,\ \cite{IntroductionToReinforcementLearning,KBKS19,PMLR20,Sensors22,PMLR22}), and automata (e.g,\ \cite{Skew,AlternatingWeightedAutomata,ExpressivenessQuantitativeLanguages,CDH10,ComparatorAutomataInQuantitativeVerification,AD24}).

A Nondeterministic Discounted-sum Automaton (NDA) is an automaton with rational weights on the transitions, and a fixed rational discount factor $\lambda > 1$. 
The value of a (finite or infinite) run is the discounted summation of the weights on the transitions, such that the weight in the $i$th position of the run is divided by $\lambda^i$. 
The value of a (finite or infinite) word is the minimal value of the automaton runs on it.
An NDA $\A$ expresses a function from words to real numbers, and we write $\A(w)$ for the value of $\A$ on a word $w$. (We further have, by \cite[Theorems 4.10 and 4.15, and Corollary 4.12]{BHMS23}, that this function is continuous.)

In the Boolean setting, where automata express languages, closure under the basic Boolean operations of union, intersection, and complementation is desirable, as it allows to use automata in formal verification, logic, and more. In the quantitative setting, where automata express functions from words to numbers, 
the above Boolean operations are naturally generalized to algebraic ones: union to $\min$, intersection to $\max$, and complementation to multiplication by $-1$ (depending on the function's co-domain). Likewise, closure under these algebraic operations, as well as under addition and subtraction, is desirable for quantitative automata, serving for quantitative verification.
Determinization is also very useful in automata theory, as it gives rise to many algorithmic solutions, and is essential for various tasks, such as synthesis and probabilistic model checking\footnote{In some cases, automata that are ``partially deterministic'' \cite{Bok22}, such as limit-deterministic \cite{Var85b}, good-for-games \cite{HP06}, or history-deterministic automata \cite{BL21} suffice.}.

NDAs cannot always be determinized \cite{CDH10}, they are not closed under basic algebraic operations \cite{BH14}, and basic decision problems on them, such as universality, equivalence, and containment, are not known to be decidable and relate to various longstanding open problems \cite{TDS}.
However, restricting NDAs to have integral discount factors, called \emph{integral NDAs}, provides for every discount factor $\lambda\in\mathbb{N}$ a robust class of automata that is closed under determinization and under the algebraic operations, and for which the decision problems of universality equivalence, and containment are decidable \cite{BH14}.

Various variants of NDAs are studied in the literature, among which are \emph{functional}, \emph{k-valued}, \emph{comparator}, \emph{probabilistic}, and more \cite{QuantitativeLanguagesDefinedByFunctionalAutomata,FiniteValuedWeightedAutomata,ComparatorAutomataInQuantitativeVerification,ProbabilisticWeightedAutomata}.
Yet, to the best of our knowledge, all of these models are restricted to have a single discount factor in an automaton. 
This is a significant restriction of the general discounted-summation paradigm, in which multiple discount factors are considered. For example, Markov decision processes and discounted-sum games allow for multiple discount factors within the same entity \cite{DiscountedMarkov, Andersson06}.

A natural extension to NDAs is to allow for different discount factors over the transitions (\Cref{fig:Schematic}), providing the ability to model systems in which each action (alphabet letter in the automaton) causes a different discounting, systems in which the discounting changes over time, and more.

Indeed, looking into the phenomena mentioned in the beginning of the Introduction, one may observe that while for some of them there is a constant ratio of exponential growth/decay, for others it varies. For example, while half life-time of a nuclear material is constant, WiFi signal attenuation depends on the medium it goes through (the attenuation ratio is different when traveling through walls, glass, and doors), and the value of money exponentially grows/decays according to the interest rate (which, for instance with the United States federal funds rate, might change 8 times a year).

Taking the automaton in \Cref{fig:Schematic}, one may view it as modelling the financial outcome of various scenarios -- each transition stands for an event ($a$ or $b$) that occurs each month and results with some income or expense ($w_1,w_2,w_3$). Yet, due to a monthly interest rate $\lambda$, a $\$$100 earned today (on the first transition) is worth more than a $\$$100 earned next month (on the second transition), which is worth more than a $\$$100 two month from now. The value of a $\$100$ earned on the $n$th month, in terms of today's value, is $\$100 / \lambda^{n-1}$. Yet, as the interest rate might change every month, we cannot have a single discount factor $\lambda$, but rather allow for multiple ones. In \Cref{fig:Schematic}, the discount factor depends on the input letter, which may be related to both the income/expense event and the considered interest rate. Then, the value of a $\$100$ on the $n$th transition, in terms of today's value, is $\$100 / (\lambda_1 \lambda_2 \cdots \lambda_{n-1})$, where $\lambda_i$ is the discount factor of the $i$th transition (month).

As integral NDAs provide robust automata classes, whereas non-integral NDAs do not, we look into extending integral NDAs into integral \emph{NMDAs} (\Cref{def:NMDA,fig:Schematic,fig:NMDAExample}), allowing multiple integral discount factors in a single automaton.

\begin{figure}
	\centering
	\begin{tikzpicture}[->,>=stealth',shorten >=1pt,auto,node distance=2.6cm, semithick, initial text=, every initial by arrow/.style={|->}]
		\node[initial, state] (q0) {$q_0$}; 
		\node[state] (q1) [right of=q0] {$q_1$};
		
		\node[right of=q1, align=left, xshift=4.2cm] 
		{\begin{tabular}{p{1.75cm}l}
			\multicolumn{2}{l}{The transition labels:}\\
			$a,b:$&Input letters (actions/events)\\
			$w_1,w_2,w_3:$&Weights (rewards/costs/\ldots)\\
			$\lambda, \lambda':$&Discount factors (growth/decay ratios)
		\end{tabular}};
		
		\path 
		
		(q0) edge	[loop above, out=70, in=120,looseness=5] node [above, xshift=0.0cm, yshift=0.0cm] {$a,w_1,\lambda$} (q0)

		(q0) edge [above, out=20,in=160] node[yshift=-0.07cm] {$b,w_2,\lambda'$} (q1)

		(q1) edge	[below, out=-160,in=-20] node {$b,w_3,\lambda'$} (q0)

		(q1) edge	[loop above, out=30, in=80, looseness=4] node [above, xshift=0.0cm, yshift=-0.0cm]{$a,w_2,\lambda$} (q1)
		(q1) edge	[loop below, out=-30, in=-80,looseness=4] node [below, xshift=0.0cm, 	yshift=0.0cm]{$b,w_1,\lambda'$} (q1)
		;
	\end{tikzpicture}
	\caption{\label{fig:Schematic}A nondeterministic discounted-sum automaton with multiple discount factors (NMDA).}
\end{figure}

We start with analyzing NMDAs in which the integral discount factors can be chosen arbitrarily. Unfortunately, we show that this class of automata does not allow for determinization, is not closed under the basic algebraic operations, and its containment problem is undecidable. 

For more restricted generalizations of integral NDAs, in which the discount factor depends on the transition's letter (\emph{letter-oriented} NMDAs) or on the elapsed time (\emph{time-oriented} NMDAs), we show that the corresponding automata classes do enjoy all of the good properties of integral NDAs, while strictly extending their expressiveness.

We further analyze a rich class of integral NMDAs that extends both letter-oriented and time-oriented NMDAs, in which the choice of discount factor depends on the word-prefix read so far (\emph{tidy} NMDAs). 
We show that their expressiveness is as of deterministic  integral NMDAs with an arbitrary choice of discount factors and that for every choice function $\theta: \Sigma^+ \to \Nat\setminus\{0,1\}$, the class of $\theta$-NMDAs enjoys all of the good properties of integral NDAs. (See \Cref{fig:introClasses}.)

Considering closure under algebraic operations, we further provide tight bounds on the size blow-up involved in the different operations (\Cref{tbl:blow-up}).
To this end, we provide new lower bounds also for the setting of NDAs, by developing a general scheme to convert every NFA to a corresponding NDA of linearly the same size, and to convert some specific NDAs back to corresponding NFAs.

As for the decision problems of tidy NMDAs, we provide a PTIME algorithm for emptiness and PSPACE algorithms for the other problems of exact-value, universality, equivalence, and containment. The complexities are with respect to the automaton (or automata) size, which is considered as the maximum between the number of transitions and the maximal binary representation of any discount factor or weight in it.
These new algorithms also improve the complexities of the previously known algorithms for solving the decision problems of NDAs, which were PSPACE with respect to unary representation of the weights.
For rational weights, we assume all of them to have the same denominator. (Omitting this assumption changes in the worst case the PSPACE algorithms into EXPSPACE ones.)

As general choice functions need not be finitely represented, it might upfront limit the usage of tidy NMDAs. 
Yet, we show that finite transducers (Mealy machines) suffice, in the sense that they allow to represent every choice function $\theta$ that can serve for a $\theta$-NMDA.
We provide a PTIME algorithm to check whether a given NMDA is tidy, as well as if it is a $\T$-NMDA for a given transducer $\T$.

We show all of our results for both automata on finite words and automata on infinite words. Whenever possible, we provide a single proof for both settings.
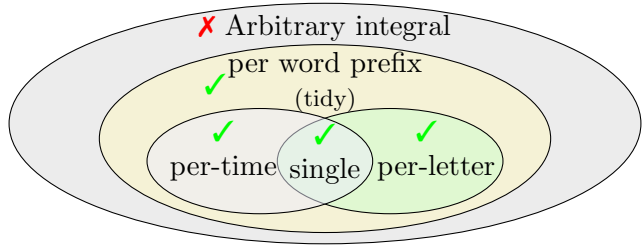
\begin{figure}
	\centering
	\begin{tikzpicture}
		\begin{scope}[fill opacity=0.5]
			\filldraw[fill=gray!30!white, draw=black]   ( 0:0) ellipse (4.2 and 1.6);
			\filldraw[fill=yellow!30!white, draw=black]   ( -90:0.2) ellipse (3 and 1.25);
			\filldraw[fill=green!20!white, draw=black] (-30:1) ellipse (1.5 and 0.69);
			\filldraw[fill=blue!7!white, draw=black]  (-150:1) ellipse (1.5 and 0.69);
		\end{scope}
		\begin{scope}
			\clip (-30:1) ellipse (1.5 and 0.69);
			\clip (-150:1) ellipse (1.5 and 0.69);
			\fill[red!7!white](0,0) circle(1.5cm);
		\end{scope}
		\node at ( 90:1.27)  {\xmark \text{} Arbitrary integral};
		\node at ( 160:1.55) {\cmark};
		\node at ( 90:0.55)  [align=center] {per word prefix \\ \footnotesize{(tidy)}};
		\node at ( -165:1.5)[align=center] {\cmark \\ per-time};
		\node at ( -15:1.4) [align=center] {\cmark \\  \text{ } per-letter};
		\node at ( -90:0.4) [align=center] {\cmark \\ single};
		\draw (-30:1) ellipse (1.5 and 0.69);
		\draw (-150:1) ellipse (1.5 and 0.69);
	\end{tikzpicture}
	\caption[Classes of integral NMDAs, defined according to the flexibility of choosing the discount factors.]{\label{fig:introClasses}Classes of integral NMDAs, defined according to the flexibility of choosing the discount factors.
	The class of NMDAs with arbitrary integral factors is not closed under algebraic operations and under determinization, and some of its decision problems are undecidable. The other classes (for a specific choice function) are closed under both algebraic operations and determinization, and their basic decision problems are decidable. Tidy NMDAs are as expressive as deterministic NMDAs with arbitrary integral discount factors.}
\end{figure}

We start, in \Cref{sec:Definitions}, with formal definitions of NMDAs, after which we analyze, in  \Cref{sec:ArbitraryFactors}, the properties of arbitrary integral NMDAs, showing that they do not enjoy algebraic closure, and that their containment problems are undecidable.
In \Cref{sec:TidyNMDAs} we introduce tidy-NMDAs and show that they are closed under algebraic operations. We then analyze, in \Cref{sec:tidyDecisionProblems}, their decision problems, and show that they are in the same complexity classes as the corresponding problems for NDAs with a single integral discount factor. We conclude and provide suggestions for future work in \Cref{sec:Conclusions}.

\subsection*{Related work}
As we extend integral NDAs to allow for multiple discount factors, our work naturally relates to existing works on integral NDAs, as well as to works on other computational models that already allow for multiple discount factors.

Most relevant to our positive results on tidy NMDAs is the work in \cite{BH14}, which considers integral NDAs, and whose techniques we extend in \Cref{sec:Determinizability} for the determinization procedure.

Also relevant to our positive results is the approach of ``comparators'' \cite{BCVCAV18,BVCAV19,ComparatorAutomataInQuantitativeVerification}, which are automata that read two infinite sequences of weights synchronously and relate their aggregate values. 
In particular, the containment problem of NDAs was proved in \cite{ComparatorAutomataInQuantitativeVerification} to be in PSPACE, using comparators to reduce the problem to language inclusion between B\"{u}chi automata.
Our approach for the containment problem of NMDAs is different, based on on-the-fly determinization of the union of the two considered automata (\Cref{sec:PSPaceProblems}). Our algorithm improves the complexity provided in \cite{ComparatorAutomataInQuantitativeVerification} for NDAs (having a single discount factor), as we refer to binary representation of weights, while \cite{ComparatorAutomataInQuantitativeVerification} assumes unary representation.\footnote{Rational weights are assumed to have a common denominator, both by us and by \cite{ComparatorAutomataInQuantitativeVerification}, where in the latter it is stated implicitly, by providing the complexity analysis with respect to transition weights that are natural numbers.}

Considering other computational models that allow for multiple discount factors, most relevant to NMDAs are discounted-payoff games with multiple discount factors (DPGs) \cite{Andersson06}. The two models share a common basic variant: NMDAs over a singleton alphabet are the same as one-player DPGs. We take advantage of this relation, using DPG algorithms to solve the problems of NMDA nonemptiness (\Cref{sec:Nonemptiness}).
The core difference between a nondeterminstic automaton and a two-player game is that the former allows for richer alphabets and unrestricted nondeterminism, while the latter allows for alternating turns between the players.
Hence, the choices made by an automaton can be based on the entire (infinite) input word, while player choices in a game are restricted to \emph{strategies}, which can only depend on past events (see, e.g., \cite{BKKS13}). 
Due to this difference, automata problems tend to have higher complexities than related game problems, sometimes resulting in undecidability of the former, which is the case with containment of limit-average automata \cite{DDGRT10}, compared to decidable problems of mean-payoff games \cite{ZP96}, as well as with containment of NMDAs, which we show to be undecidable (\Cref{sec:integralNMDAsComparison}), compared to decidable problems of DPGs \cite{Andersson06}.

Considering our aforementioned undecidability result, we provide a reduction from the halting problem of two-counter machines, following known schemes \cite{DDGRT10,ABK22}. Yet the crux of our proof is in simulating a counter within a discounting setting, as upfront an increment of a counter at a certain point of time cannot be compensated by a far-away discounted decrement. Nevertheless, we show that multiple discount factors allow in a sense to eliminate the influence of time, constructing automata in which wherever a letter appears in the word, it has the same influence on the automaton value (\Cref{sec:integralNMDAsComparison}).

\section{Discounted-Sum Automata with Multiple Integral Discount Factors}\label{sec:Definitions}

We define a discounted-sum automaton with arbitrary discount factors, abbreviated NMDA, by adding to an NDA a discount factor in each of its transitions. An NMDA is defined on either finite or infinite words. The formal definition is given in \Cref{def:NMDA}, and an example in \Cref{fig:NMDAExample}.

An \emph{alphabet} $\Sigma$ is an arbitrary finite set, and a \emph{word} over $\Sigma$ is a finite or infinite sequence of letters in $\Sigma$, with $\emptyword$ for the empty word. We denote the concatenation of a finite word $u$ and a finite or infinite word $w$ by $u\con w$, or simply by $uw$.
We define $\Sigma^+$ to be the set of all finite words except the empty word, i.e., $\Sigma^+=\Sigma^*\setminus\{\emptyword\}$. For a word $w=w(0) w(1) w(2) \ldots$, we denote the sequence of its letters starting at index $i$ and ending at index $j$ by $w[i..j]=w(i) w(i{+}1)\ldots w(j)$, and in general, for integers $i\leq j$, we denote the set $\{i, i{+}1, \ldots, j\}$ by $[i..j]$.

\begin{defi}\label{def:NMDA}
	A nondeterministic discounted-sum automaton with multiple discount factors (NMDA), on finite or infinite words, is a tuple $\A = \tuple{\Sigma, Q, \iota, \delta, \gamma, \rho}$ over an alphabet $\Sigma$, with a finite set of states $Q$, an initial set of states $\iota\subseteq Q$, a transition function $\delta \subseteq Q \times \Sigma \times Q$, a weight function $\gamma: \delta\to\Rat$, 
	and a discount-factor function $\rho: \delta \to \Rat\cap (1,\infty)$, assigning to each transition its discount factor, which is a rational greater than one.\footnote{Discount factors are sometimes defined in the literature as numbers between $0$ and $1$, under which setting weights are multiplied by these factors rather than divided by them.}
	\begin{itemize}
		\item A \emph{walk} in $\A$ from a state $p_0$ is a sequence of states and letters, $p_0, \sigma_0, p_1, \sigma_1, p_2, \cdots$, such that for every
		$i$, 
		$(p_i,\sigma_i,p_{i+1})\in\delta$.
		
		For example, $\psi=q_1,a,q_1,b,q_2$ is a walk of the NMDA $\A$ of \Cref{fig:NMDAExample} on the word $ab$ from the state $q_1$ .
		
		\item A run of $\A$ is a walk from an initial state. 
		
		\item The length of a walk $\psi$, denoted by $|\psi|$, is $n$ for a finite walk $\psi = p_0, \sigma_0, p_1, \allowbreak\cdots,
		\sigma_{n-1}, p_n$, and $\infty$ for an infinite walk.
		\item The $i$-th transition of a walk $\psi = p_0, \sigma_0, p_1, \sigma_1, \cdots$ is denoted by $\psi(i)=(p_{i},\sigma_i,p_{i+1})$.
		
		\item The \emph{value} of a finite or an infinite walk $\psi$ is
		$\A(\psi)=\sum_{i=0}^{|\psi|-1}{ \bigg(\gamma\big(\psi(i)\big) \cdot \prod_{j=0}^{i-1} \frac{1}{\rho\big(\psi(j)\big)}\bigg)}$.
		For example, the value of the walk $r_1=q_0,a,q_0,a,q_1,b,q_2$ (which is also a run) of $\A$ from \Cref{fig:NMDAExample} is 
		$\A(r_1)= 1 + \frac{1}{2}\cdot\frac{1}{3} + 2\cdot\frac{1}{2\cdot 3}=\frac{3}{2}$.
		\item The \emph{value} of $\A$ on a finite or infinite word $w$ is
		$\A(w) = \inf \{\A(r) \St r \text{ is a run of } \A \text{ on } w \}$.
		\item 
		
		In the case where $|\iota|= 1$ and for every $q \in Q$ and $\sigma \in \Sigma$, we have $|\{ q' \ST (q,\sigma,q')\in\delta \}| \leq 1$, we say that $\A$ is {\em deterministic}, denoted by DMDA,
		and view $\delta$ as a function to states.
		
		\item
		When all the discount factors are integers, we say that $\A$ is an \emph{integral} NMDA.
		
		\item For a given NMDA $\A$, we sometimes denote its weight function by $\delta_{\A}$  and its discount-factor function by $\rho_{\A}$.
	\end{itemize}
\end{defi}

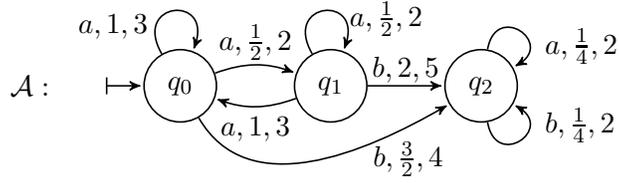
\begin{figure}
	\centering
	\begin{tikzpicture}[->,>=stealth',shorten >=1pt,auto,node distance=2cm, semithick, initial text=, every initial by arrow/.style={|->}]
		\node [midway] [ left of=q0] {$\A:$};
		\node[initial, state] (q0) {$q_0$};
		\node[state] (q1) [right of=q0] {$q_1$};
		\node[state] (q2) [right of=q1] {$q_2$};
		
		\path 
		(q0) edge	[loop above, out=120, in=70,looseness=5] node [left, xshift=-0.2cm, yshift=-0.2cm] {$a,1,3$} (q0)
		(q1) edge	[loop above, out=120,in=70,looseness=5] node [right, xshift=0.2cm, yshift=-0.1cm] {$a,\frac{1}{2},2$} (q1)
		(q2) edge	[loop above, out=80, in=30, looseness=4] node [right, xshift=0.2cm, yshift=-0.2cm]{$a,\frac{1}{4},2$} (q2)
		(q2) edge	[loop below, out=-80, in=-30,looseness=4] node [right, xshift=0.2cm, yshift=0.2cm]{$b,\frac{1}{4},2$} (q2)
		
		(q1) edge	[below, out=-160,in=-20] node {$a,1,3$} (q0)
		
		(q0) edge [above, out=20,in=160] node[yshift=-0.07cm] {$a,\frac{1}{2},2$} (q1)
		(q1) edge [above] node [yshift=-0.1cm] {$b,2,5$} (q2)
		(q0) edge	[below, out=-60,in=-150] node [xshift=1.3cm, yshift=0.4cm]{$b,\frac{3}{2},4$} (q2)
		;
	\end{tikzpicture}
	\caption[An NMDA example.]{\label{fig:NMDAExample}An NMDA $\A$. The labeling on the transitions indicate the alphabet letter, the weight of the transition, and its discount factor.}
\end{figure}
\noindent 
In the case where for every $q \in
Q$ and $\sigma \in \Sigma$, we have $|\{ q' \St (q,\sigma,q')\in\delta \}| \geq 1$, intuitively meaning that $\A$ cannot get stuck, we say that $\A$ is {\em complete}.
It is natural to assume that discounted-sum automata are complete, and we adopt this assumption, as dead-end states, which are equivalent to states with infinite-weight transitions, break the property of the decaying importance of future events. (Hence, the codomain of $\gamma$ is $\Rat$, not allowing for $\infty$.)

Automata $\A$ and $\A'$ are \emph{equivalent}, denoted by $\A\equiv \A'$, if for every word $w$, $\A(w) = \A'(w)$.

For every finite (infinite) walk $\psi = p_0, \sigma_0, p_1, \sigma_1, p_2, \cdots, \sigma_{n-1}, p_n$ ($\psi = p_0, \sigma_0, p_1, \cdots$),
and all integers $0 \leq i \leq j \leq |\psi|-1$ ($0 \leq i \leq j$), we define
the finite sub-walk from $i$ to $j$ as $\psi[i..j]=p_{i}, \sigma_i, p_{i+1}, \cdots, \sigma_j, p_{j+1}$.
For an infinite walk, we also define $\psi[i..\infty]=p_{i}, \sigma_i, p_{i+1}, \cdots$, namely the infinite suffix from position $i$.
For a finite walk, we also define the target state as $\delta(\psi)=p_n$ and the accumulated discount factor as $\rho(\psi)=\prod_{i=0}^{n-1}{\rho\big(\psi(i)\big)}$.

We extend the transition function $\delta$ to finite words in the regular manner: For a word $u\in\Sigma^*$ and a letter $\sigma\in\Sigma$, 
$
\delta(\emptyword)= \iota;
\delta(u\cdot \sigma)= \bigcup_{q\in\delta(u)}{\delta(q,\sigma)}
$.
For a state $q$ of $\A$, we denote by $\A^q$ the automaton that is identical to $\A$, except for having $q$ as its single initial state.

An NMDA may have rational weights, yet it is often convenient to consider an analogous NMDA with integral weights, achieved by multiplying all weights by their common denominator.
\begin{prop}\label{prop:multiply}
	For all constant $0<m\in \Rat$, NMDA $\A=\tuple{\Sigma,Q,\iota,\delta,\gamma,\rho}$, NMDA 
	$\A'=\tuple{\Sigma,Q,\iota,\delta,m\cdot \gamma,\rho}$ obtained from $\A$ by multiplying all its weights by $m$, and a finite or infinite word $w$, we have $\A'(w)=m\cdot \A(w)$.
\end{prop}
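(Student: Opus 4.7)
The plan is to peel the statement apart into two easy observations: (i) $\A$ and $\A'$ have exactly the same underlying walks and runs on $w$, since they share the tuple $\tuple{\Sigma,Q,\iota,\delta,\rho}$ and differ only in the weight function; and (ii) multiplying every weight by the positive constant $m$ multiplies the value of every walk by $m$, after which taking the infimum over runs preserves the factor $m$ because $m>0$.

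Concretely, I would first fix an arbitrary run $r=p_0,\sigma_0,p_1,\sigma_1,\ldots$ of $\A$ on $w$. Since $\A$ and $\A'$ agree on $\delta$ and $\iota$, the very same sequence is a run of $\A'$ on $w$, and conversely. Using the definition of the value of a walk from \Cref{def:NMDA} and the distributivity of scalar multiplication over (possibly infinite) sums, I would compute
\[
\A'(r)=\sum_{i=0}^{|r|-1}\Bigl((m\cdot\gamma)\big(r(i)\big)\cdot\prod_{j=0}^{i-1}\tfrac{1}{\rho(r(j))}\Bigr)=m\cdot\sum_{i=0}^{|r|-1}\Bigl(\gamma\big(r(i)\big)\cdot\prod_{j=0}^{i-1}\tfrac{1}{\rho(r(j))}\Bigr)=m\cdot\A(r).
\]
For infinite runs this step uses that each series converges absolutely, which is immediate since all $\rho(r(j))>1$ and the weights are rational; hence pulling out the constant $m$ is legitimate.

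Finally, I would take the infimum over all runs on $w$. Because $m>0$, scalar multiplication by $m$ is an order-preserving bijection of $\Reals$, so $\inf_r (m\cdot\A(r))=m\cdot\inf_r \A(r)$. Combining with step (i), which ensures the set of runs indexed is the same for both automata, yields
\[
\A'(w)=\inf_{r}\A'(r)=\inf_{r} m\cdot\A(r)=m\cdot\inf_{r}\A(r)=m\cdot\A(w),
\]
as required.

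There is no real obstacle here; the only point deserving a line of justification is the commutation of $m$ with the infinite sum in the infinite-word case, which is handled by the convergence guaranteed by the discount factors being strictly greater than $1$. The argument is uniform for finite and infinite words, matching the paper's stated aim of treating both settings together.
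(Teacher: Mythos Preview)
Your proposal is correct and essentially follows the paper's proof: both compute $\A'(r)=m\cdot\A(r)$ for every run and then pass to the infimum. The only cosmetic difference is that the paper establishes the infimum identity via the pair of inequalities $\A'(w)\geq m\cdot\A(w)$ and $\A(w)\geq\tfrac{1}{m}\A'(w)$, whereas you invoke directly that multiplication by $m>0$ commutes with $\inf$; your formulation is slightly more direct but the content is the same.
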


\begin{proof}
	Let $0<m\in \Rat$, $\A=\tuple{\Sigma,Q,\iota,\delta,\gamma,\rho}$ and $\A'=\tuple{\Sigma,Q,\iota,\delta,m\cdot \gamma,\rho}$ NMDAs, 
	and $w$ a finite or infinite word.
	
	For every run $r$ of $\A$ on $w$, we have that the same run in $\A'$ has the value of 
	\begin{align*}
		\A'(r) 
		&= \sum_{i=0}^{|w|-1}{ \Bigg(m\cdot \gamma(r(i)) \cdot \prod_{j=0}^{i-1} \frac{1}{\rho(r(j))}\Bigg)}
		= m\cdot \sum_{i=0}^{|w|-1}{ \Bigg( \gamma(r(i)) \cdot \prod_{j=0}^{i-1} \frac{1}{\rho(r(j))}\Bigg)}= m\cdot \A(r)
	\end{align*}
	
	Hence for every run of $\A$ with value $v_0$ we have a run of $\A'$ for the same word with value of $m\cdot v_0$. Symmetrically for every run of $\A'$ with value $v_1$ we have a run of $\A$ for the same word with value of $\frac{1}{m}\cdot v_1$. So,
	
	\begin{align*}
		\A'(w) &= \inf \{\A'(r) \ST r \text{ is a run of } \A' \text{ on } w \} 
		\geq \inf \{m\cdot \A(r) \ST r \text{ is a run of } \A \text{ on } w \} \\
		&= m\cdot \inf \{ \A(r) \ST r \text{ is a run of } \A \text{ on } w \} 
		= m \cdot \A(w)
	\end{align*} 
	and
	\begin{align*}
		\A(w) &= \inf \{\A(r) \ST r \text{ is a run of } \A \text{ on } w \} \\
		&\geq \inf \Big\{\frac{1}{m}\cdot \A'(r) \ST r \text{ is a run of } \A' \text{ on } w \Big\} = \frac{1}{m} \cdot \A'(w)
	\end{align*} 
	which leads to $\A'(w)=m\cdot \A(w)$.
\end{proof}

\paragraph{Size.} 
We define the size of $\A$, denoted by $|\A|$, as the maximum between the number of transitions and the maximal binary representation of any discount factor or weight in it. For rational weights, we assume all of them to have the same denominator. The motivation for a common denominator stems from the determinization algorithm (\Cref{thm:DetTidy}). Omitting this assumption will still result in a deterministic automaton whose size is only single exponential in the size of the original automaton, yet storing its states will require a much bigger space, changing our PSPACE algorithms (\Cref{sec:TidyNMDAs}) into EXPSPACE ones.

\paragraph{Algebraic operations.}
Given automata $\A$ and $\B$ over the same alphabet, and a non-negative scalar $c\in\Rat$, we define
\begin{itemize}
	\item $\C \equiv \min (\A, \B)$ if $\forall w$.	
	$\C(w) = \min \big(\A(w), \B(w)\big)$
	\item $\C \equiv \max (\A, \B)$ if $\forall w$.	
	$\C(w) = \max \big(\A(w), \B(w)\big)$
	\item $\C \equiv  \A + \B$ if $\forall w$.	
	$\C(w) = \A(w) + \B(w)$
	\item $\C \equiv  \A - \B$ if $\forall w$.	
	$\C(w) = \A(w) - \B(w)$
	\item $\C \equiv  c \cdot \A $ if $\forall w$.	
	$\C(w) = c \cdot \A(w)$
	\item $\C \equiv  -\A $ if $\forall w$.	
	$\C(w) = - \A(w)$
\end{itemize}

\paragraph{Decision problems.}
Given automata $\A$ and $\B$ and a threshold $\nu\in\Rat$, we consider the following properties, with strict (or non-strict) inequalities:
\begin{itemize}
\item 
\emph{Nonemptiness:} There exists a word $w$, s.t.\ $\A(w)<\nu$ (or $\A(w)\leq\nu$);
\item
\emph{Exact-value:} There exists a word $w$, s.t.\ $\A(w)=\nu$;
\item
\emph{Universality:} For all words $w$, $\A(w)<\nu$ (or $\A(w)\leq\nu$);
\item
\emph{Equivalence:} For all  words $w$, $\A(w) = \B(w)$;
\item
\emph{Containment:} For all  words $w$, $\A(w) > \B(w)$ (or $\A(w) \geq \B(w)$). \footnote{Considering quantitative containment as a generalization of language containment, and defining the ``acceptance'' of a word $w$ as having a small enough value on it, we define that $\A$ is contained in $\B$ if for every word $w$, $\A$'s value on $w$ is at least as big as $\B$'s value. (Observe the $>$ and $\geq$ signs in the definition.)}
\end{itemize}

\paragraph{Finite and infinite words.}
Results regarding NMDAs on finite words that refer to 
the existence of an equivalent automaton (``positive results'') can be extended to NMDAs on infinite words due to \Cref{lemma:finiteToInfinite} below.
Likewise, results that refer to non-existence of an equivalent automaton (``negative results'') can be extended from NMDAs on infinite words to NMDAs on finite words.
Accordingly, if not stated otherwise, we prove the positive results for automata on finite words and the negative results for automata on infinite words, getting the results for both settings.
\begin{lem}\label{lemma:finiteToInfinite}
	For all NMDAs $\A$ and $\B$, if for all finite word $u\in\Sigma^+$, we have $\A(u)=\B(u)$, then also for all infinite word $w\in\Sigma^\omega$, we have $\A(w)=\B(w)$.
\end{lem}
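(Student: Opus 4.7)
The plan is to show that for any NMDA $\A$ and any infinite word $w = w(0)w(1)w(2)\cdots$, the value $\A(w)$ is the limit of the values $\A(w[0..n{-}1])$ on longer and longer prefixes; since this quantity depends only on the sequence of prefix values, the agreement of $\A$ and $\B$ on all finite words will force them to agree on $w$. The key ingredient is a uniform tail estimate: because the transition function is finite, there is a minimum discount factor $\lambda_{\min} > 1$ and a maximum absolute weight $W < \infty$, so for any infinite run $r$ the tail $\sum_{i \geq n} \gamma(r(i)) \prod_{j<i} 1/\rho(r(j))$ is bounded in absolute value by some $\varepsilon_n = W \cdot \lambda_{\min}^{-n}/(1 - 1/\lambda_{\min})$, which tends to $0$ as $n \to \infty$.

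First I would set $u_n = w[0..n{-}1]$ and, for each infinite run $r$ of $\A$ on $w$, let $S_n(r) = \A(r[0..n{-}1])$ denote its discounted prefix sum. The tail estimate gives $|\A(r) - S_n(r)| \leq \varepsilon_n$. For the lower bound on $\A(w)$, I would observe that $r[0..n{-}1]$ is a run of $\A$ on $u_n$, hence $S_n(r) \geq \A(u_n)$, so $\A(r) \geq \A(u_n) - \varepsilon_n$; taking the infimum over runs yields $\A(w) \geq \A(u_n) - \varepsilon_n$. For the upper bound, I would pick an optimal finite run $r'_n$ on $u_n$ with $S_n(r'_n) = \A(u_n)$ (attained because there are only finitely many finite runs on a finite word), then use completeness of NMDAs to extend $r'_n$ arbitrarily to an infinite run $r_n$ of $\A$ on $w$. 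Since $S_n(r_n) = S_n(r'_n) = \A(u_n)$, the tail estimate gives $\A(r_n) \leq \A(u_n) + \varepsilon_n$, hence $\A(w) \leq \A(u_n) + \varepsilon_n$.

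Combining the two bounds yields $|\A(w) - \A(u_n)| \leq \varepsilon_n$, so $\A(w) = \lim_{n \to \infty} \A(u_n)$. Applying the exact same argument to $\B$ gives $\B(w) = \lim_{n \to \infty} \B(u_n)$. Since by hypothesis $\A(u_n) = \B(u_n)$ for every $n$, the two limits coincide, proving $\A(w) = \B(w)$.

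There is no serious obstacle: the only point requiring care is the extraction of $\lambda_{\min} > 1$ (used for the geometric tail bound) and $W < \infty$ (used to bound the weights), both of which follow immediately from finiteness of the set of transitions in an NMDA. Completeness is also essential, as it guarantees that a finite optimal run on $u_n$ can be extended to an infinite run on $w$; this is already part of our standing assumption on NMDAs.
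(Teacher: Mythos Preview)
Your proof is correct and uses essentially the same ingredients as the paper's: the uniform geometric tail bound coming from the minimum discount factor and maximum weight, and completeness to extend optimal finite runs to infinite runs. The only difference is packaging---you establish directly that $\A(w)=\lim_{n\to\infty}\A(u_n)$, whereas the paper wraps the same two inequalities into a proof by contradiction with $\epsilon=c/3$; your formulation is arguably more informative since it exhibits the infinite-word value as a limit of finite-word values.
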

\begin{proof}
	The proof extends \cite[Lemma 3.3]{BH14} from NDAs to NMDAs.
	
	We start by making a key observation: For any NMDA $\C$ and every $\epsilon > 0$, there exists $n_\epsilon \in\Nat$, such that the contribution of any infinite suffix word to any $n_\epsilon$-sized prefix run, is less than $\epsilon$ in magnitude.
		This is seen as follows. Denote the supremum of the absolute value of $\C$ on any infinite word by $W=\sup_{w\in\Sigma^\omega}|\C (w)|$. Let $m$ be the highest absolute value of a transition weight in $\C$, and $\lambda$ be the lowest discount factor in $\C$. Note that $W$ cannot be higher than the case when we choose $m$ for the weight of all transitions and $\lambda$ as their discount factor, that is $W\leq\sum_{i=0}^{\infty}\frac{m}{\lambda^i}$. Since for every $\lambda>1$, we have  $\sum_{i=0}^{\infty}\frac{1}{\lambda^i}= \frac{1}{1-\frac{1}{\lambda}}=\frac{\lambda}{\lambda - 1}$, it follows that $W\leq m\cdot\frac{\lambda}{\lambda - 1}$.
	The maximal contribution of an infinite suffix word to an $n_\epsilon$-sized prefix run cannot thus be higher than $\frac{1}{\lambda^{n_\epsilon}}\cdot W \leq \frac{m}{\lambda^{n_\epsilon -1} \cdot(\lambda -1)}$. Since $m$ and $\lambda$ are fixed, this contribution can be made arbitrary small by choosing arbitrary large values of $n_\epsilon$.
	
	Now, suppose for the sake of contradiction that NMDAs $\A$ and $\B$ agree on all finite words, but there exists an infinite word $w$ such that $|\A (w) - \B (w)| = c >0$. W.l.o.g., we shall assume that $\A (w) < \B (w)$. We take $\epsilon = c/3$, and $n_\epsilon$ such that the contribution of any infinite suffix word to any $n_\epsilon$-sized prefix run is less than $\epsilon$ in both $\A$ and $\B$.
	Let $w_n$ be the $n_\epsilon$-sized prefix word of $w$. The value of the preferred run of $\A$ on $w_n$ cannot be more than $\A (w) + c/3$ while the value of the preferred run of $\B$ on $w_n$ cannot be less than $\B (w) - c/3$, resulting in $\A(w_n)\leq A(w) + c/3 < B(w) - c/3 \leq \B(w_n)$ and a contradiction.
\end{proof}

Notice that the converse of \Cref{lemma:finiteToInfinite} does not hold, namely there are automata equivalent w.r.t.\ infinite words, but not w.r.t.\ finite words. 
(See an example in \Cref{fig:lemmaConverse}.)

\begin{figure}
	\centering
	\begin{tikzpicture}[->,>=stealth',shorten >=1pt,auto,node distance=2.6cm, semithick, initial text=, every initial by arrow/.style={|->}]
		\node[initial, state] (q0) {$q_0$}; 
		\node[state] [left of=q0, draw=none,xshift=0.9cm] {$\B:$};
		
		\node[state] (q1) [right of=q0] {$q_1$};
		
		\node[initial, state] (p0) [left of=q0, xshift=-3cm] {$p_0$};
		
		\node[state] [left of=p0, draw=none,xshift=0.9cm] {$\A:$};		
		\path 
		
		(q0) edge node [above] {$\Sigma,2,2$} (q1)
		
		(q1) edge [right, in=30,out=-30, loop, looseness=4] node {$\Sigma,0,2$} (q1)

		(p0) edge [right, in=30,out=-30, loop, looseness=4] node {$\Sigma,1,2$} (p0)
		;
	\end{tikzpicture}
	\caption{\label{fig:lemmaConverse} \cite[Figure 2]{BH14} The automata $\A$ and $\B$ are equivalent with respect to infinite words, while not equivalent with respect to finite words.}
\end{figure}

\section{Arbitrary Integral NMDAs}\label{sec:ArbitraryFactors}

Unfortunately, we show that the family of integral NMDAs in which discount factors can be chosen arbitrarily is not closed under determinization (\Cref{sec:integralNMDAsNotDeterminizable}) and under basic algebraic operations (\Cref{sec:integralNMDAsNonClosureAlgebraic}), and its containment problem is undecidable (\Cref{sec:integralNMDAsComparison}); A summary of the concrete negative results is given in \Cref{tbl:generalIntegral}.

\begin{table}
	\begin{center}
		\begin{tabular}{|c||c|c|}
			\hline
			& Finite words & Infinite words \\ [0.5ex]
			\hline \hline
			Determinization
			& \multicolumn{2}{c|}{\xmark \text{} Not closed {\footnotesize (\Cref{thm:integralNMDAsNotDeterminizableInfinite})}}
			\\ [0.5ex]
			\hline
			Algebraic operations (max, addition)
			& \multicolumn{2}{c|}{\xmark \text{} Not closed {\footnotesize (\Cref{thm:NoClosureGeneralIntegralNMDA})}}
			\\ [0.5ex]
			\hline
			Containment ($>$) & \xmark \text{} Undecidable & \qmark \text{} Open question \\ [0.5ex]
			\cline{1-1}
			\cline{3-3}
			Containment ($\geq$) & {\footnotesize (\Cref{cl:ContainmentFiniteWordsUndecidable})} & \xmark \text{} Undecidable {\footnotesize (\Cref{cl:NonStrictContainmentInfiniteWordsUndecidable})} \\ [0.5ex]
			\hline
			Equivalence
			& \multicolumn{2}{c|}{\xmark \text{} Undecidable {\footnotesize (\Cref{cl:EquivalenceFiniteWordsUndecidable})}}
			\\ [0.5ex]
			\hline
		\end{tabular}
	\end{center}
	\caption{Negative results for the family of integral NMDAs in which discount factors can be chosen arbitrarily.} \label{tbl:generalIntegral}
\end{table}

\subsection{Non-closure under determinization}\label{sec:integralNMDAsNotDeterminizable}
\begin{thm}\label{thm:integralNMDAsNotDeterminizableInfinite}
	There exists an integral NMDA that no integral DMDA is equivalent to, with respect to both finite and infinite words.
\end{thm}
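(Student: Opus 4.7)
The plan is to exhibit a small integral NMDA whose value function is not realizable by any integral DMDA. By \Cref{lemma:finiteToInfinite}, establishing the separation for infinite words suffices: any DMDA equivalent on every finite word would in particular be equivalent on every infinite word, so non-existence on infinite words transfers to non-existence on finite words.

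The NMDA $\A$ I would construct is over $\Sigma=\{a,b\}$ with an initial state offering, on every letter, a nondeterministic choice between two deterministic sub-components. The first sub-component uses only discount factor $2$ on its self-loops and assigns weight $1$ to $a$-transitions and $0$ to $b$-transitions; the second uses only discount factor $3$ and assigns weight $0$ to $a$-transitions and $1$ to $b$-transitions. Then $\A(w)$ is the minimum of two discounted sums with different decay rates that track different aspects of $w$, and the winning branch depends on arbitrarily late positions of $w$: for instance, $\A(a^n b^\omega)$ is governed by the factor-$3$ branch and behaves like $\Theta(1/3^n)$, while $\A(b^n a^\omega)$ is governed by the factor-$2$ branch and behaves like $\Theta(1/2^n)$.

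Assume toward contradiction that an integral DMDA $\D$ satisfies $\D \equiv \A$ on infinite words. For any finite prefix $u$, $\D$ reaches a unique state $q_u$ and has accumulated discount $\rho_\D(u)$, giving the decomposition $\D(uw') = c_\D(u) + \D^{q_u}(w')/\rho_\D(u)$ for every suffix $w'$. By pigeonhole on $\D$'s finite state set, some state $q$ is reached after two prefixes $u$ and $v$ of different lengths whose natural continuations in $\A$ are won by different branches; from this shared $q$, the single function $\D^q$ must reconcile both the factor-$2$ and factor-$3$ behaviours, appropriately rescaled by $\rho_\D(u)$ and $\rho_\D(v)$.

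The main obstacle is turning this rigidity into a sharp contradiction despite $\D$'s freedom to pick any integer discount factors on its own transitions. The leverage I expect to exploit is a \emph{gap} argument in the spirit of \cite{BH14}: in an integral automaton the accumulated values after length-$n$ prefixes lie in controlled cosets of $\frac{1}{d^n}\mathbb{Z}$ for the local discount factor $d$, whereas the two summands defining $\A(w)$ naturally live in cosets of $\frac{1}{2^n}\mathbb{Z}$ and $\frac{1}{3^n}\mathbb{Z}$, which are mutually incompatible. Selecting an infinite family of suffixes along which the factor-$2$ and factor-$3$ minima oscillate in such incompatible cosets should then force $\D$ to commit to a single decay rate and thereby disagree with $\A$ on some word, completing the contradiction.
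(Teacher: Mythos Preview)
Your proposal has the right intuition (arithmetic incompatibility of $2^n$ and $3^n$), but the argument as sketched has a genuine gap, and the specific automaton you chose does not support the kind of contradiction you need.

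The central difficulty is this: in your automaton, once the prefix $a^n$ has been read, branch~1 has already accumulated value close to $2$ (with non-negative weights it can only grow), while branch~2 has accumulated $0$; hence \emph{every} continuation of $a^n$ is won by branch~2, and the value $\A(a^n w')$ carries only $3^{-n}$-scale information. Symmetrically, every continuation of $b^n$ is won by branch~1 and carries only $2^{-n}$-scale information. So from a single prefix you can never extract both rates. Your pigeonhole step therefore needs two \emph{different} prefixes $u,v$ landing in the same state $q$, but then you face four unknowns $\D(u),\D(v),\rho_\D(u),\rho_\D(v)$ and only two equations; there is no cancellation.

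The ``coset'' heuristic also does not bite: a DMDA may use arbitrary integral discount factors on its transitions, so the accumulated discount $\rho_\D(u)$ after a length-$n$ prefix is an arbitrary product of integers $\ge 2$, not $d^n$ for a fixed $d$. Nothing forces $\D$'s values into $\tfrac{1}{2^n}\mathbb{Z}$ or $\tfrac{1}{3^n}\mathbb{Z}$.

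The paper's proof avoids both issues by working over $\{a,b,c\}$ and designing the NMDA so that from the \emph{same} prefix $a^n$ two different suffixes $b^\omega$ and $c^\omega$ are won by different branches, giving $\B(a^n b^\omega)=1-\tfrac{1}{2^{n+1}}$ and $\B(a^n c^\omega)=1+\tfrac{1}{3^{n+1}}$. Subtracting the two equations for $\D$ eliminates the unknown $\D(a^n)$ and yields
\[
\Pi_n \;=\; \bigl(W_c(n)-W_b(n)\bigr)\cdot\frac{2^{n+1}\,3^{n+1}}{2^{n+1}+3^{n+1}},
\]
where $\Pi_n=\rho_\D(a^n)$ must be an integer and $W_b,W_c$ take only finitely many values (one per state of $\D$). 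Since $2^{n+1}+3^{n+1}$ is coprime to $6$, it must divide the bounded numerator $W_c(n)-W_b(n)$, which is impossible for large $n$. The decisive idea you are missing is exactly this ``same prefix, two suffixes, subtract'' manoeuvre that turns the DMDA's integrality into a sharp divisibility obstruction.
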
 

\begin{proof}
	Let $\B$ be the integral NMDA depicted in \Cref{fig:NonDetNMDAInfinityWords} over the alphabet $\Sigma=\{a,b,c\}$. 
	We first show that for every $n\in\Nat$, $\B(a^{n}b^\omega)=1-\frac{1}{2^{n+1}}$ and $\B(a^{n}c^\omega)=1+\frac{1}{3^{n+1}}$.	
			
	Note that the only nondeterminism in $\B$ is in the option to start from either $q_0$ or $q_1$.
	Intuitively, for an infinite word for which the first non-$a$ letter is $b$, the best choice for $\B$ would be to start from $q_0$, while if the first non-$a$ letter is $c$, the best choice would be to start from $q_1$. 
	
	Formally, for each $n\in\Nat\setminus\{0\}$, observe that for the finite word $a^n$, the run $r_1$ starting from $q_0$ will have the accumulated value of $\B(r_1) = \sum_{k=0}^{n-1}\frac{1}{2}\cdot\frac{1}{2^k}=\frac{1}{2}\cdot\frac{1-\frac{1}{2^{n}}}{1-\frac{1}{2}}= 1-\frac{1}{2^{n}}$,
	and an accumulated discount factor of $2^n$; and the run $r_2$ starting from $q_1$ the value
	$\B(r_2)=\sum_{k=0}^{n-1}\frac{2}{3}\cdot\frac{1}{3^k}=\frac{2}{3}\cdot\frac{1-\frac{1}{3^{n}}}{1-\frac{1}{3}}= 1-\frac{1}{3^{n}}$, 
	and an accumulated discount factor of $3^n$. Thus,
	the value of $\B$, which is the minimum of the two runs, is
	$\B(a^n)=\min\Big\{1-\frac{1}{2^{n}},1-\frac{1}{3^{n}}\Big\}=1-\frac{1}{2^{n}}$.
	
	Accordingly, we have that for every $n\in\Nat$,
	\begin{align*}
		\B(a^{n}b^\omega)&=\min\Big\{
		1-\frac{1}{2^{n}}+\frac{1}{2}\cdot\frac{1}{2^n},
		1-\frac{1}{3^{n}}+2\cdot\frac{1}{3^n}
		\Big\} =1-\frac{1}{2^{n+1}}
		\\
		\B(a^{n}c^\omega)&=\min\Big\{
		1-\frac{1}{2^{n}}+2\cdot\frac{1}{2^n},
		1-\frac{1}{3^{n}}+\frac{4}{3}\cdot\frac{1}{3^n}
		\Big\} =1+\frac{1}{3^{n+1}}	
	\end{align*} 
	
	We continue with assuming toward contradiction the existence of an integral DMDA $\D=\tuple{\Sigma, Q_\D, p_0, \delta_\D, \gamma_\D, \rho_\D}$ such that $\B\equiv\D$.
	Suppose $\D$ reaches some state $s(n)=\delta_\D(a^n)$ upon reading $a^n$, whence it starts reading the suffixes $b^\omega$ and $c^\omega$. 
	We use the following notations:
	\begin{enumerate}
		\item Let the accumulated discount factor at this point be $\Pi_n=\rho_\D(a^n)$. Observe that, by definition, $\Pi_n$ must be an integer.
		\item Let the suffixes weights be $W_b(n)=\D^{s(n)}(b^\omega)$ and $W_c(n)=\D^{s(n)}(c^\omega)$. Observe that there are up to $|Q_\D|$ different (rational) values to all of $W_b(n)$ and $W_c(n)$, namely over all $n\in\Nat$, $W_b(n)$ and $W_c(n)$ depend only on the state $s(n)$, and not on $n$ itself.
	\end{enumerate}
	We now show a combinatorial claim to prove that it is impossible for $\D$ to yield the values $1-\frac{1}{2^{n+1}}$ and $1+\frac{1}{3^{n+1}}$, for all $n\in\Nat$. Under the above notation,
	\begin{align*}
		\D(a^{n}b^\omega)&=
		\D(a^n)+\frac{W_b(n)}{\Pi_n}=
		1-\frac{1}{2^{n+1}} \text{ , so }
		1-D(a^n)= \frac{W_b(n)}{\Pi_n}+\frac{1}{2^{n+1}} \text{ ; and}
		\\
		\D(a^{n}c^\omega)&=
		\D(a^n)+\frac{W_c(n)}{\Pi_n}
		=1+\frac{1}{3^{n+1}} \text{ , so }
		1-D(a^n)= \frac{W_c(n)}{\Pi_n} - \frac{1}{3^{n+1}}\text{ .}
	\end{align*}
	Hence,
	\begin{align*}
		\frac{W_b(n)}{\Pi_n}+\frac{1}{2^{n+1}} = \frac{W_c(n)}{\Pi_n}-\frac{1}{3^{n+1}}\ ,
	\end{align*}
	from which it follows that
	\begin{align*}
		\Pi_n = \Big(W_c(n)-W_b(n) \Big)\cdot \frac{2^{n+1}\cdot 3^{n+1}}{2^{n+1}+3^{n+1}}\ .
	\end{align*}
	
	Since for every $n$, $\Pi_n$ must be an integer, it follows that for every $n$, the denominator  $F(n) = 2^{n+1}+3^{n+1}$ divides the numerator. Since neither $2$ nor $3$ divides $F(n)$, it is clear that $F(n)$ does not share any prime divisors with $2^{n+1}\cdot 3^{n+1}$. Thus, for every $n$, it is necessary that $F(n)$ divides $W_c(n)-W_b(n)$. However, the latter only takes up to $|Q_\D|$ different values, over all $n\in\Nat$, and can never be zero. Hence, it cannot be divisible by $F(n)$ for arbitrary large $n$, leading to contradiction.
	By \Cref{lemma:finiteToInfinite}, we also have that no DMDA is equivalent to $\B$ with respect to finite words.	
\end{proof}

	\begin{figure}
	\centering
	\begin{tikzpicture}[->,>=stealth',shorten >=1pt,auto,node distance=2cm, semithick, initial text=, every initial by arrow/.style={|->}]
		\node [state] [left of=q0, xshift=0.5cm, draw=none] {$\B:$};
		\node[initial left, state] (q0) {$q_0$};

		\node[state] (q2) [right of=q0, xshift=0.8cm] {$q_2$};
		\node[initial right, state] (q1) [right of=q2, xshift=0.8cm] {$q_1$};

		\path 
		(q0) edge	[loop above, out=120, in=70,looseness=5] node [above left, xshift=-0.1cm, yshift=-0.2cm] {$a,\frac{1}{2},2$} (q0)
		(q1) edge	[loop above, out=70, in=120,looseness=5] node [above right, xshift=0.2cm, yshift=-0.2cm] {$a,\frac{2}{3},3$} (q1)

		(q2) edge[loop above, out=120, in=70,looseness=5] node [left,xshift=-0.18cm, yshift=-0.15cm, align=center] {$a,0,2$\\[-3pt]$b,0,2$\\[-3pt]$c,0,2$} (q2)
		(q0) edge [above right] node [below, align=center] {$b,\frac{1}{2},2$\\$c,2,2$} (q2)
		(q1) edge [below left] node [below, align=center] {$b,2,3$\\$c,\frac{4}{3},3$} (q2)
		
		;
	\end{tikzpicture}
	\caption[An integral NMDA that cannot be determinized.]{\label{fig:NonDetNMDAInfinityWords}An integral NMDA $\B$ on infinite words that cannot be determinized.}
\end{figure}
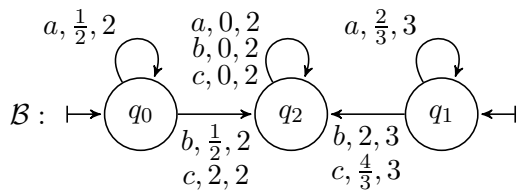

\subsection{Non-closure under algebraic operations}\label{sec:integralNMDAsNonClosureAlgebraic}
In the following proof that integral NMDAs are not closed under algebraic operations, we cannot assume toward contradiction a candidate deterministic automaton, and thus, as opposed to the proof of \Cref{thm:integralNMDAsNotDeterminizableInfinite}, we cannot assume a specific accumulative discount factor for each word prefix. 
Yet, we analyze the behavior of a candidate nondeterministic automaton on an infinite series of words, and build on the observation that there must be a state that appears in ``the same position of the run'' in infinitely many optimal runs of the automaton on these words.

\begin{thm}\label{thm:NoClosureGeneralIntegralNMDA}
	There exist integral NMDAs (even deterministic integral NDAs) $\A$ and $\B$ over the same alphabet, such that no integral NMDA is equivalent to $\max(\A,\B)$, and no integral NMDA is equivalent to $\A+\B$, with respect to both finite and infinite words.
\end{thm}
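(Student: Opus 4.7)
My plan is to exhibit concrete deterministic integral NDAs $\A$ and $\B$, with discount factors $2$ and $3$ respectively, over the alphabet $\Sigma = \{a,b,c\}$, chosen so that the values of $\A+\B$ (for the first claim) and of $\max(\A,\B)$ (for the second) on the two word-families $a^n b^\omega$ and $a^n c^\omega$ mix $2^{n+1}$ and $3^{n+1}$ denominators in a way no single integral NMDA can reproduce. A direct computation on a two-/three-state construction (modeled on the components of the automaton $\B$ of \Cref{thm:integralNMDAsNotDeterminizableInfinite}) yields $(\A+\B)(a^n b^\omega) = 2 - 1/2^{n+1} - 1/3^{n+1}$ and $(\A+\B)(a^n c^\omega) = 2 + 1/2^{n+1} + 1/3^{n+1}$; for the $\max$ case, the weights are arranged so that $\A$ dominates $\B$ on the $a^n b^\omega$ family and $\B$ dominates $\A$ on the $a^n c^\omega$ family, giving $\max(\A,\B)(a^n b^\omega) = 1 + 1/2^{n+1}$ and $\max(\A,\B)(a^n c^\omega) = 1 + 1/3^{n+1}$.

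Suppose toward contradiction that an integral NMDA $\C$ is equivalent to the target function $f$. For each $n$, fix an optimal run of $\C$ on $a^n b^\omega$, and let $s_n^b$, $W_n^b$, $\Pi_n^b$ denote the state reached and the weight and discount accumulated by its $a^n$-prefix; define $s_n^c, W_n^c, \Pi_n^c$ analogously on $a^n c^\omega$. Since $\C$ has finitely many states, infinite pigeonhole supplies fixed states $s^*, s^{**}$ and an infinite set $N \subseteq \Nat$ with $s_n^b = s^*$ and $s_n^c = s^{**}$ for all $n \in N$. Writing $U := \C^{s^*}(b^\omega)$, $V := \C^{s^{**}}(c^\omega)$, $V_c^* := \C^{s^*}(c^\omega)$ and $V_b^{**} := \C^{s^{**}}(b^\omega)$, which are fixed rationals since single-letter infinite-word evaluation in $\C$ reduces to a one-player DPG admitting a positional optimum, we obtain for each $n \in N$ the equalities $W_n^b + U/\Pi_n^b = f(a^n b^\omega)$ and $W_n^c + V/\Pi_n^c = f(a^n c^\omega)$, together with the \emph{cross-extension inequalities} $W_n^b + V_c^*/\Pi_n^b \geq f(a^n c^\omega)$ and $W_n^c + V_b^{**}/\Pi_n^c \geq f(a^n b^\omega)$, obtained by extending the optimal $a^n$-prefix of one side by the ``wrong'' infinite suffix.

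The finish is number-theoretic. Multiplying an exact equality by $\Pi_n^\bullet$ and by the common denominator $D$ of the weights in $\C$ turns its left-hand side into an integer; since $3^{n+1}+2^{n+1}$, $3^{n+1}-2^{n+1}$, and $2^{n+1}$, $3^{n+1}$ individually are each coprime to $6$, integrality forces a divisibility lower bound of the form $\Pi_n^b \geq 6^{n+1}/K$ in the sum case and $\Pi_n^c \geq 3^{n+1}/K$ in the max case, for some constant $K$. Substituting into the appropriate cross-extension inequality then gives $V_c^* - U \geq \Pi_n^b(1/2^n + 2/3^{n+1}) \geq 6\cdot 3^n/K$ in the sum case, and $V_b^{**} - V \geq \Pi_n^c (3^{n+1} - 2^{n+1})/6^{n+1} \geq ((3/2)^{n+1} - 1)/K$ in the max case; either bound tends to infinity and contradicts its fixed left-hand side for large $n \in N$.

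The main obstacle, compared with the proof of \Cref{thm:integralNMDAsNotDeterminizableInfinite}, is that nondeterminism robs us of a unique accumulated discount per prefix: pigeonhole alone pins down the end state but not $\Pi_n^\bullet$. The cross-extension inequality remedies this by providing a second arithmetic constraint that reuses the same prefix with the ``wrong'' suffix, closing the gap between the weak divisibility bound and the growth needed for contradiction. The finite-word version of the theorem follows from the infinite-word version by \Cref{lemma:finiteToInfinite}.
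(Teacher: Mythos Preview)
Your approach is sound and mirrors the paper's skeleton—pigeonhole on the state reached after the $a^n$-prefix of an optimal run, a divisibility lower bound on the accumulated discount $\Pi_n$ from the exact equality, and a cross-extension inequality that turns this into a contradiction. The paper's execution differs in two ways worth noting. First, over the two-letter alphabet $\{a,b\}$ it builds $\A$ and $\B$ so that $\max(\A,\B)(a^n b^\omega)=(\A+\B)(a^n b^\omega)$ (equal to $1/2^n$ for odd $n$ and $1/3^n$ for even $n$), which lets a single argument dispose of both operations at once; you instead use $\{a,b,c\}$ and two tail letters, treating $\max$ and $+$ separately. Second, its cross-extension is ``append one more $a$'' rather than ``switch the tail letter'': from the exact value $d/3^n$ at even $n$ and the lower bound $d/2^{n+1}$ for $a^{n+1}b^\omega$, it extracts an \emph{upper} bound $\Pi_n\le m_1 2^n$, which then collides directly with the divisibility lower bound $\Pi_n\ge 3^n/m_2$.

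Two small issues in your write-up. The value formulas you state for $\A+\B$ and $\max(\A,\B)$ cannot arise from a \emph{single} pair of single-factor NDAs: if $\A$ uses only factor $2$ then $\A(a^n b^\omega)=p+q/2^n$ for fixed $p,q$, and your $\max$ formula forces $q=1/2$ while your sum formula forces $q=-1/2$. (With the pair that realizes your $\max$ values one actually gets $(\A+\B)(a^n b^\omega)=2+1/2^{n+1}-1/3^{n+1}$, and the argument still works, using the $c$-side exact equality together with the $b$-side cross-extension.) Also, the constant $K$ in your divisibility step must absorb not only the common denominator $D$ of the transition weights in $\C$ but also the denominators of the fixed suffix values $U,V$; the paper handles this explicitly by writing $U_b=x/y$ and taking $m_2=d\,y$.
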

\begin{proof}
We show the result with respect to infinite words, and it also holds by  \Cref{lemma:finiteToInfinite} to finite words.	
Consider the NMDAs $\A$ and $\B$ depicted in \Cref{fig:noMaxGeneral}, and
assume towards contradiction that there exists an integral NMDA $\C'$ such that for every $n\in\Nat$,
$$\C'(a^n b^\omega)=\max(\A,\B)(a^n b^\omega)=\Big(\A+\B\Big)(a^n b^\omega)=
\begin{cases}
	\frac{1}{2^{n}} & n \text{ is odd}\\
	\frac{1}{3^{n}} & n \text{ is even}
\end{cases}
$$	
\begin{figure}
	\centering
	\begin{tikzpicture}[->,>=stealth',shorten >=1pt,auto,node distance=2.6cm, semithick, initial text=, every initial by arrow/.style={|->}]
		\node[initial, state] (q0) {$q_0$}; 
		\node[state] [left of=q0, draw=none,xshift=0.9cm] {$\B:$};
		
		\node[state] (q1) [right of=q0] {$q_1$};
		\node[state] (hole) [below of=q1, yshift=1cm] {$q_3$};
		
		\node[state] (q2) [right of=q1] {$q_2$};
		
		\node[state] (p1) [left of=q0, xshift=-1cm] {$p_1$};
		
		\node[initial, state] (p0) [left of=p1] {$p_0$};
		
		\node[state] [left of=p0, draw=none,xshift=0.9cm] {$\A:$};
		\node[state] (hole1) [below of=p0, yshift=1cm] {$p_2$};
		
		\path 
		
		(q1) edge [in=150,out=30] node [below, yshift=0.08cm]{$a,\frac{1}{3},3$} (q2)
		(q2) edge [above, in=-30,out=-150] node [above, yshift=-0.07cm] {$a,-1,3$} (q1)
		(q0) edge node [below] {$a,0,3$} (q1)
		
		(q1) edge node [left]{$b,0,3$} (hole)
		(q2) edge [bend left=20] node [right, xshift=0.2cm] {$b,0,3$} (hole)
		(hole) edge [right, in=-60,out=0, loop, looseness=4] node [align=center,yshift=0.1cm]{$a,0,3$\\[-3pt]$b,0,3$} (hole)

		(p0) edge [in=150,out=30] node [below, yshift=0.08cm] {$a,\frac{1}{2},2$} (p1)
		(p1) edge [above, in=-30,out=-150] node [above, yshift=-0.07cm] {$a,-1,2$} (p0)
		
		(p0) edge node [left]{$b,0,2$} (hole1)
		(p1) edge [bend left=20] node [right, xshift=0.2cm] {$b,0,2$} (hole1)
		(hole1) edge [right, in=-60,out=0, loop, looseness=4] node [align=center,yshift=0.1cm]{$a,0,2$\\[-3pt]$b,0,2$} (hole1)
		;
	\end{tikzpicture}
	\caption{\label{fig:noMaxGeneral}Deterministic integral NDAs that no integral NMDA is equivalent to their max or addition.}
\end{figure}
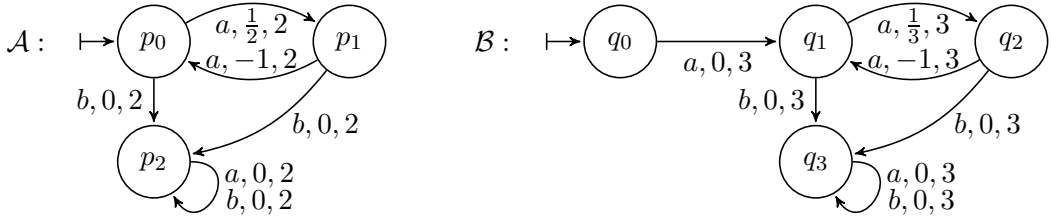
	Let $d\in\Nat$ be the least common denominator of the weights in $\C'$, and
consider the NMDA $\C=\tuple{\Sigma,Q,\iota,\delta,\gamma,\rho}$ created from $\C'$ by multiplying all its weights by $d$. Observe that all the weights in $\C$ are integers. 
According to \Cref{prop:multiply}, for every $n\in\Nat$, we have
$\C(a^n b^\omega)=d\cdot \C'(a^n b^\omega)=
\begin{cases}
	\frac{d}{2^{n}} & n \text{ is odd}\\
	\frac{d}{3^{n}} & n \text{ is even}
\end{cases}
$

For every even $n\in\Nat$, let $w_n=a^n b^\omega$, and $r_n$ a run of $\C$ on $w_n$ that entails the minimal value of $\frac{d}{3^{n}}$.
Since $\C$ is finite, there exists a state $q\in Q$ such that for infinitely many even $n\in\Nat$, the target state of $r_n$ after $n$ steps is $q$, i.e, $\delta(r_n[0..n-1])=q$.
We now show that the difference between $U_b=\C^q(b^\omega)$ and $U_a=\C^q(a\con b^\omega)$, the weights of the $b^\omega$ and $a\con b^\omega$ suffixes starting at $q$, discounted by $\Pi_n=\rho(r_n[0..n-1])$, which is the accumulated discount factor of the prefix of $r_n$ up to $q$, is approximately $\frac{1}{2^n}$ (See \Cref{fig:noClosureProof} for the notations). 
Since the weights of the prefixes are constant, for large enough $n$ we will conclude that $m_1\cdot 2^n\geq \Pi_n$ for some positive constant $m_1$.
\begin{figure}
	\centering{}
	\begin{tikzpicture}[->,>=stealth',shorten >=1pt,auto,node distance=3cm, semithick, initial text=, every initial by arrow/.style={|->}]
		\node [midway] [left of=q, xshift=2cm] {$\C:$};
		\node[state, draw=none] (q0) {};
		\node[state, inner sep=0.1cm,minimum size=0.4cm] (q) [right of=q0] {$q$}; 
		\node[state, draw=none] (qinf) [above right of=q, yshift=-1.5cm] {}; 
		\node[state, draw=none] (qinf2) [below right of=q, yshift=1.5cm] {}; 
		\path 
		(q0) edge [out=0,in=160, swap, looseness=1] node [above] {$a^n,W_n,\Pi_n$} (q)
		(q) edge node [above left, near end, xshift=0.5cm] {$b^\omega,U_b,-$} (qinf)
		(q) edge node [below left, near end, xshift=1cm, yshift=-0.1cm] {$a\con b^\omega,U_a,-$} (qinf2)
		;
	\end{tikzpicture}
	\caption[The notations from the proof of \Cref{thm:NoClosureGeneralIntegralNMDA}.]{\label{fig:noClosureProof}The state $q$ and the notations from the proof of \Cref{thm:NoClosureGeneralIntegralNMDA}, for two different even $n\in\Nat$ such that $\delta(r_n[1..n])=q$.	
		The labels on the walks indicate the input word and the accumulated weight and discount factors.}
\end{figure}
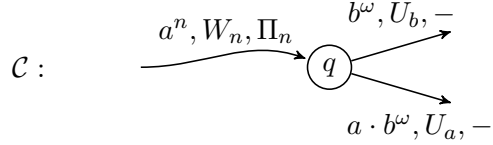

For every such $n\in\Nat$, let 
$W_n=\C(r_n[0..n-1])$, and since $\C(r_n) = \frac{d}{3^{n}}$, we have
\begin{align}W_n + \frac{U_b}{\Pi_n}=\frac{d}{3^n}
	\label{eqn:noClosureUb}
\end{align}  
Since the value of every run of $\C$ on $a^{n+1} b^\omega$ is at least $\frac{d}{2^{n+1}}$, we have 
$W_n+\frac{U_a}{\Pi_n}\geq \frac{d}{2^{n+1}} $. 
Hence,
$
\frac{d}{3^n}-\frac{U_b}{\Pi_n}+\frac{U_a}{\Pi_n}\geq \frac{d}{2^{n+1}}
$
resulting in
$
\frac{U_a-U_b}{\Pi_n}\geq d\cdot\Big(\frac{1}{2^{n+1}}-\frac{1}{3^{n}}\Big)
$.
But for large enough $n$, we have $3^n>2^{n+2}$, hence we get
$\frac{1}{2^{n+1}}-\frac{1}{3^n}>\frac{1}{2^{n+1}}-\frac{1}{2^{n+2}}=\frac{1}{2^{n+2}}$, resulting in $\frac{U_a-U_b}{d} \cdot 2^{n+2} \geq\Pi_n $. And indeed, there exists a positive constant $m_1=\frac{U_a-U_b}{d} \cdot 2^{2}$ such that
$ m_1 \cdot 2^n \geq\Pi_n $.

Now, $U_b$ is a rational constant, otherwise \Cref{eqn:noClosureUb} cannot hold, as the other elements are rationals. Hence, there exist $x\in\mathbb{Z}$ and $y\in\Nat$ such that $U_b=\frac{x}{y}$, and
$
\frac{1}{3^n}=\frac{W_n\cdot \Pi_n + U_b}{d\cdot \Pi_n}=
\frac{W_n\cdot \Pi_n + \frac{x}{y}}{d\cdot \Pi_n}=
\frac{W_n\cdot \Pi_n\cdot y+x}{d\cdot y\cdot \Pi_n}
$ .
Since the denominator and the numerator of the right-hand side are integers, we conclude that there exists a positive constant $m_2=d\cdot y$, such that $m_2 \cdot \Pi_n \geq 3^n$. Eventually, we get $m_1\cdot m_2 \cdot 2^n\geq 3^n$, for some positive constants $m_1$ and $m_2$, and for infinitely many $n\in\Nat$. But this stands in contradiction with $\lim_{n\to\infty}{\Big(\frac{2}{3}\Big)}^n=0$.
\end{proof}

Observe that DMDAs can be complemented, i.e., the complements $-\A,-\B$ of the DMDAs depicted in \Cref{fig:noMaxGeneral} can easily be constructed (by negating all the weights). Also, we can easily construct an NMDA for $\min(-\A,-\B)$ (by joining both automata). However, the complement of $\min(-\A,-\B)$ does not exist, for if it did, it would precisely be $\max(\A,\B)$.

\subsection{Undecidability of the containment problem}\label{sec:integralNMDAsComparison}
We show that it is undecidable to resolve the equivalence and containment problems of integral NMDAs. More precisely, for given integral NMDA $\N$ and integral DMDA $\D$, on both finite and infinite words, it is undecidable to resolve whether $\N \leq \D$ (\Cref{cl:ContainmentFiniteWordsUndecidable,cl:NonStrictContainmentInfiniteWordsUndecidable}), and on finite words it is also undecidable to resolve  whether $\N < \D$ (\Cref{cl:ContainmentFiniteWordsUndecidable}).
For given integral NMDAs $\N_1$ and $\N_2$, on both finite and infinite words, it is undecidable to resolve whether $\N_1 \equiv \N_2$ (\Cref{cl:NonStrictContainmentInfiniteWordsUndecidable}).
We also sketch, in \Cref{rem:PCP}, the undecidability of a problem we do not define in \Cref{sec:Definitions} and do not formally consider in the paper: Given integral DMDAs $\A$ and $\B$, does there exist a finite word $w$, such that $\A(w)=\B(w)$? 

We prove the undecidability result by reduction from the halting problem of two-counter machines.
The general scheme follows similar reductions, such as in \cite{DDGRT10,ABK22}, yet the crux is in simulating a counter by integral NMDAs. 
Upfront, discounted summation is not suitable for simulating counters, since a current increment has, in the discounted setting,  a much higher influence than of a far-away decrement. However, we show that multiple discount factors allow in a sense to eliminate the influence of time, having automata in which no matter where a letter appears in the word, it will have the same influence on the automaton value. (See \Cref{lem:undecidabilityContainment,fig:undecidabilityContainment_A}). Another main part of the proof is in showing how to nondeterministically adjust the automaton weights and discount factors in order to ``detect'' whether a counter is at a current value $0$. (See \Cref{fig:NegativeCounter,fig:balancedCounters,fig:zeroJumpChecker,fig:positiveJumpChecker}.)

We start with introducing the halting problem of two-counter machines (\Cref{sec:twoCountersMachines}), continue with a lemma on the accumulated value of certain series of discount factors and weights (\Cref{sec:specialWeightsAndFactors}), present the reduction (\Cref{sec:TheReduction}) and show the undecidability proof (\Cref{sec:Undecidability}).

\subsubsection{Two-counter machines}\label{sec:twoCountersMachines}
A two-counter machine \cite{Min67} $\M$ is a sequence $(l_1,\ldots,l_n)$ of commands, for some $n\in\Nat$, involving two counters $x$ and $y$. We refer to
$\set{1,\ldots,n}$ as the {\em locations} of the machine. For every $i\in\set{1,\ldots,n}$ we refer to $l_i$ as the {\em command in location $i$}. There are five possible forms of commands:
$$\inc(c),\ \dec(c),\ \goto l_k,\  \jz{c}{l_k}{l_{k'}},\  \halt,$$
where $c\in \set{x,y}$ is a counter and $1\le k,k'\le n$ are locations. 
For not decreasing a zero-valued counter $c\in\set{x,y}$, every $\dec(c)$ command is preceded by the command  $\jz{c}{\text{<current\_line>}}{\text{<next\_line>}}$, and there are no other direct goto-commands to it.\footnote{Notice that this conditional-blocking command keeps the model's halting problem undecidable -- if the original program properly halts then it does not decrease a zero counter, so there is no blocking, and the adapted program also halts; and if the original program does not halt then so does the adapted one, either because of blocking or because of following the original program without blocking.}
The counters are initially set to $0$.
An example of a two-counter machine is given in \Cref{fig:machineExample}.
\begin{figure}
	\vspace*{-\baselineskip}
	\centering
	\setlength{\belowcaptionskip}{-\baselineskip}
	\fbox{\begin{minipage}{20em}
			\begin{enumerate}[itemsep=0pt]
				\item[$l_1$.] $\inc(x)$
				\item[$l_2$.] $\inc(x)$
				\item[$l_3$.] $\jz{x}{l_3}{l_4}$
				\item[$l_4$.] $\dec(x)$
				\item[$l_5$.] $\jz{x}{l_6}{l_3}$
				\item[$l_6$.] $\halt$
			\end{enumerate}
	\end{minipage}}
	\caption{\label{fig:machineExample}An example of a two-counter machine.}
\end{figure}

Let $L$ be the set of possible commands in $\M$, then a {\em run} of $\M$ is a sequence
$\CMrun=\CMrun_1,\ldots,\CMrun_m\in (L\times\Nat\times\Nat)^*$ such that the following holds:
\begin{enumerate}
	\item $\CMrun_1=\tuple{l_1,0,0}$.
	\item For all $1< i\le m$, let $\CMrun_{i-1}=(l_j,\alpha_x,\alpha_y)$ and $\CMrun_{i}=(l',\alpha_x',\alpha_y')$. Then, the following hold.
	\begin{itemize}
		\item If $l_j$ is an $\inc(x)$ command (resp. $\inc(y)$), then $\alpha_x'=\alpha_x+1$, $\alpha_y'=\alpha_y$ (resp. $\alpha_y=\alpha_y+1$, $\alpha_x'=\alpha_x$), and $l'=l_{j+1}$.
		\item If $l_j$ is $\dec(x)$ (resp. $\dec(y)$) then $\alpha_x'=\alpha_x-1$, $\alpha_y'=\alpha_y$ (resp. $\alpha_y=\alpha_y-1$, $\alpha_x'=\alpha_x$), and $l'=l_{j+1}$.
		\item If $l_j$ is $\goto l_k$ then $\alpha_x'=\alpha_x$, $\alpha_y'=\alpha_y$, and $l'=l_k$.
		\item If $l_j$ is $\jz{x}{l_k}{l_{k'}}$ then $\alpha_x'=\alpha_x$, $\alpha_y'=\alpha_y$, and $l'=l_k$ if $\alpha_x=0$, and $l'=l_{k'}$ otherwise.
		\item If $l_j$ is $\jz{y}{l_k}{l_{k'}}$ then $\alpha_x'=\alpha_x$, $\alpha_y'=\alpha_y$, and $l'=l_k$ if $\alpha_y=0$, and $l'=l_{k'}$ otherwise.		
		\item If $l'$ is $\halt$ then $i=m$, namely a run does not continue after $\halt$.
	\end{itemize}
\end{enumerate}
If, in addition, we have that $\CMrun_m=\tuple{l_j,\alpha_x,\alpha_y}$ such that $l_j$ is a $\halt$ command, we say that $\CMrun$ is a \emph{halting run}. We say that a machine $\M$ 0-halts if its run is halting and ends in $\tuple{l,0,0}$.
We say that a sequence of commands $\tau\in L^*$ {\em fits} a run $\CMrun$, if $\tau$ is the projection of $\CMrun$ on its first component.

The {\em command trace} $\pi=\sigma_1,\ldots,\sigma_{m}$ of a halting run $\CMrun=\CMrun_1,\ldots,\CMrun_m$ describes the flow of the run, including a description of whether a counter $c$ was equal to $0$ or larger than $0$ in each occurrence of an $\jz{c}{l_k}{l_{k'}}$ command. It is formally defined as follows.
$\sigma_{m}=\halt$ and for every $1< i\le m$, we define $\sigma_{i-1}$ according to $\CMrun_{i-1}=(l_j,\alpha_x,\alpha_y)$ in the following manner:
\begin{itemize}
	\item $\sigma_{i-1}=l_j$ if $l_j$ is not of the form $\jz{c}{l_k}{l_{k'}}$.
	\item 
	$\sigma_{i-1}=(\goto l_k,c=0)$ for $c\in\{x,y\}$, if $\alpha_c=0$ and the command $l_j$ is of the form $\jz{c}{l_k}{l_{k'}}$.
	\item 
	$\sigma_{i-1}=(\goto l_{k'},c>0)$ for $c\in\{x,y\}$, if $\alpha_c>0$ and the command $l_j$ is of the form $\jz{c}{l_k}{l_{k'}}$.
\end{itemize}
For example, the command trace of the halting run of the machine in \Cref{fig:machineExample} is $\inc(x)$, $\inc(x)$, $(\goto l_4, x>0)$, $\dec(x)$, $(\goto l_3, x>0)$, $(\goto l_4, x>0)$, $\dec(x)$, $(\goto \! l_6, x=0)$, $\halt $.

Deciding whether a given counter machine $\M$ halts is known to be undecidable \cite{Min67}. Deciding whether $\M$ halts with both counters having value $0$, termed the {\em $0$-halting problem}, is also undecidable. 
Indeed, the halting problem can be reduced to the latter by adding some commands that clear the counters, before every \halt command. 

\subsubsection{Auxiliary lemma for simulating counters}\label{sec:specialWeightsAndFactors}
We present a lemma on the accumulated value of certain series of discount factors and weights. Observe that by the lemma, no matter where the pair of discount-factor $\lambda\in\Nat\setminus\{0,1\}$ and weight $w=\frac{\lambda-1}{\lambda}$ appear along the run, they will have the same effect on the accumulated value. This property will play a key role in simulating counting by NMDAs.
\begin{lem}\label{lem:undecidabilityContainment}
	For every sequence $\lambda_1,\cdots,\lambda_{m}$ of integers larger than $1$ and weights $w_1,\cdots,w_{m}$ such that $w_i=\frac{\lambda_i-1}{\lambda_i}$, we have
	$\sum_{i=1}^{m}{ \big(w_i \cdot \prod_{j=1}^{i-1} \frac{1}{\lambda_j}\big)}=
	1-\frac{1}{\prod_{j=1}^{m}\lambda_j}
	$.
\end{lem}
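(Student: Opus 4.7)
The plan is to prove the identity by a telescoping-sum argument that directly exploits the specific algebraic shape $w_i = \frac{\lambda_i - 1}{\lambda_i} = 1 - \frac{1}{\lambda_i}$ of the weights.

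First I will rewrite each summand: substituting $w_i = 1 - \frac{1}{\lambda_i}$ and distributing over $\prod_{j=1}^{i-1}\frac{1}{\lambda_j}$ yields
\[ w_i \cdot \prod_{j=1}^{i-1} \frac{1}{\lambda_j} \;=\; \prod_{j=1}^{i-1} \frac{1}{\lambda_j} \;-\; \prod_{j=1}^{i} \frac{1}{\lambda_j}. \]
Summing from $i=1$ to $m$ then telescopes, since the ``negative'' piece of the $i$th summand cancels the ``positive'' piece of the $(i{+}1)$st summand. All that survives is the $i=1$ positive contribution, which is the empty product $\prod_{j=1}^{0}\frac{1}{\lambda_j}=1$, minus the $i=m$ negative contribution $\prod_{j=1}^{m}\frac{1}{\lambda_j}$. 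This is precisely $1 - \frac{1}{\prod_{j=1}^{m}\lambda_j}$, which is the desired identity.

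No real obstacle arises: the whole argument is essentially a one-line manipulation once the rewriting $w_i = 1 - \frac{1}{\lambda_i}$ is observed, and it is uniform in $m \geq 1$. The only minor point worth flagging is the empty-product convention $\prod_{j=1}^{0}\frac{1}{\lambda_j} = 1$, which is already implicit in the NMDA value function and makes the telescoping work at the boundary $i=1$. Note that integrality of the $\lambda_i$ is not actually used; the claim holds for any rationals $\lambda_i \neq 0$, so the hypothesis ``integers larger than $1$'' is only relevant because that is the setting the lemma will later be applied in. An induction on $m$ would yield exactly the same result but is strictly more verbose, so I would present the telescoping version.
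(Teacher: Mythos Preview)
Your proof is correct. The paper proves the identity by a straightforward induction on $m$, whereas you present it as a telescoping sum; these are essentially the same argument (the induction step is exactly one telescope cancellation), and you even note this yourself.
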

\begin{proof}
	We show the claim by induction on $m$.
	
	\noindent
	The base case, i.e., $m=1$, is trivial.
	For the induction step we have
	\begin{align*}
		\sum_{i=1}^{m+1}{ \big(w_i \cdot \prod_{j=1}^{i-1} \frac{1}{\lambda_j}\big)}&=
		\sum_{i=1}^{m}{ \big(w_i \cdot \prod_{j=1}^{i-1} \frac{1}{\lambda_j}\big)} + w_{m+1} \cdot \prod_{j=1}^{m} \frac{1}{\lambda_j}\\
		&= 1-\frac{1}{\prod_{j=1}^{m}\lambda_j} + \frac{\lambda_{m+1}-1}{\lambda_{m+1}} \cdot \prod_{j=1}^{m} \frac{1}{\lambda_j}\\
		&= 1-\frac{\lambda_{m+1}}{\prod_{j=1}^{m+1}\lambda_j} + \frac{\lambda_{m+1}-1}{\prod_{j=1}^{m+1}\lambda_j}
		= 1-\frac{1}{\prod_{j=1}^{m+1}\lambda_j}
		\qedhere
	\end{align*}
\end{proof}

\subsubsection{The Reduction}\label{sec:TheReduction}
We turn to our reduction from the halting problem of two-counter machines to the problem of NMDA containment. ``Halting with zero counter values'' means that on each counter, there are as many increment operations as there are decrement operations. We can detect violations of such a cumulative property courtesy of \Cref{lem:undecidabilityContainment}.
Furthermore, the assumption that over the entire run, the increments and decrements balance out, allows us to also detect violations in the control flow. Critically, after every branch taken due to a counter being 0, there must be as many increments to that counter as decrements. Dually, after every branch taken due to a counter being positive, there must be more decrements to that counter than increments. If there is a violation, the offending misprediction can be signaled out, to entail a less expensive run on the checking NMDA.

We provide below the construction and the correctness lemma with respect to automata on finite words. We later show, in \Cref{sec:Undecidability}, how to use the same construction also for automata on infinite words.

Given a two-counter machine $\M$ with the commands $(l_1,\ldots,l_n)$,
we construct an integral DMDA $\A$ and an integral NMDA $\B$ on finite words, such that $\M$ $0$-halts iff there exists a word $w\in\Sigma^+$ such that $\B(w)\geq \A(w)$ iff there exists a word $w\in\Sigma^+$ such that $\B(w) > \A(w)$.

The automata $\A$ and $\B$ operate over the following alphabet $\Sigma$, which consists of   $5n+5$ letters, standing for the possible elements in a command trace of $\M$:
\begin{align*}
	\incdec = \ &\set{\inc(x),\dec(x),\inc(y),\dec(y)} \\ 
	\allgoto =\ &\big\{\goto\ l_k: k\in \{1,\ldots,n\}\big\}\cup\\
	&\big\{(\goto\ l_k,c=0): k\in \{1,\ldots,n\},c\in\{x,y\}\big\}\cup\\
	&\big\{(\goto\ l_{k'},c>0): k'\in \{1,\ldots,n\},c\in\{x,y\}\big\} \\
	\withouthalt =\ &\incdec \cup \allgoto \\
	\Sigma =\ &\withouthalt \cup \big\{\halt\big\}
\end{align*}

When $\A$ and $\B$ read a word $w\in\Sigma^+$, they intuitively simulate a sequence of commands $\tau_u$ that induces the command trace $u=\pref_{\halt}(w)$. 
If $\tau_u$ fits the actual run of $\M$, and this run 0-halts, then the minimal run of $\B$ on $w$ has a value strictly larger than $\A(w)$. 
If, however, $\tau_u$ does not fit the actual run of $\M$, or it does fit the actual run but it does not 0-halt, then the violation is detected by $\B$, which has a run on $w$ with value strictly smaller than $\A(w)$.

In the construction, we use the following partial discount-factor functions $\rho_p,\rho_d:\withouthalt\to \Nat$ and partial weight functions $\gamma_p,\gamma_d:\withouthalt\to \Rat$. 
$$
\rho_p(\sigma)=\begin{cases}
	5 & \sigma=\inc(x)\\
	4 & \sigma=\dec(x)\\
	7 & \sigma=\inc(y)\\
	6 & \sigma=\dec(y)\\
	15 & \text{otherwise}
\end{cases} ~~~~~
\rho_d(\sigma)=\begin{cases}
	4 & \sigma=\inc(x)\\
	5 & \sigma=\dec(x)\\
	6 & \sigma=\inc(y)\\
	7 & \sigma=\dec(y)\\
	15 & \text{otherwise}
\end{cases}
$$
$\gamma_p(\sigma)=\frac{\rho_p(\sigma)-1}{\rho_p(\sigma)}$, and $\gamma_d(\sigma)=\frac{\rho_d(\sigma)-1}{\rho_d(\sigma)}$.
We say that $\rho_p$ and $\gamma_p$ are the \emph{primal} discount-factor and weight functions, while $\rho_d$ and $\gamma_d$ are the \emph{dual} functions.
Observe that for every $c\in\{x,y\}$ we have that 
\begin{align}
	\rho_p(\inc(c))=\rho_d(\dec(c))>\rho_p(\dec(c))=\rho_d(\inc(c)) \label{eqn:primalDual}
\end{align}

Intuitively, we will use the primal functions for $\A$'s discount factors and weights, and the dual functions for identifying violations.
Notice that if changing the primal functions to the dual ones in more occurrences of $\inc(c)$ letters than of $\dec(c)$ letters  along some run, then by \Cref{lem:undecidabilityContainment} the run will get a value lower than the original one.

We continue with their formal definitions.
$\A=\tuple{\Sigma,\{q_\A,q_\A^h\},\{q_\A\},\delta_\A,\gamma_\A,\rho_\A}$ is an integral DMDA consisting of two states, as depicted in \Cref{fig:undecidabilityContainment_A}.
Observe that the initial state $q_\A$ has self loops for every alphabet letter in $\withouthalt$ with weights and discount factors according to the primal functions, and a transition $(q_\A,\halt, q_\A^h)$ with weight of $\frac{14}{15}$ and a discount factor of $15$.

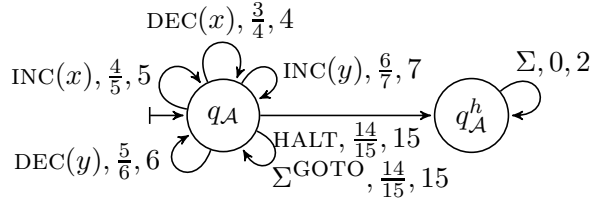
\begin{figure}
	\vspace*{-\baselineskip}
	\centering
	\setlength{\belowcaptionskip}{-\baselineskip}
	\begin{tikzpicture}[->,>=stealth',shorten >=1pt,auto,node distance=2cm, semithick, initial text=, every initial by arrow/.style={|->}]
		\node[initial, state] (q0) {$q_\A$};
		\node[state] (q1) [right of=q0, xshift=1.3cm] {$q_{\A}^h$};
		\path
		(q0) edge [loop left, in=125, out=170, looseness=5] node {$\inc(x),\frac{4}{5},5$} (q0)
		(q0) edge [loop above, in=70, out=115, looseness=5] node [yshift=-0.07cm]{$\dec(x),\frac{3}{4},4$} (q0)
		(q0) edge [loop right, in=25, out=60, looseness=5] node {$\inc(y),\frac{6}{7},7$} (q0)
		(q0) edge [loop right, in=-60, out=-25, looseness=5] node [yshift=-0.25cm, xshift=-0.15cm] {$\allgoto,\frac{14}{15},15$} (q0)
		(q0) edge [loop left, in=-150, out=-110, looseness=5] node [xshift=-0.1cm, yshift=0.1cm] {$\dec(y),\frac{5}{6},6$} (q0)
		
		(q0) edge [below] node [yshift=0.05cm] {$\halt,\frac{14}{15},15$} (q1)
		(q1) edge [loop above, in=0, out=40, looseness=5] node [xshift = 0.2cm, yshift=0.05cm]{$\Sigma,0,2$} (q1)
		;
	\end{tikzpicture}
	\caption{\label{fig:undecidabilityContainment_A}The DMDA $\A$ constructed for the proof of \Cref{cl:undecidabilityContainment}.}
\end{figure}

The integral NMDA $\B=\tuple{\Sigma,Q_\B,\iota_\B,\delta_\B,\gamma_\B,\rho_\B}$ is the union of the following eight gadgets (checkers), each responsible for checking a certain type of violation in the description of a 0-halting run of $\M$.
It also has the states $\qfr,\qhalt\in Q_\B$ such that for all $\sigma\in \Sigma$, there are 0-weighted transitions $(\qfr,\sigma,\qfr)\in\delta_\B$ and $(\qhalt,\sigma,\qhalt)\in\delta_\B$ with an arbitrary discount factor.
Observer that in all of $\B$'s gadgets, the transition over the letter \halt to $\qhalt$ has a weight higher than the weight of the corresponding transition in $\A$, so that when no violation is detected, the value of $\B$ on a word is higher than the value of $\A$ on it.

\checker{1. Halt}{}
This gadget, depicted in \Cref{fig:haltChecker}, checks for violations of non-halting runs.
Observe that its initial state $\qhc$ has self loops identical to those of $\A$'s initial state, a transition to $\qhalt$ over \halt with a weight higher than the corresponding weight in $\A$, and a transition to the state $q_{\mathsf{last}}$ over every letter that is not \halt, ``guessing'' that the run ends without a \halt command.

\begin{figure}
	\vspace*{-\baselineskip}
	\centering
	\setlength{\belowcaptionskip}{-\baselineskip}
	\begin{tikzpicture}[->,>=stealth',shorten >=1pt,auto,node distance=2cm, semithick, initial text=, every initial by arrow/.style={|->},
		every state/.style={ 
			inner sep=0pt}]
		\node[initial, state] (q0) {$\qhc$};
		\node[state] (q1) [right of=q0, xshift=1.3cm] {$\qhalt$};
		
		\node[state] (q2) [below of=q1, xshift=-1.2cm,yshift=0.5cm] {$q_{\mathsf{last}}$};
		\node[state] (q3) [right of=q2, xshift=0.5cm] {$\qfr$};
		
		\path
		(q0) edge [loop left, in=125, out=170, looseness=5] node {$\inc(x),\frac{4}{5},5$} (q0)
		(q0) edge [loop above, in=70, out=115, looseness=5] node {$\dec(x),\frac{3}{4},4$} (q0)
		(q0) edge [loop right, in=25, out=60, looseness=5] node {$\inc(y),\frac{6}{7},7$} (q0)
		(q0) edge [loop below, in=-95, out=-55, looseness=5] node [align=center] {\footnotesize$\allgoto$,\\$\frac{14}{15},15$} (q0)
		(q0) edge [loop left, in=-150, out=-110, looseness=5] node [xshift=-0.1cm, yshift=0.1cm] {$\dec(y),\frac{5}{6},6$} (q0)
		
		(q0) edge [below] node [yshift=0.05cm]{\footnotesize$\halt$,$\frac{15}{16},16$} (q1)
		(q1) edge [loop above, in=0, out=40, looseness=5] node [xshift = 0.2cm, yshift=0.05cm]{\footnotesize$\Sigma,0,2$} (q1)
		
		(q0) edge [right] node [xshift = 0.3cm, yshift=-0.1cm] {\footnotesize$\withouthalt,0,2$} (q2)
		(q2) edge [below] node {\footnotesize$\Sigma,2,2$} (q3)
		(q3) edge [loop above, in=0, out=40, looseness=5] node [xshift = 0.2cm, yshift=0.05cm]{\footnotesize$\Sigma,0,2$} (q3)
		;
	\end{tikzpicture}
	\caption{\label{fig:haltChecker}The Halt Checker in the NMDA $\B$.}
\end{figure}
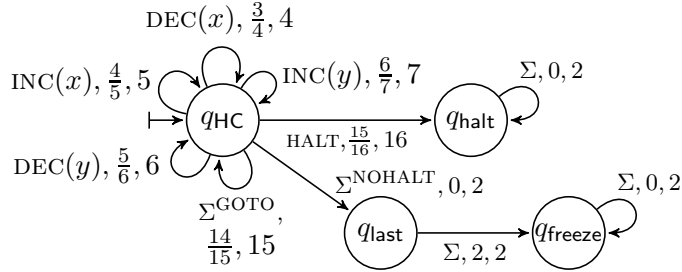

\checker{2. Negative-Counters}{}
The second gadget, depicted in \Cref{fig:NegativeCounter}, checks that the input prefix $u$ has no more $\dec(c)$ than $\inc(c)$ commands for each counter $c\in\{x,y\}$.
It is similar to $\A$, however having self loops in its initial states that favor $\dec(c)$ commands when compared to $\A$. 

\begin{figure}
	\centering
	\setlength{\belowcaptionskip}{-\baselineskip}
	\begin{tikzpicture}[->,>=stealth',shorten >=1pt,auto,node distance=2cm, semithick, initial text=, every initial by arrow/.style={|->}, every state/.style={inner sep=0pt, minimum size=0.6cm}]
		\node[initial, state] (q0) {$q_{\mathsf{Nx}}$};
		\node[state] (q1) [right of=q0, xshift=1.3cm] {$\qhalt$};
		\path
		(q0) edge [loop left, in=125, out=170, looseness=5] node {$\inc(x),\frac{9}{10},10$} (q0)
		(q0) edge [loop above, in=70, out=115, looseness=5] node {$\dec(x),\frac{1}{2},2$} (q0)
		(q0) edge [loop right, in=25, out=60, looseness=5] node {$\inc(y),\frac{6}{7},7$} (q0)
		(q0) edge [loop below, in=-80, out=-40, looseness=5] node {\footnotesize$\allgoto$,$\frac{14}{15},15$} (q0)
		(q0) edge [loop left, in=-150, out=-110, looseness=5] node [xshift=-0.1cm, yshift=0.1cm] {$\dec(y),\frac{5}{6},6$} (q0)
		
		(q0) edge [below] node [yshift=0.05cm] {$\halt,\frac{15}{16},16$} (q1)
		;
		
		\node[initial right, state] (q0) [right of=q1, xshift=1.3cm]{$q_{\mathsf{Ny}}$};
		\path
		(q0) edge [loop left, in=125, out=170, looseness=5] node [yshift=0.2cm]{$\inc(x),\frac{4}{5},5$} (q0)
		(q0) edge [loop above, in=70, out=115, looseness=5] node {$\dec(x),\frac{3}{4},4$} (q0)
		(q0) edge [loop right, in=25, out=60, looseness=5] node {$\inc(y),\frac{13}{14},14$} (q0)
		(q0) edge [loop below, in=-80, out=-40, looseness=5] node [xshift=0.7cm] {\footnotesize$\allgoto$,$\frac{14}{15},15$} (q0)
		(q0) edge [loop left, in=-150, out=-110, looseness=5] node [xshift=0.3cm, yshift=-0.3cm] {$\dec(y),\frac{2}{3},3$} (q0)
		
		(q0) edge [below] node [yshift=0.05cm] {$\halt,\frac{15}{16},16$} (q1)
		;
		
	\end{tikzpicture}
	\caption{\label{fig:NegativeCounter}The negative-counters checker, on the left for $x$ and on the right for $y$, in the NMDA $\B$.}
\end{figure}
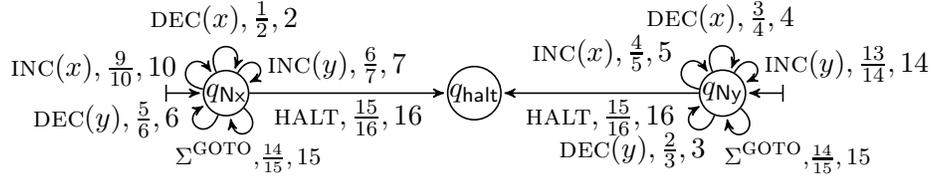

\checker{3. Positive-Counters}{}
The third gadget, depicted in \Cref{fig:balancedCounters}, checks that for every $c\in\{x,y\}$, the input prefix $u$ has no more $\inc(c)$ than $\dec(c)$ commands.
It is similar to $\A$, while having self loops in its initial state according to the dual functions rather than the primal ones. 

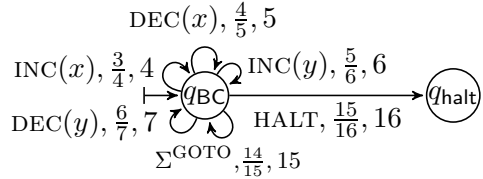
\begin{figure}
	\vspace*{-\baselineskip}
	\centering
	\setlength{\belowcaptionskip}{-\baselineskip}
	\begin{tikzpicture}[->,>=stealth',shorten >=1pt,auto,node distance=2cm, semithick, initial text=, every initial by arrow/.style={|->}, every state/.style={inner sep=0pt, minimum size=0.6cm}]
		\node[initial, state] (q0) {$q_{\mathsf{BC}}$};
		\node[state] (q1) [right of=q0, xshift=1.3cm] {$\qhalt$};
		\path
		(q0) edge [loop left, in=125, out=170, looseness=5] node {$\inc(x),\frac{3}{4},4$} (q0)
		(q0) edge [loop above, in=70, out=115, looseness=5] node {$\dec(x),\frac{4}{5},5$} (q0)
		(q0) edge [loop right, in=25, out=60, looseness=5] node {$\inc(y),\frac{5}{6},6$} (q0)
		(q0) edge [loop below, in=-80, out=-40, looseness=5] node {\footnotesize$\allgoto$,$\frac{14}{15},15$} (q0)
		(q0) edge [loop left, in=-150, out=-110, looseness=5] node [xshift=-0.1cm, yshift=0.1cm] {$\dec(y),\frac{6}{7},7$} (q0)
		
		(q0) edge [below] node [yshift=0.05cm] {$\halt,\frac{15}{16},16$} (q1)
		;
	\end{tikzpicture}
	\caption{\label{fig:balancedCounters}The Positive-Counters Checker in the NMDA $\B$.}
\end{figure}

\checker{4. Command}{}
The next gadget checks for local violations of successive commands. That is, it makes sure that the letter $w_i$ represents a command that can follow the command represented by $w_{i-1}$ in $\M$, ignoring the counter values. 
For example, if the command in location $l_2$ is $\inc(x)$, then from state $q_2$, which is associated with $l_2$, we move with the letter $\inc(x)$ to $q_3$, which is associated with $l_3$.
The test is local, as this gadget does not check for violations involving illegal jumps due to the values of the counters. 
An example of the command checker for the counter machine in \Cref{fig:machineExample} is given in \Cref{fig:commandCheckerExample}.

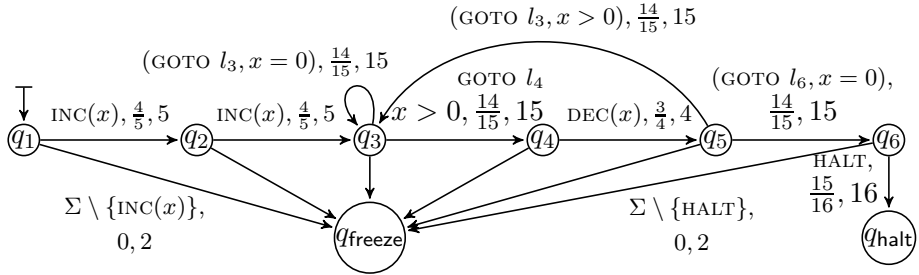
\begin{figure}
	\vspace*{-\baselineskip}
	\centering
	\setlength{\belowcaptionskip}{-\baselineskip}
	\begin{tikzpicture}[->,>=stealth',shorten >=1pt,auto,node distance=2.3cm, semithick, initial text=, every initial by arrow/.style={|->}, every state/.style={ 
			inner sep=0pt, minimum size=0.4cm}]
		\node[initial above, state] (q1) {$q_1$};
		\node[state] (q2) [right of=q1] {$q_2$};
		\node[state] (q3) [right of=q2] {$q_3$};
		\node[state] (q4) [right of=q3] {$q_4$};
		\node[state] (q5) [right of=q4] {$q_5$};
		\node[state] (q6) [right of=q5] {$q_6$};
		\node[state] (hlt) [below of=q6, yshift=1cm] {$\qhalt$};
		\node[state] (fr) [below of=q3, yshift=1cm] {$\qfr$};
		\path
		(q1) edge [above] node [align=center]{\footnotesize$\inc(x),\frac{4}{5},5$} (q2)
		(q2) edge [above] node [align=center, xshift=-0.1cm]{\footnotesize$\inc(x),\frac{4}{5},5$} (q3)
		(q3) edge [loop above,out=80,in=130, looseness=11] node [align=center, xshift=-1.2cm]{\footnotesize{$(\goto l_3,x=0),\frac{14}{15},15$}} (q3)
		(q3) edge [above] node [align=right, xshift=0.15cm]{\footnotesize{$\mbox{\sc goto } l_4$}\\$x>0,\frac{14}{15},15$} (q4)
		(q4) edge [above] node [align=center]{\footnotesize$\dec(x),\frac{3}{4},4$} (q5)
		(q5) edge [above] node [align=center]{\footnotesize$(\goto l_6,x=0),$\\$\frac{14}{15},15$} (q6)
		(q5) edge [above,out=120,in=60] node [align=center, xshift=0.4cm]{\footnotesize$(\goto l_3,x>0),\frac{14}{15},15$} (q3)
		(q6) edge [left] node [align=center]{\footnotesize$\halt,$\\$\frac{15}{16},16$} (hlt)
		(q1) edge [below] node [align=center,xshift=-0.7cm]{\footnotesize$\Sigma\setminus\{\inc(x)\},$\\\footnotesize$0,2$} (fr)
		(q2) edge [below] node [align=center,xshift=-0.7cm]{} (fr)
		(q3) edge [below] node [align=center,xshift=-0.7cm]{} (fr)
		(q4) edge [below] node [align=center,xshift=-0.7cm]{} (fr)
		(q5) edge [below] node [align=center,xshift=-0.7cm]{} (fr)
		(q6) edge [below] node [align=center,xshift=0.7cm]{\footnotesize$\Sigma\setminus\{\halt\},$\\\footnotesize$0,2$} (fr)
		;
	\end{tikzpicture}
	\caption{\label{fig:commandCheckerExample}The command checker that corresponds to the counter machine in \Cref{fig:machineExample}.}
\end{figure}

The command checker, which is a DMDA, consists of states $q_1,\ldots,q_n$ that correspond to the commands $l_1,\ldots,l_n$, and the states $\qhalt$ and $\qfr$. 
For two locations $j$ and $k$, there is a transition from $q_j$ to $q_k$ on the letter $\sigma$ iff $l_k$ can {\em locally follow\/} $l_j$ in a run of $\M$ that has $\sigma$ in the corresponding location of the command trace. 
That is, either $l_j$ is a $\goto l_k$ command (meaning $l_j=\sigma=\goto l_k$), $k$ is the next location after $j$ and $l_j$ is an $\inc$ or a $\dec$ command (meaning $k=j+1$ and $l_j=\sigma\in\incdec$), $l_j$ is an $\jz{c}{l_k}{l_{k'}}$ command with $\sigma=(\goto l_k,c=0)$, or $l_j$ is an $\jz{c}{l_s}{l_k}$ command with $\sigma=(\goto l_k,c>0)$.
The weights and discount factors of the $\withouthalt$ transitions mentioned above are according to the primal functions $\gamma_p$ and $\rho_p$ respectively.
For every location $j$ such that $l_j=\halt$, there is a transition from $q_j$ to $\qhalt$ labeled by the letter $\halt$ with a weight of $\frac{15}{16}$ and a discount factor of $16$.
Every other transition that was not specified above leads to $\qfr$ with weight $0$ and some discount factor.

\checker{5,6. Zero-Jump}{s}
The next gadgets, depicted in \Cref{fig:zeroJumpChecker}, check for violations in conditional jumps. In this case, we use a different checker instance for each counter $c\in\{x,y\}$, ensuring that for every $\jz{c}{l_k}{l_{k'}}$ command,  if the jump $\goto l_k$ is taken, then the value of $c$ is indeed $0$.

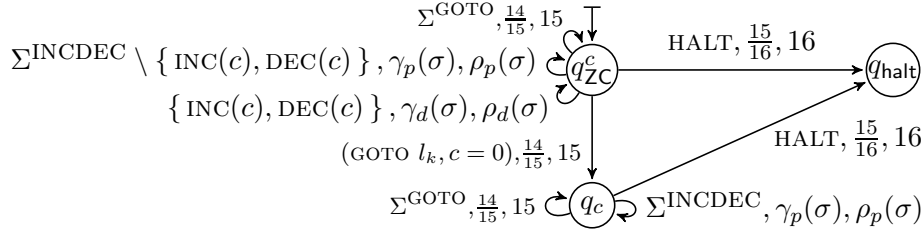
\begin{figure}
	\vspace*{-\baselineskip}
	\centering
	\setlength{\belowcaptionskip}{-\baselineskip}
	\begin{tikzpicture}[->,>=stealth',shorten >=1pt,auto,node distance=4cm, semithick, initial text=, every initial by arrow/.style={|->}, every state/.style={inner sep=0pt, minimum size=0.6cm}]
		\node[initial above, state] (q0) {$\qzc$};
		\node[state] (q1) [below of=q0, yshift=2.2cm] {$q_c$};
		\node[state] (hlt) [right of=q0] {$\qhalt$};
		\path
		(q0) edge [loop left, in=110, out=140, looseness=5] node [align=center, yshift=0.2cm, xshift=0.1cm] {\footnotesize$\allgoto$,$\frac{14}{15},15$} (q0)
		(q0) edge [loop left, in=160, out=-170, looseness=5] node [yshift=0.07cm] {$\incdec\setminus\set{\inc(c), \dec(c)},\gamma_p(\sigma),\rho_p(\sigma)$} (q0)
		(q0) edge [loop left, in=-150, out=-120, looseness=5] node [yshift=-0.1cm] {$\set{\inc(c), \dec(c)},\gamma_d(\sigma),\rho_d(\sigma)$} (q0)
		(q0) edge [left] node [align=center,yshift=-0.2cm] {\footnotesize$(\goto l_k,c=0),$\footnotesize$\frac{14}{15},15$} (q1)
		
		(q1) edge [loop right, in=-20, out=10, looseness=6] node {$\incdec,\gamma_p(\sigma),\rho_p(\sigma)$} (q1)
		
		(q1) edge [loop left, in=165, out=-160, looseness=6] node [align=center] {\footnotesize$\allgoto$,$\frac{14}{15},15$} (q1)
		
		(q0) edge [above] node [align=center] {$\halt,\frac{15}{16},16$} (hlt)
		
		(q1) edge [right] node [align=center, xshift=0.3cm] {$\halt,\frac{15}{16},16$} (hlt)
		;
	\end{tikzpicture}
	\caption{\label{fig:zeroJumpChecker}The Zero-Jump Checker (for a counter $c\in\set{x,y}$) in the NMDA $\B$.}
\end{figure}

Intuitively, $\qzc$ profits from words that have more $\inc(c)$ than $\dec(c)$ letters, while $q_c$ continues like $\A$.
If the move to $q_c$ occurred after a balanced number of $\inc(c)$ and $\dec(c)$, as it should be in a real command trace, neither the prefix word before the move to $q_c$, nor the suffix word after it result in a profit. 
Otherwise, provided that the counter is $0$ at the end of the run (as guaranteed by the negative- and positive-counters checkers), both prefix and suffix words get profits, resulting in a smaller value for the run. 

\checker{7,8. Positive-Jump}{s}
These gadgets, depicted in \Cref{fig:positiveJumpChecker}, are dual to the zero-jump checkers, checking for the dual violations in conditional jumps. 
Similarly to the zero-jump checkers, we have a different instance for each counter $c\in\{x,y\}$, ensuring that for every $\jz{c}{l_k}{l_{k'}}$ command, if the jump $\goto l_{k'}$ is taken, then the value of $c$ is indeed greater than $0$.

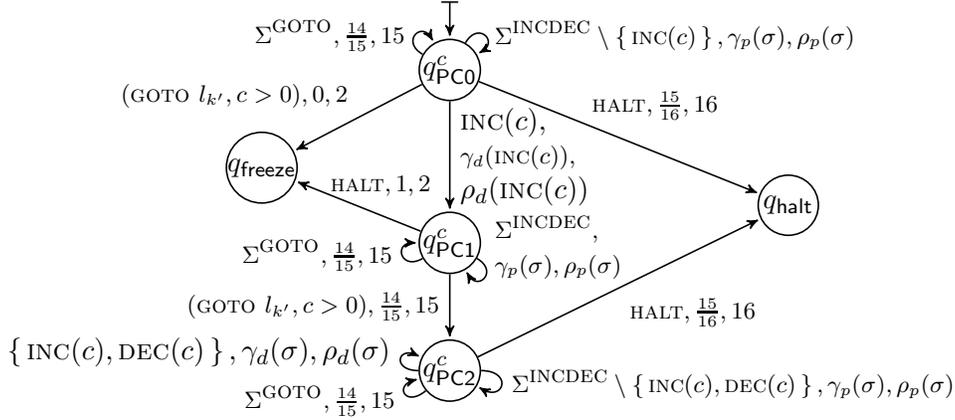
\begin{figure}
	\centering
	\begin{tikzpicture}[->,>=stealth',shorten >=1pt,auto,node distance=2.5cm, semithick, initial text=, every initial by arrow/.style={|->},
		every state/.style={ 
			inner sep=0pt, minimum size=0.8cm}]
		\node[initial above, state] (q0) {$\qpc{0}$};
		\node[state] (q1) [below of=q0, yshift=0.2cm] {$\qpc{1}$};
		\node[state] (q2) [below of=q1, yshift=0.8cm] {$\qpc{2}$};
		\node[state] (qfr) [left of=q1,yshift=1cm] {$\qfr$};
		\node[state] (hlt) [right of=q1,xshift=2cm,yshift=0.5cm] {$\qhalt$};
		\path
		(q0) 
		edge 
		[loop left, out=150, in=120, looseness=4] 
		node 
		[align=center]
		{\footnotesize$\allgoto,\frac{14}{15},15$} 
		(q0)
		
		(q0) 
		edge 
		[loop right, out=60, in=30, looseness=4] 
		node 
		[align=center] {\footnotesize$\incdec\setminus\set{\inc(c)},\gamma_p(\sigma),\rho_p(\sigma)$}
		(q0)
		
		(q0) 
		edge 
		[right] 
		node 
		[align=left] 
		{$\inc(c),$\\\footnotesize$\gamma_d(\inc(c)),$\\$\rho_d(\inc(c))$}
		(q1)
		
		(q0) 
		edge 
		[above right] 
		node 
		[align=center, xshift=-0.5cm, yshift=0.1cm] 
		{\footnotesize$\halt,\frac{15}{16},16$}
		(hlt)
		
		(q0) 
		edge 
		[above left] 
		node 
		[align=center] 
		{\footnotesize$(\goto l_{k'},c>0),0,2$} 
		(qfr)
		
		(q1) 
		edge 
		[loop right, out=-30, in=-60, looseness=4] 
		node 
		[align=left,yshift=0.4cm] 
		{\footnotesize$\incdec,$\\\footnotesize$\gamma_p(\sigma),\rho_p(\sigma)$} 
		(q1)
		
		(q1) 
		edge 
		[loop left, out=-150, in=-180, looseness=4] 
		node 
		[align=center] 
		{\footnotesize$\allgoto,\frac{14}{15},15$} 
		(q1)
		
		(q1) 
		edge 
		[left] 
		node 
		[align=center] 
		{\footnotesize$(\goto l_{k'},c>0),\frac{14}{15},15$} 
		(q2)
		
		(q2) 
		edge 
		[loop right, out=0, in=-30, looseness=5] 
		node 
		[align=center] 
		{\footnotesize$\incdec\setminus\set{\inc(c),\dec(c)},\gamma_p(\sigma),\rho_p(\sigma)$} 
		(q2)
		
		(q2) 
		edge 
		[loop left, out=-140, in=-165, looseness=5] 
		node 
		[align=center, yshift=-0.1cm] 
		{\footnotesize$\allgoto,\frac{14}{15},15$} 
		(q2)
		
		(q2) 
		edge 
		[loop left, out=180, in=155, looseness=5] 
		node 
		[align=center, yshift=0.1cm]
		{$\set{\inc(c),\dec(c)},\gamma_d(\sigma),\rho_d(\sigma)$} 
		(q2)
		
		(q2) 
		edge 
		[below right] 
		node 
		[align=center] 
		{\footnotesize$\halt,\frac{15}{16},16$} 
		(hlt)
		
		(q1) 
		edge 
		[above] 
		node 
		[align=center,xshift=0.3cm] 
		{\footnotesize$\halt,1,2$} 
		(qfr)
		;
	\end{tikzpicture}
	\caption{\label{fig:positiveJumpChecker}The Positive-Jump Checker (for a counter $c$) in the NMDA $\B$.}
\end{figure}

Intuitively, if the counter is $0$ on a $(\goto l_{k'},c>0)$ command when there was no $\inc(c)$ command yet, the gadget benefits by moving from $\qpc{0}$ to $\qfr$. If there was an $\inc(c)$ command, it benefits by having the dual functions on the move from $\qpc{0}$ to $\qpc{1}$ over $\inc(c)$ and the primal functions on one additional self loop of $\qpc{1}$ over $\dec(c)$.

\begin{lem}
	\label{cl:undecidabilityContainment}
	Given a two-counter machine $\M$, we can compute an integral DMDA $\A$ and an integral NMDA $\B$ on finite words, such that $\M$ $0$-halts iff there exists a word $w\in\Sigma^+$ such that $\B(w)\geq \A(w)$ iff there exists a word $w\in\Sigma^+$ such that $\B(w) > \A(w)$.
\end{lem}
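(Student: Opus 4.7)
The plan is to establish the two non-trivial halves of the chain
\emph{$\M$ $0$-halts $\Rightarrow$ $\exists w.\;\B(w) > \A(w)$ $\Rightarrow$ $\exists w.\;\B(w) \geq \A(w)$ $\Rightarrow$ $\M$ $0$-halts}; the middle implication is immediate. Throughout I will exploit \Cref{lem:undecidabilityContainment}: every transition in $\A$ and (excepting only the $\qpc{1}\to\qfr$ halt transition in the positive-jump gadget) every transition in $\B$ has the form $\gamma = (\rho-1)/\rho$, so a run composed of such transitions has value $1 - 1/\Pi$ where $\Pi$ is the accumulated product of its discount factors. Value comparisons therefore reduce to comparing products. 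Two further identities are crucial: $\rho_p(\inc(c))\cdot\rho_p(\dec(c)) = \rho_d(\inc(c))\cdot\rho_d(\dec(c))$ for each $c\in\{x,y\}$, which makes matched $\inc(c),\dec(c)$ pairs invariant under a primal--dual swap, and the halt-weight asymmetry $16/15 > 1$, which provides the safety margin by which $\B$'s halt transitions strictly beat $\A$'s on valid runs.

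For the first implication, let $\CMrun$ be a 0-halting run of $\M$ and $w = u\cdot \halt$ its command trace; then $\A(w) = 1 - 1/(15\,\rho_p(u))$, and it suffices to exhibit for every checker in $\B$ a run on $w$ whose accumulated product is at least $16\,\rho_p(u)$. The halt and command checkers follow $w$ honestly, terminating at $\qhalt$ with halt factor $16$. The negative-counters, positive-counters, and zero-jump checkers modify factors only on $\inc(c),\dec(c)$ letters; since $w$ is $0$-halting these occurrences are globally balanced (and also balanced in every prefix ending at a correctly labeled $(\goto l_k, c{=}0)$ transition), so the modifications cancel pairwise by the primal--dual identity. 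The positive-jump checker requires a direct calculation: the forced $\qpc{0}\to\qpc{1}$ on the first $\inc(c)$ together with the $\qpc{1}\to\qpc{2}$ transition on some $(\goto l_{k'},c{>}0)$ yields $\Pi/(15\,\rho_p(u)) = (16/15)\cdot(5/4)^{b_2-a_2-1}$, where $a_2,b_2$ count the $\inc(c),\dec(c)$ letters in the suffix past the chosen jump; since the counter is positive at that jump and zero at halt, $b_2-a_2 \geq 1$, so the ratio is at least $16/15 > 1$ regardless of how the nondeterminism is resolved.

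For the converse I argue the contrapositive: if $w$ fails to be the command trace of a $0$-halting run, then some checker achieves value strictly less than $\A(w)$. Five kinds of violation cover all cases. If $w$ has no terminal $\halt$, the halt checker takes $\qhc\to q_{\mathsf{last}}$ on the final letter, giving value $1 - 1/\rho_p(w_{0..n-2}) < \A(w)$. If $w$ has a locally illegal transition, the command checker drops into $\qfr$ prematurely with a deficient product. A global imbalance in $\inc(c),\dec(c)$ counts is caught by the negative- or positive-counters checker, whose product accrues a factor $(1/2)^k$ or $(4/5)^k$ respectively. A $(\goto l_k, c{=}0)$ letter occurring at a counter-positive position is caught by the zero-jump checker: transitioning $\qzc\to q_c$ at that point, the prefix has excess $\inc(c)$ letters, each contributing $\rho_d(\inc(c))/\rho_p(\inc(c)) = 4/5 < 1$. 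Symmetrically, a $(\goto l_{k'}, c{>}0)$ letter at counter zero is caught by the positive-jump checker, either via the direct $\qpc{0}\to\qfr$ route (when no prior $\inc(c)$ has occurred, giving value $1 - 1/\rho_p(w_{0..j-1})$) or via the $\qpc{1}\to\qpc{2}$ route with $b_2 - a_2 = 0$, giving ratio $(16/15)(4/5) = 64/75 < 1$.

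The main obstacle is calibrating the two directions for the positive-jump checker: the dual factor on the first $\inc(c)$ must penalize mispredictions strongly enough to overcome the $16/15$ halt margin, while on valid traces the dual treatment of the suffix's $\inc(c),\dec(c)$ pairs must exactly compensate. The arithmetic boundary is sharp: $b_2 - a_2 = 1$ gives ratio $16/15 > 1$ and $b_2 - a_2 = 0$ gives ratio $64/75 < 1$, so the construction leaves no gap and the three-state structure $\qpc{0},\qpc{1},\qpc{2}$ together with the chosen weights is forced by this tight boundary.
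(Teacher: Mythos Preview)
Your overall decomposition matches the paper's, and the arithmetic you do is correct where you do it. But there is a genuine quantifier error in your first implication that leaves real cases unaddressed.

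You write ``it suffices to exhibit for every checker in $\B$ a run on $w$ whose accumulated product is at least $16\,\rho_p(u)$.'' This is backwards. Since $\B(w)$ is the \emph{infimum} over all runs, establishing $\B(w)>\A(w)$ requires that \emph{every} run of $\B$ on $w$---under every nondeterministic resolution---has value exceeding $\A(w)$. Exhibiting one honest run per checker bounds $\B(w)$ from above, not below. The paper's Part~I is accordingly organised around showing that each checker's value stays above $\A(u)$ even when it ``nondeterministically guesses a violation''.

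Concretely, this leaves two holes in your argument. First, the halt checker can branch to $q_{\mathsf{last}}$ on any non-\halt\ letter; you must show these branches still yield value at least $\A(w)$ (the paper computes that such a run has value exactly $1$, which dominates $\A$ on every word). Second, for the positive-jump checker you only treat runs that reach $\qpc{2}$; you omit the run that stays in $\qpc{0}$ throughout (when no $\inc(c)$ occurs---then it mirrors $\A$ up to the \halt\ weight) and, more delicately, runs that reach $\qpc{1}$ but never move to $\qpc{2}$. The latter terminate via the $\halt,1,2$ transition to $\qfr$, giving value $1$, and this case must be handled explicitly. Your phrase ``regardless of how the nondeterminism is resolved'' covers only the choice of \emph{which} $(\goto l_{k'},c{>}0)$ letter triggers $\qpc{1}\to\qpc{2}$, not whether that transition is taken at all.

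In Part~II your case split is right, but the zero-jump and positive-jump analyses silently assume global $\inc/\dec$ balance for counter $c$; you should state (as the paper does) that this is without loss of generality because imbalance is already caught by the counter checkers.
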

\begin{proof}
	Given a two-counter machine $\M$, consider the DMDA $\A$ and the NMDA $\B$ constructed in \Cref{sec:TheReduction}, and an input word $w$. Let $u=\pref_{\halt}(w)$.
	
	We prove the claim by showing that I) if $u$ correctly describes a 0-halting run of $\M$ then $\B(w)>\A(w)$, and II) if $u$ does not fit the actual run of $\M$, or if it does fit it, but the run does not 0-halt, then the violation is detected by $\B$, in the sense that $\B(w)<\A(w)$.
	
	\vspace{5pt}\noindent{\bf I.} 	
	We start with the case that $u$ correctly describes a 0-halting run of $\M$, and show that $\B(w)>\A(w)$.
	
	Observe that in all of $\B$'s checkers, the transition over the \halt command to the $\qhalt$ state has a weight higher than the weight of the corresponding transition in $\A$. Thus, if a checker behaves like $\A$ over $u$, namely uses the primal functions, it generates a value higher than that of $\A$.
	
	We show below that each of the checkers generates a value higher than the value of $\A$ on $u$ (which is also the value of $\A$ on $w$), also if it nondeterministically ``guesses a violation'', behaving differently than $\A$.
	
	\lchecker{1. Halt}{}
	Since $u$ does have the \halt command, the run of the halt checker on $u$, if guessing a violation, will end in the pair of transitions from $\qhc$ to $q_{\mathsf{last}}$ to $\qfr$ with discount factor $2$ and weights $0$ and $2$, respectively. 
	
	Let $D$ be the accumulated discount factor in the gadget up to these pair of transitions. 
	According to \Cref{lem:undecidabilityContainment}, the accumulated weight at this point is $1-\frac{1}{D}$, hence the value of the run will be $1-\frac{1}{D} + \frac{1}{D}\cdot 0 + \frac{1}{2D}\cdot 2 = 1$, which is, according to \Cref{lem:undecidabilityContainment}, larger than the value of $\A$ on any word.
	
	\lchecker{2,3. Negative- and Positive-Counters}{s}
	Since $u$ has the same number of $\inc(c)$ and $\dec(c)$ letters, by \Cref{eqn:primalDual,lem:undecidabilityContainment}, these gadgets and $\A$ will have the same value on the prefix of $u$ until the last transition, on which the gadgets will have a higher weight.
	
	\lchecker{4. Command}{}
	As this gadget is deterministic, it cannot ``guess a violation'', and its value on $u$ is larger than $\A(u)$ due to the weight on the \halt command.
	
	\lchecker{5,6. Zero-Jump}{s}
	Consider a counter $c\in\set{x,y}$ and a run $r$ of the gadget on $u$.
	If $r$ did not move to $q_c$, we have $\B(r)>\A(w)$, similarly to the analysis in the negative- and positive-counters checkers.
	Otherwise, denote the transition that $r$ used to move to $q_c$ as $t$.
	Observe that since $u$ correlates to the actual run of $\M$, we have that $t$ was indeed taken when $c=0$.
	In this case the value of the run will not be affected, since before $t$ we have the same number of $\inc(c)$ and $\dec(c)$ letters, and after $t$ we also have the same number of $\inc(c)$ and $\dec(c)$ letters. Hence, due to the last transition over the \halt command, we have $\B(r)>\A(u)$.
	
	\lchecker{7,8. Positive-Jump}{s}
	Consider a counter $c\in\set{x,y}$ and a run $r$ of the gadget on $u$.
	If $r$ never reaches $\qpc{1}$, it has the same sequence of weights and discount factors as $\A$, except for the higher-valued \halt transition.
	If $r$ reaches $\qpc{1}$ but never reaches $\qpc{2}$, since $u$ ends with a $\halt$ letter, we have that $r$ ends with a transition to $\qfr$ that has a weight of $1$, hence $\B(r)=1>\A(w)$.

	If $r$ reaches $\qpc{2}$, let $u=y\con\inc(c)\con z\con v$ where $y$ has no $\inc(c)$ letters, $t=r[|y|+1+|z|]$ is the first transition in $r$ targeted at $\qpc{2}$, and $\alpha_c\geq 1$ is the value of the counter $c$ when $t$ is taken.
	We have that $1+\numof{\inc(c)}{z}=\numof{\dec(c)}{z}+\alpha_c$. 
	Since $u$ is balanced, we also have that $\numof{\dec(c)}{v}=\numof{\inc(c)}{v}+\alpha_c$.
	For the first $\inc(c)$ letter, $r$ gets a discount factor of $\rho_d(\inc(c))=\rho_p(\dec(c))$.
	All the following $\inc(c)$ and $\dec(c)$ letters contribute discount factors according to $\rho_p$ in $z$ and according to $\rho_d$ in $v$.
	Hence, $r$ gets the discount factor $\rho_p(\dec(c))$ a total of 
	\begin{align*}
		1+\numof{\dec(c)}{z}+\numof{\inc(c)}{v}&=
		1+1+\numof{\inc(c)}{z}-\alpha_c+\numof{\inc(c)}{v}\\
		&=
		\numof{\inc(c)}{u}+1-\alpha_c\\&\leq \numof{\inc(c)}{u}=\numof{\dec(c)}{u}
	\end{align*}
	times, and the discount factor $\rho_p(\inc(c))$ a total of 
	\begin{align*}
		\numof{\inc(c)}{z}+\numof{\dec(c)}{v}&=
		\numof{\inc(c)}{z}+\numof{\inc(c)}{v}+\alpha_c\\&=
		\numof{\inc(c)}{u}-1+\alpha_c\geq \numof{\inc(c)}{u}
	\end{align*}
	times.
	
	Therefore, the value of $r$ is at least as big as the value of $\A$ on the prefix of $u$ until the \halt transition, and due to the higher weight of $r$ on the latter, we have $\B(r) > \A(u)$.

	\vspace{5pt}\noindent{\bf II.} 	
	We continue with the case that $u$ does not correctly describe a 0-halting run of $\M$, and show that $\B(w) <\A(w)$.	
	Observe that the incorrectness must fall into one of the following cases, each of which results in a lower value of one of $\B$'s gadgets on $u$, compared to the value of $\A$ on $u$:
	\begin{itemize}
		\item {\it The word $u$ has no \halt command.} In this case the minimal-valued run of the halt checker on $u$ will be the same as of $\A$ until the last transition, on which the halt checker will have a $0$ weight, compared to a strictly positive weight in $\A$.
		
		\item {\it The word $u$ does not describe a run that ends up with value $0$ in both counters.}
		Then there are the following sub-cases:
		\begin{itemize}
			\item {\it The word $u$ has more $\dec(c)$ than $\inc(c)$ letters for some counter $c\in\{x,y\}$}.
			For $c=x$, in the negative-counters checker, more discount factors were changed from $4$ to $2$  than those changed from $5$ to $10$, compared to their values in $\A$, implying that the total value of the gadget until the last letter will be lower than of $\A$ on it.
			For $c=y$, we have a similar analysis with respect to the discount factors $6;3$, and $7;14$.
			\item {\it The word $u$ has more $\inc(c)$ than $\dec(c)$ letters for some counter $c\in\{x,y\}$}.
			By \Cref{eqn:primalDual,lem:undecidabilityContainment}, the value of the positive-counters checker until the last transition will be lower than of $\A$ until the last transition.
		\end{itemize}
		
		Observe, though, that the weight of the gadgets on the \halt transition ($16$) is still higher than that of $\A$ on it ($15$). 
		Nevertheless, since a ``violation detection'' results in replacing at least one discount factor from $4$ to $2$, from $6$ to $3$, from $5$ to $4$, or from $7$ to $6$ (and replacing the corresponding weights, for preserving the $\frac{\rho-1}{\rho}$ ratio), and the ratio difference between $16$ and $15$ is less significant than between the other pairs of weights, we have that the gadget's value and therefore $\B$'s value on $u$ is smaller than $\A(u)$.
		Indeed, by \Cref{lem:undecidabilityContainment} $\A(u)=1-\frac{1}{D_\A}$, where $D_\A$ is the multiplication of the discount factors along $\A$'s run, and $\B(u)\leq 1-(\frac{1}{D_\A}\cdot \frac{7}{6}\cdot\frac{15}{16}) < 1-\frac{1}{D_\A} = \A(u)$.
		
		\item {\it The word $u$ does not correctly describe the run of $\M$}. Then there are the following sub-cases:
		\begin{itemize}
			\item {\it The incorrect description does not relate to conditional jumps}. 
			Then the command-checker has the same weights and discount factors as $\A$ on the prefix of $u$ until the incorrect description, after which it has $0$ weights, compared to strictly positive weights in $\A$.
			\item {\it The incorrect description relates to conditional jumps}. Then there are the following sub-sub-cases:
			\begin{itemize}
				\item {\it A counter $c>0$ at a position $i$ of $\M$'s run, while $u[i]=\goto l_k,c=0$}. Let $v=u[0..i{-}1]$ and $u=v\con v'$, and consider the run $r$ of the zero-jump checker on $u$ that moves to $q_c$ after $v$.
				Then $\numof{\inc(c)}{v}>\numof{\dec(c)}{v}$ and $\numof{\inc(c)}{v'}<\numof{\dec(c)}{v'}$. (We may assume that the total number of $\inc(c)$ and $\dec(c)$ letters is the same, as otherwise one of the previous checkers detects it.)
				
				All the $\inc(c)$ and $\dec(c)$ transitions in $r[0..i{-}1]$ have weights and discount factors according to the dual functions, and those transitions in $r[i..|w|{-}1]$ have weights and discount factors according to the primal functions.
				Therefore, compared to $\A$, more weights changed from $\gamma_p(\inc(c))$ to $\gamma_d(\inc(c))=\gamma_p(\dec(c))$ than weights changed from $\gamma_p(\dec(c))$ to $\gamma_d(\dec(c))=\gamma_p(\inc(c))$, resulting in a lower  total value of $r$ than of $\A$ on $u$. (As shown for the negative- and positive-counters checkers, the higher weight of the $\halt$ transition is less significant than the lower values above.)
				
				\item {\it A counter $c=0$ at a position $i$ of $\M$'s run, while $u[i]=\goto l_k,c>0$}.
				Let $r$ be a minimal-valued run of the positive-jump checker on $u$.
				
				If  there are no $\inc(c)$ letters in $u$ before position $i$, $r$ will have the same weights and discount factors as $\A$ until the $i$'s letter, on which it will move from $\qpc{1}$ to $\qfr$, continuing with $0$-weight transitions, compared to strictly positive ones in $\A$.
				
				Otherwise, we have that the first $\inc(c)$ letter of $u$ takes $r$ from $\qpc{0}$ to $\qpc{1}$ with a discount factor of $\rho_d(\inc(c))$. 
				Then in $\qpc{1}$ we have more $\dec(c)$ transitions than $\inc(c)$ transitions, and in $\qpc{2}$ we have the same number of $\dec(c)$ and $\inc(c)$ transitions. (We may assume that $u$ passed the previous checkers, and thus has the same total number of $\inc(c)$ and $\dec(c)$ letters.)
				Hence, we get two more discount factors of $\rho_d(\inc(c))$ than $\rho_p(\inc(c))$, resulting in a value smaller than $\A(u)$. (As in the previous cases, the higher value of the \halt transition is less significant.)
				\qedhere
			\end{itemize}
			\qedhere
		\end{itemize}
		\qedhere
	\end{itemize}
\end{proof}

\subsubsection{Undecidability of arbitrary integral NMDAs containment}\label{sec:Undecidability}
For finite words, the undecidability result directly follows from \Cref{cl:undecidabilityContainment} and the undecidability of the 0-halting problem of counter machines \cite{Min67}.
\begin{thm}\label{cl:ContainmentFiniteWordsUndecidable}
	Strict and non-strict containment of (integral) NMDAs on finite words are undecidable. More precisely, the problems of deciding for given integral NMDA $\N$ and integral DMDA $\D$ whether $\N(w) \leq \D(w)$ for all finite words $w$ and whether $\N(w) < \D(w)$ for all finite words $w$.
\end{thm}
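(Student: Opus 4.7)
The plan is to reduce from the 0-halting problem for two-counter machines, which is undecidable by Minsky's classical result. Given an arbitrary two-counter machine $\M$, I would invoke \Cref{cl:undecidabilityContainment} to effectively construct an integral NMDA $\B$ and an integral DMDA $\A$ on finite words such that $\M$ 0-halts iff there exists $w \in \Sigma^+$ with $\B(w) \geq \A(w)$, and equivalently iff there exists $w \in \Sigma^+$ with $\B(w) > \A(w)$. Setting $\N := \B$ and $\D := \A$, taking contrapositives yields that $\M$ does not 0-halt iff $\N(w) < \D(w)$ for every $w \in \Sigma^+$, and equivalently iff $\N(w) \leq \D(w)$ for every $w \in \Sigma^+$.

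The only subtle point is the treatment of $\emptyword$. Since the value of any NMDA on the empty word is the empty sum $0$, we have $\N(\emptyword) = \D(\emptyword) = 0$, so the non-strict condition ``$\N(w) \leq \D(w)$ for all finite words $w$'' coincides with its restriction to $\Sigma^+$. For the strict condition, I would interpret the universal quantification as ranging over non-empty finite words, in line with the convention adopted in \Cref{cl:undecidabilityContainment}; this is unavoidable because the empty word forces equality for any pair of automata.

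Because the construction $\M \mapsto (\N, \D)$ is effective (in fact polynomial-time in the description of $\M$), any hypothetical decision procedure for either the strict or the non-strict containment problem would yield a decision procedure for the complement of the 0-halting problem, contradicting its undecidability. All of the combinatorial and arithmetic difficulty has already been absorbed into \Cref{cl:undecidabilityContainment}, so the present theorem is essentially the packaging of that lemma with Minsky's theorem and I do not expect any remaining obstacle beyond the brief bookkeeping sketched above.
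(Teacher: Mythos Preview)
Your proposal is correct and matches the paper's own argument almost verbatim: the paper states that the result ``directly follows from \Cref{cl:undecidabilityContainment} and the undecidability of the $0$-halting problem of counter machines.'' Your additional remark about the empty word is a fair point of bookkeeping that the paper leaves implicit.
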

\noindent 
For infinite words, undecidability of non-strict containment also follows from the reduction given in \Cref{sec:TheReduction}, as the reduction considers prefixes of the word until the first \halt command. 
We leave open the question of whether strict containment is also undecidable for infinite words. The issue with the latter is that a \halt command might never appear in an infinite word $w$ that incorrectly describes a halting run of the two-counter machine, in which case both automata $\A$ and $\B$ of the reduction will have the same value on $w$. 
On words $w$ that have a \halt command but do not correctly describe a halting run of the two-counter machine we have $\B(w)<\A(w)$, and on a word $w$ that does correctly describe a halting run we have $\B(w)>\A(w)$. Hence, the reduction only relates to whether $\B(w)\leq\A(w)$ for all words $w$, but not to whether $\B(w) <\A(w)$ for all words $w$.

\begin{thm}\label{cl:NonStrictContainmentInfiniteWordsUndecidable}
	Non-strict containment of (integral) NMDAs on infinite words is undecidable. More precisely, the problem of deciding for given integral NMDA $\N$ and integral DMDA $\D$ whether $\N(w) \leq \D(w)$ for all infinite words $w$.
\end{thm}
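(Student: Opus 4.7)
The plan is to reuse the automata $\A$ and $\B$ built in \Cref{sec:TheReduction}, showing that the very same construction reduces the (undecidable) $0$-halting problem of two-counter machines to the negation of $\B \leq \A$ over infinite words. Two observations, both essentially immediate from the construction, let one lift \Cref{cl:undecidabilityContainment} from finite to infinite words.

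The first observation is that for every infinite word $w$ containing a $\halt$ letter, letting $u$ be the prefix of $w$ through and including the first $\halt$, we have $\A(w)=\A(u)$ and $\B(w)=\B(u)$. In $\A$ the $\halt$-transition leads to the sink $q_\A^h$ whose only transitions are zero-weight self-loops, and in every gadget of $\B$ the $\halt$-transition leads to $\qhalt$ (or, in the halt checker, possibly through $q_{\mathsf{last}}$ to $\qfr$), each of which again has only zero-weight self-loops on every letter of $\Sigma$. Hence the discounted contribution of any infinite suffix beyond the first $\halt$ is $0$, so the per-run values on $w$ coincide with the corresponding finite-run values on $u$, and the same holds for their minima.

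The second observation is that for every infinite word $w$ with no $\halt$ letter, $\A(w)=1$ and $\B(w)\leq 1$. For $\A$ the unique run stays forever in $q_\A$ using primal weights and discount factors (each at least $4$), so by \Cref{lem:undecidabilityContainment} the partial sums $1-1/\prod_{j<n}\rho_p(w(j))$ tend to $1$. For $\B$ the halt checker admits the run that takes $\qhc\to q_{\mathsf{last}}$ on the first letter (weight $0$, discount $2$), then $q_{\mathsf{last}}\to \qfr$ on the second letter (weight $2$, discount $2$), and then loops in $\qfr$ forever; its discounted value is $0+2/2+0=1$, so $\B(w)\leq 1$.

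Combining these observations with \Cref{cl:undecidabilityContainment} closes the reduction. If $\M$ $0$-halts, then the lemma provides a finite word $u$ ending with $\halt$ on which $\B(u)>\A(u)$; the first observation lifts this to any infinite extension $w$ of $u$, giving $\B(w)>\A(w)$ and hence $\B\not\leq\A$. Conversely, if $\M$ does not $0$-halt, then the lemma gives $\B<\A$ on every finite word; the first observation transfers this strict inequality to every infinite word containing a $\halt$, while the second observation yields $\B(w)\leq 1=\A(w)$ for every infinite word without $\halt$. Thus $\B\leq\A$ holds on all infinite words iff $\M$ does not $0$-halt, and undecidability follows. The only step requiring care, and the main obstacle to this plan, is verifying the sink-after-$\halt$ property for every gadget of $\B$ (including the halt checker's auxiliary branch via $q_{\mathsf{last}}$ and $\qfr$), which is a routine inspection of the construction.
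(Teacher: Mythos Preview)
Your proposal is correct and follows essentially the same approach as the paper: reuse the automata $\A$ and $\B$ from \Cref{sec:TheReduction}, observe that all values freeze after the first $\halt$ (the sink-after-$\halt$ property), and separately handle $\halt$-free infinite words by showing $\B(w)\le 1=\A(w)$. The only cosmetic difference is that the paper invokes the command checker for the $\halt$-free case (and remarks that the halt checker may be ignored over infinite words), whereas you exhibit an explicit run of the halt checker with value exactly~$1$; both arguments establish $\B(w)\le\A(w)$ on $\halt$-free words.
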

\begin{proof}
	The automata $\A$ and $\B$ in the reduction given in \Cref{sec:TheReduction} can operate as is on infinite words, ignoring the Halt-Checker gadget of $\B$ which is only relevant to finite words.
	
	Since the values of both $\A$ and $\B$ on an input word $w$ only relate to the prefix $u=\pref_{\halt(w)}$ of $w$ until the first \halt command, we still have that $\B(w)>\A(w)$ if $u$ correctly describes a halting run of the two-counter machine $\M$ and that $\B(w)<\A(w)$ if $u$ is finite and does not correctly describe a halting run of $\M$.
	
	Yet, for infinite words there is also the possibility that the word $w$ does not contain the \halt command. In this case, the value of both $\A$ and the command checker of $\B$ will converge to $1$, getting $\A(w)=\B(w)$.
	
	Hence, if $\M$ 0-halts, there is a word $w$, such that $\B(w)>\A(w)$ and otherwise, for all words $w$, we have $\B(w)\leq \A(w)$. 
\end{proof}

Observe that for NMDAs, equivalence and non-strict containment are interreducible.

\begin{thm}\label{cl:EquivalenceFiniteWordsUndecidable}
	Equivalence of (integral) NMDAs on finite as well as infinite words is undecidable. That is, the problem of deciding for given integral NMDAs $\A$ and $\B$ on finite or infinite words whether $\A(w) = \B(w)$ for all words $w$.
\end{thm}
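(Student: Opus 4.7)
The plan is to reduce the non-strict containment problem for integral NMDAs to the equivalence problem, thereby transferring undecidability from Theorems \ref{cl:ContainmentFiniteWordsUndecidable} and \ref{cl:NonStrictContainmentInfiniteWordsUndecidable} to the equivalence problem. The enabling observation is that NMDAs, by virtue of their infimum-over-runs semantics, are trivially closed under $\min$: given NMDAs $\A$ and $\B$ over a common alphabet, their disjoint union $\C$ — combining their state sets, transitions, weights, and discount factors, and taking $\iota_\C = \iota_\A \cup \iota_\B$ as the new initial set — satisfies $\C(w) = \min(\A(w), \B(w))$ for every finite or infinite word $w$. Indeed, the runs of $\C$ on $w$ are exactly the runs of $\A$ on $w$ together with the runs of $\B$ on $w$, so the infimum splits as claimed. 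Completeness and integrality of discount factors are clearly preserved, so $\C$ is a valid integral NMDA per \Cref{def:NMDA}, constructible in linear time.

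Given this closure under $\min$, the reduction is as follows. Let $\N$ be an integral NMDA and $\D$ an integral DMDA as in the containment problems of \Cref{cl:ContainmentFiniteWordsUndecidable,cl:NonStrictContainmentInfiniteWordsUndecidable}, and construct the integral NMDA $\C \equiv \min(\N, \D)$ as above. For every word $w$, we have $\C(w) = \N(w)$ if and only if $\N(w) \leq \D(w)$. Hence $\C \equiv \N$ holds if and only if $\N(w) \leq \D(w)$ for every word $w$, which is exactly the (undecidable) non-strict containment question. Applying \Cref{cl:ContainmentFiniteWordsUndecidable} yields undecidability of equivalence on finite words, and applying \Cref{cl:NonStrictContainmentInfiniteWordsUndecidable} yields undecidability on infinite words.

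I do not anticipate any real obstacle; the argument is a short reduction whose only delicate point is verifying that the disjoint-union construction produces a well-formed NMDA (complete, with all discount factors integral and greater than one) and that it computes $\min$ on the nose, both of which are immediate from the semantics of \Cref{def:NMDA}. For completeness one could also note the converse direction — equivalence reduces to non-strict containment in both directions via $\A \equiv \B \iff \A \leq \B \text{ and } \B \leq \A$ — but this is not needed to conclude undecidability of equivalence.
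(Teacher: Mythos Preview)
Your proposal is correct and takes essentially the same approach as the paper: both reduce non-strict containment to equivalence by forming the disjoint union $\C=\min(\N,\D)$ and observing that $\C\equiv\N$ iff $\N(w)\leq\D(w)$ for all $w$. The paper phrases it as a contradiction argument rather than a direct reduction, but the content is identical.
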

\begin{proof}
	Assume toward contradiction the existence of a procedure for equivalence check of $\A$ and $\B$. 
	We can use the nondeterminism to obtain an automaton $\C=\A \cup \B$, having $\C(w)\leq \A(w)$ for all words $w$. We can then check whether $\C$ is equivalent to $\A$, which holds if and only if $\A(w) \leq \B(w)$ for all words $w$. Indeed, if $\A(w) \leq \B(w)$ then $\A(w) \leq \min(\A(w), \B(w)) = \C(w)$, while if there exists a word $w$, such that $\B(w)<\A(w)$, we have $\C(w) = \min(\A(w), \B(w))  <\A(w)$, implying that $\C$ and $\A$ are not equivalent. Thus, such a procedure contradicts the undecidability of non-strict containment, shown in \Cref{cl:ContainmentFiniteWordsUndecidable,cl:NonStrictContainmentInfiniteWordsUndecidable}.
\end{proof}

\begin{rem}\label{rem:PCP}
	One can provide a much simpler undecidability result for a problem we do not define in \Cref{sec:Definitions} and do not formally consider in the paper: Given two integral NMDAs, or even integral DMDAs, $\A$ and $\B$, does there exist a finite word $w$, such that $\A(w)=\B(w)$? 
	
	We provide below a sketch of the proof, which goes by reduction from the Post Correspondence Problem (PCP). 
	Recall that in a PCP there are two finite lists, $L_1=\alpha_1, \ldots, \alpha_k$ and $L_2=\beta_1, \ldots, \beta_k$, of finite words over some alphabet $X$, and a solution to the problem is a squence $(i_j)_{j\in[1..N]}$, for some $N\in\Nat\setminus\{0\}$, where $i_j\in[1..k]$ for all $j$, such that $\alpha_{i_1} \alpha_{i_2} \cdots \alpha_{i_N} = \beta_{i_1} \beta_{i_2} \cdots \beta_{i_N}$.
	
	In the reduction, the alphabet of the DMDAs is $\Sigma = \{1,2,\ldots,k\}$, namely the indices of the PCP, while each letter in the PCP alphabet $X$ is assigned a unique integer between $1$ and $|X|$. Then, each word $u$ in $L_1$ and $L_2$ is assigned a value $\Val(u)$ between $0$ and $1$ according to the value of $0.u$ in base $X{+}1$.	
	For example, consider a PCP with alphabet $X$ of size 9, a word $u=aafagb$ in its lists, and the letter assignment $a=1, b=2, f=6, g=7$. Then the value assigned to $u$ is $\Val(u)=0.116172$ in base $10$. 
	
	Now, each of the DMDAs $\A$ and $\B$ consists of a single state; in $\A$ the transition over the letter $i$ has value $\Val(\alpha_i)$ and discount factor $|X{+}1|^{|\alpha_i|}$ (where $|\alpha_i|$ is the length of $\alpha_i$), and analogously in $\B$ the transition over the letter $i$ has value $\Val(\beta_i)$ and discount factor $|X{+}1|^{|\beta_i|}$. Observe that $\A$ and $\B$ have the same value on a finite input word $w$ if and only if $w$ is a sequence of indices that is a solution to the corresponding PCP.	
\end{rem}

\section{Tidy NMDAs} \label{sec:TidyNMDAs}
We present the family of ``tidy NMDAs'' and show that it is as expressive as integral DMDAs.
Intuitively, an integral NMDA is tidy if the choice of discount factors depends on the word prefix read so far. 
We further show that for every choice function $\theta$, the class of all $\theta$-NMDAs is closed under determinization and algebraic operations, and enjoys decidable algorithms for its decision problems.

The family of tidy NMDAs contains various natural subfamilies, each strictly extending the expressive power of integral NDAs. Among which are integral NMDAs whose discount factors are chosen per letter (action) or per the elapsed time. We elaborate on these subfamilies at the end of the section. 

\begin{defi}\label{def:ThetaNMDA}
	An integral NMDA $\A$ over an alphabet $\Sigma$ and with discount-factor function $\rho$ is \emph{tidy} if there exists a function
	$\theta: \Sigma^+ \to \Nat \setminus \{0,1\}$, such that for every finite word $u=\sigma_1\ldots\sigma_n\in\Sigma^+$, and every run $q_0,\sigma_1,\cdots,q_n$ of $\A$ on $u$, we have 
	$\rho(q_{n-1},\sigma_n,q_n)=\theta(u)$.
	
	In this case we say that $\A$ is a $\theta$-NMDA.
\end{defi}

For example, the NMDAs in \Cref{fig:detExample,fig:LetterOrrientedExample,fig:TimeOrientedExample} are tidy, whereas the ones in \Cref{fig:NMDAExample,fig:NonDetNMDAInfinityWords} are not. (In the NMDA of \Cref{fig:NMDAExample}, for the word ``a'' (of length one) and the runs $r_1=(q_0,a,q_1)$ and $r_2=(q_0,a,q_0)$ on it, we have $\rho(q_0,a,q_1)=2 \neq 3 = \rho(q_0,a,q_0)$, and likewise in the NMDA of \Cref{fig:NonDetNMDAInfinityWords}, relating to the runs $r_1=(q_0,a,q_0)$ and  $r_2=(q_1,a,q_1)$.)

Notice that while the notion of tidiness is declarative, checking whether a given NMDA is tidy can be done in quadratic time, as it reduces to checking a reachability problem on a Cartesian product of the NMDA with itself (see \Cref{sec:Tidiness}).

\begin{defi}\label{def:choiceFunction}
	For an alphabet $\Sigma$, a function $\theta: \Sigma^+ \to \Nat \setminus \{0,1\}$ is a \emph{choice function} if there exists an integral NMDA that is a $\theta$-NMDA.
\end{defi}

For example, the function $\theta$ defined by ``$\theta(u)=2$ if $|u|$ is odd and $3$ if $|u|$ is even'' is a choice function, since the integral NMDA $\A$ from \Cref{fig:TimeOrientedExample} is a $\theta$-NMDA. On the other hand, the function $f:  \Sigma^+ \to \Nat \setminus \{0,1\}$ defined by $\theta(u) = |u|$ is not a choice function, as its image is unbounded, reflecting infinitely many discount factors in an NMDA, which is not possible, as an NMDA has finitely many transitions. The function ``$\theta(u)=2$ if $u$ encodes (via some standard encoding) a halting Turing machine, and $3$ otherwise'' is also not a choice function, even though its image is bounded, since an NMDA with such a choice function would have solved the undecidable halting problem of Turing machines.
In fact, even though a general function $\theta: \Sigma^+ \to \Nat \setminus \{0,1\}$ might require an infinite representation, every choice function has a finite representation as a transducer (see \Cref{sec:RepresentingChoiceFunctions}). 

For choice functions $\theta_1$ and $\theta_2$, the classes of $\theta_1$-NMDAs and of $\theta_2$-NMDAs are \emph{equivalent} if they express the same functions, namely if for every $\theta_1$-NMDA $\A$, there exists a $\theta_2$-NMDA $\B$ equivalent to $\A$ and vice versa.

For every tidy NMDA $\A$ and finite word $u$, all the runs of $\A$ on $u$ entail the same accumulated discount factor. We thus use the notation $\rho(u)$ to denote $\rho(r)$, where $r$ is any run of $\A$ on $u$.

\subsection{Determinizability}\label{sec:Determinizability}

We show that for every tidy NMDA $\N$, we can construct an equivalent integral DMDA $\D$. The defining feature of a tidy NMDA is that each run of a word is discounted identically. We use this property to construct $\D$ such that $\D(w)=\N(w)$ for all finite words $w$. We then use \Cref{lemma:finiteToInfinite} to lift this equivalence to all infinite words.

\begin{figure}
	\centering
	
	\begin{tikzpicture}[->,>=stealth',shorten >=1pt,auto,node distance=2.5cm, semithick, initial text=, every initial by arrow/.style={|->}]
		\node[initial, state, inner sep=0.1cm,minimum size=0.4cm] (q1) {$q_1$};
		\node[state, inner sep=0.1cm,minimum size=0.4cm] (q2) [right of=q1] {$q_2$}; 
		\node[state] (dummy) [below of =q2, draw=none, xshift=-1.2cm, yshift=1.2cm] {\Huge $\Downarrow$};

		\node[state] [left of=q1, draw=none,xshift=0.9cm] {$\N:$};
		
		\node[initial,state, inner sep=0.01cm,minimum size=0.5cm] (p0) [below of=q1, yshift=-0.5cm] {$0,\infty$};
		\node[state, inner sep=0.05cm,minimum size=0.5cm] (p1) [right of=p0] {$0,2$}; 
		\node[state, inner sep=0.05cm,minimum size=0.5cm] (p2) [right of=p1] {$2,0$}; 
		\node[state, inner sep=0.01cm,minimum size=0.5cm] (p3) [right of=p2] {$\infty,0$}; 

		\node[state] [left of=p0, draw=none,xshift=0.9cm] {$\D:$};
		
		\path 
		(q1) edge [out=0, in=180] node [above,align=center] {$a,5,2$} (q2)
		(q1) edge [out=130, in=70,loop, looseness=5] node [above,align=center] {$a,4,2$} (q1)
		(q2) edge [out=130, in=70,loop, looseness=5] node [above,align=center] {$a,1,2$} (q2)
		
		(p0) edge [out=0, in=180] node [above,align=center] {$a,4,2$} (p1)
		(p1) edge [out=0, in=180] node [above,align=center] {$a,3,2$} (p2)
		(p2) edge [out=0, in=180] node [above,align=center] {$a,1,2$} (p3)
		(p3) edge [out=130, in=70,loop, looseness=5] node [above,align=center] {$a,1,2$} (p3)
		;
	\end{tikzpicture}
	
	\caption[A simple example of the determinization procedure.]{\label{fig:detExampleSimple}A simple example of the determinization procedure presented in \Cref{sec:Determinizability}.} (A more involved and detailed example is given in \Cref{fig:detExample}.)
\end{figure}

The following technique is a generalization of the determinization algorithm presented in \cite{BH14} for NDAs. 
We give the basic ideas underlying the construction through a very simple example of an NMDA $\N$, depicted in \Cref{fig:detExampleSimple}, over the alphabet $\{a\}$:
\begin{itemize}
\item The fundamental building block of the construction is to store in each state of the DMDA $\D$ a tuple $(x_1, x_2)$, where $x_1$ and $x_2$ represent the ``gaps'' that the states $q_1$ and $q_2$ of $\N$, respectively, have. Intuitively, a gap $x_i$ stands for how much more ``expensive'' it is for the nondeterministic automaton to reach $q_i$ upon reading the current word prefix than to follow the optimal run on the word prefix. Notice that in our example, $\N$ favors the self loop $(q_1,a,q_1)$ for the word ``$a$'' (while favoring $(q_1,a,q_2)$ as the initial transition for all other words). 
Hence, after reading ``$a$'' the gap of $q_1$ is $0$ (a preferred run reaches it), and the gap of $q_2$ should express ``how much more expensive is it to reach $q_2$''. 
In absolute values, the run to $q_2$ is more expensive by $1$. 
Yet, since the value of every continuation is discounted (i.e., divided) by $2$, if the run through $q_2$ is going to be optimal for a prolonged word, its continuation must be cheaper by $1 \cdot 2=2$ than the run trough $q_1$. Hence, we set the gap of a state after reading a word prefix $u$ to be the extra cost of reaching it, multiplied by the accumulative discount factor along $u$.
\item Notice that the weight of a transition in $\D$, between reading words $u$ and $ua$, for a letter $a$, should be $\N(ua)-\N(u)$, multiplied by the accumulative discount factor along $u$. Back to our example, the transition from the initial state of $\D$ upon reading $a$ (back to $q_1)$ needs to incur a weight of $4$.
Reading an additional $a$, the NMDA $\N$ has a different preferred path, which is simulated by the gap calculations in $\D$: it is cheaper to pay $q_2$'s gap of $2$ and the transition weight of $1$ (so $3$ in total) than to stay with the original run through $q_1$ and pay the transition weight of $4$. Thus, the equivalent transition in $\D$ will have weight of $3$. After reading the word $u$=``$aa$'', the gap of $q_1$ should be: (the difference between $\N(u)$ and $\N$'s best run on $u$ ending in $q_1$) multiplied by the accumulative discount factor along $u$. Observe that we can calculate the gap by the information in $\D$'s states, without considering the prefix word $u$: the gap of $q_1$ is equivalent to the previous gap of the path that leads to $q_1$ ($0$ in the example) plus the cost of continuing this path to $q_1$ ($4$ in our case) minus $\D$'s transition weight ($3$ here), multiplied by the discount factor of the last transition, namely in the example it is $(0+4-3)\cdot 2 = 2$.
\item To argue that such a DMDA is guaranteed to be finite, we make use of two observations: \begin{enumerate}
	\item A gap larger than $2T$, where $T$ is the maximal difference between two weights in the NMDA, can never be recovered, namely a prefix run that has such a gap will never be part of an optimal run, and hence the gap can be set to $\infty$.
	\item All other gaps are integer multiplications of $1/d$, where $d$ is the least common divisor of weights in the NMDA.
\end{enumerate}
\end{itemize}
\noindent 
The determinization construction forms the basis for both algebraic closure (\Cref{sec:tidyAlgebraic}) and decidability of the decision problems (\Cref{sec:tidyDecisionProblems}).
Though the constructed deterministic automaton can be of exponential size compared to the original nondeterministic automaton, each of its states is of only polynomial size (\Cref{lem:Termination}). Hence, we can solve the decision problems in PSPACE, performing the determinization on-the-fly. 

\vspace{0.5cm}
\noindent\emph{The formal construction.}
Consider a tidy NMDA $\A = \tuple{\Sigma, Q, \iota, \delta, \gamma, \rho}$.

For every finite word $u\in\Sigma^*$ and state $q\in Q$, we define $S(q,u)$ to be the set of runs of $\A$ on $u$ ending in $q$, and $r_{(q,u)}$ to be a \emph{preferred run} that entails the minimal value among all the runs in $S(q,u)$.
Observe that every prefix of a preferred run is also a preferred run.
Hence, given the values of all the preferred runs on a certain finite word $u$, i.e., $\A(r_{(q,u)})$ for every $q\in Q$, we can calculate the values of the preferred runs on every word $u\con\sigma$ by 
$\A(r_{(q',u\con\sigma)})=\min\big\{\A(r_{(q,u)}) + \gamma(t) \ST t=(q,\sigma,q')\in\delta \big\}$.

Intuitively, every state of $\D$ that is reached after reading $u$ stores for each $q\in Q$ its ``gap'', which is the difference between $\A(u)$ and $\A(r_{(q,u)})$, ``normalized'' by multiplying it with the accumulated discount factor $\rho(u)$, and ``truncated'' if reached a threshold value (which can no longer be recovered). 

Formally, for a state $q\in Q$, and a finite word $u$, we define 
\begin{itemize}
	\item 
	The \emph{cost} of reaching $q$ over $u$ as \\
	$\Cost(q,u)
	= \min \big\{\A(r) \ST r \text{ is a run of } \A \text{ on } u \text{ s.t.\ } \delta(r) = q\big\} 
	=\min \big\{\A(r) \ST r \in S(q,u)\big\}$, 
	where $\min\emptyset = \infty$.
	\item
	The \emph{gap} of $q$ over $u$ as $\Gap(q,u)=\rho(u)\big(\Cost(q,u) - \A(u)\big)$.
	Intuitively, the gap stands for the value that a walk starting in $q$ should have, compared to a walk starting in $u$'s optimal ending state, in order to make a run through $q$ optimal.
\end{itemize}
Let $T$ be the maximum difference between the weights in $\A$, That is, 
$T = \max \big(|x-y| \ST x,y \in \image(\gamma) \big)$.
Since the difference between two infinite runs of $\A$ on a word $w$ is bounded by
$\sum_{i=0}^\infty \frac{T}{\prod_{j=0}^{i-1}{\rho\big(w[j]\big)}} \leq \sum_{i=0}^\infty \frac{T}{2^i} = 2T$, 
we define
the set of possible \emph{recoverable-gaps}
$G = \big\{ v \ST v\in\Rat \mbox{ and } 0 \leq v \leq 2T \big\} \cup \{\infty \}$.
The $\infty$ element denotes a non-recoverable gap, and behaves as the standard infinity element in the algebraic operations that we will be using. Note that our NMDAs do not have infinite weights and the infinite element is only used as an internal component of the construction.

We will inductively construct 
$\D = \tuple{\Sigma, Q', q'_{in}, \delta', \gamma', \rho'}$ as follows.
A state of $\D$ extends the standard subset construction by assigning a gap to each state of $\A$. That is, for $Q=\{q_1, \cdots,
q_n \}$, a state $p\in Q'$ is a tuple $\tuple{g_1, \cdots, g_n}$, where $g_h\in G$ for every $h\in[1..n]$. 
Once a gap is obviously not recoverable, by being larger than $2T$, it is truncated by setting it to be $\infty$.

In the integral $\rho$ function case, the construction only requires finitely many elements of $G$, as shown in \Cref{lem:Termination}, and thus it is guaranteed to terminate.

For simplicity, we assume that $\iota=\{q_1,q_2,\cdots,q_{|\iota|}\}$ and extend $\gamma$ with $\gamma(q_i,\sigma,q_j)=\infty$ for every $({q_i,\sigma,q_j})\not\in\delta$. 
The initial state of $\D$ is $q'_{in} = \tuple{0,\cdots,0,\infty,\cdots,\infty}$, in which the left $|\iota|$ elements are $0$, meaning that the initial states of $\A$ have a $0$ gap and the others are currently not relevant.

We inductively build the desired automaton
$\D$ using the intermediate automata 
$\D_i= \tuple{\Sigma, Q'_i, q'_{in}, \delta'_i, \gamma'_i, \rho'_i}$. We start with
$\D_1$, in which $Q'_1 = \{ q'_{in} \}$, $\delta'_1 = \emptyset$, $\gamma'_1 = \emptyset$ and $\rho'_1 = \emptyset$, and proceed from $\D_i$ to $\D_{i+1}$, such that
$Q'_i \subseteq Q'_{i+1}$, $\delta'_i \subseteq \delta'_{i+1}$, $\gamma'_i \subseteq \gamma'_{i+1}$ and $\rho'_i \subseteq \rho'_{i+1}$. 
The construction is completed once
$\D_i = \D_{i+1}$, finalizing the desired deterministic automaton $\D = \D_i$.

In the induction step, $\D_{i+1}$ extends $\D_i$ by (possibly) adding, for every state $q'=\tuple{g_1, \cdots, g_n}\in Q'_i$ and letter $\sigma\in\Sigma$, a state $q'':= \tuple{x_1, \cdots, x_n}$, and a transition $t:=(q', \sigma, q'')$ as follows:
\begin{itemize}
	\item 
	Weight: For every $h\in[1..n]$ define, 
	
	$c_h := \min \big\{g_j + \gamma(q_j, \sigma, q_h) \ST j\in[1..n]  \big\}$, and add a new weight, $\gamma'_{i+1}(t)= \min\limits_{1 \leq h \leq n}(c_h)$.
	
	\item
	Discount factor: By the induction construction, if $\D_i$ running on a finite word $u$ ends in $q'$, there is a run of $\A$ on $u$ ending in $q_h$, for every $h\in[1..n]$ for which the gap $g_h$ in $q'$ is not $\infty$.
	Since $\A$ is tidy, all the transitions from every such state $q_h$ over $\sigma$ have the same discount factor, which we set to the new transition $\rho'_{i+1}(t)$.
	
	\item 
	Gap: For every $h\in[1..n]$, set $x_h:=\rho'_{i+1}(t)\cdot\big(c_h-\gamma'_{i+1}(t)\big)$. If $x_h > 2T$ then set $x_h:=\infty$.
\end{itemize}

See \Cref{fig:detExample} for an example of the determinization process.

\begin{figure}
	\centering
	
	\begin{tikzpicture}[->,>=stealth',shorten >=1pt,auto,node distance=2.5cm, semithick, initial text=, every initial by arrow/.style={|->}]
		\node[initial above, state, inner sep=0.1cm,minimum size=0.4cm] (q1) {$q_1$};
		\node[state, inner sep=0.1cm,minimum size=0.4cm] (q2) [right of=q1] {$q_2$}; 
		\node[state] (dummy) [right of =q2, draw=none, xshift=-0.8cm] {\Huge $\Rightarrow$};
		
		\node[initial,state, inner sep=0.01cm,minimum size=0.4cm] (p0inf) [right of=q1, xshift=4cm] {$0,\infty$};
		\node[state, inner sep=0.05cm,minimum size=0.4cm] (p02) [right of=p0inf] {$0,2$}; 
		\node[state, inner sep=0.01cm,minimum size=0.4cm] (pinf0) [below of=p0inf] {$\infty,0$}; 
		\node[state] (b13callout) [above of=pinf0, xshift=-1cm, yshift=-1.2cm, draw=none] {};
		\node[state] (aminus2callout) [below right of=pinf0, xshift=0cm, yshift=1.3cm, draw=none] {};
		\node[state, inner sep=0.05cm,minimum size=0.4cm] (p01) [above of=p02] {$0,1$};
		\node[state, inner sep=0.05cm,minimum size=0.4cm] (p20) [below of=p02] {$2,0$};
		\node[state] (callout3) [above of=p0inf, xshift=-0.5cm, yshift=-0.5cm, draw=none] {}; 
		
		\node[overlay,rectangle callout,fill=gray!20,align=left, scale=0.6, callout absolute pointer={(callout3)}] 
		at ($(4cm,2.4cm)$) {$c_1=\min({0-\frac{1}{3},\infty+1})=-\frac{1}{3}$\\
			$c_2=\min({0+0,\infty-2})=0$\\
			$c=\min(-\frac{1}{3},0)=-\frac{1}{3}$\\
			$x_2=3(0-(-\frac{1}{3})) = 1$};
		
		\node[overlay,rectangle callout,fill=gray!20,align=left, scale=0.6, callout absolute pointer={(b13callout)}] 
		at ($(3.7cm,-1.5cm)$) {$c_1=\min({\infty-1,0+1})=1$\\
			$c_2=\min({\infty+0,0+\infty})=\infty$\\
			$c=\min(1,\infty)=1$\\
			$x_2=2(\infty-1)=\infty$};
		
		\node[rectangle callout,fill=gray!20,align=left, scale=0.6, callout absolute pointer={(aminus2callout)}] 
		at ($(11cm,-3.3cm)$) {
			$c_1=\min({2-\frac{1}{3},0+1})=1$\\
			$c_2=\min({2+0,0-2})=-2$\\
			$c=\min(1,-2)=-2$\\
			$x_1=3\big(1-(-2)\big)=9\rightsquigarrow\infty$};
		
		\path 
		(q1) edge [out=30, in=150] node [above,align=center] {$a,0,3$\\$b,0,2$} (q2)
		(q2) edge [out=-150, in=-30] node [below,align=center] {$a,1,3$\\$b,1,2$} (q1)
		(q1) edge [out=-70, in=-130,loop, looseness=5] node [below left,align=center] {$a,-\frac{1}{3},3$\\$b,-1,2$} (q1)
		(q2) edge [out=110, in=50,loop, looseness=5] node [above right,align=center] {$a,-2,3$} (q2)
		
		(p0inf) edge [out= 80, in = -170] node [left] {$a,-\frac{1}{3},3$} (p01)
		(p0inf) edge node [above] {$b,-1,2$} (p02)
		(p01) edge [out=-60, in=60] node [right, near end] {$b,-1,2$} (p02)
		(p02) edge [out=120, in=-120] node [left, yshift=-0.15cm] {$a,-\frac{1}{3},3$} (p01)
		(p01) edge [out=0, in=0] node [right] {$a,-1,3$} (p20)
		(p20) edge node [below] {$a,-2,3$} (pinf0)
		(p20) edge node [left] {$b,1,2$} (p02)
		(pinf0) edge node [left] {$b,1,2$} (p0inf)
		(pinf0) edge [out=-150, in=150,loop, looseness=5] node [left] {$a,-2,3$} (pinf0)
		(p02) edge [out=-0, in=-40,loop, looseness=5] node [below, xshift=0.3cm, yshift=-0.1cm] {$b,-1,2$} (p02)
		;
	\end{tikzpicture}
	
	\caption[An example of the determinization procedure.]{\label{fig:detExample}An example of the determinization procedure, as per \Cref{thm:DetTidy}. The gray rectangles detail some of the intermediate calculations.}
\end{figure}
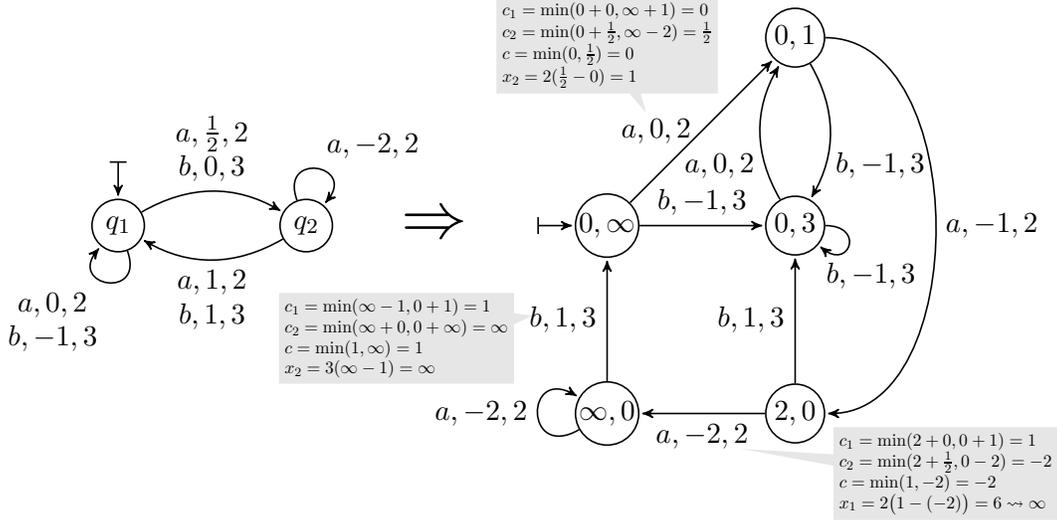
\noindent 
We prove below that the procedure always terminates for a tidy NMDA, and that every state of the generated DMDA can be represented in PSPACE. 
The proof is similar to the corresponding proof in \cite{BH14} with respect to NDAs, adding the necessary extensions for tidy NMDAs.
\begin{lem}\label{lem:Termination}
	The above determinization procedure	always terminates for a tidy NMDA $\A$. 
	Every state of the resulting deterministic automaton $\D$ can be represented in space polynomial in $|\A|$, and $|\D|\in 2^{O(|\A|)}$.
\end{lem}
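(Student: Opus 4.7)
The plan is to establish three things: (a) every gap value that appears in any reachable state of $\D$ lies in a finite set, ensuring that the inductive construction stabilizes; (b) each state of $\D$ can be encoded using polynomially many bits in $|\A|$; and (c) the total number of reachable states is bounded singly-exponentially in $|\A|$. Termination will follow from (a) together with (c), since no new state can be added once all reachable gap-tuples have appeared.

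The main technical ingredient will be a grid invariant on the gap entries. Let $d\in\Nat$ be the common denominator of the weights of $\A$ (guaranteed by our size convention), and recall that all discount factors of $\A$ are positive integers because $\A$ is tidy (hence integral). I will prove by induction on the construction that every finite (non-$\infty$) gap entry appearing in a state of $\D_i$ is a non-negative rational of the form $k/d$ with $0\leq k\leq 2Td$. The base case is immediate since the initial gaps lie in $\{0,\infty\}$. For the inductive step, whenever $c_h=\min_j\{g_j+\gamma(q_j,\sigma,q_h)\}$ is finite, it is a sum of a multiple of $1/d$ (the gap) and a multiple of $1/d$ (the weight), so $c_h$ is a multiple of $1/d$; hence so is $\gamma'_{i+1}(t)=\min_h c_h$. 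Finally $x_h=\rho'_{i+1}(t)\cdot(c_h-\gamma'_{i+1}(t))$ is a non-negative integer multiple of $1/d$, since $\rho'_{i+1}(t)\in\Nat$ and $c_h\geq\gamma'_{i+1}(t)$ by the choice of $\gamma'_{i+1}(t)$. Any $x_h$ exceeding $2T$ is truncated to $\infty$, preserving the invariant.

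With the invariant in hand the counting is routine. Each gap takes one of at most $2Td+2$ values (including $\infty$), and since $T$ and $d$ are encoded in binary within $\A$ we have $\log(2Td+2)=O(|\A|)$, so a single tuple $\tuple{g_1,\ldots,g_n}$ with $n\leq|\A|$ components is representable in $O(n\cdot|\A|)$ bits, giving (b). For (c), the number of distinct reachable tuples is at most $(2Td+2)^n$, which is singly exponential in $|\A|$ as claimed. Termination then follows: the construction only adds a state when a previously unseen tuple arises, and there are finitely many such tuples.

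The step I expect to be most delicate is the inductive preservation of the $1/d$-grid under the update $x_h\mapsto\rho'_{i+1}(t)\cdot(c_h-\gamma'_{i+1}(t))$. This is exactly where the integrality of discount factors is indispensable: if $\rho'_{i+1}(t)$ were a non-integer rational, the denominators of successive $x_h$ could grow unboundedly and the finite grid would be lost, in line with the failure of determinization for arbitrary rational factors. Tidiness of $\A$ is used earlier in the construction to guarantee that $\rho'_{i+1}(t)$ is well-defined from the source tuple alone: all finite-gap predecessors contributing to the new transition agree on the outgoing discount factor, so no conflict can arise.
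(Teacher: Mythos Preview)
Your proposal is correct and follows essentially the same approach as the paper: both arguments establish the invariant that every finite gap entry lies in the finite grid $\{k/d : k\in\Nat,\ k/d\le 2T\}\cup\{\infty\}$, preserved under the update step because the operations involved (min, addition, subtraction of a smaller element, multiplication by an integer discount factor) keep denominators bounded by $d$, and then count states as $(2Td+2)^{|Q|}$. Your write-up is arguably a bit more explicit about why $x_h\ge 0$ and about where tidiness and integrality are each used.
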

\begin{proof}
	The induction step of the construction, extending $\D_i$ to $\D_{i+1}$, only depends on $\A$, $\Sigma$ and $Q'_i$. 
	Furthermore, for every $i\geq 0$, we have that $Q'_i \subseteq Q'_{i+1}$. 
	Thus, for showing the termination of the construction, it is enough to show that there is a general bound on the size of the sets $Q'_i$.  
	We do it by showing that the inner values, $g_1, \ldots, g_n$, of every state $q'$ of every set $Q'_i$ are from the finite set $\bar{G}$, defined below. 
	
	Let $d\in\Nat$ be the least common denominator of the weights in $\A$, and let $T\in\Nat$ be the maximal difference between the weights. 
	We define the set $\bar{G}$ as
	$$\bar{G} = \Big\{ \frac{k}{d} \ST k\in\Nat \text{ and } \frac{k}{d} \leq 2T \Big\} \cup \{\infty \}$$
	
	We start with the first set of states
	$Q'_1$, which satisfies the property that the inner values, $g_1, \ldots, g_n$, of every state $q'\in Q'_1$ are from $\bar{G}$, as $Q'_1=\{  \tuple{0,\cdots,0,\infty,\cdots,\infty}  \}$. We proceed by induction on the construction steps, assuming that $Q'_i$ satisfies the property. 
	By the construction, an inner value of a state $q''$ of $Q'_{i+1}$ is derived by four operations on elements of $\bar{G}$: addition, subtraction ($x-y$, where $x \geq y$), multiplication by $\lambda\in\image(\rho)\subset\Nat$, and taking the minimum. 
	
	One may verify that applying these four operations on $\infty$ and numbers of the form $\frac{k}{d}$, where $k\in\Nat$, results in $\infty$ or in a number $\frac{k'}{d}$, where $k'\in\Nat$.
	Recall that once an inner value exceeds $2T$, it is replaced by the procedure with $\infty$, meaning that $\frac{k'}{d}\leq2T$, or the calculated inner value is $\infty$. Concluding that all the inner values are in $\bar{G}$.
	
	Observe that $|\bar{G}|\leq 2Td + 2 $, meaning that every state in $\D$ has up to $2Td + 2$ possible values for each of the $|Q|$ inner elements, and that there are up to $(2Td + 2)^{|Q|}$ states in $\D$. 
	In particular, the procedure is guaranteed to terminate, the size of $\D$ is in $2^{O(|\A|)}$, and each of its states can be represented in space polynomial in $|\A|$.
\end{proof}

We will now show the correctness of the determinization procedure. According to \Cref{lemma:finiteToInfinite}, it is enough to show the equivalence $\D\equiv\A$ with respect to finite words.
\begin{lem}\label{lem:DetCorrectness}
	Consider a tidy NMDA $\A$ over $\Sigma^+$ and a DMDA $\D$, constructed from $\A$ by the above determinization procedure.
	Then, for every $u\in\Sigma^+$, we have 
	\begin{enumerate}[label=\roman*.,align=left]
		\item $\A(u) = \D(u)$.
		\item For every $h\in[1..n]$,  $g_h=\Gap(q_h,u)$ if $\Gap(q_h,u) \leq 2T$ and $\infty$ otherwise.
	\end{enumerate}
	\noindent 
	where $\tuple{g_1, \cdots, g_n}$ is the target state of the run of $\D$ on $u$.
\end{lem}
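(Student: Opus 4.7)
The plan is to induct on $|u|$, lifting the statement to include $u=\emptyword$ as the base case (where $\D$ sits at $q'_{in}=\langle 0,\ldots,0,\infty,\ldots,\infty\rangle$). At $\emptyword$ the claim is immediate from the definitions: $\A(\emptyword)=0$, $\Cost(q_h,\emptyword)=0$ for $q_h\in\iota$ and $\infty$ otherwise, which matches $q'_{in}$, while (i) is vacuous. The lemma as stated then follows by considering $u\sigma$ for $|u|\geq 0$ in the inductive step.

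For the inductive step from $u$ to $u\sigma$, tidiness is exactly what makes the construction coherent: all $\sigma$-transitions of $\A$ from states reachable by $u$ share a single discount factor $\lambda=\theta(u\sigma)$, so the construction is entitled to pick a single $\rho'(t)$, and by a secondary induction the accumulated discount factors agree, $\rho'(u\sigma)=\rho(u)\cdot\lambda=\rho(u\sigma)$. Unfolding the Bellman-style recurrence $\Cost(q_h,u\sigma)=\min_j\{\Cost(q_j,u)+\gamma(q_j,\sigma,q_h)/\rho(u)\}$ and subtracting $\A(u)$ yields
\begin{equation*}
\A(u\sigma)-\A(u)=\frac{1}{\rho(u)}\min_{h,j}\{\Gap(q_j,u)+\gamma(q_j,\sigma,q_h)\}.
\end{equation*}
Substituting the inductive hypothesis $g_j=\Gap(q_j,u)$ into the construction's defining equations makes the right-hand side equal to $\gamma'(t)/\rho(u)$, which together with $\D(u)=\A(u)$ gives (i). A parallel unfolding yields $x_h=\lambda(c_h-\gamma'(t))=\rho(u\sigma)\bigl(\Cost(q_h,u\sigma)-\A(u\sigma)\bigr)=\Gap(q_h,u\sigma)$, which is (ii) modulo the $2T$-truncation.

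The main obstacle is to argue that substituting $\infty$ for a truncated $g_j$ never perturbs the inductive hypothesis. For this, I would prove an auxiliary \emph{unrecoverability} lemma: if $\Gap(q_j,u)>2T$, then for every extension $v$ no optimal run of $\A$ on $uv$ passes through $q_j$ after reading $u$. The key ingredient is that, since every discount factor is at least $2$, any two runs of $\A$ on the same word differ in value by at most $\sum_{i\geq 0}T\cdot 2^{-i}=2T$; hence any run through $q_j$ is beaten by an optimal run by at least $(\Gap(q_j,u)-2T)/\rho(u)>0$. With this lemma, treating a truncated $g_j$ as $+\infty$ in the $\min$ defining $c_h$ never removes the minimising index, so the inductive step commutes with truncation and the construction's own cap at $2T$ precisely reproduces the truncated-gap invariant in (ii). Finally, equivalence of $\A$ and $\D$ on infinite words follows from \Cref{lemma:finiteToInfinite}.
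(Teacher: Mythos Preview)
Your approach is essentially the paper's: induction on $|u|$, the Bellman recurrence for $\Cost$, the same unfolding to recover $\gamma'(t)$ and $x_h$, and then a separate argument that truncation to $\infty$ is harmless. The paper in fact first proves the invariant for an untruncated, infinite-state version $\D'$ of the construction and then argues that the $2T$-cap preserves it, which is exactly your decomposition.

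The one place where your sketch is looser than the paper is the truncation step for item~(ii). Your unrecoverability lemma, as stated, says that a \emph{globally} optimal run on $uv$ never passes through a truncated $q_j$; this is enough to show that $\gamma'(t)=\min_{h,j}\{g_j+\gamma(q_j,\sigma,q_h)\}$ is unaffected by truncation, hence (i). But it does not directly justify ``treating a truncated $g_j$ as $+\infty$ in the $\min$ defining $c_h$ never removes the minimising index'': when $\Gap(q_h,u\sigma)>2T$, the $c_h$-minimising $j$ may well be truncated. What you actually need (and what the paper proves by a two-line calculation) is the propagation statement: if $g_j>2T$ and $j$ is the true minimiser for $c_h$, then $x_h>2T$ as well. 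Concretely, using a gap-$0$ state $q_k$ and completeness to bound $\gamma'(t)\le\gamma(q_k,\sigma,q_m)$, one gets $x_h\ge\lambda\bigl(2T+\gamma(q_j,\sigma,q_h)-\gamma(q_k,\sigma,q_m)\bigr)\ge 2(2T-T)=2T$. With this, truncating $g_j$ can only increase $c_h$, and whenever it does so the resulting $x_h$ was going to be capped anyway, so (ii) survives. Once you sharpen your unrecoverability lemma to this per-$h$ propagation form, your argument and the paper's coincide.
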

\begin{proof}
	Let $\A = \tuple{\Sigma, Q, \iota, \delta, \gamma, \rho}$ be the input NMDA, 
	$\D = \tuple{\Sigma, Q', \iota', \delta', \gamma', \rho'}$ the DMDA constructed from $\A$, and $T$ be the maximal difference between the weights in $\A$.
	
	For a finite word $u$, let $\delta'(u)=\tuple{g_1, \cdots, g_n}\in Q'$ be the target state of $\D$'s run on $u$. We show the claims $i$ an $ii$ above by induction on the length of the input word $u$.
	The assumptions obviously hold for the initial step, where $u$ is the empty word. As for the induction step, we assume they hold for $u$ and show that for every $\sigma\in\Sigma$, they hold for $u\con\sigma$. Let $\delta'(u\con\sigma)=\tuple{x_1, \cdots, x_n}\in Q'$ be the target state of $\D$'s run on $u\con\sigma$. 
	
	We start by proving the claim with respect to an \emph{infinite-state} automaton $\D'$ that is constructed as in the determinization procedure,
	except for not changing any gap to $\infty$. Afterwards, we shall argue that changing all gaps that exceed $2T$ to $\infty$ does not harm the correctness.
	\noindent 
	\begin{enumerate}[label=\roman*.,align=left]
		\item
		By the definitions of $\Cost$ and $\Gap$, we have for every $h\in[1..n]$, 
		\begin{align}
			\Cost(q_h, u\con\sigma)&=
			\min\limits_{j\in[1..n]} \Bigg(\Cost(q_j,u) + \frac{\gamma(q_j, \sigma, q_h)}{\rho(u)}\Bigg)  
			\nonumber\\
			&=\min\limits_{j\in[1..n]} \Bigg(\frac{\Gap(q_j,u)}{\rho(u)} + \A(u) + \frac{\gamma(q_j, \sigma, q_h)}{\rho(u)}\Bigg)
			\nonumber\\
			&=\A(u) + \frac{\min\limits_{j\in[1..n]} \Big(\Gap(q_j,u) + \gamma(q_j, \sigma, q_h) \Big)}{\rho(u)}
			=\mbox{\footnotesize{By the induction assumption}}
			\nonumber\\
			&=\D'(u) + \frac{ \min\limits_{j\in[1..n]} \Big(g_j + \gamma(q_j, \sigma, q_h) \Big)}{\rho(u)}
			\label{eqn:cost1}
		\end{align}
		\noindent 
		By the construction of $\D'$, the transition weight $\gamma'_{i}(t)$ assigned on the $i=|u|+1$ step is 
		
		$ \gamma'_{|u|+1}(t) = \min\limits_{h\in[1..n]}  \Big( \min\limits_{j\in[1..n]} (g_j + \gamma(q_j, \sigma, q_h) )\Big)$.
		Therefore, 
		\begin{align*}
			\D'(u\con\sigma) 
			&= \D'(u) + \frac{\gamma'_{|u|+1}(t)}{\rho(u)}\\
			&=  \D'(u) + \frac{\min\limits_{h\in[1..n]}   \min\limits_{j\in[1..n]} \Big(g_j + \gamma(q_j, \sigma, q_h) \Big)}{\rho(u)} \\
			&=  \min\limits_{h\in[1..n]} \Bigg(\D'(u) +   \frac{\min\limits_{j\in[1..n]} \Big(g_j + \gamma(q_j, \sigma, q_h) \Big)}{\rho(u)}\Bigg)\\
			&=  \min\limits_{h\in[1..n]} \Cost(q_h, u\con\sigma) =  \A(u\con\sigma)
		\end{align*}
		
		\item   By \Cref{eqn:cost1}, we get that for every $h\in[1..n]$:
		\begin{align*}
			\min\limits_{j\in[1..n]} (g_j + \gamma(q_j, \sigma, q_h)) =\rho(u)\Big(\Cost(q_h, u\con\sigma) - \D'(u) \Big)
		\end{align*}
		
		Let $t$ be the transition that was added in the $i=|u|+1$ step of the algorithm from the state $\delta'(u)$ over the $\sigma$ letter. 
		
		For every $h\in[1..n]$, we have
		\begin{align*}
			x_h &= \rho'_{i}(t)\cdot (c_h - \gamma'_{i}(t)) \\
			&= \rho'_{i}(t) \Big(\min\limits_{j\in[1..n]} (g_j + \gamma(q_j, \sigma, q_h) ) - \gamma'_{i}(t)\Big)\\
			& 
			\begin{aligned}
				=\rho'_{i}(t) \Bigg(\min\limits_{j\in[1..n]} (g_j + &\gamma(q_j, \sigma, q_h) )
				-\rho(u)\Big(\D'(u\con\sigma) - \D'(u)\Big)\Bigg)
			\end{aligned} \\
			& 
			\begin{aligned}
				=\rho'_{i}(t) \Bigg(\rho(u)\Big(&\Cost(q_h, u\con\sigma) - \D'(u) \Big) 
				-\rho(u)\Big(\D'(u\con\sigma) - \D'(u)\Big)\Bigg)
			\end{aligned}\\
			&= \rho'_{i}(t)\cdot\rho(u) \Big(\Cost(q_h, u\con\sigma) - \D'(u\con\sigma) \Big)\\
			&= \rho(u\con\sigma) \cdot \Big(\Cost(q_h, u\con\sigma) - \D'(u\con\sigma) \Big)
		\end{align*}
		And by the induction assumption we have
		\begin{align*}
			x_h&= \rho(u\con\sigma) \cdot\Big(\Cost(q_h, u\con\sigma) - \A(u\con\sigma) \Big) = \Gap(q_h,u\con\sigma)
		\end{align*}
		
	\end{enumerate}
	\noindent 
	It is left to show that the induction is also correct for the \emph{finite-state} automaton $\D$. 
	The only difference between the construction of $\D$ and of $\D'$ is that the former changes all gaps $(g_j$) above $2T$ to $\infty$. 
	We should thus show that if the gap $g_j$, for some $j\in[1..n]$, exceeds $2T$ at a step $i$ of the construction, and this $g_j$ influences the next gap of some state $h$ (we denoted this gap in the construction as $x_h$) then $x_h > 2T$. 
	This implies that $\D(u)=\D'(u)$, since at every step of the construction there is at least one $h\in[1..n]$, such that $x_h=0$, corresponding to an optimal run of $\A$ on $u$ ending in state $q_h$. 
	
	Formally, we should show that if $g_j> 2T$ and
	$x_h = \rho'_{i+1}(t)\cdot \Big(g_j + \gamma(q_j, \sigma, q_h) - \gamma'_{i+1}(t)\Big)$, 
	where $t$ is the transition added in the construction on step $i$ as defined in part (ii.) above, then $x_h > 2T$. 
	Indeed, according to the construction, there exists an index $k\in[1..n]$ such that $g_k=0$ and since $\A$ is complete, there is a transition from $q_k$ to some state $q_m$, implying that $\gamma'_{i+1}(t)\leq g_k+\gamma(q_k,\sigma,q_m)=\gamma(q_k,\sigma,q_m)$.
	Hence
	\begin{align*}
		x_h &> \rho'_{i+1}(t) \cdot\Big(2T + \gamma(q_j, \sigma, q_h) - \gamma'_{i+1}(t)\Big)  \geq 2 \cdot\Big(2T + \gamma(q_j, \sigma, q_h) - \gamma'_{i+1}(t)\Big) \\
		&\geq 2 \cdot\Big(2T + \gamma(q_j, \sigma, q_h) - \gamma(q_k, \sigma, q_m) \Big) \geq 2 \cdot (2T + (-T) ) = 2T
		\qedhere
	\end{align*}
\end{proof}

We show next that the DMDA created by the determinization procedure is indeed a $\theta$-DMDA.
\begin{lem}\label{lem:DetCorrectnessDiscountFunction}
	Consider a $\theta$-NMDA $\A$ over $\Sigma^+$ and a DMDA $\D$, constructed from $\A$ by the determinization procedure above. 
	Then $\D$ is a $\theta$-DMDA.
\end{lem}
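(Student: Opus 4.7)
The plan is to prove the statement by induction on the length of the word read by $\D$, combining the correctness of the gaps established in \Cref{lem:DetCorrectness} with the tidiness of $\A$.

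Concretely, fix a finite word $u\cdot\sigma\in\Sigma^+$ with $u\in\Sigma^*$ and $\sigma\in\Sigma$. Since $\D$ is deterministic, there is a unique run of $\D$ on $u$; let $q'=\tuple{g_1,\ldots,g_n}$ be its target state, and let $t=(q',\sigma,q'')$ be the transition taken by $\D$ on reading $\sigma$. I would first show that the set $H=\{h\in[1..n]\ST g_h\neq\infty\}$ is nonempty: by \Cref{lem:DetCorrectness}(ii.), the value $g_h$ equals $\Gap(q_h,u)$ whenever it is finite, and since at least one run of $\A$ on $u$ is optimal, some index $h$ has $g_h=0$, so $h\in H$. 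Thus $H$ witnesses at least one state $q_h$ of $\A$ reachable by a run on $u$.

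Next I would invoke \Cref{lem:DetCorrectness}(ii.) again to conclude that for every $h\in H$ there is an actual run $r_h$ of $\A$ on $u$ ending in $q_h$. Because $\A$ is complete, each $q_h$ has at least one outgoing $\sigma$-transition, which extends $r_h$ to a run of $\A$ on $u\cdot\sigma$; since $\A$ is a $\theta$-NMDA, the discount factor of this last transition must equal $\theta(u\cdot\sigma)$. Consequently, all transitions of the form $(q_h,\sigma,\cdot)\in\delta$ with $h\in H$ share the common discount factor $\theta(u\cdot\sigma)$. This is exactly the value that the determinization procedure assigns to $\rho'(t)$ in the ``Discount factor'' clause of the induction step, and hence $\rho'(t)=\theta(u\cdot\sigma)$.

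Since $u\cdot\sigma$ was arbitrary, every transition taken by $\D$ on reading a word $w\in\Sigma^+$ carries the discount factor dictated by $\theta$ applied to the prefix consumed up to that point; by \Cref{def:ThetaNMDA}, $\D$ is therefore a $\theta$-DMDA, which is the desired conclusion. The only subtle point is the appeal to completeness of $\A$ to guarantee that some $\sigma$-successor of $q_h$ exists, so that tidiness can pin the discount factor to $\theta(u\cdot\sigma)$; everything else is a direct unrolling of the construction together with \Cref{lem:DetCorrectness}.
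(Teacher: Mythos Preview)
Your proposal is correct and follows essentially the same approach as the paper: both arguments use \Cref{lem:DetCorrectness} to identify, via a finite gap, a state $q_h$ of $\A$ that is reachable on $u$, then invoke tidiness to pin the discount factor of the outgoing $\sigma$-transition to $\theta(u\cdot\sigma)$, which is the value the construction assigns to $\rho'(t)$. The only cosmetic difference is that the paper phrases the conclusion as $\rho'(u)=\rho(u)$ (equality of accumulated discount factors, proved by induction) rather than directly as $\rho'(t)=\theta(u\cdot\sigma)$ per transition; the underlying content is the same.
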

\begin{proof}
	Consider a tidy NMDA $\A = \tuple{\Sigma, Q, \iota, \delta, \gamma, \rho}$, and the DMDA $\D = \tuple{\Sigma, Q', \iota', \delta', \gamma', \rho'}$ constructed from $\A$. 
	
	We show by induction on the length of an input word that for every finite word $u\in\Sigma^*$, we have $\rho'(u)=\rho(u)$. 
	The base case regarding the empty word obviously holds.	
	As for the induction step, we assume the claim holds for $u$ and show that it also holds for $u\con\sigma$, for every $\sigma\in\Sigma$.
	
	Let $t$ be the final transition of $\D$'s run on $u\con\sigma$.
	Due to the construction of $\D$, there exist $q,q' \in Q$ such that
	$\Gap(q,u)\neq\infty$, 
	$\Gap(q',u\con\sigma)\neq \infty$,
	and
	$
	\rho'(t)=\rho(q,\sigma,q')
	$.
	
	Hence,
	$
	\rho'(u\con\sigma) 
	= \rho'(u)\cdot \rho'(t) = \rho(u)\cdot \rho'(t) = \rho(u)\cdot \rho(q,\sigma,q')
	$ and since $\Gap(q,u)\neq\infty$, we get that $q \in\delta(u)$, and
	$
	\rho'(u\con\sigma) 
	= \rho(u)\cdot \rho(q,\sigma,q') = \rho(u\con\sigma)
	$.
\end{proof}

Finally, as a direct consequence of the above construction and 
 \Cref{lemma:finiteToInfinite,lem:DetCorrectness,lem:Termination,lem:DetCorrectnessDiscountFunction}:
\begin{thm}\label{thm:DetTidy}
	For every choice function $\theta$ and a $\theta$-NMDA $\A$, on finite or infinite words, there exists a $\theta$-DMDA $\D\equiv \A$ of size in $2^{O(|\A|)}$.
	Every state of $\D$ can be represented in space polynomial in $|\A|$.
\end{thm}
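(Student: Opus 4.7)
The plan is to observe that the theorem is essentially a corollary assembling the determinization construction described just above and the preceding lemmas. Concretely, given a $\theta$-NMDA $\A$, I would apply the gap-based subset construction verbatim: states of $\D$ are tuples $\tuple{g_1,\ldots,g_n}\in G^{|Q|}$ recording a truncated normalized gap for each state of $\A$, with the initial tuple placing $0$ at each initial state of $\A$ and $\infty$ elsewhere, and with transition weights chosen to be the minimum over $h$ of the quantity $c_h$ as in the construction.

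For the size and representability claim, I would invoke \Cref{lem:Termination}: since $\A$ is integral and the weights share a common denominator $d$, each inner coordinate of a state lives in the finite set $\bar G = \{k/d : k\in\Nat,\ k/d\le 2T\}\cup\{\infty\}$ of cardinality at most $2Td+2$. Hence $|Q'|\le (2Td+2)^{|Q|}\in 2^{O(|\A|)}$, each state requires only $|Q|\cdot\log(2Td+2)$ bits which is polynomial in $|\A|$, and the construction necessarily stabilizes in finitely many steps.

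For the equivalence claim on finite words, I would simply cite \Cref{lem:DetCorrectness}, which establishes by induction on word length that $\D(u)=\A(u)$ and that the gaps stored in the reached state of $\D$ coincide with the true gaps $\Gap(q_h,u)$ whenever these are at most $2T$; the crucial subtlety there, already handled in that lemma, is that truncating gaps exceeding $2T$ to $\infty$ never affects optimal-run bookkeeping, because at every step some coordinate equals $0$ (corresponding to an optimal prefix run) and a gap above $2T$ can never be recovered by any suffix. To upgrade equivalence from finite to infinite words I would apply \Cref{lemma:finiteToInfinite}, which is exactly the statement that agreement on $\Sigma^+$ forces agreement on $\Sigma^\omega$ for NMDAs. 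Finally, to verify that $\D$ is actually a $\theta$-DMDA rather than just some DMDA, I would cite \Cref{lem:DetCorrectnessDiscountFunction}: an easy induction shows $\rho_{\D}(u)=\rho_{\A}(u)$ for every finite word $u$, using the fact that the transition chosen by the construction copies the discount factor of some live transition of $\A$ over $u$, and tidiness of $\A$ guarantees that all such live transitions agree on the discount factor.

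The only place there is real content, rather than bookkeeping, is showing that the construction does not ``lose'' a run by truncating a gap prematurely, and that the simultaneous presence of a zero-coordinate at each step is preserved along transitions; but this is precisely the final paragraph of the proof of \Cref{lem:DetCorrectness}, which I would simply reuse. Apart from that, the proof of the theorem is a one-line combination: take $\D$ produced by the construction, apply \Cref{lem:Termination} for size, \Cref{lem:DetCorrectness} plus \Cref{lemma:finiteToInfinite} for equivalence on all words, and \Cref{lem:DetCorrectnessDiscountFunction} for the $\theta$-DMDA property.
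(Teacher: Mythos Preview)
Your proposal is correct and matches the paper's approach exactly: the paper states the theorem as a direct consequence of the construction together with \Cref{lemma:finiteToInfinite,lem:DetCorrectness,lem:Termination,lem:DetCorrectnessDiscountFunction}, which is precisely the combination you outline. Your identification of where the real content lies (the truncation argument in \Cref{lem:DetCorrectness}) is also accurate.
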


\subsection{Representing Choice Functions}\label{sec:RepresentingChoiceFunctions}

We show that, as opposed to the case of a general function $f: \Sigma^+ \to \Nat \setminus \{0,1\}$, every choice function $\theta$ can be finitely represented by a transducer.

A transducer $\T$ (Mealy machine) is a 6-tuple $\tuple{P,\Sigma,\Gamma,p_0,\delta,\rho}$, where $P$ is a finite set of states, $\Sigma$ and $\Gamma$ are finite sets called the input and output alphabets, $p_0\in P$ is the initial state, $\delta: P\times\Sigma\to P$ is the total transition function and $\rho: P\times\Sigma\to\Gamma$ is the total output function.

A transducer $\T$ represents a function, to which for simplicity we give the same name $\T:\Sigma^+\to \Gamma$, such that for every word $w$, the value $\T(w)$ is the output label of the last transition taken when running $\T$ on $w$.
The size of $\T$, denoted by $|\T|$, is the maximum between the number of transitions and the maximal binary representation of any output in the range of $\rho$.

Since in this work we only consider transducers in which the output alphabet $\Gamma$ is the natural numbers $\Nat$, we omit $\Gamma$ from their description, namely write $\tuple{P,\Sigma,p_0,\delta,\rho}$ instead of $\tuple{P,\Sigma,\Nat,p_0,\delta,\rho}$.
An example of a transducer $\T$ and a $\T$-NMDA is given in \Cref{fig:TransducerAndAutomaton}.

\begin{figure}
	\centering
	\begin{tikzpicture}[->,>=stealth',shorten >=1pt,auto,node distance=2cm, semithick, initial text=, every initial by arrow/.style={|->}]
		\node [midway, xshift=0cm] [above left of=q0] {$\T:$};
		\node[initial, state] (q0) {$q_0$};
		\node[state] (q1) [right of=q0] {$q_1$};
		
		\path 
		(q0) edge	[loop above, out=120, in=70,looseness=5] node {$a,2$} (q0)
		(q1) edge	[loop above, out=120,in=70,looseness=5] node {$a,3$} (q1)
		(q1) edge	[below, out=-160,in=-20] node {$b,2$} (q0)
		(q0) edge [above, out=20,in=160] node {$b,4$} (q1)
		;

		\node[initial, state, xshift=1.5cm] (p0) [above right of=q1] {$p_0$};
		
		\node[state, xshift=2.5cm] (p1) [right of=p0] {$p_1$};
		\node[state] (p2) [below of=p0] {$p_2$};
		\node[state] (p3) [below of=p1] {$p_3$};
		\node[state] [above of=p0, draw=none, xshift=-1.6cm, yshift=-1.8cm] {$\A:$};
		\path
		(p0) edge [left, in=140, out = -140] node {$a,1,2$} (p2)
		(p2) edge [left, out=40, in = -40] node {$a,\frac{1}{2},2$} (p0)
		(p0) edge node[yshift=-0.22cm] {$b,2,4$} (p3)
		(p1) edge	[loop, out=-30,in=10,looseness=5] node [right] {$a,1,2$} (p1)
		(p1) edge [right, in=140, out = -140] node {$b,\frac{1}{2},4$} (p3)
		(p3) edge [right, out=40, in = -40] node {$b,\frac{2}{3},2$} (p1)
		(p2) edge [above, in=170, out = 10] node[yshift=-0.05cm] {$b,1,4$} (p3)
		
		(p0) edge node[below,yshift=0.07cm] {$a,\frac{3}{2},2$} (p1)
		(p3) edge [below, out=-170, in = -10] node[above,yshift=-0.1cm] {$b,\frac{3}{4},2$} (p2)
		
		(p3) edge	[loop, out=-30,in=10,looseness=5] node [right]{$a,1,3$} (p3)
		;
	\end{tikzpicture}
	\caption{\label{fig:TransducerAndAutomaton}A transducer $\T$ and a $\T$-NMDA.}
\end{figure}

\begin{thm}\label{thm:transducersAreEnough}
	For every function $\theta: \Sigma^+ \to \Nat\setminus\{0,1\}$, $\theta$ is a choice function, namely there exists a $\theta$-NMDA, if and only if there exists a transducer $\T$ such that $\theta \equiv \T$.
\end{thm}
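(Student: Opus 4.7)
The plan is to prove the two directions separately, with the easy direction being the construction of a $\theta$-NMDA from a given transducer, and the nontrivial one being a subset construction that turns a $\theta$-NMDA into a transducer representing $\theta$.

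For the ``if'' direction, suppose there exists a transducer $\T = \tuple{P, \Sigma, p_0, \delta_\T, \rho_\T}$ with $\T \equiv \theta$. We construct an NMDA $\A = \tuple{\Sigma, P, \{p_0\}, \delta_\A, \gamma_\A, \rho_\A}$ by taking the same state space and transition structure as $\T$, assigning weight $0$ to every transition, and setting $\rho_\A(p, \sigma, \delta_\T(p,\sigma)) = \rho_\T(p, \sigma)$. Since $\T$ is deterministic, for every finite word $u=\sigma_1\cdots\sigma_n\in\Sigma^+$, there is a unique run of $\A$ on $u$, whose last transition has discount factor $\rho_\T(\delta_\T^*(p_0, u[0..n{-}2]), \sigma_n) = \T(u) = \theta(u)$. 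Hence $\A$ is a $\theta$-NMDA, so $\theta$ is a choice function.

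For the ``only if'' direction, let $\A = \tuple{\Sigma, Q, \iota, \delta, \gamma, \rho}$ be a $\theta$-NMDA. I would build a transducer $\T = \tuple{2^Q \setminus \{\emptyset\}, \Sigma, \iota, \delta_\T, \rho_\T}$ via the standard subset construction: $\delta_\T(S, \sigma) = \bigcup_{q \in S} \{q' \ST (q, \sigma, q') \in \delta\}$, and $\rho_\T(S, \sigma) = \rho(q, \sigma, q')$ for some (equivalently, any) transition $(q, \sigma, q') \in \delta$ with $q \in S$. Since $\A$ is complete, the reachable subsets are all nonempty, so $\T$ is a total finite-state transducer with at most $2^{|Q|}$ states.

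The crux, and the only step requiring a moment's thought, is that $\rho_\T$ is well-defined and indeed computes $\theta$. For any finite word $u\in\Sigma^+$, the subset construction guarantees that after reading $u[0..|u|{-}2]$ the transducer is in state $S = \delta(u[0..|u|{-}2])$, namely the set of states reached in $\A$ by some run on that prefix. By the tidiness of $\A$ (\Cref{def:ThetaNMDA}), every transition of the form $(q, u(|u|{-}1), q')$ with $q \in S$ has the same discount factor $\theta(u)$; in particular, the value of $\rho_\T(S, u(|u|{-}1))$ does not depend on the representative transition chosen, and $\T(u) = \theta(u)$. This establishes $\T \equiv \theta$ and completes the proof.
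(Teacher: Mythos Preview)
Your proof is correct, and for the nontrivial direction it takes a more elementary route than the paper. The paper invokes \Cref{thm:DetTidy} to first determinize the $\theta$-NMDA into an equivalent $\theta$-DMDA $\D$, and then simply reads off the transducer from $\D$'s transition and discount-factor functions. You instead bypass the full gap-tracking determinization and use only the plain subset construction, observing that tidiness alone already guarantees the output label on each reachable subset is well-defined. This is arguably cleaner, since the theorem is only about representing $\theta$ and says nothing about the weights; the paper's approach imports machinery (the recoverable-gap bound, the termination argument of \Cref{lem:Termination}) that is irrelevant here. On the other hand, the paper's route is essentially free given that \Cref{thm:DetTidy} has already been established, and it reuses existing infrastructure rather than introducing a separate construction.

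One small point worth tightening: your $\rho_\T$ is declared on all of $2^Q\setminus\{\emptyset\}$, but the ``equivalently, any'' justification only goes through for \emph{reachable} subsets. For unreachable subsets the choice of representative transition may not be unique; you should either restrict the state set to reachable subsets or note that the output on unreachable states can be fixed arbitrarily without affecting $\T\equiv\theta$.
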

\begin{proof}
	Consider a function $\theta: \Sigma^+ \to \Nat\setminus\{0,1\}$.
	For the first direction, observe that given a transducer $\T=\tuple{P,\Sigma,p_0,\delta,\rho}$ representing $\theta$, 
	it holds that 
	the NMDA $\T'=\tuple{\Sigma,P,\{p_0\},\delta,\gamma,\rho}$, for every weight function $\gamma$, is a $\theta$-NMDA.
	
	For the other direction, consider a $\theta$-NMDA $\A'$. According to \Cref{thm:DetTidy}, there exists a $\theta$-DMDA $\A=\tuple{\Sigma, Q, q_0, \delta, \gamma, \rho}$ equivalent to $\A'$. Since the image of $\rho$ is a subset of $\Nat$, we have that $\theta$ can be represented by the transducer $\T=\tuple{Q,\Sigma,q_0,\delta,\rho}$.
\end{proof}
For a given choice function $\theta$, we refer to the class of all $\theta$-NMDAs. 
Observe that when considering such a class, only the choice function is relevant, regardless of the transducer defining it.

\subsection{Closure under Algebraic Operations}\label{sec:tidyAlgebraic}
We show that the family of $\theta$-NMDAs, for any fixed choice function $\theta$, is closed under algebraic operation. 
Namely, for every tidy-NMDAs $\A$ and $\B$ that share the same choice function ($\theta$), there exists tidy-NMDAs for $-\A$, $\A+\B$, $\A-\B$, $\min(\A,\B)$ and $\max(\A,\B)$, with that same choice function (all of them can be represented by $\theta$-NMDAs).
Observe that if the choice functions of $\A$ and $\B$ are not the same, the closure is not guaranteed (\Cref{thm:NoClosureGeneralIntegralNMDA}).
\begin{thm} \label{thm:closurealgebraicOps}
	For every choice function $\theta$, the set of $\theta$-NMDAs, on finite or infinite words, is closed under the operations of min, max, addition, subtraction, and multiplication by a rational constant.
\end{thm}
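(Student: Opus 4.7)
The plan is to reduce every operation to three primitive constructions that visibly preserve the choice function $\theta$: a nondeterministic union (for min), a synchronous product (for sum), and determinization followed by negation of all weights (for unary minus). The crucial enabler is \Cref{thm:DetTidy}, which lets us eliminate nondeterminism while staying inside the $\theta$-class; without it, max and negation would be problematic. I would carry out the constructions in the order min, scaling by a non-negative constant, negation, scaling by a negative constant, addition, subtraction, and finally max, with each step reusing the previous ones.

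For $\min(\A,\B)$, take the disjoint union of the two $\theta$-NMDAs: the state set is $Q_\A\sqcup Q_\B$, the initial set is $\iota_\A\cup\iota_\B$, and the transitions, weights, and discount factors are inherited from the component automata. Every run of the union is a run of exactly one component on the same word, so the infimum over runs gives $\min(\A(w),\B(w))$; and since each component already assigns $\theta(u)$ to the last transition of every run on every prefix $u$, so does the union. For $c\cdot\A$ with $c\ge 0$, multiply all weights by $c$ as in \Cref{prop:multiply}, which leaves $\rho$ untouched. For $-\A$, first determinize to an equivalent $\theta$-DMDA $\A_D$ using \Cref{thm:DetTidy}, then negate every weight; determinism is essential here because $\inf$ does not commute with $-1$. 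Scaling by a negative $c$ is then $|c|\cdot(-\A)$.

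For $\A+\B$, I would use the synchronous product $\C$ with states $Q_\A\times Q_\B$, initial states $\iota_\A\times \iota_\B$, transitions $((p,q),\sigma,(p',q'))$ whenever $(p,\sigma,p')\in\delta_\A$ and $(q,\sigma,q')\in\delta_\B$, weight $\gamma_\A(p,\sigma,p')+\gamma_\B(q,\sigma,q')$, and discount factor $\theta$ applied to the word prefix leading to $(p,q)$ — a value on which $\rho_\A$ and $\rho_\B$ agree by tidiness, so the product is well-defined and is itself a $\theta$-NMDA. Because our NMDAs are complete, the runs of $\C$ on a word $w$ are exactly the pairs $(r_A,r_B)$ of runs of $\A$ and $\B$ on $w$, and the value of each run in $\C$ is $\A(r_A)+\B(r_B)$; the two components can be minimized independently, giving
\[
\C(w)=\inf_{r_A,r_B}\bigl(\A(r_A)+\B(r_B)\bigr)=\A(w)+\B(w).
\]
Subtraction is then $\A-\B\equiv\A+(-\B)$.

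For $\max(\A,\B)$, the naive ``product with max-weight'' construction fails because $\max$ does not distribute over the discounted sum of weights. Instead I would use the identity $\max(a,b)=-\min(-a,-b)$: build $-\A$ and $-\B$ via determinization and weight-negation, take their min by the union construction above, and negate the result by one more determinization. Each step stays within the $\theta$-class, so the final automaton is a $\theta$-NMDA. The main obstacle is precisely this detour for max: verifying that determinization can be invoked twice without inflating the discount-factor function is what the proof really relies on, and it is exactly what \Cref{thm:DetTidy} together with \Cref{lem:DetCorrectnessDiscountFunction} provides. Infinite-word equivalence for all of these constructions follows from \Cref{lemma:finiteToInfinite} once equivalence on finite words is established, so it suffices to verify each construction on finite inputs.
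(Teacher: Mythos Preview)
Your proposal is correct and follows essentially the same approach as the paper: union for $\min$, \Cref{prop:multiply} for non-negative scaling, determinization via \Cref{thm:DetTidy} followed by weight-negation for $-\A$, the synchronous product for addition (with the discount factors agreeing by tidiness), and the identity $\max(a,b)=-\min(-a,-b)$ realized through two rounds of determinization for $\max$. The only minor difference is that the paper's constructions are stated directly for both finite and infinite words rather than appealing to \Cref{lemma:finiteToInfinite} at the end, but this is a stylistic point and your route is equally valid.
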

\begin{proof}
	Consider a choice function $\theta$ and $\theta$-NMDAs $\A$ and $\B$.
	
	\begin{itemize}
		\item 
		\emph{Multiplication by constant $c\geq 0$}: A $\theta$-NMDA for $c \cdot \A$ is straightforward from \Cref{prop:multiply}.
		\item
		\emph{Multiplication by $-1$}: A $\theta$-NMDA for $-\A$ can be achieved by first determinizing $\A$, as per \Cref{thm:DetTidy}, into a $\theta$-DMDA $\D$ and then multiplying all the weights in $\D$ by $-1$.
		\item 
		\emph{Addition}: 
		Considering  $\A=\tuple{\Sigma,Q_1,\iota_1,\delta_1,\gamma_1,\rho_1}$ and $\B=\tuple{\Sigma,Q_2,\iota_2,\delta_2,\gamma_2,\rho_2}$, a $\theta$-NMDA for $\A+ \B$ can be achieved by 
		constructing the product automaton
		$\C=\tuple{\Sigma,Q_1\times Q_2,\iota_1\times\iota_2,\delta,\gamma,\rho}$, where
		$\delta = \big\{ \big((q_1,q_2),\sigma,(p_1,p_2)\big)\ST (q_1,\sigma,p_1)\in\delta_1 \text{ and } (q_2,\sigma,p_2)\in\delta_2\big\}$,
		$\gamma\big((q_1,q_2), \sigma, (p_1,p_2)\big)= \gamma_1(q_1, \sigma,  p_1) + \gamma_2(q_2, \sigma,  p_2)$,
		$\rho\big((q_1,q_2), \sigma, (p_1,p_2)\big)= \rho_1(q_1, \sigma,  p_1) = \rho_2(q_2, \sigma,  p_2)$. The latter must hold since both $\rho_1$ and $\rho_2$ are compliant with $\theta$.
		
		\item 
		\emph{Subtraction}:
		A $\theta$-NMDA for $\A- \B$ can be achieved by i) Determinizing $\B$ to $\B'$;  ii) Multiplying $\B'$ by $-1$, getting $\B''$; and iii) Constructing a $\theta$-NMDA for $\A +\B''$.
		\item 
		\emph{min}: A $\theta$-NMDA for $\min (\A,\B)$ is straightforward by the nondeterminism on their union.
		\item 
		\emph{max}: A $\theta$-NMDA for $\max(\A, \B)$ can be achieved by i) Determinizing $\A$ and $\B$ to $\A'$ and $\B'$, respectively; ii) Multiplying $\A'$ and $\B'$ by $-1$, getting $\A''$ and $\B''$, respectively; iii) Constructing a $\theta$-NMDA $\C''$ for $\min(\A'',\B'')$; iv) Determinizing $\C''$ into a $\theta$-DMDA $\D$; and v) Multiplying $\D$ by $-1$, getting $\theta$-NMDA $\C$, which provides $\max(\A, \B)$.
		\qedhere
	\end{itemize}	
\end{proof}
\noindent 
We analyze next the size blow-up involved in algebraic operations. In addition to the general classes of $\theta$-NMDAs, we also consider the case where both input and output automata are deterministic.
Summation of the results can be seen in \Cref{tbl:blow-up}.

Most results in \Cref{tbl:blow-up} are straightforward from the constructions presented in the proof of \Cref{thm:closurealgebraicOps}: multiplying all the weights by a constant is linear, creating the product automaton is quadratic, and whenever determinization is required, we get an exponential blow-up.
However, the result of the size blow-up for the max operation on tidy NMDAs is a little more involved. At a first glance, determinizing back and forth might look like a doubly-exponential blow-up, however in this case an optimized determinization procedure can achieve a singly-exponential blow-up:
Determinizing a tidy NMDA that is the union of two DMDAs, in which the transition weights are polynomial in the number of states, is shown to only involve a polynomial size blow-up.

\begin{table}
	\begin{center}
		\begin{tabular}{|c|c||c||c|c|c|}
			\hline
			$c\cdot\A$ (for $c\geq 0$) &  $\min(\A,\B)$ & $\A+\B$ & $-\A$ & $\max(\A,\B)$ & $\A-\B$ \\ [0.5ex]
			\hline\hline
			\multicolumn{2}{|c||}{Linear} &Quadratic&\multicolumn{3}{c|}{Single Exponential}\\
			\hline
		\end{tabular}
	\end{center}
	\caption{The size blow-up involved in algebraic operations on tidy NMDAs.} \label{tbl:blow-up}
\end{table}	

\begin{thm}\label{thm:maxBlowUp}
	The size blow-up involved in the $\max$ operation on tidy NMDAs, on finite or infinite words, is at most single-exponential.
\end{thm}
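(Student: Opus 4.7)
The plan is to refine the chained construction from the proof of \Cref{thm:closurealgebraicOps}, where $\max(\A,\B)$ is built via two nested determinizations. The naive analysis gives a double-exponential bound; the idea is to show that the second determinization contributes only a polynomial factor, because its input is highly structured---a union of two DMDAs.

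First I would determinize $\A$ and $\B$ using \Cref{thm:DetTidy}, obtaining $\theta$-DMDAs $\A'$ and $\B'$ with $|\A'| \in 2^{O(|\A|)}$ and $|\B'| \in 2^{O(|\B|)}$. A direct inspection of the construction in \Cref{sec:Determinizability} shows that every weight produced has the form $\min_{h,j}\big(g_j + \gamma(q_j,\sigma,q_h)\big)$ with $g_j \in [0,2T]$, so the weights of $\A'$ and $\B'$ have binary representations of size polynomial in $|\A|$ and $|\B|$, respectively. I would then negate both automata, take their nondeterministic union, and denote the resulting $\theta$-NMDA by $\C''$; this automaton satisfies $\C'' \equiv -\max(\A,\B)$.

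The critical step is determinizing $\C''$ using the procedure from \Cref{sec:Determinizability}. Because $\C''$ is the union of two DMDAs (after renaming states to be disjoint), on every finite word $u$ there is exactly one reachable state in $\A''$ and exactly one in $\B''$; hence in any reachable determinized state $\tuple{g_1,\ldots,g_n}$, at most two of the entries $g_i$ are not $\infty$. By \Cref{lem:Termination}, each such finite entry lies in the set $\bar{G}$ for $\C''$, whose cardinality $2T'd'+2$---with $T'$ the maximal weight-difference and $d'$ the common denominator of weights in $\C''$---is $2^{O(|\A|+|\B|)}$ thanks to the weight-size bound just established. Therefore the reachable part of the determinization has at most $|Q_{\A''}|\cdot|Q_{\B''}|\cdot|\bar{G}|^2 = 2^{O(|\A|+|\B|)}$ states.

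Finally, negating this $\theta$-DMDA yields a $\theta$-NMDA for $\max(\A,\B)$ of single-exponential size, as claimed. The main obstacle is the weight-magnitude bookkeeping in the first determinization step: one must verify carefully that the values of $T'$ and $d'$ arising in $\C''$ remain at most single-exponential in $|\A|+|\B|$, since this is precisely what collapses the naive double-exponential bound to a single exponential one.
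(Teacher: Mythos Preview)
Your proposal is correct and follows essentially the same approach as the paper: both exploit that $\C''$ is a union of two DMDAs so that only two gap entries are ever finite, which collapses the second determinization to a polynomial (in $|\A''|,|\B''|$) number of states and yields a single-exponential bound overall. The only cosmetic difference is that the paper rewrites the determinization explicitly with product states $\tuple{q_1,q_2,g_1,g_2}$, whereas you argue directly about the reachable states of the standard construction; the bookkeeping on $T'$ and $d'$ that you flag is exactly what the paper verifies (weights of $\A'',\B''$ stay within the original weight range, and the common denominator is at most $d_\A d_\B$).
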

\begin{proof}
	Consider a choice function $\theta$, $\theta$-NMDAs $\A$ and $\B$, and the automata $\A'',\B'',\C'',\D$ and $\C$, as constructed in the `$\max$' part of the proof of \Cref{thm:closurealgebraicOps}. Observe that $\C''$ is the the union of two $\theta$-DMDAs.
	As so, for every word $u$, there are only two possible runs of $\C''$ on $u$.
	In order to determinize $\C''$ into $\D$ we present a slightly modified procedure compared to the one presented in \Cref{sec:Determinizability}.
	Instead of the basic subset construction, we use the product automaton of $\A''$ and $\B''$ and instead of saving in every state of $\D$ the gap from the preferred state for every state of $\C''$, we only save the gap between the two runs of $\C''$.
	Combined with the observation we showed in the proof of \Cref{lem:DetCorrectness} that the weights of $\A''$ and $\B''$ are bounded by the weights of $\A$ and $\B$, we are able to reduce the overall blow-up to be only single-exponential.
	The procedure presented in \Cref{sec:Determinizability} requires the following modifications:
	\begin{itemize}
		\item Every state of $\D$ is a tuple $\tuple{q_1,q_2,g_1,g_2}$ where $q_1$ is a state of $\A''$, $q_2$ is a state of $\B''$, and $g_1,g_2\in G$ are the gaps from the preferred run.
		\item The initial state of $\D$ is $\tuple{q_\A,q_\B,0,0}$ where $q_\A$ and $q_\B$ are the initial states of $\A''$ and $\B''$, respectively.
		\item In the induction step, $\D_{i+1}$ extends $\D_{i}$ by (possibly) adding for every state $p=\tuple{q_1,q_2,g_1,g_2}$ and letter $\sigma\in\Sigma$, a state $p':=\tuple{q'_1,q'_2,g'_1,g'_2}$ and a transition $t:=\tuple{p,\sigma,p'}$ such that for every $h\in[1..2]$:
		\begin{itemize}
			\item 
			$c_h := g_h + \gamma\big(q_h,\sigma,\delta(q_h,\sigma)\big)$
			\item
			$\gamma'_{i+1}(t)= \min(c_1,c_2)$
			\item
			$\rho'_{i+1}(t) = \rho\big(q_1,\sigma,\delta(q_1,\sigma)\big)$
			\item
			$x_h:=\rho'_{i+1}(t)\cdot\big(c_h-\gamma'_{i+1}(t)\big)$. If $x_h> 2T$ then set $x_h:=\infty$
		\end{itemize}
	\end{itemize}
	\noindent 
	With the above modifications, similarly to \Cref{lem:Termination}, we get that the number of possible gaps is $2T d_\A d_\B+2$ where $d_\A$ and $d_\B$ are the denominators of weights in $\A''$ and $\B''$, respectively.
	Hence, there are no more than $(2T d_\A d_\B+2)^2\cdot N_{\A} \cdot N_{\B}$ possibilities for the states of $\D$, where $N_{\A}$ and $N_{\B}$ are the number of states in $\A''$ and $\B''$, respectively.
	
	According to the determinization procedure showed in \Cref{sec:Determinizability} and as explained in the proofs of \Cref{lem:Termination} and \Cref{lem:DetCorrectness}, the following observations hold:
	\begin{itemize}
		\item 
		$d_\A$ and $d_\B$ are also the denominators of weights in $\A$ and $\B$, respectively, and since we use binary representation of weights, $d_\A\cdot d_\B$ is up to single-exponential in $|\A|+|\B|$.
		\item
		All the weights in $\A''$ and $\B''$ are bounded by the weights of $\A$ and $\B$, hence $T$ is also up to single-exponential in $|\A|+|\B|$.
		\item $N_{\A}$ and $N_{\B}$ are up to single-exponential in $|\A|+|\B|$.
	\end{itemize}
	Concluding that the number of states in $\D$ is up to single-exponential in $|\A|+|\B|$, and since the number of states in $\C$ is equal to the number of states in $\D$, we get a single-exponential blow-up.
\end{proof}

Observe that if weights are represented in unary, we can achieve a quartic blow-up for the min and max operations on tidy-DMDAs, by using the above determinization procedure, and since $T$ is linear in unary representation.

We are not aware of prior lower bounds on the size blow-up involved in algebraic operations on NDAs.
For achieving such lower bounds, we develop a general scheme to convert every NFA to a $\lambda$-NDA of linearly the same size that defines the same language with respect to a threshold value $0$, and to convert some specific $\lambda$-NDAs back to corresponding NFAs.

The conversion of an NFA to a corresponding $\lambda$-NDA is quite simple. It roughly uses the same structure of the original NFA, and assigns four different transition weights, depending on whether each of the source and target states is accepting or rejecting.

\begin{lem}\label{lem:NFA_to_NDA}
	For every $\lambda\in\Nat\setminus\{0,1\}$ and NFA $\A$ with $n$ states, there exists a $\lambda$-NDA $\Tilde{\A}$ with $n+2$ states, such that for every word $u\in\Sigma^+$, we have $u\in L(\A)$ iff $\Tilde{\A}(u)<0$. That is, the language defined by $\A$ is equivalent to the language defined by $\Tilde{\A}$ and the threshold $0$.
\end{lem}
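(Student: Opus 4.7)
The plan is to construct $\tilde\A$ by mirroring $\A$ and adding two auxiliary states: a fresh non-accepting initial state $q_{\mathsf{init}}$ (guarding against the case where $\A$'s own initial state is accepting) and a sink $q_\bot$ needed because NDAs must be complete, yielding $n+2$ states in total. Letting $F$ denote the accepting states of $\A$, define $f\colon Q\cup\{q_{\mathsf{init}},q_\bot\}\to\{0,1\}$ by $f(q)=1$ iff $q\in F$ (so $f(q_{\mathsf{init}})=f(q_\bot)=0$), and assign every transition $(p,\sigma,q)$ in $\tilde\A$ the weight
\[
\gamma(p,\sigma,q) \;=\; \lambda\cdot f(p) \;-\; f(q).
\]
The transition structure is: every transition of $\A$ is kept; $q_{\mathsf{init}}$ receives copies of the transitions leaving $\A$'s original initial state; every state-letter pair still lacking an outgoing transition is routed to $q_\bot$; and $q_\bot$ carries a self-loop on every letter (of weight $0$, consistent with the formula).

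The crux is a one-line telescoping identity. For any run $r=q_0,\sigma_0,q_1,\ldots,\sigma_{m-1},q_m$ of $\tilde\A$ on a word $u$ of length $m\geq 1$, distributing the bracket in each summand and reindexing one of the two resulting sums gives
\[
\tilde\A(r) \;=\; \sum_{i=0}^{m-1}\frac{\lambda\,f(q_i)-f(q_{i+1})}{\lambda^i}
\;=\; \lambda\,f(q_0)\;-\;\frac{f(q_m)}{\lambda^{m-1}},
\]
as the interior terms cancel pairwise. Because every run of $\tilde\A$ starts at $q_{\mathsf{init}}$ and $f(q_{\mathsf{init}})=0$, the value of every run collapses to $-f(q_m)/\lambda^{m-1}$, which is $-1/\lambda^{m-1}<0$ if $q_m\in F$ and $0$ otherwise (in particular $0$ on runs that fall into $q_\bot$).

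Correctness then follows by taking the minimum over runs. Each run of $\tilde\A$ on $u$ either simulates a run of $\A$ ending at some $q_m\in Q$, or at some step takes one of the newly added transitions into $q_\bot$ and remains there; both options are captured by the value formula above. Hence $\tilde\A(u)=-1/\lambda^{m-1}<0$ exactly when some run of $\A$ on $u$ ends in $F$, that is, $u\in L(\A)$, and $\tilde\A(u)=0\not<0$ otherwise, giving the required equivalence. The construction is linear in $|\A|$ and produces a $\lambda$-NDA with exactly $n+2$ states.

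I do not anticipate a serious obstacle; the only subtlety is choosing the weight formula so that the sum telescopes to a quantity depending solely on the endpoints of the run. Once that form is in place, the role of $q_{\mathsf{init}}$ is precisely to zero out the $\lambda f(q_0)$ term (so the sign of the value is governed entirely by $f(q_m)$), and the role of $q_\bot$ is to preserve completeness of $\tilde\A$ without introducing any negative-valued run that does not correspond to an accepting NFA run.
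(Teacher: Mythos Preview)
Your construction is correct and follows the same blueprint as the paper's: add a fresh initial state with no incoming edges and a sink for completeness, then assign weights of a potential form depending only on the $F$-membership of the endpoints so that the discounted sum telescopes to a value determined solely by the last state. The paper implicitly uses the encoding $g(q)\in\{-1,+1\}$ (with $g=0$ at the new initial state) and weight $g(p)-g(q)/\lambda$, yielding $\Tilde\A(u)=\pm 1/\lambda^{|u|}$, whereas your $\{0,1\}$ encoding with weight $\lambda f(p)-f(q)$ yields $-1/\lambda^{|u|-1}$ or $0$; both satisfy the lemma's threshold condition, and your telescoping computation is more explicit than the paper's ``by induction one can show.'' One caveat worth noting: the paper's specific nonzero value $+1/\lambda^{|u|}$ on rejected words is exploited downstream in the reverse conversion (\Cref{thm:arithmeticMinusoneLowerbound}), so if you later need that direction you would switch to the $\{-1,+1\}$ encoding.
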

\begin{proof}
	Given an NFA $\A=\tuple{\Sigma,Q,\iota,\delta,F}$ and a discount factor $\lambda\in\Nat\setminus\{0,1\}$, we construct a $\lambda$-NDA $\Tilde{\A}=\tuple{\Sigma,Q',\{p_0\},\delta',\gamma'}$ for which there exists a bijection $f$ between the runs of $\A$ and the runs of $\Tilde{\A}$ such that for every run $r$ of $\Tilde{\A}$ on a word $u$,
	\begin{itemize}
		\item 
		$r$ is an accepting run of $\A$ iff $f(r)$ is a run of $\Tilde{\A}$ on $u$ with the value $\Tilde{\A}\big(f(r)\big)=-\frac{1}{\lambda^{|r|}}$.
		\item
		$r$ is a non-accepting run of $\A$ iff $f(r)$ is a run of $\Tilde{\A}$ on $u$ with the value $\Tilde{\A}\big(f(r)\big)=\frac{1}{\lambda^{|r|}}$.	
	\end{itemize}
	We first transform $\A$ to an equivalent NFA $\A'=\tuple{\Sigma,Q',\{p_0\},\delta',F}$ that is complete and in which there are no transitions entering its initial state, and later assign weights to its transitions to create $\Tilde{\A}$.
	
	To construct $\A'$ we add two states to $Q$, having $Q'=Q\cup\{p_0,q_{hole}\}$, duplicate all the transitions from $\iota$ to start from $p_0$, and add a transition from every state to $q_{hole}$, namely $$\delta'=\delta\cup \big\{ (p_0,\sigma,q)\ST \exists p\in\iota,(p,\sigma,q)\in\delta\big\}\cup\big\{(q,\sigma,q_{hole})\ST q\in Q', \sigma\in \Sigma\big\}$$
	Observe that $|Q'|=|Q|+2$, and $L(\A)=L(\A')$.
	Next, we assign the following transition weights:
	\begin{itemize}
		\item For every $t=(p_0,\sigma,q)\in\delta'$, $\gamma'(t)=-\frac{1}{\lambda}$ if $q\in F$ and $\gamma'(t)=\frac{1}{\lambda}$ if $q\notin F$.
		\item For every $t=(p,\sigma,q)\in\delta'$ such that $p\neq p_0$,
		$\gamma'(t)=\frac{\lambda-1}{\lambda}$ if $p,q\in F$;
		$\gamma'(t)=\frac{\lambda+1}{\lambda}$ if $p\in F$ and $q\notin F$;
		$\gamma'(t)=-\frac{\lambda+1}{\lambda}$ if $p\notin F$ and $q\in F$;
		and $\gamma'(t)=-\frac{\lambda-1}{\lambda}$ if $p,q\notin F$.
	\end{itemize}
	By induction on the length of the runs on an input word $u$, one can show that for every $u\in\Sigma^+$, $\Tilde{\A}(u)=-\frac{1}{\lambda^{|u|}}$ if $u\in L(\A)$ and $\Tilde{\A}(u)=\frac{1}{\lambda^{|u|}}$ if $u\notin L(\A)$.
\end{proof}
Converting an NDA to a corresponding NFA is much more challenging, since a general NDA might have arbitrary weights.
We develop a conversion scheme, whose correctness proof is quite involved, from every NDA $\Dot{\B}$ that is equivalent to $-\Tilde{\A}$, where $\Tilde{\A}$ is generated from an arbitrary NFA as per \Cref{lem:NFA_to_NDA}, to a corresponding NFA $\B$. Notice that the assumption that $\Dot{\B}\equiv-\Tilde{\A}$ gives us some information on $\Dot{\B}$, yet $\Dot{\B}$ might a priori still have arbitrary transition weights.
Using this scheme, we provide an exponential lower bound on the size blow-up involved in multiplying an NDA by $(-1)$. The theorem holds with respect to both finite and infinite words.

\begin{thm}\label{thm:arithmeticMinusoneLowerbound}
	For every $n\in\Nat$ and $\lambda\in \Nat\setminus\{0,1\}$, there exists a $\lambda$-NDA $\A$ with $n$ states over a fixed alphabet, such that every $\lambda$-NDA that is equivalent to $-\A$, w.r.t.\ finite or infinite words, has $\Omega(2^n)$ states.
\end{thm}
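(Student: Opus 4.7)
The plan is to derive the exponential lower bound from the classical exponential blow-up for NFA complementation, using Lemma~\ref{lem:NFA_to_NDA} in the forward direction and a reverse conversion scheme tailored to NDAs that realise values only of the shape $\pm\lambda^{-|u|}$ in the backward direction.

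First, I would fix a family of NFAs $\N_n$ with (roughly) $n$ states over a fixed alphabet that is a standard witness of the exponential NFA-complementation blow-up, e.g.\ NFAs for $\Sigma^{*}\, a\, \Sigma^{n-2}$, so that any NFA recognising $\overline{L(\N_n)}$ needs $2^{\Omega(n)}$ states. Applying Lemma~\ref{lem:NFA_to_NDA} yields a $\lambda$-NDA $\A$ of size linear in $n$ with $\A(u)=-\lambda^{-|u|}$ when $u\in L(\N_n)$ and $\A(u)=\lambda^{-|u|}$ otherwise. Consequently, $-\A$ satisfies $-\A(u)<0$ iff $u\in\overline{L(\N_n)}$, while $|-\A(u)|=\lambda^{-|u|}$ for all finite words $u$ (and by \Cref{lemma:finiteToInfinite} the situation lifts to infinite words). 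Now let $\B$ be any $\lambda$-NDA equivalent to $-\A$. Since $\B$ inherits this rigid value shape, producing an NFA $\M$ of size $O(|\B|)$ with $L(\M)=\overline{L(\N_n)}$ suffices: the classical lower bound then forces $|\B|\in\Omega(2^n)$, proving the theorem (after absorbing the additive constant from Lemma~\ref{lem:NFA_to_NDA} into the choice of $n$).

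The main obstacle is therefore the reverse conversion. A priori the transition weights of $\B$ are arbitrary rationals, yet on every input $u$ the optimal run value is forced to be one of only two discounted targets, $\pm\lambda^{-|u|}$. My plan is to argue that this rigidity forces each useful prefix run of $\B$ to be describable by a small ``signature'' recording (i) which sign $\B$ ultimately attains on the suffix read from the current state, and (ii) how the rescaled accumulated partial sum relates to the two admissible targets $\pm\lambda^{-i}$ after $i$ letters. I expect these signatures to range over a finite set whose size depends only on $\lambda$ (and not on $|\B|$), so that $\M$ can be taken to have states of the form (state of $\B$, signature) with transitions mirroring those of $\B$ restricted to sign/target-consistent extensions, and with the initial/accepting status encoding that a minimal run of value $-\lambda^{-|u|}$ exists. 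Carrying out this ``signatures suffice'' argument rigorously---in spirit close to the recoverable-gap analysis of \Cref{sec:Determinizability}, but here exploiting that the two candidate target values are known in advance rather than being determined by the automaton---is the technical crux. Once it is in place, the lower bound follows immediately by contraposition from the exponential gap between $\N_n$ and its complement.
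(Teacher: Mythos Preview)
Your overall plan matches the paper's: start from a hard-to-complement NFA family, apply Lemma~\ref{lem:NFA_to_NDA} to obtain the $\lambda$-NDA $\A$, and then argue that any small $\lambda$-NDA $\B$ for $-\A$ could be converted back to a small NFA for the complement language, contradicting the known lower bound. The identification of the reverse conversion as the technical crux is also correct.

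Where your proposal diverges from the paper, and where it has a genuine gap, is in the reverse conversion itself. Your plan is to attach to each state of $\B$ a ``signature'' recording how the rescaled partial sum relates to the targets $\pm 1$, and to argue that the signature space is bounded independently of $|\B|$. But nothing constrains the partial sums of \emph{non-optimal} runs of $\B$; only the overall minimum on each word is forced to be $\pm\lambda^{-|u|}$. Individual runs can have arbitrary rational values along the way, so a finite signature set tracking the partial sum is not available in any obvious way, and the analogy with the recoverable-gap analysis breaks down (there, gaps are measured \emph{between} runs of the same automaton after determinization, not on single runs of an unknown automaton). As stated, the ``signatures suffice'' step is an assertion, not an argument.

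The paper takes a completely different route for the reverse conversion, and it is worth knowing because it avoids the boundedness issue entirely. It keeps only the transitions of $\B$ that occur in \emph{some} minimal run achieving a negative value, and makes accepting the (cloned) targets of the \emph{last} transitions of such runs; this yields an NFA with at most $3|\B|$ states. The forward direction of correctness is immediate. The delicate direction is showing that no word with positive $\B$-value is accepted: the paper proves this by a splicing argument. Assuming a spurious accepting run $r_u$, one picks two genuine negative-valued minimal runs $r'_1,r'_2$ that share, respectively, a maximal suffix and a specific transition with $r_u$, compares the ``normalized'' prefix values $\lambda^k W_1$ and $\lambda^j W_2$ at the meeting state, and splices the prefix of $r'_1$ with the suffix of $r'_2$ to manufacture a run of value strictly below $-\lambda^{-m}$, which is impossible for $\B\equiv -\A$. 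This run-surgery argument is the real content and is not foreshadowed by a signature-based plan.
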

\begin{proof}
	Consider $n \in\Nat$ and $\lambda\in \Nat\setminus\{0,1\}$. 
	By \cite{SakodaS78,Jiraskova05} there exists an NFA $\A$ with $n$ states over a fixed alphabet of two letters, such that any NFA for the complement language $\overline{L(\A)}$ has at least $2^n$ states.
	
	\noindent{\emph{Finite words.}}
	
	Let $\Tilde{\A}$ be a $\lambda$-NDA that is correlated to $\A$ as per \Cref{lem:NFA_to_NDA}, and assume towards contradiction that there exists a $\lambda$-NDA $\Dot{\B}=\tuple{\Sigma, Q_{\Dot{\B}}, \iota_{\Dot{\B}}, \delta_{\Dot{\B}}, \gamma_{\Dot{\B}}}$ with less than $\frac{2^n}{4}$ states such that $\Dot{\B}\equiv -\Tilde{\A}$.
	
	We provide below a conversion opposite to \Cref{lem:NFA_to_NDA}, leading to an NFA for $\overline{L(\A)}$ with less than $2^n$ states, and therefore to a contradiction.
	The conversion of $\Dot{\B}$ back to an NFA builds on the specific values that $\Dot{\B}$ is known to assign to words, as opposed to the construction of \Cref{lem:NFA_to_NDA}, which works uniformly for every NFA, and is much more challenging, since $\Dot{\B}$ might have arbitrary transition weights.
	This conversion scheme can only work for $\lambda$-NDAs whose values on the input words converge to some threshold as the words length grow to infinity.
	
	For simplification, we do not consider the empty word, since one can easily check if the input NFA accepts it, and set the complemented NFA to reject it accordingly.
	
	By \Cref{lem:NFA_to_NDA} we have that for every word $u\in\Sigma^+$, $\Tilde{\A}(u)=-\frac{1}{\lambda^{|u|}}$ if $u\in L(\A)$ and $\Tilde{\A}(u)=\frac{1}{\lambda^{|u|}}$ if $u\notin L(\A)$. Hence, $\Dot{\B}(u)=-\frac{1}{\lambda^{|u|}}$ if $u\notin L(\A)$ and $\Dot{\B}(u)=\frac{1}{\lambda^{|u|}}$ if $u\in L(\A)$.
	We will show that there exists an NFA $\B$, with less than $2^n$ states, such that $u\in L(\B)$ iff $\Dot{\B}(u)=-\frac{1}{\lambda^{|u|}}$, implying that $L(B)=\overline{L(\A)}$.
	
	We first construct a $\lambda$-NDA $\B'=\tuple{\Sigma, Q_{\B'}, \iota, \delta, \gamma}$ that is equivalent to $\Dot{\B}$, 
	but has no transitions entering its initial states. 
	This construction eliminates the possibility that one run is a suffix of another, allowing to simplify some of our arguments. 
	Formally, $Q_{\B'}=Q_{\Dot{\B}}\cup\iota$, $\iota=\iota_{\Dot{\B}}\times\{1\}$, $\delta=\delta_{\Dot{\B}}\cup\big\{\big((p,1),\sigma,q\big)\ST (p,\sigma,q)\in\delta_{\Dot{\B}}\big\}$, and weights $\gamma(t)=\gamma_{\Dot{\B}}(t)$ if $t\in\delta_{\Dot{\B}}$ and $\gamma\big((p,1),\sigma,q\big)=\gamma_{\Dot{\B}}(p,\sigma,q)$ otherwise.
	
	Let $R^-$ be the set of all the runs of $\B'$ that entail a minimal value which is less than $0$, i.e., $R^-=\{r\ST r \text{ is a minimal run of }\B' \text{ on some word and } \B'(r)<0\}$.
	Let $\hat{\delta}\subseteq\delta$ be the set of all the transitions that take part in some run in $R^-$, meaning $\hat{\delta}=\{r(i) \ST r\in R^- \text{ and } i\in[0..|r|{-}1] \}$,
	and $\doublehat{\delta} \subseteq\delta$ the set of all transitions that are the last transition of those runs, meaning $\doublehat{\delta}=\big\{r\big(|r|-1\big) \ST r \in R^-\big\}$.
	
	We construct next the NFA $\B=\tuple{\Sigma, Q_{\B},\iota,\delta_{\B},F_{\B}}$. Intuitively, $\B$ has the states of $\B'$, but only the transitions from $\hat{\delta}$. 
	Its accepting states are clones of the target states of the transitions in $\doublehat{\delta}$, but without outgoing transitions. We will later show that the only runs of $\B$ that reach these clones are those that have an equivalent run in $R^-$.
	Formally,  
	$Q_{\B}=Q_\B'\cup F_{\B}$,
	$F_{\B}=\big\{(q,1)\ST \exists p,q\in Q_\B' \text{ and } (p,\sigma,q)\in \doublehat{\delta} \big\}$, and 
	$\delta_{\B}=\hat{\delta}\cup\big\{\big(p,\sigma,(q,1)\big)\ST(p,\sigma,q)\in \doublehat{\delta}\big\}$.
	
	Observe that the number of states in $\B$ is at most $3$ times the number of states in $\Dot{\B}$, and thus less than $2^n$.
	We will now prove that for every word $u$, $\B$ accepts $u$ iff $\B'(u)=-\frac{1}{\lambda^{|u|}}$.
	
	The first direction is easy: if $\B'(u)=-\frac{1}{\lambda^{|u|}}$, we get that all the transitions of a minimal run of $\B'$ on $u$ are in $\hat{\delta}$, and its final transition is in $\doublehat{\delta}$, hence there exists a run of $\B$ on $u$ ending at an accepting state.
	
	For the other direction, assume towards contradiction that there exists a word $u$, such that $\B'(u)=\frac{1}{\lambda^{|u|}}$, while there is an accepting run $r_u$ of $\B$ on $u$. 
	
	Intuitively, we define the ``normalized value'' of a run $r'$ of $\B'$ as the value of $\B'$ multiplied by the accumulated discount factor, i.e., $\B'(r')\cdot\lambda^{|r'|}$. 
	Whenever the normalized value reaches $-1$, we have an ``accepting'' run. 
	We will show that $r_u$ and the structure of $\B$ imply the existence of two ``accepting'' runs $r'_1,r'_2\in R^-$ that intersect in some state $q$, such that taking the prefix of $r'_1$ up to $q$ results in a normalized value $\lambda^k W_1$ that is strictly smaller than the normalized value $\lambda^j W_2$ of the prefix of $r'_2$ up to $q$.
	Since $r'_2$ is an ``accepting'' run, the suffix of $r'_2$ reduces $\lambda^j W_2$ to $-1$ and therefore it will reduce  $\lambda^k W_1$ to a value strictly smaller than $-1$, and the total value of the run to a value strictly smaller than $-\frac{1}{\lambda^n}$, which is not a possible value of $\B'$.
	
	Formally, let $r_u(|u|-1)=\big(p',u(|u|-1),(q',1)\big)$ be the final transition of $r_u$. 
	We replace it with the transition $t'=\big(p',u(|u|-1),q'\big)$. The resulting run $r'_u=r_u[0..|u|-2] \con t$ is a run of $\B'$ on $u$, and therefore $\B'(r'_u)\geq \frac{1}{\lambda^{|u|}}$.
	Since $(q',1)$ is an accepting state, we get by the construction of $\B$ that $t'$ is in $\doublehat{\delta}$.
	Consider a run $r'_1\in R^-$ that shares the maximal suffix with $r'_u$, meaning that if there exist $r'\in R^-$ and $x>0$ such that $r'[|r'|-x..|r'|-1]=r'_u[|u|-x..|u|-1]$ then also $r'_1[|r'_1|-x..|r'_1|-1]=r'_u[|u|-x..|u|-1]$.
	
	Recall that all the initial states of $\B'$ have no transitions entering them and $\B'(r'_1)\neq \B'(r'_u)$, hence $r'_1$ is not a suffix of $r'_u$ and $r'_u$ is not a suffix of $r'_1$.
	Let $i$ be the maximal index of $r'_u$ such that $r'_u[i..|u|-1]$ is a suffix of $r'_1$, but $r'_u[i-1..|u|-1]$ is not a suffix of $r'_1$.
	Let $k$ be the index in $r'_1$ such that $r'_1[k..|r'_1|-1]=r_u[i..|u|-1]$, and let $x = |r'_1|-k$ (see \Cref{fig:BlowupMinusOneLowerBound}).
	
	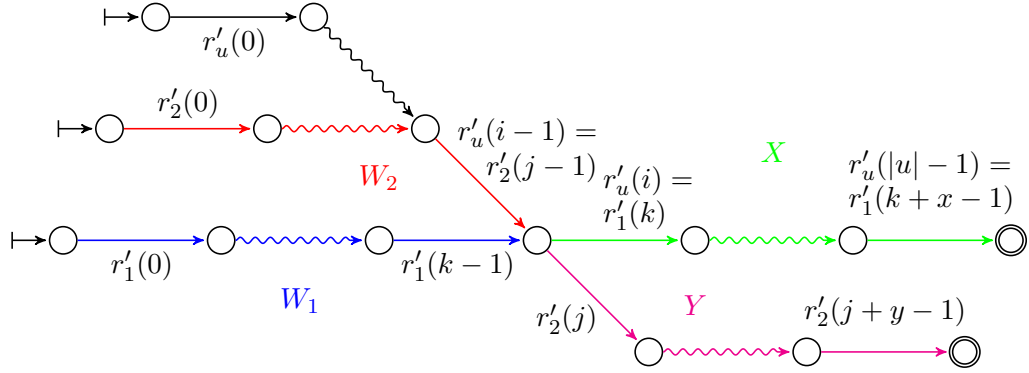
\begin{figure}
		\centering{}
		\begin{tikzpicture}[->,>=stealth',shorten >=1pt,auto,node distance=2.1cm, semithick, initial text=, every initial by arrow/.style={|->}, state/.style={circle, draw, minimum size=0pt}]
			\node[initial, state] (q0) {};
			\node[state] (q1) [right of=q0] {}; 
			\node[state] (q2) [right of=q1] {}; 
			\node[state] (q3) [right of=q2] {};
			\node[state] (q4) [right of=q3] {};
			\node[state] (q5) [right of=q4] {};
			\node[state, accepting] (q6) [right of=q5] {};
			\node[state] (p2) [above left of=q3] {};
			\node[state] (p1) [left of=p2] {};
			\node[initial, state] (p0) [left of=p1] {};
			\node[state] (p4) [below right of=q3] {};
			\node[state] (p5) [right of=p4] {};
			\node[state, accepting] (p6) [right of=p5] {};
			\node[state] (s1) [above left of=p2] {};
			\node[initial, state] (s0) [left of=s1] {};
			\draw[->,decorate,color=blue,decoration={snake,amplitude=.4mm,segment length=2mm,post length=1.5mm}]
			(q1) -- (q2);
			\draw[->,decorate,color=green,decoration={snake,amplitude=.4mm,segment length=2mm,post length=1.5mm}]
			(q4) -- (q5);
			\draw[->,decorate,color=red,decoration={snake,amplitude=.4mm,segment length=2mm,post length=1.5mm}]
			(p1) -- (p2);
			\draw[->,decorate,color=magenta,decoration={snake,amplitude=.4mm,segment length=2mm,post length=1.5mm}]
			(p4) -- (p5);
			\draw[->,decorate,decoration={snake,amplitude=.4mm,segment length=2mm,post length=1.5mm}]
			(s1) -- (p2);
			\path 
			(q0) edge [color=blue] node [color=black,below] {$r'_1(0)$} (q1)
			(q2) edge [color=blue] node [color=black,below] {$r'_1(k-1)$} (q3)
			(q3) edge [color=green] node [color=black,above,align=left, near end] {$r'_u(i)=$\\$r'_1(k)$} (q4)
			(q5) edge [color=green] node [color=black,above,align=left, yshift=0.2cm] {$r'_u(|u|-1)=$\\$r'_1(k+x-1)$} (q6)
			(p0) edge [color=red] node [above, color=black] {$r'_2(0)$} (p1)
			(p2) edge [color=red] node [color=black,right,align=right,very near start] {$r'_u(i-1)=$\\$r'_2(j-1)$} (q3)
			(q3) edge [color=magenta] node [color=black,left, near end, xshift=-0.1cm] {$r'_2(j)$} (p4)
			(p5) edge [color=magenta] node [color=black,above,yshift=0.2cm] {$r'_2(j+y-1)$} (p6)
			(s0) edge node [below] {$r'_u(0)$} (s1)
			(q1) edge [draw=none] node [below, color=blue,yshift=-0.5cm]{$W_1$} (q2)
			(q4) edge [draw=none] node [above, color=green,yshift=0.9cm]{$X$} (q5)
			(p4) edge [draw=none] node [above, color=magenta,near start, yshift=0.35cm]{$Y$} (p5)
			(p1) edge [draw=none] node [below, color=red,near end, yshift=-0.35cm]{$W_2$} (p2)
			;
		\end{tikzpicture}
		\caption{\label{fig:BlowupMinusOneLowerBound}The runs and notations used in the proof of \Cref{thm:arithmeticMinusoneLowerbound}.}
	\end{figure}
	
	Since $r'_u(i-1)\in\hat{\delta}$, there exists $r'_2\in R^-$ and index $j$ such that $r'_2(j-1)=r'_u(i-1)$. Let $y=|r'_2|-j$ (see \Cref{fig:BlowupMinusOneLowerBound}).
	Consider the run $r'_3=r'_2[0..j-1]\con r'_u[i..|u|-1]$, starting with the prefix of $r'_2$ up to the shared transition with $r'_u$, and then continuing with the suffix of $r'_u$. Observe that $\B'(r'_3)>-\frac{1}{\lambda^{|r'_3|}}$ as otherwise $r'_3\in R^-$ and has a larger suffix with $r'_u$ than $r'_1$ has.
	
	Let 
	$W_1=\B'\big(r'_1[0..k-1]\big)$,
	$W_2=\B'\big(r'_2[0..j-1]\big)$,
	$X=\B'\big(r'_1[k..k+x-1]\big)$  (which is also $\B'\big(r'_u[i..|u|-1]\big)$),
	and $Y=\B'\big(r'_2[j..j+y-1]\big)$ (see \Cref{fig:BlowupMinusOneLowerBound}).
	The following must hold:
	\begin{enumerate}
		\item 
		$W_1+\frac{X}{\lambda^k}=\B'(r'_1)=-\frac{1}{\lambda^{k+x}}$. Hence, $\lambda^k W_1= -\frac{1}{\lambda^{x}}-X$ . 
		\item \label{eqn:negBlowUp2} $W_2+\frac{X}{\lambda^j}=\B'(r'_3)>-\frac{1}{\lambda^{j+x}}$. Hence, $\lambda^j W_2> -\frac{1}{\lambda^{x}}-X$, and after combining with the previous equation, $\lambda^j W_2> \lambda^k W_1$.
		\item \label{eqn:negBlowUp3}
		$W_2+\frac{Y}{\lambda^j}=\B'(r'_2)=-\frac{1}{\lambda^{j+y}} $. Hence, $\lambda^j W_2 + Y = -\frac{1}{\lambda^{y}}$
	\end{enumerate}
	Consider now the run
	$r'_4 = r'_1[0..k-1]\con r'_2[j..j+y-1]$, and combine \Cref{eqn:negBlowUp2} and \Cref{eqn:negBlowUp3} above to get that $\lambda^k W_1 + Y < -\frac{1}{\lambda^{y}}$. 
	But this leads to $ \B'(r'_4)=W_1 + \frac{Y}{\lambda^k} < -\frac{1}{\lambda^{k+y}}=-\frac{1}{\lambda^{|r'_4|}}$, and this means that there exists a word $w$ of length $k+y$ such that $\B'(w)< -\frac{1}{\lambda^{k+y}}$, contradicting the assumption that $\B'\equiv \Dot{\B}\equiv -\Tilde{\A}$.
	
	\noindent{\emph{Infinite words.}}
	
	For showing the lower bound for the state blow-up involved in multiplying an NDA by $(-1)$ w.r.t.\ infinite words, we add a new letter $\#$ to the alphabet, and correlate every finite word $u$ to an infinite word $u\con \#^\omega$. 
	The proof is similar, applying the following modifications:
	\begin{itemize}
		\item 
		The scheme presented in the proof of \Cref{lem:NFA_to_NDA} now constructs a $\lambda$-NDA $\Tilde{\A}$ over the alphabet $\Sigma\cup\{\#\}$, adding  a $0$-weighted transition from every state of $\Tilde{\A}$ to $q_{hole}$.
		The function $f$ that correlates between the runs of $\A$ and $\Tilde{\A}$ is still a bijection, but with a different co-domain, correlating every run $r$ of $\A$ on a finite word $u\in\Sigma^+$ to the run $f(r)$ of $\Tilde{\A}$ on the word $u\con\#^\omega$.
		\item
		With this scheme, we get that $\Dot{\B}(u\con\#^\omega)=-\frac{1}{\lambda^{|u|}}$ if $u\notin L(A)$ and $\Dot{\B}(u\con\#^\omega)=\frac{1}{\lambda^{|u|}}$ if $u\in L(A)$, hence replacing all referencing to $\B'(u)$ with referencing to $\B'(u\con\#^\omega)$.
		\item
		$R^-$ is defined with respect to words of the form $u\con \#^\omega$, namely $R^-=\{r\ST u\in\Sigma^+, r$ is a minimal run of $\B' \text{ on } u\con \#^\omega \text{ and } \B'(r)<0\}$.
		\item
		$R^-_p$ is a new set of all the maximal (finite) prefixes of the runs of $R^-$ without any transitions for the $\#$ letter, meaning $R^-_p = \{r[0..i-1]\ST r\in R^-, r(i-1)=(p,\sigma,q) \text{ for some } \sigma\in\Sigma, \text{ and }  r(i)=(q,\#,s)\}$.
		$\Hat{\delta}$ and $\doublehat{\delta}$ are defined with respect to $R_p^-$ instead of $R^-$.
		\item
		Defining $r'_u$, we consider a run $r'_t\in R^-$ that is a witness for $t'\in\doublehat{\delta}$, meaning there exists $i\in\Nat$ for which $r'_t(i)=t'$, and $r'_t(i+1)$ is a transition for the $\#$ letter. 
		Then $r'_u=r_u[0..|u|-2]\con t\con r'[i+1..\infty]=r_u[0..|u|-2]\con r'[i..\infty]$, is a run of $\B'$ on $u\con\#^\omega$.
		\item
		For choosing $r'_1$ that ``shares the maximal suffix'' with $r'_u$, we take $r'_1\in R^-$ such that for every $r'\in R^-$ and $x>0$, if $r'_u[i..\infty]$ is a suffix of $r'$ then it is also a suffix of $r'_1$.
		\item
		For the different runs and their parts, we set $X=\B'\big(r'_1[k..\infty]\big)$, $Y=\B'\big(r'_2[j..\infty]\big)$, $r'_3=r'_2[0..j-1]\con r'_u[i..\infty] $ and $r'_4=r'_1[0..k-1]\con r'_2[j..\infty]$.
		\qedhere
	\end{itemize}
\end{proof}

\subsection{Basic Subfamilies}\label{sec:SimpleSubfamilies}
Tidy NMDAs constitute a rich family that also contains some basic subfamilies that are still more expressive than integral NDAs.
Two such subfamilies are integral NMDAs in which the discount factors depend on the transition letter or on the elapsed time.

Notice that closure of tidy NMDAs under determinization and under algebraic operations is related to a specific choice function $\theta$, namely every class of $\theta$-NMDAs enjoys these closure properties (\Cref{thm:DetTidy} and \Cref{thm:closurealgebraicOps}).  Since the aforementioned subfamilies of tidy NMDAs also consist of $\theta$-NMDA classes, their closure under determinization and under algebraic operations follows. For example, the class of NMDAs that assigns a discount factor of $2$ to the letter `a' and of $3$ to the letter `b' enjoys these closure properties.

\subsubsection{Letter-Oriented Discount Factors}
Allowing each action (letter) to carry its own discount factor is a basic extension of discounted summation, used in various models, such as Markov decision processes \cite{LM16,WG19}.

A $\theta$-NMDA over an alphabet $\Sigma$ is \emph{letter oriented} if all transitions over the same alphabet letter share the same discount factor; that is, if $\theta: \Sigma^+ \to \Nat\setminus\{0,1\}$ coincides with a function $\Lambda: \Sigma \to \Nat\setminus\{0,1\}$, in the sense that for every finite word $u$ and letter $\sigma$, we have $\theta(u \sigma)=\Lambda(\sigma)$.
(See an example in \Cref{fig:LetterOrrientedExample}.)

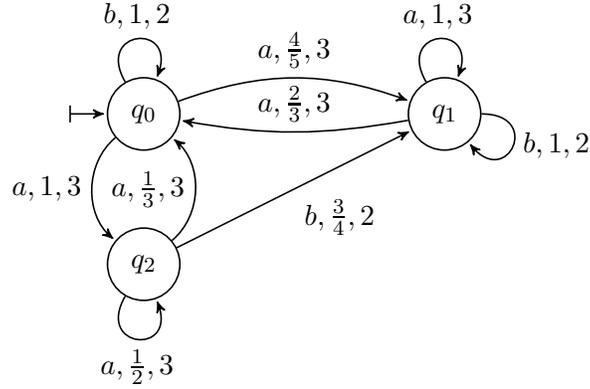
\begin{figure}
	\centering
	\begin{tikzpicture}[->,>=stealth',shorten >=1pt,auto,node distance=2cm, semithick, initial text=, every initial by arrow/.style={|->}]
		\node[initial, state] (q0) {$q_0$};
		
		\node[state, xshift=2cm] (q1) [right of=q0] {$q_1$};
		\node[state] (q2) [below of=q0] {$q_2$};
		\path
		(q0) edge [left, in=140, out = -140] node {$a,1,3$} (q2)
		(q2) edge [left, out=40, in = -40] node {$a,\frac{1}{3},3$} (q0)
		(q2) edge [below right] node {$b,\frac{3}{4},2$} (q1)
		(q1) edge [loop above, out=120,in=70,looseness=5] node {$a,1,3$} (q1)
		(q1) edge [loop below, right, out=0,in=-50,looseness=5] node {$b,1,2$} (q1)
		(q0) edge [loop above, out=120,in=70,looseness=5] node {$b,1,2$} (q0)
		(q0) edge [above, in=160, out = 20] node {$a,\frac{4}{5},3$} (q1)
		(q1) edge [above, out=-170, in = -10] node {$a,\frac{2}{3},3$} (q0)
		(q2) edge	[loop below, out=-120,in=-70,looseness=5] node {$a,\frac{1}{2},3$} (q2)
		;
	\end{tikzpicture}
	\caption[A letter-oriented discounted-sum automaton.]{\label{fig:LetterOrrientedExample}A letter-oriented discounted-sum automaton, for the discount factor function $\Lambda(a)=3$; $\Lambda(b)=2$.}
\end{figure}

Notice that every choice function $\theta$ for a letter-oriented $\theta$-NMDA can be defined via a simple transducer of a single state, having a self loop over every letter with its assigned discount factor.

We show that letter-oriented NMDAs, and in particular the NMDA $\A$ depicted in \Cref{fig:expressivness}, indeed add expressiveness over NDAs.
\begin{thm}\label{thm:expressivness}
	There exists a letter-oriented NMDA that no integral NDA is equivalent to, with respect to both finite and infinite words.
\end{thm}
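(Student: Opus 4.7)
The plan is to take the NMDA $\A$ of \Cref{fig:expressivness}, a letter-oriented NMDA over $\{a,b\}$ with $\Lambda(a)=2$ and $\Lambda(b)=3$, and derive incompatible arithmetic constraints on the single discount factor of any candidate NDA. By the contrapositive of \Cref{lemma:finiteToInfinite}, it will suffice to handle the infinite-word case; the finite-word statement is then immediate.

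We will assume toward contradiction that some integral NDA with discount factor $\lambda\in\Nat\setminus\{0,1\}$ is equivalent to $\A$. Since integral NDAs are closed under determinization \cite{BH14}, we may replace it by a deterministic integral NDA $\D$ with the same discount factor. A direct geometric-series computation on $\A$ (with unit weights on its self-loops, as in the figure) then yields closed forms
\[
\A(a^n b^\omega) \;=\; 2 - \frac{1}{2^{n+1}}, \qquad \A(b^n a^\omega) \;=\; \frac{3}{2} + \frac{1}{2\cdot 3^n}.
\]
Next, because $\D$ has finitely many states, a pigeonhole argument supplies a state $s$ and an infinite arithmetic progression $\{N, N+p, N+2p,\ldots\}$ such that the deterministic run of $\D$ on $a^n$ ends in $s$ for every $n$ in the progression. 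For such $n$, both the tail value $V := \D^s(b^\omega)$ and the weighted contribution $K := \sum_{i=0}^{p-1} w_i/\lambda^i$ of the $a^p$-cycle at $s$ are constants independent of $n$. Expressing $\D(a^{n+p} b^\omega) - \D(a^n b^\omega)$ in two different ways and multiplying through by $\lambda^n$ then gives an identity of the form
\[
K \;=\; \alpha\,(\lambda/2)^n + \beta,
\]
where $\alpha\neq 0$ and $\beta$ are independent of $n$. For this to hold at infinitely many $n$ one is forced to have $\lambda = 2$. The symmetric analysis on $b^n a^\omega$ (using the $b^q$-cycle at a recurrently visited state) will force $\lambda = 3$, delivering the contradiction.

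The main obstacle is that $\D(a^n)$ itself varies with $n$ in a way not directly constrained by $\A(a^n b^\omega)$; the pigeonhole-plus-periodicity step is what eliminates this nuisance variable, leaving a clean single-variable equation in $(\lambda/2)^n$ (respectively $(\lambda/3)^n$) whose only admissible integer solution picks out $\lambda=2$ on one family of words and $\lambda=3$ on the other. Everything else in the argument is a routine geometric-series computation and boundedness observation, so the whole proof can be written concisely once the two closed forms above are in hand.
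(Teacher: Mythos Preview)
Your approach is essentially the paper's: assume an equivalent integral NDA, determinize it via \cite{BH14}, use pigeonhole to find a recurrent state along $a^n$ (resp.\ $b^n$), and take successive differences to force $\lambda=2$ from the $a^n b^\omega$ family and $\lambda=3$ from the $b^n a^\omega$ family. The paper carries this out with three equations at $i,i{+}j,i{+}2j$ and two subtractions; your single-difference formulation achieves the same conclusion.

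One correction: your closed forms are miscomputed. With the weights in \Cref{fig:expressivness} (the $q_1$ self-loops carry $-\tfrac12$ on $a$ and $0$ on $b$; the $q_2$ self-loops carry $0$ on $a$ and $-\tfrac23$ on $b$, not unit weights), one gets
\[
\A(a^n b^\omega)=\frac{1}{2^{n}},\qquad \A(b^n a^\omega)=\frac{1}{3^{n}}.
\]
This does not affect your argument, since all you actually use is that each family has the shape $c+d\cdot\mu^{-n}$ with $d\neq 0$ and $\mu\in\{2,3\}$; your difference identity then still collapses to a relation that is constant on the left and proportional to $(\lambda/\mu)^n$ on the right, forcing $\lambda=\mu$. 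Just replace your stated values by the correct ones before writing the final proof.
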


	\begin{figure}
		\centering{}
		\begin{tikzpicture}[->,>=stealth',shorten >=1pt,auto,node distance=2cm, semithick, initial text=, every initial by arrow/.style={|->}]
			\node [midway] [above left of=q2, xshift=-1.5cm, yshift=-0.6cm] {$\A:$};
			\node[initial above,state] (q0) {$q_0$};
			\node[state] (q1) [right of=q0] {$q_1$}; 
			\node[state] (q2) [left of=q0] {$q_2$}; 
			\path 
			(q0) edge node [above] {$a,\frac{1}{2},2$} (q1)
			(q1) edge [loop right, in=-30,out=30,looseness=5, align=left] node {$a,-\frac{1}{2},2$\\$b,0,3$} (q1)
			(q0) edge node [above]{$b,\frac{1}{3},3$} (q2)
			(q2) edge [loop right, in=150,out=-150,looseness=5, align=left] node [left]{$a,0,2$\\$b,-\frac{2}{3},3$} (q2)
			;
		\end{tikzpicture}
		\caption[A letter-oriented discounted-sum automaton that no integral NDA is equivalent to.]{\label{fig:expressivness}A letter-oriented discounted-sum automaton, for the discount factor function $\Lambda(a)=2$; $\Lambda(b)=3$, that no integral NDA is equivalent to.}
	\end{figure}
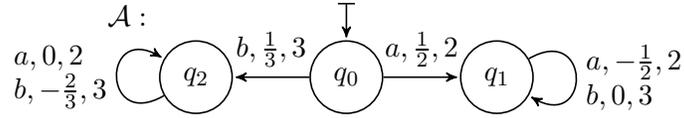
	
	\begin{proof}
	We show the result with respect to infinite words, and it also holds by  \Cref{lemma:finiteToInfinite} to finite words.	
	Consider the NMDA $\A$ depicted in \Cref{fig:expressivness}.
	Assume toward contradiction that there exists an integral NDA $\B'$ such that $\B'\equiv \A$. According to \cite{BH14}, there exists an integral deterministic NDA (integral DDA) $\B$ with transition function $\delta_\B$ and discount factor $\lambda$, such that $\B \equiv \B'\equiv\A$.
	
	Observe that for every $n\in\Nat$$\setminus\{0\}$, we have $\B(a^n b^\omega)=\A(a^n b^\omega)=\frac{1}{2^n}$.
	As $\B$ has finitely many states, there exists a state $q$ in $\B$ and $i,j\in\Nat\setminus\{0\}$ such that $\delta_\B(a^i)=\delta_\B(a^{i+j})=q$.
	Let $W_1=\B^{q}(a^j)$ and $W_2=\B^{q}(b^\omega)$.
	
	Observe that
	\begin{align}
		\frac{1}{2^i} 
		&= \B(a^i b^\omega) = \B(a^i) + \frac{W_2}{\lambda^i} \label{eqn:expressivness1}\\
		\frac{1}{2^{i+j}} 
		&= \B(a^{i+j} b^\omega) = \B(a^i) + \frac{W_1}{\lambda^i} + \frac{W_2}{\lambda^{i+j}} \label{eqn:expressivness2}\\
		\frac{1}{2^{i+2j}} 
		&= \B(a^{i+2j} b^\omega) = \B(a^i) + \frac{W_1}{\lambda^i} + \frac{W_1}{\lambda^{i+j}} + \frac{W_2}{\lambda^{i+2j}} \label{eqn:expressivness3}
	\end{align}
	Subtract \Cref{eqn:expressivness1} from \Cref{eqn:expressivness2}, and \Cref{eqn:expressivness2} from \Cref{eqn:expressivness3} to get
	\begin{align}
		\frac{1}{2^{i+j}} - \frac{1}{2^i} 
		&= \frac{W_1-W_2}{\lambda^i} + \frac{W_2}{\lambda^{i+j}}\label{eqn:expressivness4} \\
		\frac{1}{2^{i+2j}} - \frac{1}{2^{i+j}} 
		&= \frac{W_1-W_2}{\lambda^{i+j}} + \frac{W_2}{\lambda^{i+2j}}
		= \frac{1}{\lambda^j}\Big(\frac{W_1-W_2}{\lambda^{i}} + \frac{W_2}{\lambda^{i+j}}\Big)\label{eqn:expressivness5}
	\end{align}
	and combine \Cref{eqn:expressivness4,eqn:expressivness5} to get 
	$
	\frac{1}{2^j}\Big(\frac{1}{2^{i+j}} - \frac{1}{2^{i}}\Big) 
	= \frac{1}{2^{i+2j}} - \frac{1}{2^{i+j}} = \frac{1}{\lambda^j}\Big(\frac{1}{2^{i+j}} - \frac{1}{2^{i}}\Big)
	$, 
	which implies
	$\lambda=2$.
	
	Observe that for every $n\in\Nat$$\setminus\{0\}$, we have $\B(b^n a^\omega)=\A(b^n a^\omega)=\frac{1}{3^n}$. Symmetrically to the above, but with respect to `$b$' instead of `$a$' and `$3$' instead of `$2$', results in $\lambda=3$, leading to a contradiction.
\end{proof}

\subsubsection{Time-Oriented Discount Factors}
A $\theta$-NMDA over an alphabet $\Sigma$ is \emph{time oriented} if the discount factor on a transition is determined by the distance of the transition from an initial state; that is, if $\theta: \Sigma^+ \to \Nat\setminus\{0,1\}$ coincides with a function $\Lambda: \Nat\setminus\{0\} \to \Nat\setminus\{0,1\}$, in the sense that for every finite word $u$, we have $\theta(u)=\Lambda\big(|u|\big)$. 
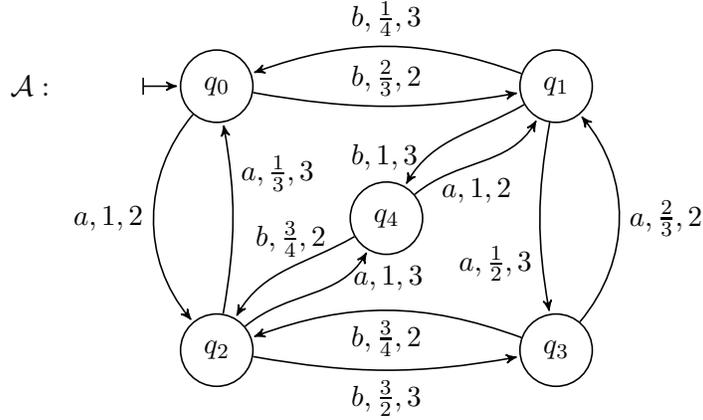
\begin{figure}
	\centering
	\begin{tikzpicture}[->,>=stealth',shorten >=1pt,auto,node distance=1.5cm, semithick, initial text=, every initial by arrow/.style={|->}]
		\node[state] (q4) {$q_4$};
		\node[initial, state] (q0) [above left= of q4, xshift=-0.5cm] {$q_0$};
		\node[state] (title) [left= of q0, draw=none] {$\A:$};
		\node[state] (q1) [above right= of q4, xshift=0.5cm] {$q_1$};
		\node[state] (q2) [below left= of q4, xshift=-0.5cm] {$q_2$};
		\node[state] (q3) [below right= of q4, xshift=0.5cm] {$q_3$};
		
		\path
		(q0) edge [left, in=130, out = -130] node {$a,1,2$} (q2)
		(q2) edge [right, out=80, in = -80, near end] node {$a,\frac{1}{3},3$} (q0)
		(q1) edge [left, in=100, out = -100, near end] node {$a,\frac{1}{2},3$} (q3)
		(q3) edge [right, out=50, in = -50] node {$a,\frac{2}{3},2$} (q1)
		(q0) edge [above, out=-10, in = -170] node {$b,\frac{2}{3},2$} (q1)
		(q1) edge [above, out=160, in = 20] node {$b,\frac{1}{4},3$} (q0)
		(q2) edge [below, out=-10, in = -170] node {$b,\frac{3}{2},3$} (q3)
		(q3) edge [below, out=160, in = 20] node {$b,\frac{3}{4},2$} (q2)
		(q1) edge [left, out=-150, in = 60, near end] node {$b,1,3$} (q4)
		(q4) edge [right, out=40, in = -120] node [very near start, yshift=-0.1cm] {$a,1,2$} (q1)
		(q4) edge [left, out=-150, in = 60] node [very near start, yshift=0.1cm] {$b,\frac{3}{4},2$} (q2)
		(q2) edge [right, out=40, in = -120, near end] node {$a,1,3$} (q4)
		;
	\end{tikzpicture}
	\caption[A time-oriented discounted-sum automaton.]{\label{fig:TimeOrientedExample}A time-oriented discounted-sum automaton $\A$.}
\end{figure}

For example, the NMDA $\A$ of \Cref{fig:TimeOrientedExample} is time-oriented, as all transitions taken at odd steps, in any run, have discount factor $2$, and those taken at even steps have discount factor $3$. The transducer $\T$ of \Cref{fig:NTDAF_Transducer} represents its choice function.

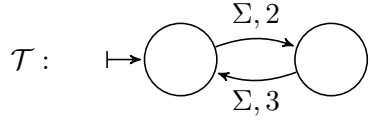
\begin{figure}
	\centering
	\begin{tikzpicture}[->,>=stealth',shorten >=1pt,auto,node distance=2cm, semithick, initial text=, every initial by arrow/.style={|->}]
		
		\node[initial, state] (qq0){};
		\node[state] (label) [left of=qq0, draw=none] {$\T:$};
		\node[state] (qq1) [right of=qq0] {};
		
		\path 
		(qq0) edge [above, out=20,in=160] node {$\Sigma,2$} (qq1)
		(qq1) edge	[below, out=-160,in=-20] node {$\Sigma,3$} (qq0)
		;
		
	\end{tikzpicture}
	\caption{\label{fig:NTDAF_Transducer}A transducer that represents the discount-factor choice function for the NMDA $\A$ of \Cref{fig:TimeOrientedExample}.}
\end{figure}

\begin{figure}
	\centering{}
	\begin{tikzpicture}[->,>=stealth',shorten >=1pt,auto,node distance=2cm, semithick, initial text=, every initial by arrow/.style={|->}]
		\node[initial,state] (q0) {$q_0$};
		\node[state] (label) [left of=q0, draw=none, xshift=0.5cm] {$\A:$};
		\node[state] (q1) [right of=q0] {$q_1$}; 
		\node[state] (q2) [right of=q1] {$q_2$};
		
		\node[state] (qq1) [left of=q0, xshift=-1cm] {};
		\node[initial, state] (qq0) [left of=qq1]{};
		\node[state] (label) [left of=qq0, draw=none] {$\T:$};
		\path
		(qq0) edge [above, out=20,in=160] node {$\Sigma,2$} (qq1)
		(qq1) edge	[below, out=-160,in=-20] node {$\Sigma,3$} (qq0)
		;
		
		\path 
		(q0) edge node [above, align=center] {$a,\frac{1}{6},2$\\$b,0,2$} (q1)
		(q1) edge [in=150,out=30] node [above, align=center,yshift=-0.1cm] {$a,0,3$\\[-3pt]$b,0,3$} (q2)
		(q2) edge [in=-30,out=-150] node [below, align=center] {$a,-\frac{5}{6},2$\\$b,0,2$} (q1)
		;
	\end{tikzpicture}
	
	\caption{\label{fig:expressivness2}A time-oriented NMDA that no integral NDA is equivalent to, and a transducer that defines its choice function.}
	
\end{figure}
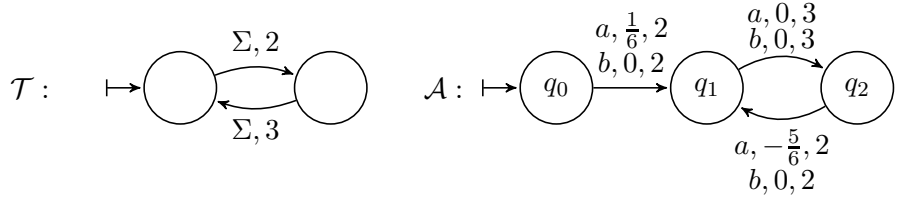

Time-oriented NMDAs extend the expressiveness of NDAs, as proved for the time-oriented NMDA depicted in \Cref{fig:expressivness2}.
\begin{thm}\label{thm:expressivnessTimeOriented}
	There exists a time-oriented NMDA that no integral NDA is equivalent to, with respect to both finite and infinite words.
\end{thm}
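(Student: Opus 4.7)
The plan is to mirror the proof of \Cref{thm:expressivness}, with added parity bookkeeping to handle the alternation of discount factors. By \Cref{lemma:finiteToInfinite}, it suffices to rule out an equivalent integral NDA on infinite words. I would therefore assume toward contradiction that some integral NDA is equivalent to the NMDA $\A$ of \Cref{fig:expressivness2}; by \cite{BH14}, this automaton can be replaced by an integral deterministic $\lambda$-DDA $\B$ for some $\lambda\in\Nat\setminus\{0,1\}$. The goal is to derive an equation on $\lambda$ that no integer satisfies.

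The first step is to evaluate $\A$ on the one-parameter family $a^n b^\omega$. A direct inspection of $\A$ shows that only the odd-positioned transitions labelled by $a$ carry a non-zero weight, and that, because the discount factors alternate $2,3,2,3,\ldots$, the accumulated discount at the $(2k{+}1)$-st transition equals $6^k$. A short telescoping calculation should then yield $\A(a^{2m+1}b^\omega)=\A(a^{2m+2}b^\omega)=1/6^{m+1}$, so $\A$'s values on this family decay geometrically with ratio $1/6$ each time $n$ grows by $2$.

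The second step is a pigeonhole argument on $\B$. Since $\B$ has finitely many states, the sequence $\delta_\B(a^n)$ is eventually periodic, so I can pick an odd integer $i$ past the pre-period and let $j$ be twice the period; this makes $j$ even and ensures $\delta_\B(a^{i})=\delta_\B(a^{i+j})=\delta_\B(a^{i+2j})=q$ for some state $q$. Writing $W_1=\B(a^i)$, $X=\B^q(a^j)$, and $W_2=\B^q(b^\omega)$, the three values $\B(a^{i+kj}b^\omega)$ for $k\in\{0,1,2\}$ are readily expressed in these quantities and in $\lambda$; taking the ratio of successive differences cancels $W_1$, $W_2$, and $X$, leaving $1/\lambda^j$. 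Equating this with the ratio $1/6^{j/2}$ obtained from the closed form of $\A$ forces $\lambda^2 = 6$, which has no integer solution, yielding the desired contradiction.

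The main obstacle will be the parity bookkeeping: because $\A(a^n b^\omega)$ depends on the parity of $n$, the three sampled lengths $i$, $i+j$, $i+2j$ must share a common parity, which is precisely why $j$ must be even. Once this is arranged, the rest of the argument is a routine calculation closely parallel to the proof of \Cref{thm:expressivness}, and no further subtleties should arise. If needed, an alternative witness family is $b^{2k}ab^\omega$, for which $\A(b^{2k}ab^\omega)=-5/6^{k+1}$; the same ratio trick again yields $\lambda^2=6$, so the argument is robust to the choice of word family.
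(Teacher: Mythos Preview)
Your proposal is correct and follows essentially the same approach as the paper. The paper's proof is a brief sketch: it records the closed form $\A(a^n b^\omega)=1/6^{\lceil n/2\rceil}$ and then says ``analogously to the proof of \Cref{thm:expressivness}, but with respect to $\sqrt{6}$ instead of $2$''. Your write-up is precisely that analogy, and you correctly identify and handle the parity issue (forcing $j$ even so that $i,i{+}j,i{+}2j$ share parity) that the paper's sketch leaves implicit; the resulting ratio argument indeed gives $\lambda^2=6$. One minor remark: your alternative family $b^{2k}ab^\omega$ has the stated value $-5/6^{k+1}$ only for $k\geq 1$ (for $k=0$ the run starts at $q_0$, not $q_2$, and yields $1/6$), but this does not affect the argument.
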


\begin{proof}
	We show the result with respect to infinite words, and it also holds by  \Cref{lemma:finiteToInfinite} to finite words.	
	Let $\A$ be the time-oriented NMDA depicted in \Cref{fig:expressivness2}.
	Observe that
	$\A(a^n b^\omega)= \frac{1}{6^{\lceil\frac{n}{2}\rceil}}$.
	Analogously to the proof of \Cref{thm:expressivness}, but with respect to ``$\sqrt{6}$'' instead of ``2'', we have that the discount factor of an equivalent DDA, if such exists, is $\lambda=\sqrt{6}$, hence no integral NDA can be equivalent to $\A$.
\end{proof}

\section{Tidy NMDAs -- Decision Problems}\label{sec:tidyDecisionProblems}
We show that all of the decision problems of tidy NMDAs are in the same complexity classes as the corresponding problems for discounted-sum automata with a single integral discount factor. That is, the nonemptiness problem is in PTIME, and the exact-value, universality, equivalence, and containment problems are in PSPACE (see \Cref{tbl:decisionProblems}).
In the equivalence and containment problems, we consider $\theta$-NMDAs with the same choice function $\theta$.
In addition, the problem of checking whether a given NMDA is tidy, as well as whether it is a $\theta$-NMDA, for a given choice function $\theta$, is decidable in PTIME.
The complexities are w.r.t.\ the automata size (as defined in \Cref{sec:Definitions}), and when considering a threshold $\nu$, w.r.t.\ its binary representation.

\subsection{Tidiness}\label{sec:Tidiness}
Given an NMDA $\A$, one can check in PTIME whether $\A$ is tidy.
The algorithm follows by solving a reachability problem in a Cartesian product of $\A$ with itself, to verify that for every word, the last discount factors are identical in all runs.

\begin{thm} \label{thm:GeneralCheckTidy}
	Checking if a given NMDA $\A$ is tidy is decidable in time $O\big(|\A|^2\big)$.
\end{thm}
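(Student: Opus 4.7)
The plan is to reduce tidiness to a reachability problem on the Cartesian product of $\A$ with itself, exactly as the remark preceding the theorem hints. Construct the product graph with state set $Q \times Q$, initial set $\iota \times \iota$, and an edge $((p,p'), \sigma, (q,q'))$ whenever both $(p,\sigma,q) \in \delta$ and $(p',\sigma,q') \in \delta$. A pair $(p,p') \in Q \times Q$ is reachable in this product iff there is some finite word $v \in \Sigma^*$ together with runs of $\A$ on $v$ ending in $p$ and in $p'$, respectively.

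Next I would prove the characterization that drives the algorithm: $\A$ is tidy if and only if for every reachable pair $(p,p')$, every letter $\sigma \in \Sigma$, and all transitions $(p,\sigma,q), (p',\sigma,q') \in \delta$, we have $\rho(p,\sigma,q) = \rho(p',\sigma,q')$. The forward direction is immediate from \Cref{def:ThetaNMDA}, applied to the word $u = v \sigma$. For the converse, given the local consistency condition, define $\theta \colon \Sigma^+ \to \Nat\setminus\{0,1\}$ by letting $\theta(u)$ be the discount factor of the final transition of any run of $\A$ on $u$ (completeness of $\A$ guarantees at least one such run exists, and the consistency condition guarantees that the value does not depend on the choice of run, since the penultimate states of any two such runs form a reachable pair in the product).

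The algorithm is then: compute the set of reachable pairs in the product graph by a standard BFS/DFS, and for every reachable pair $(p,p')$ and every letter $\sigma$, check that all outgoing transitions $(p,\sigma,q)$ and $(p',\sigma,q')$ share the same discount factor; $\A$ is tidy iff this check succeeds everywhere. The product graph has at most $|Q|^2$ vertices and at most $|\delta|^2$ edges, both bounded by $|\A|^2$, so the reachability computation takes $O(|\A|^2)$ time. The consistency check inspects each pair of transitions of $\A$ agreeing on source component parity and letter at most once, which is again $O(|\A|^2)$.

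There is no genuine obstacle in this argument; the only minor care is in the definition of $\theta$ for words on which $\A$ has runs versus those it does not, which is handled by completeness, and in bounding the size of the product graph by $|\A|^2$ rather than by $|Q|^2 + |\delta|^2$ separately, which follows from the convention that $|\A|$ is at least the number of transitions.
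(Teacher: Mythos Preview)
Your proposal is correct and follows essentially the same approach as the paper: build the self-product automaton on $Q\times Q$ with initial set $\iota\times\iota$ and edges synchronized on letters, then reduce tidiness to a reachability check for a pair of same-letter transitions with mismatched discount factors. The paper packages the mismatch test as a weight $\rho(s_0,\sigma,t_0)-\rho(s_1,\sigma,t_1)$ on each product edge and asks whether any non-zero-weight edge is reachable, but this is the same condition you state; your explicit construction of $\theta$ for the converse direction is a nice addition that the paper leaves implicit.
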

\begin{proof}
	Consider an input NMDA $\A = \tuple{\Sigma, Q, \iota, \delta, \gamma, \rho}$.
	Observe that $\A$ is tidy iff there does not exist a finite word $u\in\Sigma^+$ of length $n=|u|$ and runs $r_1$ and $r_2$ of $\A$ on $u$, such that $\rho(r_1(n-1))\neq \rho(r_2(n-1))$.
	Intuitively, we construct the Cartesian product of $\A$ with itself, associating the weight of every transition in the product to the difference of the two discount factors of the transitions causing it.
	The problem then reduces to reachability in this product
	automaton of a transition with weight different from $0$.  
	
	Formally, construct a weighted automaton $P=\tuple{\Sigma, Q\times Q, \iota\times\iota, \delta', \gamma'}$ such that
	\begin{itemize}
		\item $\delta'= \Big\{
		\big((s_0,s_1),\sigma,(t_0,t_1)\big) \ST \sigma\in\Sigma \text{ and } (s_0,\sigma,t_0),(s_1,\sigma,t_1)\in \delta
		\Big\}$.
		\item $\gamma'\big((s_0,s_1),\sigma,(t_0,t_1)\big)= \rho(s_0,\sigma,t_0)-\rho(s_1,\sigma,t_1)$.
	\end{itemize}
	Every run in $P$ for a finite word $u$ corresponds to two runs in $\A$ for the same word $u$.
	A non-zero weighted transition in $P$ corresponds to two transitions in $\A$ for the same letter, but with different discount factors.
	Hence, $\A$ is tidy if and only if no run in $P$ takes a non-zero weighted transition.
	
	The graph underlying $P$ can be constructed in time quadratic in the size of $\A$, and the reachability check on it can be performed in time linear in the size of this graph.
\end{proof}

Given also a transducer $\T$, one can check in polynomial time whether $\A$ is a $\T$-NMDA.

\begin{thm}\label{thm:decideTidySpecific}
	Checking if a given NMDA $\A$ is a $\T$-NMDA, for a given transducer $\T$, is decidable in time $O\big(|\A|\cdot |\T|\big)$.
\end{thm}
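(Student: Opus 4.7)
The plan is to reduce the check to a reachability computation on the product of $\A$ with $\T$. Intuitively, $\A$ is a $\T$-NMDA exactly when the discount factor on every transition of $\A$ that is taken after reading some word prefix $u$ agrees with $\T(u)$. Since $\T$ is deterministic, after reading $u$ the transducer is in a single state $p=\delta_\T(u)$, and the output $\T(u\sigma)=\rho_\T(p,\sigma)$ depends only on $p$ and the next input letter $\sigma$. Hence the required property is a local condition on the reachable pairs of the product.

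First I would construct the product graph $P$ with state set $Q_\A \times P_\T$, initial states $\iota_\A \times \{p_0\}$, and, for every $(q,p) \in Q_\A \times P_\T$, every $\sigma \in \Sigma$ and every $q' \in \delta_\A(q,\sigma)$, an edge from $(q,p)$ to $\bigl(q',\delta_\T(p,\sigma)\bigr)$ labeled by the transition $(q,\sigma,q')$ of $\A$. No weights need to be stored; $P$ is a plain directed graph. Next I would mark the set $R$ of states reachable from the initial states by a standard linear-time graph traversal (BFS or DFS). Finally, for every reachable pair $(q,p) \in R$ and every transition $(q,\sigma,q') \in \delta_\A$, I would verify that $\rho_\A(q,\sigma,q') = \rho_\T(p,\sigma)$. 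The algorithm answers \emph{yes} iff all these equalities hold.

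For correctness, observe that $(q,p) \in R$ iff there exists a finite word $u \in \Sigma^*$ and a run of $\A$ on $u$ ending in $q$ with $p = \delta_\T(u)$ (taking $u = \emptyword$ for initial pairs). If every local check succeeds, then for every $u = u'\sigma \in \Sigma^+$ and every run of $\A$ on $u$ whose last transition is $t=(q,\sigma,q')$, the pair $(q,\delta_\T(u'))$ is reachable, so $\rho_\A(t) = \rho_\T(\delta_\T(u'),\sigma) = \T(u)$, witnessing that $\A$ is a $\T$-NMDA with choice function $\theta \equiv \T$. Conversely, if some reachable $(q,p)$ has an outgoing $\A$-transition $(q,\sigma,q')$ with $\rho_\A(q,\sigma,q') \ne \rho_\T(p,\sigma)$, picking any witness word $u'$ leading $\A$ to $q$ and $\T$ to $p$ and letting $u = u'\sigma$ exhibits a run of $\A$ on $u$ whose last discount factor differs from $\T(u)$, so $\A$ is not a $\T$-NMDA.

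For the complexity, $P$ has at most $|Q_\A|\cdot|Q_\T|$ vertices and at most $|\delta_\A|\cdot|Q_\T|$ edges, both bounded by $O(|\A|\cdot|\T|)$. The traversal runs in time linear in the size of $P$, and the final pass examines each edge once, performing one equality check between integers of size at most $\max(|\A|,|\T|)$. The total running time is therefore $O(|\A|\cdot|\T|)$ as claimed. I do not anticipate a real obstacle here; the only point worth stating carefully is that $\T$ being a (deterministic) Mealy machine makes the second component of the product uniquely determined by the input prefix, which is precisely what allows the condition to be checked locally on reachable product states rather than on all word prefixes.
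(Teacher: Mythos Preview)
Your proposal is correct and follows essentially the same approach as the paper: both construct the product of $\A$ with $\T$, compute the reachable part, and verify on each reachable transition that the discount factor of $\A$ matches the output of $\T$ (the paper phrases the last step as checking that the difference is zero on all reachable transitions of the product). Your write-up additionally spells out the correctness argument and the complexity count, which the paper leaves implicit.
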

\begin{proof}
	We show the procedure.
	Let $\A = \tuple{\Sigma, Q_\A, \iota, \delta_\A, \gamma, \rho_\A}$ be the input NMDA and $\T=\tuple{Q_\T,\Sigma,q_0,\delta_\T,\rho_\T}$ the input transducer.
	
	We construct a nondeterministic weighted automaton $\A'$ that resembles $\A$ and a deterministic weighted automaton $\T'$ that resembles $\T$, as follows. 
	$\A'=\tuple{\Sigma, Q_\A, \iota, \delta_{\A}, \rho_{\A}}$ is derived from $\A$ by taking the same basic structure of states, initial states and transition function, and having the discount factors of $\A$ as its weight function. $\T'=\tuple{\Sigma, Q_\T, q_0, \delta_\T, \rho_\T}$
	is derived from  $\T$, by having the same structure as $\T$ and having the output function of $\T$ as the weight function of $\T'$.
	
	Then, we construct the product automaton $\B = \A'\times\T'$, in which the weight on each transition is the weight of the corresponding transition in $\A'$ minus the weight of the corresponding transition in $\T'$. 
	
	It is only left to check whether or not all the weights on the reachable transitions of $\B$ are zero. Indeed, $\A$ is a $\T$-NMDA iff all its reachable discount factors, which are the weights in $\A'$, correspond to the outputs of $\T$, which are the weights in $\T'$.
\end{proof}

\subsection{Nonemptiness}\label{sec:Nonemptiness}

\begin{table}
	\begin{center}
		\begin{tabular}{|c||c|c|}
			\hline
			& Finite words & Infinite words \\ [0.5ex]
			\hline \hline
			Non-emptiness ($<$) &  PTIME {\footnotesize (\Cref{thm:nonemptinessFinite})}& \multirow{2}{*}{PTIME {\footnotesize (\Cref{thm:nonemptinessInf})}} \\ [0.2ex]
			\cline{1-2}
			Non-emptiness ($\leq$) & PTIME {\footnotesize (\Cref{thm:nonemptinessFiniteNonStrict})}& \\ [0.2ex]
			\hline
			Containment ($>$) & PSPACE-complete &  PSPACE \footnotesize{(\Cref{thm:containmentInfiniteStrict})}\\ [0.2ex]
			\cline{1-1}\cline{3-3}
			Containment ($\geq$) & {\footnotesize (\Cref{thm:containmentFinite})} &  PSPACE-complete \footnotesize{(\Cref{thm:containmentInfinite})}\\ [0.2ex]
			\hline
			Equivalence &  \multicolumn{2}{c|}{PSPACE-complete \footnotesize (\Cref{thm:equivalence})} \\ [0.2ex]
			\hline
			Universality ($<$) & PSPACE-complete &  PSPACE{\footnotesize(\Cref{thm:universality})}\\ [0.2ex]
			\cline{1-1}\cline{3-3}
			Universality ($\leq$) & \footnotesize{(\Cref{thm:universality})} &  PSPACE-complete \footnotesize{(\Cref{thm:universality})}\\ [0.2ex]
			\hline
			\multirow{2}{*}{Exact-value} & PSPACE-complete &  \multirow{2}{*}{PSPACE {\footnotesize (\Cref{thm:exact})}}\\ [0.2ex]
			& {\footnotesize (\Cref{thm:exact})} &  \\ [0.2ex]
			\hline
		\end{tabular}
	\end{center}
	\caption{The complexities of the decision problems of tidy NMDAs.} \label{tbl:decisionProblems}
\end{table}

Considering nonemptiness with respect to infinite words, for both strict and non-strict inequalities there is a simple reduction to one-player discounted-payoff games (\Cref{def:DPG}) that also applies to arbitrary NMDAs (which are not necessarily tidy, or even integral), showing that these problems are in PTIME.
This result can also be generalized to strict nonemptiness of arbitrary NMDAs w.r.t.\ finite words.
The non-strict problem w.r.t.\ finite words is solved differently, and applies to integral NMDAs (which are not necessarily tidy).

We follow the definition of discounted-payoff games with multiple discount factors (DPGs) given in \cite{Andersson06}:
\begin{defi}[\cite{Andersson06}]\label{def:DPG} A \emph{one-player discounted payoff game} (\emph{one-player DPG}) is a 4-tuple $\tuple{V,E,\gamma,\rho}$ such that,
\begin{itemize}
	\item $E=\set{e_1,...,e_m}$ is a set of directed edges between vertices in $V=\set{v_1,...,v_n}$. 
	DPGs allow multiple edges between the same ordered pair of source and destination vertices.
	\item $\gamma:E\to\Rat$ is a weight function.
	\item $\rho:E\to\set{x\in\Rat : 0<x<1}$ is a discount function.
	\item An \emph{infinite play} $\pi$ from some vertex $v\in V$ is an infinite sequence of edges, $e_{i_0}e_{i_1}...$, such that the head of $e_{i_0}$ is $v$, and the tail of every edge is the head of the following edge.
	\item The \emph{value} of an infinite play $\pi$ is defined by 
	$\mu(\pi)=\gamma(e_{i_0})+\rho(e_{i_0})\big(\gamma(e_{i_1})+\rho(e_{i_1})(...)\big)=
	\Sigma_{j=0}^{\infty}{\Big(\gamma(e_{i_j})\prod_{k\in[0..j{-}1]}{\rho(e_{i_k}})\Big)}$.
	\item A \emph{solution} to a MIN (respectively, MAX) one-player DPG is a function $s:V\to\Rat$, such that for every $v\in V$, $s(v)$ is a value of an infinite play from $v$, and for every infinite play $\pi$ from $v$, $s(v)\leq\mu(\pi)$ (respectively, $s(v)\geq\mu(\pi)$).
\end{itemize}
\end{defi}
\noindent 
Section 3.1 of \cite{Andersson06} presents a polynomial-time algorithm for finding a solution to MIN- and MAX-one-player DPGs.
Observe that our definition of the value of a walk in an NMDA is identical to the definition of the value of a play in \Cref{def:DPG}. 
Hence, we can transform a given NMDA to a one-player DPG by using the same states, transitions, weights and discount factors as the ones in the NMDA, and omit the letters on the transitions. Doing so, we can solve nonemptiness of NMDAs using the algorithm of solving MIN-one-player DPGs.

\begin{thm} \label{thm:nonemptinessInf}
	The nonemptiness problem of NMDAs w.r.t.\ infinite words is in PTIME.
\end{thm}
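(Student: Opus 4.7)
The plan is to reduce the nonemptiness problem of an NMDA $\A = \tuple{\Sigma, Q, \iota, \delta, \gamma, \rho}$ on infinite words to solving a MIN one-player DPG in the sense of \Cref{def:DPG}, then invoke the polynomial-time algorithm of \cite{Andersson06}.
Since a run of $\A$ on some infinite word determines that word, nonemptiness of $\A$ with respect to a threshold $\nu$ amounts to asking whether there is an infinite run $r$ of $\A$ with $\A(r) < \nu$ (or $\leq \nu$), regardless of the labels on the transitions.

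The first step is the construction. From $\A$, I build a one-player DPG $G = \tuple{V, E, \gamma', \rho'}$ by setting $V = Q$, turning each transition $t = (q, \sigma, q') \in \delta$ into an edge $e_t$ from $q$ to $q'$ (several transitions between the same pair of states become parallel edges, which \Cref{def:DPG} permits), letting $\gamma'(e_t) = \gamma(t)$, and letting $\rho'(e_t) = 1/\rho(t)$. Since $\rho(t) \in \Rat \cap (1, \infty)$, the reciprocal lies in $\Rat \cap (0,1)$ as required by DPGs. The completeness assumption on $\A$ guarantees that every vertex of $G$ has at least one outgoing edge, so infinite plays exist from every vertex.

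The second step is to verify that the value of an infinite play in $G$ from a vertex $q$ equals the value in $\A$ of the corresponding run of $\A^q$ on the word spelled out along the matching transitions. This is a direct comparison of the two defining sums: the NMDA value divides weights by $\prod \rho$, whereas the DPG value multiplies them by $\prod \rho' = \prod 1/\rho$, and the two formulas coincide term by term. Consequently, for every $q \in Q$, $\inf \{ \A^q(w) \ST w \in \Sigma^{\omega} \} = s_G(q)$, where $s_G$ is the MIN solution of $G$, and hence $\inf_{w \in \Sigma^{\omega}} \A(w) = \min_{q \in \iota} s_G(q)$.

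The final step is to apply \cite[Section 3.1]{Andersson06} to compute $s_G$ in polynomial time, and then compare $m := \min_{q \in \iota} s_G(q)$ to $\nu$. For strict nonemptiness we need to decide whether some infinite play attains value strictly below $\nu$; since one-player DPGs admit optimal memoryless strategies, $m$ is attained by an actual infinite play, so the strict variant holds iff $m < \nu$, and the non-strict variant holds iff $m \leq \nu$. The only minor subtlety is this attainment observation, which justifies reading off both strict and non-strict answers from the single value $m$; everything else is a straightforward syntactic translation.
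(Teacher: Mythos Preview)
Your proposal is correct and follows essentially the same route as the paper's own proof: forget the labels, turn the NMDA into a one-player DPG with reciprocated discount factors, invoke Andersson's polynomial-time solver, and compare the minimum over initial states to $\nu$. Your explicit remark that memoryless optimality makes the infimum attained is a welcome clarification for the strict case that the paper leaves implicit, but otherwise the two arguments coincide.
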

\begin{proof}
	Let $\A=\tuple{\Sigma, Q, \iota, \delta, \gamma, \rho}$ be an NMDA and $\nu\in\Rat$ a threshold.
	We will construct a one-player DPG $G=\tuple{Q,E,\gamma_G,\rho_G}$
	such that every infinite walk $\psi$ of $\A$ will have a corresponding infinite play $\pi$ of $G$, such that $\A(\psi)=\mu(\pi)$.
	
	For every transition $t=(q,\sigma,p)\in\delta$ we add a corresponding edge $(q,p)$ to $E$ with weight and discount factor of $\gamma_G(q,p)=\gamma(t)$ and $\rho_G(q,p)=1/\rho(t)$, respectively.
	Let $f$ be the function that matches a transition in $\A$ to the corresponding edge in $G$. 
	We extend $f$ to be a bijection between the set of walks of $\A$ and the set of plays of $G$. Observe that by construction, for every walk $\psi$, we have $\A(\psi)=\mu\big(f(\psi)\big)$, and for every play $\pi$, we have $\mu(\pi)=\A\big(f^{-1}(\pi)\big)$.
	Recall that the value of $\A$ on a word is the infimum value of its runs on the word, implying that the infimum value of $\A$ on all words is equal to the infimum value of all plays in $G$ that start from vertices that correspond to initial states of $\A$.
	
	Section 3.1 of \cite{Andersson06} presents a polynomial-time algorithm for finding the minimal value of a play starting from every vertex $v\in Q$. 
	All left to do is to iterate all the vertices that correspond to initial states in $\iota$, and check if the minimal value of a play from any of them is lower (or lower-or-equal for the non-strict case) than $\nu$. 
\end{proof}

For nonemptiness with respect to finite words, we cannot directly use the aforementioned DPG solution, as it stands for infinite plays. 
However, for nonemptiness with respect to strict inequality, we can reduce the finite-words case to the infinite-words case: If there exists an infinite word $w$ such that $\A(w)$ is strictly smaller than the threshold, the distance between them cannot be compensated in the infinity, implying the existence of a finite prefix that also has a value smaller than the threshold; As for the other direction, we add to every state a $0$-weight self loop, causing a small-valued finite word to also imply a small-valued infinite word.

\begin{thm} \label{thm:nonemptinessFinite}
	The nonemptiness problem of NMDAs w.r.t.\ finite words and strict inequality is in PTIME.
\end{thm}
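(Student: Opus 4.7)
The plan is to reduce finite-word nonemptiness of $\A$ (with strict inequality) to infinite-word nonemptiness of a polynomially larger NMDA $\A'$, and then invoke \Cref{thm:nonemptinessInf}.

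To construct $\A'$ over the extended alphabet $\Sigma' = \Sigma \cup \{\#\}$ for a fresh letter $\#$, I would add two new states: an absorbing sink $q_d$ equipped with zero-weight self-loops on every letter of $\Sigma'$, and a fresh initial state $q_0^{\mathrm{new}}$ replacing the original initial states $\iota$. From every original state $q \in Q$ I add a zero-weight transition $(q, \#, q_d)$, intuitively letting $\#$ act as a ``stop and commit'' signal. From $q_0^{\mathrm{new}}$ I copy the $\Sigma$-transitions of the original initial states and add a single $\#$-transition $(q_0^{\mathrm{new}}, \#, q_d)$ carrying the large weight $\nu + 1$. All new transitions use discount factor $2$ so that $\A'$ remains a complete NMDA, and clearly $|\A'| \in O(|\A| + |\nu|)$.

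A straightforward case analysis of the runs of $\A'$ on an infinite word $w'$ shows that $\A'(w')$ equals $\nu + 1$ if $w'$ starts with $\#$; equals $\A(u)$ if $w' = u \# v$ with $u \in \Sigma^+$ (the run is forced into the zero-contributing sink after reading $\#$); and equals $\A(w')$ if $w' \in \Sigma^\omega$. Hence
$$\inf_{w' \in (\Sigma')^\omega} \A'(w') \;=\; \min\bigl(\nu+1,\ \inf_{u \in \Sigma^+} \A(u),\ \inf_{w \in \Sigma^\omega} \A(w)\bigr).$$
Combined with the prefix-approximation inequality $\inf_{u \in \Sigma^+} \A(u) \le \inf_{w \in \Sigma^\omega} \A(w)$, which follows from the uniform boundedness of tail contributions exactly as in the key observation inside the proof of \Cref{lemma:finiteToInfinite}, this infimum drops strictly below $\nu$ if and only if $\inf_{u \in \Sigma^+} \A(u) < \nu$, which is precisely the condition to decide.

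The final step applies \Cref{thm:nonemptinessInf} to $\A'$, yielding the desired polynomial-time algorithm. The main subtlety I anticipate is the $\nu > 0$ corner case: without pricing the immediate-$\#$ transition from $q_0^{\mathrm{new}}$ at $\nu + 1$, the word $\#^\omega$ would always realize value $0$ in $\A'$, incorrectly witnessing $\A'$-nonemptiness whenever $\nu > 0$ regardless of $\A$. Inserting the $\nu + 1$ weight eliminates precisely this artefact, and is the only essential twist beyond the otherwise direct infinite-words reduction.
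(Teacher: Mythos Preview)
Your proposal is correct and follows essentially the same approach as the paper: reduce to infinite-word nonemptiness by adding a zero-weight sink that ``freezes'' the accumulated value, and handle the residual case of a purely infinite witness via the bounded-tail argument from \Cref{lemma:finiteToInfinite}. The differences are cosmetic---the paper reuses an existing letter $\tau\in\Sigma$ rather than introducing a fresh $\#$, and it prevents the empty-word artefact structurally (by not giving the duplicated initial states a direct transition to the sink) rather than by pricing the immediate-$\#$ move at $\nu+1$---but the core reduction and the key analytic step are the same.
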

\begin{proof}
	Let $\A=\tuple{\Sigma, Q, \iota, \delta, \gamma, \rho}$ be an NMDA and $\nu\in\Rat$ a threshold.
	We will construct in polynomial time an NMDA $\A'=\tuple{\Sigma, Q\cup \iota\times\{1\}\cup \{q_\infty\}, \iota\times\{1\}, \delta\cup \delta'\cup\delta'', \gamma\cup\gamma'\cup\gamma'', \rho\cup\rho'\cup\rho''}$, such that $\A'$ is empty($<$) with respect to infinite words if and only if $\A$ is empty($<$) with respect to finite words, getting from \Cref{thm:nonemptinessInf} the required result.
	
	The construction duplicates all the initial states of $\A$ and adds a new state $q_\infty$. 
	The new transitions are:
	\begin{itemize}
		\item 
		$\delta'= \big\{\big((q,1),\sigma,q'\big) \ST q\in \iota, \sigma\in\Sigma, (q,\sigma,q')\in\delta\big\}$;
		
		$\gamma':\delta'\to \Rat$ such that $\gamma'\big((q,1),\sigma,q'\big)=\gamma(q,\sigma,q')$;
		
		$\rho':\delta'\to \Nat\setminus\{0,1\}$ such that $\rho'\big((q,1),\sigma,q'\big)=\rho(q,\sigma,q')$.
		\item 
		$\delta''=\{(q,\tau,q_\infty) \ST q\in Q\} \cup \{(q_\infty,\sigma,q_\infty) \ST \sigma\in \Sigma\} $ for some letter $\tau\in\Sigma$;
		
		$\gamma'':\delta''\to \Rat$ such that $\gamma''\equiv 0$;
		
		$\rho'':\delta''\to \Nat\setminus\{0,1\}$ for any arbitrary discount factors.
	\end{itemize}
	\noindent 
	Observe that for every finite word $u\in\Sigma^+$ we have that $\A'(u\con \tau^\omega)\leq\A(u)$, since for every run of $A$ on $u$ there is an equivalent run of $A'$ on $u$ that has the same value.
	
	If $\A$ is not empty($<$) w.r.t.\ finite words, there exists $u\in\Sigma^+$ such that $\A(u)<\nu$. Hence $\A'(u\con \tau^\omega)\leq\A(u)<\nu$. Concluding that $\A'$ is not empty($<$) w.r.t.\ infinite words. 
	
	For the other direction, if $\A'$ is not empty($<$) w.r.t.\ infinite words, 
	there exists $w\in\Sigma^\omega$ such that $\A'(w)<\nu$. Let $r$ be the run of $\A'$ on $w$ that entails the minimum value.
	Assume $r$ contains some transitions from $\delta''$. 
	Let $r'$ be the maximal prefix run of $r$ that contains only transitions from $\delta$ and $\delta'$.
	Since all the transitions in $\delta''$ are targeted in $q_\infty$ and have a weight of 0, we get that $\A'(r')=\A'(r)<0$. 
	By changing the first transition of $r'$ from $\big((q,1),\sigma,q'\big)$ to $(q,\sigma,q')$ we get a run of $A$ on a finite prefix of $w$ with the same value of $\A'$ on $r$, which is a value strictly less than $\nu$.
	Meaning that there exists $v\in\Sigma^+$ such that $\A(v)<\nu$, which is our claim.
	Otherwise, $r$ contains only transitions from $\delta$ and $\delta'$. changing its first transition $\big((q,1),\sigma,q'\big)$ to $(q,\sigma,q')$ results in a run of $A$ on $w$ with the same value strictly less than $\nu$.
	
	We will now show that if the value of $\A$ on some infinite word $w$ is less than $\nu$ then there exists a prefix of $w$ for which the value of $\A$ is also less than $\nu$. Denote $\epsilon = \nu-\A(w)$. Let $W$ be the maximal absolute value of $\A$ on any infinite word, and $\lambda$ the minimal discount factor in $\A$.
	
	Observe that there exists $n_\epsilon\in\Nat$ such that $\frac{W}{\lambda^{n_\epsilon}}<\epsilon$ and consider the run $r_{n_\epsilon}=r[0..n_\epsilon-1]$ of $\A$ on the finite word $u=w[0..n_\epsilon -1]$.
	We will show that after reaching $\delta(r_{n_\epsilon})$, if $\A(r_{n_\epsilon})$ is not smaller than $\nu$, then the weight of the suffix $\A(r[n_\epsilon..\infty])$ reduced by the accumulated discount factor $\rho(r_{n_\epsilon})$ will be too small to compensate, resulting in $\A(r)\geq \nu$.
	
	Observe that $| \A^{\delta(u)}(w[n_\epsilon..\infty]) | \leq W < \epsilon\cdot \lambda^{n_\epsilon}$
	and $\rho(r_{n_\epsilon})\geq \lambda^{n_\epsilon}$, resulting in $\frac{1}{\rho(r_{n_\epsilon})}\leq\frac{1}{\lambda^{n_\epsilon}}$ and 
	\scalebox{1.3}{$\frac{| \A^{\delta(u)}(w[n_\epsilon..\infty]) |}{\rho(r_{n_\epsilon})}<\epsilon$}.
	
	And finally,
	\begin{align*}
		\nu-\epsilon = \A(w)=\A(r) &=\A(r_n)+\frac{\A^{\delta(u)}\big(w[n_\epsilon..\infty]\big)}{\rho(r_n)} \\
		&\geq \A(r_n) - \frac{\big| \A^{\delta(u)}(w[n_\epsilon..\infty]) \big|}{\rho(r_{n_\epsilon})} >\A(r_n)- \epsilon \geq \A(u)- \epsilon
	\end{align*} 
	
	Meaning that $\nu>\A(u)$ and $\A$ is not empty($<$) with respect to finite words.
\end{proof}
\noindent 
For nonemptiness with respect to finite words and non-strict inequality, we cannot use the construction used in the proof of \Cref{thm:nonemptinessFinite}, since its final part is inadequate: It is possible to have an infinite word with value that equals the threshold, while every finite prefix of it has a value strictly bigger than the threshold.
Yet, when considering \emph{integral} NMDAs, we can use a different approach for resolving the problem, applying linear programming to calculate the minimal value of a finite run ending in every state.

\begin{thm} \label{thm:nonemptinessFiniteNonStrict}
	The nonemptiness problem of integral NMDAs w.r.t.\ finite words and non-strict inequality is in PTIME.
\end{thm}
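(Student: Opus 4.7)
The plan is to compute, for each state $q$, the quantity $U(q) := \inf\{\A(r) \St r \text{ is a finite walk of }\A\text{ starting at } q\}$ (with the empty walk of value $0$, so $U(q)\leq 0$), and then use it to decide nonemptiness. Since any finite walk is either empty or begins with one transition followed by another finite walk, $U$ satisfies the Bellman equation
\[
U(q) \;=\; \min\!\Big(0,\ \min_{t=(q,\sigma,q')\in\delta}\big[\gamma(t)+U(q')/\rho(t)\big]\Big).
\]
The associated operator on $\mathbb{R}^{Q}$ is a contraction of modulus $\leq 1/2$ (since each discount factor $\rho(t)$ is an integer $\geq 2$), hence has a unique bounded fixed point, which coincides with $U$. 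We compute $U$ in polynomial time via the linear program that maximizes $\sum_{q\in Q}U(q)$ subject to $U(q)\leq 0$ and $U(q)\leq \gamma(t)+U(q')/\rho(t)$ for each $t=(q,\sigma,q')\in\delta$.

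Setting $M := \min\{\gamma(t)+U(q')/\rho(t) \St q\in\iota,\ t=(q,\sigma,q')\in\delta\}$, which equals $\inf\{\A(w) \St w\in\Sigma^{+}\}$, the decision splits into three cases: if $M<\nu$ return YES (some finite run has value strictly below $\nu$); if $M>\nu$ return NO (every finite run exceeds $\nu$); if $M=\nu$ the answer is YES iff the infimum $M$ is actually attained by a finite walk. Attainment may fail in general---e.g., in a small automaton with a negatively weighted self-loop, finite runs can have values strictly decreasing toward the infimum without reaching it---so the equality case requires its own test.

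For the attainment test, call a transition $t=(q,\sigma,q')\in\delta$ a \emph{best successor} of $q$ if $U(q)=\gamma(t)+U(q')/\rho(t)$, and let $G_{\mathrm{bs}}$ be the subgraph of $\A$ retaining only best-successor transitions. The key claim is: $U(q)$ is attained by a finite walk iff in $G_{\mathrm{bs}}$ there is a path from $q$ to some state $q^{*}$ with $U(q^{*})=0$. The $(\Leftarrow)$ direction follows by induction on path length, assembling an attaining walk transition by transition. The $(\Rightarrow)$ direction is an exchange argument: in any attaining walk $r$, every transition must be a best successor (otherwise the remainder, whose value is bounded below by $U$ at the reached state, forces $\A(r)>U(q)$), and $r$ must end at a state with $U$-value $0$ (else extending $r$ by an optimal suffix would yield value $<U(q)$, contradicting the infimum). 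Thus the $M=\nu$ case reduces to polynomial-time reachability in $G_{\mathrm{bs}}$: check whether some initial transition $(q,\sigma,q')$ with $q\in\iota$ and $\gamma(t)+U(q')/\rho(t)=\nu$ has $q'$ reaching a state with $U$-value $0$. The whole procedure---Bellman LP, minimum over initial transitions, reachability check---runs in polynomial time in $|\A|$; the main subtlety is precisely this attainment characterization in the $M=\nu$ case.
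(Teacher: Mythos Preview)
Your proof is correct and takes a genuinely different route from the paper's.

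The paper works with the \emph{normalized} quantity $\Delta(r)=\rho(r)\big(\A(r)-\nu\big)$, where $\rho(r)$ is the accumulated discount factor along the finite run $r$. The crucial observation there is that with integral discount factors and a common weight denominator $d$, every $\Delta(r)$ lies in $\tfrac{1}{d}\mathbb{Z}$; hence for each state $q$ the infimum $\Delta(q)=\inf\{\Delta(r)\mid \delta(r)=q\}$ is either $-\infty$ or actually attained. This discreteness lets the paper avoid any separate attainment analysis: it sets up a single linear program whose maximal solution is $\langle\Delta(q_1),\ldots,\Delta(q_n)\rangle$ (with the constraints $x_i\ge 0$ detecting the $-\infty$ case via infeasibility) and then just checks whether some coordinate is $\le 0$.

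Your approach instead computes the forward optimal-suffix values $U(q)$ via a contracting Bellman operator and an LP, and then handles the boundary case $M=\nu$ by an explicit reachability test in the best-successor subgraph. This is more structural and, notably, does not use integrality at all beyond ensuring the contraction modulus is at most $\tfrac{1}{2}$; in fact the whole argument goes through verbatim for arbitrary rational discount factors strictly greater than $1$, so you actually prove a stronger statement than the theorem asks for. The paper's normalization trick, by contrast, genuinely needs integral $\rho$-values for the discreteness step, but in exchange it collapses your three-way case split and the attainment test into a single LP feasibility/optimality check.
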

\begin{proof}
	Consider an integral NMDA $\A=\tuple{\Sigma,Q,\iota,\delta,\gamma,\rho}$ and a threshold $\nu$.
	For every finite run $r$ of $\A$, we define its normalized difference from $\nu$ as the accumulated discount factor multiplied by the difference, meaning $\Delta(r)=\rho(r)\big(\A(r)-\nu\big)$.
	For every state $q\in Q$, we define its minimal normalized difference from $\nu$ as the minimal normalized difference among all finite runs that end in $q$, meaning, $\Delta(q)=\inf\{\Delta(r)\St \delta(r)=q\}=\inf(D_q)$.
	
	$\A$ is not empty w.r.t.\ finite words and non-strict inequality iff there exists a run $r$ such that $\Delta(r)\leq 0$.
	We will show that for every state $q\in Q$ such that $\Delta(q)\leq 0$, there exists a finite run $r$ of $\A$ ending in $q$ such that $\Delta(r)\leq 0$, and combine it with the trivial opposite direction to conclude that $\A$ is not empty iff there exists $q\in Q$ such that $\Delta(q)\leq 0$.
	Consider a state $q\in Q$, 
	\begin{itemize}
		\item 
		If $\Delta(q)=-\infty$, then by the definition of $\Delta(q)$, for every $x<0$ there exists a run $r$ ending in $q$ such that $\Delta(r)<x$.
		\item
		If $\Delta(q)=x\in\Rat$, then for every $\epsilon>0$ there exists a run $r_\epsilon$ ending in $q$ such that $\epsilon>\Delta(r_\epsilon)-x\geq 0$.
		Since we are dealing with integral discount factors, every normalized difference of a run is of the form $\frac{k}{d}$, where $k\in\Nat$ and $d$ is the common denominator of the weights in $\gamma$ and $\nu$.
		We will show that the infimum of the set $D_q$ is its minimum, since every element of $D_q$ can have only discrete values.
		
		Let $k_x\in\Nat$ be the minimal integer such that $\frac{k_x}{d}\geq x$, meaning $k_x=\left \lceil{x\cdot d}\right \rceil $, and observe that for every run $r$ ending in $q$ we have $\Delta(r)\geq\frac{k_x}{d}$, leading to $\Delta(r)-x\geq\frac{k_x}{d}-x$.
		Since this difference needs to be arbitrarily small, we get that $\frac{k_x}{d}-x=0$.
		For every run $r$ ending in $q$ we have that $\Delta(r)-x$ is $0$ or at least $\frac{1}{d}$.
		And since this difference needs to be arbitrarily small, it must be $0$ for some of those runs. 
		Hence, there exists a run $r$ ending in $q$ such that $\Delta(r)=x$.
	\end{itemize}
	\noindent 
	We will now show a linear program that calculates the value of $\Delta(q)$ for every $q\in Q$, or determines that there exists some $q\in Q$ such that $\Delta(q)<0$.
	For simplicity, we assume that all the states in $\A$ are reachable (since otherwise, one can create in polynomial time an equivalent integral NMDA for which all states are reachable).
	Let $Q_{in}$ be the set of all states that have an incoming transition, and $n$ its size, meaning $Q_{in}=\{q\in Q\St \exists(p,\sigma,q)\in \delta\}=\{q_1,\cdots,q_n\}$.
	Our linear program is over the variables $x_1,x_2,\cdots,x_n$, such that if there exists a feasible solution to the program, meaning a solution that satisfies all the constraints, then $\tuple{\Delta(q_1),\Delta(q_2),\ldots,\Delta(q_n)}$ is its maximal solution, and otherwise there exists a state $q$ such that $\Delta(q)<0$.
	For the first case, after finding the minimal normalized difference from $\nu$ for every state in $Q_{in}$, we can check if any of them equals to $0$, and for the other case we can immediately conclude that $\A$ is not empty.
	
	For defining the linear program, we first make the following observations. For every $t=(q_i,\sigma,q_j)\in \delta$ s.t.\ $q_i\in\iota$, we have
	$\Delta(t)=\rho(t)\cdot \big(\gamma(t)-\nu\big)$,
	and for every run $r$ of length $|r|=m>1$ we have 
	\begin{align*}
		\Delta(r) &= \rho(r)\cdot \big(\A(r)-\nu\big)\\
		&=\rho\big(r[0..m-2]\big)\rho\big(r(m-1)\big)\cdot \Big(\A\big(r[0..m-2]\big)+\frac{\gamma\big(r(m-1)\big)}{\rho\big(r[0..m-2]\big)}-\nu\Big) \\
		&=\rho\big(r(m-1)\big)\cdot \Big(\Delta\big(r[0..m-2]\big)+\gamma\big(r(m-1)\big)\Big)
	\end{align*}
	Hence, $\tuple{x_1,x_2,\ldots,x_n}=\tuple{\Delta(q_1),\Delta(q_2),\ldots,\Delta(q_n)}$ must satisfy the following system of equations:
	\begin{align}		
		x_j &\leq \rho(t)\cdot \big(\gamma(t)-\nu\big) \quad \text{ }\text{ for every } t=(q_i,\sigma,q_j)\in \delta \text{ s.t.\ } q_i\in\iota \label{eqn:linearProg1}\\
		x_j &\leq \rho(t)\cdot \big(\gamma(t)+x_i\big) \quad  \text{ for every } t=(q_i,\sigma,q_j)\in \delta \text{ s.t.\ } q_i\in Q_{in}\label{eqn:linearProg2}
	\end{align}
	
	These equations have a single maximal solution $\tuple{x^*_1,\cdots, x^*_n}$ such that for any solution $\tuple{a_1,\cdots, a_n}$ and $i\in[1..n]$, we have $x^*_i\geq a_i$ .
	To see that $\tuple{\Delta(q_1),\ldots,\Delta(q_n)}$ is indeed the unique maximal solution, if such exists, consider a solution $\tuple{a_1,\cdots, a_n}$, a state $q_i\in Q_{in}$ and a run $r$ such that $\delta(r)=q_i$ and $\Delta(r)=\Delta(q_i)$. 
	For every $j\in[0..|r|{-}1]$, let $q_{i_j}$ be the target state after the $j$-sized prefix of $r$, meaning $q_{i_j}=\delta\big(r[0..j]\big)$. We will show by induction on $j$ that $a_{i_j}\leq \Delta(r[0..j])$ to conclude that $a_i=a_{i_{|r|-1}}\leq \Delta(r[0..|r|{-}1])=\Delta(r)=\Delta(q_i)$:
	\begin{itemize}
		\item For the base case, we have $a_{i_0}\leq \rho\big(r(0)\big)\big(\gamma(r(0))-\nu\big)=\Delta\big(r(0)\big)$.
		\item For the induction step,
		\begin{align*}
			a_{i_j}&\leq
			\rho\big(r(j)\big)\cdot \Big(\gamma\big(r(j)\big)+a_{i_{j-1}} \Big)\\
			&\leq
			\rho\big(r(j)\big)\cdot \Big(\gamma\big(r(j)\big)+\Delta\big(r[0..j-1]\big)\Big) = \Delta\big(r[0..j]\big)
		\end{align*}
	\end{itemize}
	\noindent 
	The implicit constraint of non-negative values for the variables of the linear program, meaning $x_i\geq 0 $ for every $i\in[1..n]$, handles the case of a possible divergence to $-\infty$.
	With these constraints, if there exists $q\in Q$ such that $\Delta(q)<0$, then the linear program has no feasible solution, and this case will be detected by the algorithm that solves the linear program.
	
	Meaning that the problem can be stated as the linear program:
	maximize $\sum_{i=0}^n{x_i}$ subject to \Cref{eqn:linearProg1,eqn:linearProg2} and $x_i\geq 0$ for every $i\in[1..n]$.
\end{proof}

Notice that when considering deterministic automata, complementation, namely multiplication by $(-1)$ is straightforward, and thus universality and nonemptiness are equally easy. Furthermore, containment and equivalence between deterministic automata $\A$ and $\B$ can also be reduced to nonemptiness, by considering $\A-\B$.

\begin{thm}\label{thm:decisionDMDAs}
	For every choice function $\theta$, the containment, equivalence and universality problems of $\theta$-DMDAs are in PTIME for both finite and infinite words.
\end{thm}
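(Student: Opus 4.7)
The plan is to reduce each of the three problems to the nonemptiness problem of a $\theta$-DMDA, and then invoke the polynomial-time algorithms for nonemptiness established in \Cref{thm:nonemptinessInf,thm:nonemptinessFinite,thm:nonemptinessFiniteNonStrict}. The crucial point that makes the deterministic case easy is that all constructions used inside the proof of \Cref{thm:closurealgebraicOps} for negation, subtraction, addition, and union stay polynomial when the inputs are deterministic, since no determinization subroutine is invoked.

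First I would handle universality. Given a $\theta$-DMDA $\A$ and a threshold $\nu\in\Rat$, negate $\A$ in linear time by flipping the sign of every weight (yielding a $\theta$-DMDA $\A'$ with $\A'(w)=-\A(w)$ for all $w$). Then $\A(w)<\nu$ (resp. $\leq\nu$) holds for all $w$ if and only if there does \emph{not} exist a word $w$ with $\A'(w)\leq -\nu$ (resp. $<-\nu$). By \Cref{thm:nonemptinessInf,thm:nonemptinessFinite,thm:nonemptinessFiniteNonStrict}, the latter is decidable in polynomial time both for strict and non-strict inequalities, on finite as well as infinite words. (For the non-strict finite-words case we rely on the fact that $\A'$ is integral in the sense required by \Cref{thm:nonemptinessFiniteNonStrict}; since $\theta$ takes integer values, this is immediate.)

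Next I would handle containment of $\theta$-DMDAs $\A$ and $\B$. The product construction from \Cref{thm:closurealgebraicOps} applied to the two deterministic automata produces a $\theta$-DMDA $\C\equiv \A-\B$ of size $O(|\A|\cdot|\B|)$, without any determinization step. Then $\A\geq \B$ on all words iff $\C(w)\geq 0$ for all $w$, i.e., iff $\C$ is universal w.r.t.\ the threshold $0$ and non-strict inequality (and analogously for strict). This reduces containment to universality of a polynomial-size $\theta$-DMDA, which by the previous step is in PTIME. Equivalence is then the conjunction of the two non-strict containments $\A\geq\B$ and $\B\geq\A$, also in PTIME.

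I do not foresee any real obstacle: the only thing to be careful about is to make sure that all intermediate automata remain $\theta$-DMDAs (so that the PTIME nonemptiness results apply and no exponential blow-up from determinization is incurred), and that the integrality required by \Cref{thm:nonemptinessFiniteNonStrict} is preserved. Both are immediate from the constructions, because sign flipping and taking the synchronized product of two $\theta$-DMDAs preserve determinism, the choice function $\theta$, and integrality of the discount factors.
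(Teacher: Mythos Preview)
Your proposal is correct and follows essentially the same approach as the paper: reduce containment, equivalence, and universality to nonemptiness via the polynomial-time subtraction/negation of deterministic $\theta$-NMDAs (no determinization needed), then invoke \Cref{thm:nonemptinessInf,thm:nonemptinessFinite,thm:nonemptinessFiniteNonStrict}. The only cosmetic difference is the order of presentation---the paper reduces containment directly to nonemptiness and treats universality as a special case of containment, whereas you handle universality first and then route containment through universality---but the underlying argument is the same.
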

\begin{proof}
	We show that the containment problems can be reduced to the nonemptiness problems when swapping the strictness of the problem (``$>$'' becomes ``$\leq$'' and ``$\geq$'' becomes ``$<$''). 
	Consider $\theta$-DMDAs $\A$ and $\B$.
	By the proof of \Cref{thm:closurealgebraicOps}, we can construct an integral DMDA $\C\equiv \A-\B$ in linear time.
	Observe that for all words $w$, $\A(w)>\B(w)$ $\Leftrightarrow$ for all words $w$, $\C(w)>0$ $\Leftrightarrow$ there is no word $w$ s.t $\C(w)\leq 0$.	Meaning that $\A$ is contained($>$) in $\B$ iff $\C$ is empty($\leq$) with respect to the threshold $0$.
	Similarly, $\A$ is contained($\geq$) in $\B$ iff $\C$ is empty($<$) with respect to the threshold $0$.
	
	Equivalence is solved by checking containment($\geq$) in both directions, and the universality problems are special cases of the containment problems, by setting $\B$ to be the input DMDA and $\A$ to be a constant DMDA that gets the value of the input threshold on every word.
\end{proof}

Observe that since \Cref{thm:nonemptinessInf,thm:nonemptinessFinite} are valid for general NMDAs, having discount factors that are not necessarily integral (as opposed to \Cref{thm:nonemptinessFiniteNonStrict}, which requires the NMDAs to be integral), the results of \Cref{thm:decisionDMDAs} are also valid for general DMDAs (with the same choice function), considering all the problems with respect to infinite words, and the problems of equivalence, containment($\geq$), and universality($\leq$) w.r.t.\ finite words.

\subsection{Exact-Value, Universality, Equivalence, and Containment}\label{sec:PSPaceProblems}

We turn to the PSPACE-complete problems, to which we first provide hardness proofs, and then, in \Cref{sec:PspaceAlgorithms}, PSPACE algorithms.

\subsubsection{PSPACE-hardness}\label{sec:PspaceHardness}
Our hardness proofs are by reductions from the universality problem of NFAs, which is known to be PSPACE-complete \cite{MS72}. Notice that the provided hardness results already stand for integral NDAs, not only to tidy NMDAs.

PSPACE-hardness of the containment problem for NDAs with respect to infinite words and non-strict inequalities is shown in \cite{ComparatorAutomataInQuantitativeVerification}. 
We provide below more general hardness results, considering the universality, equivalence, and exact-value problems.
Notice that PSPACE-hardness of universality w.r.t.\ finite words directly follows from \cite{ComparatorAutomataInQuantitativeVerification} and \Cref{lem:NFA_to_NDA}. Yet, we include this case below, using slightly modified reduction, which also serves to show hardness of other decision problems.
\begin{lem} \label{lemma:lowerBoundContainmentAndEquivalence}
	The equivalence and universality($\leq$) problems of integral NDAs w.r.t.\ finite words are PSPACE-hard.
\end{lem}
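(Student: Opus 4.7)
The plan is to reduce from the universality problem of NFAs, which is PSPACE-complete, by leveraging the NFA-to-NDA construction already established in \Cref{lem:NFA_to_NDA}. Given an input NFA $\A$ over $\Sigma$, I will fix some $\lambda\in\Nat\setminus\{0,1\}$ (say $\lambda=2$) and apply that construction to obtain a $\lambda$-NDA $\Tilde{\A}$ of size linear in $|\A|$ such that $\Tilde{\A}(u)=-\frac{1}{\lambda^{|u|}}$ when $u\in L(\A)$ and $\Tilde{\A}(u)=\frac{1}{\lambda^{|u|}}$ when $u\notin L(\A)$. Note that $\Tilde{\A}$ never evaluates to $0$ on any nonempty word, which is the critical property that will make both reductions clean.

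For universality($\leq$), I would take the threshold to be $\nu=0$. Since $\Tilde{\A}(u)\in\bigl\{-\frac{1}{\lambda^{|u|}},\frac{1}{\lambda^{|u|}}\bigr\}$ for every $u\in\Sigma^+$, the condition ``$\Tilde{\A}(u)\leq 0$ for all $u$'' is equivalent to ``$\Tilde{\A}(u)<0$ for all $u$'', which in turn is equivalent to $L(\A)=\Sigma^+$. Thus $\Tilde{\A}$ is universal($\leq$) with respect to $0$ iff $\A$ is universal, giving a polynomial-time reduction.

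For equivalence, I will exhibit a small deterministic integral $\lambda$-DDA $\B$ that computes the function $w\mapsto -\frac{1}{\lambda^{|w|}}$ on all $w\in\Sigma^+$. A straightforward calculation shows that two states suffice: an initial state $q_0$ from which every letter transitions to a state $q_1$ with weight $-\frac{1}{\lambda}$, and a self-loop at $q_1$ on every letter with weight $\frac{\lambda-1}{\lambda}$. Telescoping the discounted sum yields $\B(w)=-\frac{1}{\lambda^{|w|}}$ for every nonempty word. Then $\Tilde{\A}\equiv\B$ iff every word is in $L(\A)$: if $L(\A)=\Sigma^+$, the two automata agree everywhere, and if some word $u$ is missing from $L(\A)$, then $\Tilde{\A}(u)=+\frac{1}{\lambda^{|u|}}\neq\B(u)$.

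Both reductions are clearly computable in polynomial time, and the main conceptual point on which everything hinges is the dichotomy in the image of $\Tilde{\A}$, namely that its value at every word is a nonzero element whose sign reveals membership in $L(\A)$. There is no genuine obstacle once \Cref{lem:NFA_to_NDA} is invoked; the only care needed is to choose the threshold and the comparison automaton so that the ``in $L(\A)$'' side coincides exactly with the side forced by the universality or equivalence condition.
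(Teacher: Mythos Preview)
Your proof is correct. Both you and the paper reduce from NFA universality and encode membership/non-membership in $L(\A)$ via the sign of the NDA's value, so the overall strategy is the same; the difference is only in the packaging. The paper re-tunes the weights of the \Cref{lem:NFA_to_NDA} construction so that $\Tilde{\A}(u)=0$ when $u\in L(\A)$ and $\Tilde{\A}(u)=\frac{1}{2^{|u|}}$ otherwise, which lets it compare against the trivial constant-$0$ automaton for both equivalence and universality($\leq$). You instead invoke \Cref{lem:NFA_to_NDA} verbatim (values $\pm\frac{1}{\lambda^{|u|}}$) and compensate by building a two-state $\lambda$-DDA $\B$ realizing $w\mapsto -\frac{1}{\lambda^{|w|}}$ for the equivalence reduction; your telescoping computation for $\B$ is correct. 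Your version saves re-doing the weight assignment and inductive verification, at the small cost of a non-constant comparison automaton; the paper's version makes the comparison automaton trivial, at the cost of a bespoke weight scheme. Either way the reductions are polynomial and the hardness follows.
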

\begin{proof}
	Given an NFA $\A=\tuple{\Sigma,Q,\iota,\delta,F}$, we construct in polynomial time an NDA $\Tilde{\A}=\tuple{\Sigma,Q\cup\{p_0,q_{hole}\},\{p_0\},\delta',\gamma'}$ with discount factor $2$, such that $\Tilde{\A}$ never gets a negative value, and $\A$ is universal if and only if $\Tilde{\A}$ is equivalent to a $0$ NDA, namely to an NDA that gets a value of $0$ on all finite words. For simplicity, we ignore the empty word, whose acceptance is easy to check in $\A$.
	
	The construction is similar to the one presented in the proof of \Cref{lem:NFA_to_NDA}, with the following modifications to the weights: 
	\begin{equation*}
		\gamma'(p,\sigma,q) = 
		\begin{cases}
			0 & p \in \{p_0\}\cup F,q\in F
			\\
			\frac{1}{2} & p\in \{p_0\}\cup F, q\notin F
			\\
			-\frac{1}{2} & p\notin \{p_0\}\cup F, q\notin F
			\\
			-1 & p\notin \{p_0\}\cup F, q\in F
		\end{cases}
	\end{equation*}

	An example of the construction is given in \Cref{fig:lowerBoundContainmentAndEquivalence}.
	\begin{figure}
		\centering
		\begin{tikzpicture}[->,>=stealth',shorten >=1pt,auto,node distance=1.5cm, semithick, initial text=, every initial by arrow/.style={|->}]
			\node[initial, state, inner sep=0.1cm,minimum size=0.4cm] (q0) {$q_0$};
			\node[state, accepting, inner sep=0.1cm,minimum size=0.4cm] (q1) [right of=q0] {$q_1$}; 
			\node[state] (dummy) [right of =q1, draw=none, xshift=-0.4cm] {\Huge $\Rightarrow$};
			\node[initial, state, inner sep=0.1cm, minimum size=0.4cm] (p01) [right of=dummy] {$p_0$};
			\node[state, inner sep=0.0001cm,minimum size=0.4cm] (phole) [right of=p01] {$q_{hole}$};
			\node[state, inner sep=0.1cm,minimum size=0.4cm] (p0) [right of=phole, xshift=0.2cm] {$q_0$};
			\node[state, inner sep=0.1cm,minimum size=0.4cm] (p1) [right of=p0] {$q_1$};  
			\path 
			(q0) edge [out=30, in=150] node [above] {$a$} (q1)
			(q1) edge [out=-150, in=-30] node [below] {$a$} (q0)
			(p0) edge [out=30, in=150] node [above, near start] {$a,-1$} (p1)
			(p1) edge [out=-150, in=-30] node [below, near start] {$a,\frac{1}{2}$} (p0)
			(p01) edge [out=90, in=90] node [left, near start, xshift=-0.25cm] {$a,0$} (p1)
			(p0) edge  node [above] {$a,-\frac{1}{2}$} (phole)
			(p01) edge  node [below] {$a,\frac{1}{2}$} (phole)
			(phole) edge [out=110, in=150,loop, looseness=5] node [above,near end, yshift=0.1cm] {$a,-\frac{1}{2}$} (phole)
			(p1) edge [out=110, in=70] node [above, near end, xshift=-0.1cm] {$a,\frac{1}{2}$} (phole)
			;
		\end{tikzpicture}
		\caption{\label{fig:lowerBoundContainmentAndEquivalence}An example of the reduction defined in the proof of \Cref{lemma:lowerBoundContainmentAndEquivalence}.}
	\end{figure}
	We can show by induction that for every $u\in\Sigma^+$,
	\begin{equation*}
		\Tilde{\A}(u) = 
		\begin{cases}
			0 & u\in L(\A)
			\\
			\frac{1}{2^{|u|}} & u \notin L(\A)
		\end{cases}
	\end{equation*}
	Hence, $L(\A)$ is universal iff $\Tilde{\A}$ is equivalent to a $0$ NDA iff it is universal($\leq$) with respect to the threshold $0$.
\end{proof}

\begin{lem}\label{lemma:lowerBoundContainmentAndEquivalenceInf}
	The equivalence and universality($\leq$) problems of integral NDAs w.r.t.\ infinite words are PSPACE-hard.
\end{lem}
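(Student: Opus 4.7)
The plan is to adapt the reduction of \Cref{lemma:lowerBoundContainmentAndEquivalence} to infinite words by appending a fresh ``freeze'' letter. Starting from an NFA $\A$ over $\Sigma$, I would first build the integral NDA $\tilde{\A}$ guaranteed by \Cref{lemma:lowerBoundContainmentAndEquivalence}, and then construct an integral NDA $\B$ with discount factor $2$ over $\Sigma' = \Sigma \cup \{\#\}$ whose states are those of $\tilde{\A}$ together with one new state $q_{\mathsf{fr}}$. On letters of $\Sigma$, $\B$ uses exactly the transitions of $\tilde{\A}$; from every state of $\tilde{\A}$, there is an additional $\#$-transition to $q_{\mathsf{fr}}$ of weight $0$; and $q_{\mathsf{fr}}$ has a $0$-weight self-loop on every letter in $\Sigma'$. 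This is clearly a polynomial-time construction.

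For correctness, I would first isolate an invariant of the construction in \Cref{lemma:lowerBoundContainmentAndEquivalence}: for every run of $\tilde{\A}$ on a finite word of length $n$, the partial discounted sum equals $0$ if the $n$-th state lies in $F$ and equals $1/2^n$ otherwise (with $q_{hole}$ treated as non-accepting). This telescoping identity is implicit in the proof of \Cref{lemma:lowerBoundContainmentAndEquivalence} and can be verified by a short induction on $n$, using that the weights on a transition $(p,\sigma,q)$ depend only on whether $p,q$ are ``accepting-like'' (i.e., lie in $\{p_0\}\cup F$) or not. Using this identity, every run of $\B$ on a word $w \in \Sigma^\omega$ has partial sums in $[0, 1/2^n]$ and therefore total discounted value $0$, so $\B(w) = 0$. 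For an infinite word that does contain $\#$, every run must take the (only available) $\#$-transition into $q_{\mathsf{fr}}$ at the first occurrence of $\#$, after which the contribution is $0$; hence $\B(w) = \tilde{\A}(u)$, where $u$ is the prefix of $w$ preceding the first $\#$ (taking $\B(w) = 0$ when $w$ begins with $\#$).

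Combining these observations, $\B(w) \geq 0$ for every infinite $w$, with $\B(w) > 0$ iff $w$ contains $\#$ and the prefix preceding its first occurrence is not in $L(\A)$. Consequently, $\A$ is universal iff $\B$ is universal($\leq$) with respect to threshold $0$ iff $\B$ is equivalent to the constant-$0$ NDA (a one-state NDA with $0$-weight self-loops). Since NFA universality is PSPACE-hard \cite{MS72}, PSPACE-hardness transfers to both the universality($\leq$) and the equivalence problems for integral NDAs on infinite words.

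The main obstacle is the $\Sigma^\omega$ case, where one must argue that \emph{every} run of $\B$ on such a word has discounted value exactly $0$, not merely non-negative. This rests on making the telescoping identity for partial sums explicit and combining it with the convergence $1/2^n \to 0$, in the spirit of the $\epsilon$-tail argument used in the proof of \Cref{lemma:finiteToInfinite}.
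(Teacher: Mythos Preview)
Your proposal is correct and takes essentially the same approach as the paper: both add a fresh letter $\#$ and a single sink state with $0$-weight $\#$-transitions from all existing states and $0$-weight self-loops, so that the infinite-word value equals the finite-word value of $\tilde{\A}$ on the prefix before the first $\#$, and equals $0$ on $\Sigma^\omega$ words. Your treatment is in fact slightly more explicit than the paper's, since you spell out the per-run telescoping invariant (value $0$ or $1/2^n$ depending on whether the current state is accepting) that justifies the $\Sigma^\omega$ case, whereas the paper simply asserts that the value on such words is $0$.
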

\begin{proof}
	Similarly to the proof of \Cref{lemma:lowerBoundContainmentAndEquivalence}, we construct in polynomial time an NDA $\Tilde{\A}$ with discount factor $2$, such that the input NFA is universal if and only if $\Tilde{\A}$ is equivalent to a $0$ NDA with respect to infinite words.
	Also in this reduction, no negative values of words will be possible, so it is also valid for showing the PSPACE-hardness of the universality($\leq$) problem.
	The reduction is similar to the one provided in the proof of \Cref{lemma:lowerBoundContainmentAndEquivalence}, with the following additions to support the case of infinite words:
	\begin{itemize}
		\item A new letter $\#$ to the alphabet.
		\item A new state $q_\infty$ to $\Tilde{\A}$.
		\item $0$-weighted $\#$-transitions from every state of $\Tilde{\A}$ to $q_\infty$.
		\item $0$-weighted self loops $(q_\infty,\sigma,q_\infty)$ for every alphabet letter $\sigma$.
	\end{itemize}
	An example of the construction is given in \Cref{fig:lowerBoundContainmentAndEquivalenceInf}.
	\begin{figure}
	\centering
	\begin{tikzpicture}[->,>=stealth',shorten >=1pt,auto,node distance=1.5cm, semithick, initial text=, every initial by arrow/.style={|->}]
		\node[initial, state, inner sep=0.1cm,minimum size=0.4cm] (q0) {$q_0$};
		\node[state, accepting, inner sep=0.1cm,minimum size=0.4cm] (q1) [right of=q0] {$q_1$}; 
		\node[state] (dummy) [right of =q1, draw=none, xshift=-0.4cm] {\Huge $\Rightarrow$};
		\node[initial, state, inner sep=0.1cm, minimum size=0.4cm] (p01) [right of=dummy] {$p_0$};
		\node[state, inner sep=0.0001cm,minimum size=0.4cm] (phole) [right of=p01] {$q_{hole}$};
		\node[state, inner sep=0.1cm,minimum size=0.4cm] (p0) [right of=phole, xshift=0.2cm] {$q_0$};
		\node[state, inner sep=0.1cm,minimum size=0.4cm] (p1) [right of=p0] {$q_1$};
		\node[state, inner sep=0.1cm,minimum size=0.4cm] (inf) [below of=phole, yshift=-0.5cm] {$q_\infty$};  
		\path 
		(q0) edge [out=30, in=150] node [above] {$a$} (q1)
		(q1) edge [out=-150, in=-30] node [below] {$a$} (q0)
		(p0) edge [out=30, in=150] node [above, near start] {$a,-1$} (p1)
		(p1) edge [out=-150, in=-30] node [below, near start] {$a,\frac{1}{2}$} (p0)
		(p01) edge [out=90, in=90] node [left, near start, xshift=-0.25cm] {$a,0$} (p1)
		(p0) edge  node [above] {$a,-\frac{1}{2}$} (phole)
		(p01) edge  node [below] {$a,\frac{1}{2}$} (phole)
		(phole) edge [out=110, in=150,loop, looseness=5] node [above,near end, yshift=0.1cm] {$a,-\frac{1}{2}$} (phole)
		(p1) edge [out=110, in=70] node [above, near end, xshift=-0.1cm] {$a,\frac{1}{2}$} (phole)
		(p01) edge [out=-90, in=170] node [left] {$\#,0$} (inf)
		(p1) edge [out=-90, in=10] node [right, xshift=0.3cm] {$\#,0$} (inf)
		(phole) edge node [left,near end] {$\#,0$} (inf)
		(p0) edge node [right] {$\#,0$} (inf)
		(inf) edge [out=-20, in=-60,loop, looseness=5] node [right, xshift=0.0cm, align=center] {$a,0$\\$\#,0$} (inf)
		;
	\end{tikzpicture}
	\caption{\label{fig:lowerBoundContainmentAndEquivalenceInf}An example of the reduction defined in the proof of \Cref{lemma:lowerBoundContainmentAndEquivalenceInf}.}
\end{figure}
	
	By this construction, the value of $\Tilde{\A}$ on an infinite word $u\con\# \con w$, where $u$ does not contain $\#$, is $\Tilde{\A}(u)$, hence $0$ if and only if $\A$ accepts $u$ and greater than $0$ if and only if $\A$ does not accept $u$.
	Notice that the value of $\Tilde{\A}$ on an infinite word that does not contain $\#$ is $0$.
	
	Hence, $\A$ is universal iff the value of $\Tilde{\A}$ on all infinite words is $0$ iff $\Tilde{\A}$ is equivalent to a $0$ NDA with respect to infinite words iff $\Tilde{\A}$ is universal($\leq$) with respect to the threshold $0$ and infinite words.
\end{proof}

\begin{lem}\label{lemma:lowerBoundUniversalityAndExact}
	The universality($<$) and exact-value problems of integral NDAs w.r.t.\ finite words are PSPACE-hard.
\end{lem}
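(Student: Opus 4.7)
The plan is to present two separate polynomial-time reductions from the universality problem of NFAs, which is PSPACE-complete \cite{MS72}, leveraging the NFA-to-NDA conversion of \Cref{lem:NFA_to_NDA}.

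For universality($<$), I would observe that \Cref{lem:NFA_to_NDA} already supplies the reduction essentially for free. Given an NFA $\A$, let $\Tilde{\A}$ be the integral NDA with discount factor $2$ produced by that lemma, so that $u \in L(\A) \iff \Tilde{\A}(u) < 0$ for every $u \in \Sigma^+$. Hence $L(\A) = \Sigma^+$ iff $\Tilde{\A}(u) < 0$ for every finite word $u$, which is exactly the universality($<$) property of $\Tilde{\A}$ with threshold $\nu = 0$. Since $|\Tilde{\A}|$ is linear in $|\A|$, this is a polynomial-time reduction.

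For the exact-value problem, I would reduce from non-universality of NFAs, which is also PSPACE-complete as PSPACE is closed under complementation. Given an NFA $\A$, I would build an integral NDA $\B$ as the disjoint union of $\Tilde{\A}$ with a single-state NDA that has a $0$-weighted self loop on every letter (using discount factor $2$), with both components declared initial. By construction $\B(u) = \min(\Tilde{\A}(u),\, 0)$, so $\B(u) = -1/2^{|u|} < 0$ if $u \in L(\A)$ and $\B(u) = 0$ if $u \notin L(\A)$. Therefore a finite word $w$ with $\B(w) = 0$ exists iff $L(\A) \neq \Sigma^+$, yielding a polynomial-time reduction to the exact-value problem with target value $\nu = 0$.

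Neither reduction requires any work beyond \Cref{lem:NFA_to_NDA}; the only observation is that the two decision problems align with two natural views of the NFA-to-NDA equivalence: strict inequality at threshold $0$ directly captures membership of every word in $L(\A)$, while pairing $\Tilde{\A}$ with a constant-$0$ automaton converts the statement ``some word has value strictly above $0$'' into ``some word has value exactly $0$.'' No obstacle arises beyond verifying that the union construction is polynomial and that no word of $\Tilde{\A}$ attains value exactly $0$, which is immediate from the explicit formulas $\pm 1/2^{|u|}$ provided by \Cref{lem:NFA_to_NDA}.
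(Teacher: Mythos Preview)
Your proposal is correct and follows the same underlying idea as the paper: reduce from NFA (non\nobreakdash-)universality via the NFA-to-NDA conversion of \Cref{lem:NFA_to_NDA}, arranging that accepted words get a strictly negative value and non-accepted words get value exactly $0$.

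The only difference is in packaging. The paper uses a single modified-weight variant of the \Cref{lem:NFA_to_NDA} construction (reassigning the four transition weights) so that one automaton $\Tilde{\A}$ directly satisfies $\Tilde{\A}(u)=-1/2^{|u|}$ for $u\in L(\A)$ and $\Tilde{\A}(u)=0$ otherwise, and this one automaton then serves both reductions simultaneously. You instead reuse \Cref{lem:NFA_to_NDA} as a black box: for universality($<$) you invoke it verbatim, and for exact-value you take the union with a constant-$0$ component to clamp the positive values down to $0$. Your route is arguably cleaner in that it requires no new weight calculation, while the paper's route is slightly more economical in that it avoids the extra state and handles both problems with one object; substantively the two arguments are the same.
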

\begin{proof} 
	Similarly to the proof of \Cref{lemma:lowerBoundContainmentAndEquivalence}, we show a polynomial reduction from the problem of NFA universality such that for every finite word $u$, we have $\Tilde{\A}(u)<0$ if and only if $\A$ accepts $u$, and $\Tilde{\A}(u)=0$ otherwise.
	This provides reductions to both the universality($<$) and exact-value problems.
	Once again we use the construction of \Cref{lem:NFA_to_NDA}, while slightly adjusting the weights: 
	\begin{equation*}
		\gamma'(p,\sigma,q) = 
		\begin{cases}
			-\frac{1}{2} & p\notin F, q\in F
			\\
			0 & p\notin F, q\notin F
			\\
			\frac{1}{2} & p \in F,q\in F
			\\
			1 & p\in F, q\notin F
		\end{cases}
	\end{equation*}

	We can show by induction on the length of the runs on an input word $u$ that
	\begin{equation*}
		\Tilde{\A}(u) = 
		\begin{cases}
			-\frac{1}{2^{|u|}} & u\in L(\A)
			\\
			0 & u \notin L(\A)
		\end{cases}
		\vspace{-1.6\baselineskip}
	\end{equation*}

\end{proof}
\subsubsection{PSPACE Algorithms}\label{sec:PspaceAlgorithms}
Consider a choice function $\theta$ and $\theta$-NMDAs $\A$ and $\B$.
Our PSPACE algorithms relate to 12 problems (see \Cref{tbl:decisionProblems}): Exact-value and strict/non-strict universality w.r.t.\ $\A$, and equivalence and strict/non-strict containment between $\A$ and $\B$, each over finite or infinite words.

Since equivalence and universality are easily shown to be special cases of containment, our main algorithms are for the containment and exact-value problems. Observe that while the latter problem considers the \emph{existance} of a word $w$ (s.t.\ $\A(w)=\nu$), the former problem requires that \emph{for every} word $w$ ($\A(w) > \B(w)$ or $\A(w) \geq \B(w)$). Yet, since PSPACE = NPSPACE = coNPSPACE, we may consider the opposite of the former problem, namely whether there exists a word $w$, s.t.\ $\A(w) \leq \B(w)$ or $\A(w) < \B(w)$, and our algorithms may be nondeterministic.

Considering the containment problem, let $\C$ be the NMDA obtained by taking the union of $\A$ and $\B$. That is, the set of states of $\C$ is the union of $\A$'s and $\B$'s states, its transition function, when restricted to $\A$'s states is as of $\A$ and when restricted to $\B$'s states is as of $\B$, etc. (Notice that $\C$ is equivalent to $\min(\A,\B)$.)

Our algorithm for the containment problem non-deterministically generates a word $u$ letter by letter, and performs an on-the-fly determinization of $\C$, along the procedure described in \Cref{sec:Determinizability}, with respect to the input word $u$.

Recall that after reading a word prefix $u$, the determinization procedure maintains (in space polynomial in $|\C|$) for each state $q$ of $\C$, the gap between the best run of $\C$ on $u$ that ends in $q$ and the overall best run of $\C$ on $u$. (So the gap of a state in which an optimal run on $u$ ends is $0$, while the gap of other states is bigger than or equal to $0$. The gap $\infty$ stands for an irrecoverable gap, namely for a positive gap that cannot be reduced to $0$ as the word continues.) 
The determinization of $\C$ provides information on the possible runs of $\A$ and $\B$ on a (prefix) word $u$, hence holds all the required information:
\begin{enumerate}
	\item A $0$-gap for an $\A$-state means that $\A$ has an optimal run on $u$ among all runs of $\A$ and $\B$ on $u$, therefore $\A(u) \leq \B(u)$, namely $u$ witnesses that there is no strict containment between $\A$ and $\B$ over finite words.
	\item A $0$-gap for an $\A$-state when the gaps of all $\B$-states are strictly positive means that $\A$ has an optimal run on $u$ while $\B$ does not, therefore $\A(u) < \B(u)$, namely $u$ witnesses that there is no non-strict containment between $\A$ and $\B$ over finite words.
	\item\label{itm:eq_finite} $0$-gaps for both an $\A$-state and a $\B$-state means that both $\A$ and $\B$ have an optimal run on $u$, therefore $\A(u) = \B(u)$. This will be used for the exact-value problem.
	\item A $0$-gap for an $\A$-state on a prefix $u$ when the gaps of all $\B$-states are $\infty$ means that $\A$ has an optimal run on any continuation of $u$ while $\B$ does not, therefore for a word $w\in u\Sigma^\omega$, we have $\A(w) < \B(w)$, namely $w$ witnesses that there is no non-strict containment between $\A$ and $\B$ over infinite words.
	\item A configuration of gaps (i.e., the set of gaps of all of $\A$'s and $\B$'s states), repeating after two different prefixes $u_1$ and $u_1 u_2$, in which an $\A$-state has a non $\infty$-gap means that $\A$ has an optimal run on $w=u_1 u_2^\omega$ among all the runs of $\A$ and $\B$ on $w$, therefore $\A(w) \leq \B(w)$, namely $w$ witnesses that there is no strict containment between $\A$ and $\B$ over infinite words.
	\item\label{itm:eq_inf} A configuration of gaps with non $\infty$-gaps for both an $\A$-state and a $\B$ state, which is repeating after two different prefixes $u_1$ and $u_1 u_2$, means that both $\A$ and $\B$ have optimal runs on $w=u_1 u_2^\omega$, therefore $\A(w) = \B(w)$. This will be used for the exact-value problem.
\end{enumerate}
\noindent 
For the exact-value problem, of whether there exists a word $w$ s.t.\ $\A(w)=\nu$, 
we use the approach of \Cref{itm:eq_finite,itm:eq_inf} above, while letting $\B$ stand for a constant-$\nu$ DMDA.

\begin{lem} \label{lem:PSPACE}
	For every choice function $\theta$, the strict and non-strict containment problems of $\theta$-NMDAs w.r.t.\ finite or infinite words are in PSPACE.
\end{lem}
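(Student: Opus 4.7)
The plan is to formalize the on-the-fly determinization approach sketched just before the lemma. Given $\theta$-NMDAs $\A$ and $\B$, I form their union $\C$ (a $\theta$-NMDA equivalent to $\min(\A,\B)$, obtained by taking the disjoint union of states and transitions). By \Cref{thm:DetTidy} and \Cref{lem:Termination}, the determinization procedure of \Cref{sec:Determinizability} applied to $\C$ produces a $\theta$-DMDA $\D$ whose states are gap-vectors $\langle g_1,\dots,g_n\rangle$ indexed by the states of $\C$, where each $g_i\in\bar G$ has polynomial-size representation in $|\A|+|\B|$. Crucially, although $\D$ may be of exponential size, a single state of $\D$ together with the next transition it takes can be computed and stored in space polynomial in $|\A|+|\B|$, using only the current gap-vector, the letter, and the transition structure of $\C$.

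The algorithm then nondeterministically guesses a witness word one letter at a time, maintaining only the current gap-vector (overwriting the previous one). For the finite-word case, after reading a prefix $u$, the invariant established in \Cref{lem:DetCorrectness} ensures that $g_i = \Gap(q_i,u)$ whenever the gap is recoverable. Thus, as enumerated in items (1)-(3) of the discussion preceding the lemma, the algorithm accepts (i.e., reports a violation of containment) as soon as it reaches a gap-vector witnessing $\A(u)\le\B(u)$ or $\A(u)<\B(u)$ respectively: for strict containment, some state of $\A$ must have gap $0$; for non-strict containment, some state of $\A$ has gap $0$ while every state of $\B$ has a strictly positive gap. Each check is local to the current gap-vector and runs in polynomial space.

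For the infinite-word case, a witness word need not be of finite length, so I combine on-the-fly determinization with cycle detection. Since the number of possible gap-vectors is bounded by $2^{O(|\A|+|\B|)}$, I nondeterministically guess a prefix $u_1$ and a nonempty continuation $u_2$ such that reading $u_2$ from the gap-vector reached after $u_1$ returns to the very same gap-vector; the length of each can be bounded by the number of gap-vectors, so counters of polynomial bit-length suffice. The word $w=u_1 u_2^\omega$ is then a genuine witness, and the acceptance conditions are as in items (4)-(5): for non-strict containment over infinite words, it suffices to find a prefix on which some state of $\A$ has a recoverable ($\neq\infty$) gap of $0$ while every state of $\B$ has gap $\infty$ (so the suffix is irrelevant); for strict containment, one must additionally detect a repeated gap-configuration in which some $\A$-state keeps a non-$\infty$ gap, ensuring $\A(w)\le\B(w)$ along the eventually-periodic word.

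The main obstacle is justifying the correctness of the infinite-word strict case: a repeating gap-configuration shows that the optimal run values of $\A$ and $\B$ along the periodic word $w$ are synchronized, but one has to argue that a finitely-bounded non-$\infty$ $\A$-gap indeed forces $\A(w)\le\B(w)$ in the limit. This follows by unfolding the periodic suffix and invoking the gap semantics of \Cref{lem:DetCorrectness} together with the bound $W\le m\cdot\lambda/(\lambda-1)$ derived in the proof of \Cref{lemma:finiteToInfinite}, which guarantees that the finite discrepancy cannot be ``undone'' asymptotically. Once this is established, the overall algorithm runs in nondeterministic polynomial space, and by Savitch's theorem the containment problems lie in PSPACE; matching PSPACE-hardness is already provided in \Cref{sec:PspaceHardness}.
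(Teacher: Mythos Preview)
Your proposal is correct and follows essentially the same approach as the paper: form the union $\C$, run the determinization of \Cref{sec:Determinizability} on-the-fly while nondeterministically guessing letters, and test the gap-vector conditions corresponding to items (1)--(5). The paper's correctness argument for the infinite-word strict case is slightly more explicit than yours---it splits into the sub-cases $\A(w)<\B(w)$ and $\A(w)=\B(w)$ and, for the latter, shows directly that a repeated configuration with non-$\infty$ gaps $g_i,g_j$ on both sides gives $\A(u_1u_2^\omega)-\C(u_1u_2^\omega)=\lim_{k\to\infty} g_i/(\rho(u_1)\Pi^k)=0$ (and symmetrically for $\B$)---but your sketch points at the same mechanism.
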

\begin{proof}
	Consider a choice function $\theta$, and $\theta$-NMDAs $\A=\tuple{\Sigma,Q_\A,\iota_\A,\delta_\A,\gamma_\A,\rho_\A}$ and $\B=\tuple{\Sigma,Q_\B,\iota_\B,\delta_\B,\gamma_\B,\rho_\B}$.
	Denote the states of $\A$ as $Q_\A=\set{p_1,...,p_n}$ and of $\B$ as $Q_\B=\set{p_{n+1},...,p_{n+m}}$.
	Building on the equivalence PSPACE = NPSPACE = coNPSPACE, we nondeterministically construct a word, letter by letter, that witnesses non-containment, namely a word $u$, s.t.\ $\A(u) \leq \B(u)$ or $\A(u) < \B(u)$. (In the infinite-word case, we construct a lasso word $u_1u_2^\omega$, where $u_1$ and $u_2$ are finite.)
	
	We validate the adequateness of $u$ by constructing, on-the-fly, a $\theta$-DMDA $\D$ equivalent to $\C=\tuple{\Sigma,Q_\A\cup Q_\B,\iota_\A \cup \iota_B,\delta_\A \cup \delta_\B,\gamma_\A \cup \gamma_\B,\rho_\A \cup \rho_\B}$, as per the proof of \Cref{thm:DetTidy}. Along the construction, we only save the current state of $\D$ after reading the current prefix $u$ (or two such states), which due to \Cref{thm:DetTidy} only requires space polynomial in $|\C|$ and thus polynomial in $|\A|$ and $|\B|$.

	On every step of the construction, after generating a finite word prefix $u$, we examine the current state $S=\tuple{g_1, \cdots, g_n, g_{n+1}, \cdots, g_{n+m}}$ of $\D$, which consists of the current gaps of each original state of $\A$ and of $\B$. (For the definition of gaps, see \Cref{sec:Determinizability}). $\D$'s state $S$ shows that $u$ (or a related word) witnesses non-containment, with respect to the following containment problems, iff each corresponding condition, as detailed below, holds.
	
	\begin{itemize}
		\item Strict containment($>$) finite words: There exists $i\in[0..n]$ such that $g_i = 0$. (The word $u$ is a witness.)
		\item Non-strict containment($\geq$) finite words: There exists $i\in[0..n]$ such that $g_i = 0$ and for every $j\in[n{+}1..n{+}m]$ we have $g_j > 0$. (The word $u$ is a witness.)
		\item Non-strict containment($\geq$) infinite words: There exists $i\in[0..n]$ such that $g_i = 0$ and for every $j\in[n{+}1..n{+}m]$ we have $g_j = \infty$. (Every word $w\in u\Sigma^\omega$ is a witness.)
		\item Strict containment($>$) infinite words: The algorithm also (nondeterminstically) remembers some previous state $S'$ of $\D$, and the condition is that $S=S'$ and that there exists some $i\in[1.. n]$, such that $g_i\neq\infty$. (The witness is a word $u_1u_2^\omega$, where $u_1$ leads $\D$ from the initial state to $S$ and $u_2$ leads $\D$ from $S$ to $S'$, namely back to $S$.)
	\end{itemize}
\noindent 
	For showing correctness of the above conditions, we use the constant $T$ and the functions $\Gap$ and $\Cost$ as defined in \Cref{sec:Determinizability}. Notice that by \Cref{lem:DetCorrectness}, for every $h\in[1 .. n{+}m]$, we have $g_h=\Gap(p_h,u)$ if $\Gap(p_h,u) \leq 2T$ and $\infty$ otherwise.	Observe that since $\A$ and $\B$ are $\theta$-NMDAs, they agree on the accumulated discount factor over every finite word $u$, which we denote by $\rho(u)=\rho_\A(u)=\rho_\B(u)$.
	\begin{itemize}
		\item Strict containment($>$) finite words: The containment $\A>\B$ does not hold 
		iff there exists a finite word $u$ s.t.\ $\A(u)\leq \B(u)$ 
		iff there exists a state $p\in Q_\A$ s.t.\ $p$ is the target state of an optimal run of $\C$ on $u$
		iff there exists $p\in Q_\A$ s.t.\ $\Cost(p,u) = \C(u)$ 
		iff there exists $p\in Q_\A$ s.t.\ $\Gap(p,u)=\rho(u)\big(\Cost(p,u) - \C(u)\big)=0$ 
		iff there exists $i\in[1..n]$ s.t.\ $g_i=0$.
		\item Non-strict containment($\geq$) finite words:  The containment $\A\geq\B$ does not hold iff there exists a finite word $u$ s.t.\ 
		$\A(u) < \B(u)$ 
		iff there exists $p\in Q_\A$ s.t.\ $p$ is the target state of an optimal run of $\C$ on $u$, and every $p' \in Q_\B$ is not a target state of an optimal run of $\C$ on $u$ 
		iff there exists $p\in Q_\A$ s.t.\ $\Gap(p,u)=0$ and for all $p' \in Q_\B$, $\Gap(p',u)>0$ 
		iff there exists $i\in[1..n]$  s.t.\ $g_i=0$ and for all $j\in[n{+}1..n{+}m]$, we have $g_j >0$.
		\item Non-strict containment($\geq$) infinite words: 
		The containment $\A\geq\B$ does not hold 
		iff there exists an infinite word $w$ s.t.\ $\A(w)< \B(w)$. Recall that we provided for that case the condition that (1) there exists $i\in[0..n]$ such that $g_i = 0$ and (2) for every $j\in[n+1..n+m]$ we have $g_j = \infty$. We show the two directions of the condition correctness:
		
		\noindent
		$\Rightarrow$: 
		If the condition holds, then by (1) we have $\A(u)=\C(u)$, and by (2) we have  for every $j\in[n{+}1..n{+}m]$ that $\Gap(p_j,u)=\rho(u)\big(\Cost(p_j,u)-\C(w)\big)>2T$, implying that $\rho(u)\big(\B(u)-\C(u)\big)>2T$. 
		Hence, $\B(u)-\A(u)>\frac{2T}{\rho(u)}$. 
		Since the difference between two infinite runs of $\C$ on an infinite word $v$ is bounded by
		$\sum_{i=0}^\infty \frac{T}{\prod_{j=0}^{i-1}{\rho\big(v[j]\big)}} \leq \sum_{i=0}^\infty \frac{T}{2^i} = 2T$, 
		reading an infinite prefix $v$ after reading $u$, will change the difference between $\B$ and $\A$ by no more than $\frac{2T}{\rho(u)}$.
		We get $\B(uv)-\A(uv)\geq \B(u)-\A(u) - \frac{2T}{\rho(u)} >0$, and in particular for some $w=uv$, we have $\B(w)> \A(w)$, as required.
		
		\noindent
		$\Leftarrow$: 
		If $\B(w) > \A(w)$ then since $\rho(u)$ grows exponentially with the length of a word $u$, for a long enough prefix $u$ of $w$, we have $\B(w)-\A(w)> \frac{4T}{\rho(u)}$.
		Since the difference between two runs of $\C$ on infinite continuations of $u$ is bounded by $2T$, we have $\B(u) - \A(u) + \frac{2T}{\rho(u)} \geq \B(w)-\A(w)$, implying that $\B(u) - \A(u) > \frac{2T}{\rho(u)}$. Hence, $\rho(u)(\B(u) - \A(u))>2T$, and therefore the condition holds for $u$.
		\item Strict containment($>$) infinite words: 
		The containment $\A>\B$ does not hold
		iff there exists an infinite word $w$ s.t.\ $\A(w)\leq\B(w)$
		iff ( \circled{1} there exists an infinite word $w$ s.t.\ $\A(w)<\B(w)$ or \circled{2} there exists an infinite word $w$ s.t.\ $\A(w)=\B(w)$).
		\begin{itemize}
			\item 
		By the previous argument, \circled{1} holds iff for some finite word $u$, there exists $i\in[0..n]$ such that $g_i = 0$ and for every $j\in[n+1..n+m]$ we have $g_j = \infty$. In this case, the values $g_j$s will remain $\infty$ for all continuations of $u$, so by the finiteness of $\D$, some state repeats at some point. 
		Observe that since $S$ must have a $0$-gap, requiring $g_j = \infty$ for every $j\in[n{+}1..n{+}m]$ leads to $g_i = 0$ for some $i\in[0..n]$.
		Hence \circled{1} holds iff $S=S'$, there exists $i\in[0..n]$ s.t.\ $g_i \neq \infty$ and for every $j\in[n{+}1..n{+}m]$ we have $g_j = \infty$.
		\item 
		We will show that \circled{2} holds iff $S=S'$, and contains $g_i, g_j \neq \infty$ for some $i\in[0..n]$ and $j\in[n{+}1..n{+}m]$.
		
		\noindent
		$\Leftarrow$
		If $\A(w)=\B(w)$, by the finiteness of $\D$, the run of $\D$ on $w$ must infinitely often return to some state $S$. By the previous argument, it cannot be that for every $j\in[n{+}1..n{+}m]$ we have $g_j = \infty$, and by symmetry, nor can it be that for every $i\in[1..n]$ we have $g_i = \infty$. Thus, a state $S$ repeats with some $g_i\neq\infty$ and $g_j\neq\infty$, for $i\in[1..n]$ and $j\in[n+1..n+m]$.
		
		\noindent
		$\Rightarrow$
		Due to the determinism of $\D$, if it reaches $S$ reading a word $u_1$, and returns to $S$ after further reading a word $u_2$, then it will infinitely often reach $S$ reading $u_1 u_2^\omega$.
		Without loss of generality, we consider $i, j$ s.t.\ $g_i$ (respectively $g_j$), is the minimal gap between all states of $\A$ (respectively $\B$). 
		Denote by $\Pi$ the accumulated discount factor in $\C$ over the word $u_2$ after $u_1$ was already read, that is $\Pi=\frac{\rho(u_1 u_2)}{\rho(u_1)}$. 
		Now, 
		$\A(u_1 u_2^\omega) - \C(u_1 u_2^\omega) = \lim_{k\to\infty}{\Cost(p_i, u_1 u_2^k)-\C(u_1 u_2^k)} = 
		\lim_{k\to\infty}{\frac{\Gap(p_i,u_1 u_2^k)}{\rho(u_1 u_2^k)}} = 
		\lim_{k\to\infty}{\frac{g_i}{\rho(u) \Pi^k}} = 0$.
		Similarly, we get 
		$\B(u_1 u_2^\omega) - \C(u_1 u_2^\omega) = 0$, hence $\A(u_1 u_2^\omega) = \B(u_1 u_2^\omega)$ as required.
		
		\end{itemize}
		\noindent 
		Combining both results to achieve (\circled{1} or \circled{2}) iff $S=S'$ with $g_i\neq\infty$ for some $i\in[1.. n]$.
		\qedhere
	\end{itemize}
\end{proof}

\begin{lem} \label{lem:exact_PSPACE}
	The exact-value problem of tidy NMDAs w.r.t.\ finite or infinite words is in PSPACE.
\end{lem}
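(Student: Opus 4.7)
The plan is to run the nondeterministic on-the-fly determinization procedure from the proof of \Cref{lem:PSPACE}, while tracking, alongside the usual tuple of gaps of $\A$'s states, a single additional quantity $g_\nu(u) := \rho(u)\big(\A(u)-\nu\big)$ that plays the role of the gap of a ``virtual'' constant-$\nu$ DMDA. This quantity is updated incrementally: $g_\nu(\emptyword) = -\nu$, and whenever $\D$'s transition out of the current state is $(S,\sigma,S')$ with weight $\gamma'$ and discount factor $\lambda$, we have $g_\nu(u\sigma) = \lambda\cdot\big(g_\nu(u)+\gamma'\big)$, so the update uses only information already available on-the-fly.

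First I would establish a saturation bound: there is a constant $K$ whose binary representation is polynomial in $|\A|$ and in the size of $\nu$, such that $|g_\nu(u)| > K$ implies $\A(uv)\neq \nu$ for every continuation $v$. This holds because $|\A(uv)-\A(u)| \leq K/\rho(u)$, by the same type of calculation used in the proof of \Cref{lem:PSPACE}: the weights of the deterministic $\D$ are bounded by a polynomial expression in the weights of $\A$, and the contribution of any infinite suffix is scaled down by $\rho(u)$. Hence $g_\nu$ can safely be truncated to $\infty$ outside $[-K,K]$, leaving only rationals of the form $k/d$ with $|k|\leq Kd$, where $d$ is the common denominator of the weights of $\A$ together with $\nu$; all such values admit polynomial-size representations.

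Next I would specify the nondeterministic acceptance conditions, mirroring items $3$ and $6$ in the proof of \Cref{lem:PSPACE}. For finite words, generate a word letter-by-letter and accept as soon as an augmented state with $g_\nu(u)=0$ is reached, which is exactly the condition $\A(u)=\nu$. For infinite words, additionally (nondeterministically) store one previously-seen augmented configuration $S$ and accept upon returning to $S$ with $g_\nu\neq\infty$. The lasso criterion is sound because, once $S$ repeats after prefixes $u_1$ and $u_1u_2$, by determinism the run of $\D$ on $u_1 u_2^\omega$ visits only a bounded set of configurations, so $g_\nu$ stays bounded while $\rho(u_1 u_2^k)\to\infty$, yielding $\A(u_1 u_2^\omega)=\nu$ in the limit; it is complete because if $\A(w)=\nu$ for some infinite $w$, then $g_\nu$ remains bounded along prefixes of $w$ and, by finiteness of the augmented state space, some configuration must eventually repeat.

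Each augmented state admits a representation polynomial in $|\A|$ and in the size of $\nu$, so maintaining only the current and, optionally, one previously-seen augmented state places the algorithm in NPSPACE, which equals PSPACE. The main technical obstacle is justifying the saturation bound together with the lasso criterion for infinite words, ensuring that the finite-state augmented construction is both sound and complete for the exact-value question.
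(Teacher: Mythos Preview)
Your proposal is correct and follows essentially the same route as the paper. The paper constructs an explicit constant-$\nu$ $\theta$-NMDA $\B$ (by reusing the transducer for $\theta$, or $\A$'s own structure, with weight $\nu$ on initial transitions and $0$ elsewhere) and then invokes the equality case of \Cref{lem:PSPACE} on the union $\A\cup\B$; your signed quantity $g_\nu$ is precisely the gap that $\B$'s state would carry in that determinization (up to sign, since the paper's gaps are taken relative to $\min(\A(u),\nu)$), so you have inlined the same argument rather than invoking it as a reduction.
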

\begin{proof}
	Consider a tidy NMDA $\A$ for some choice function $\theta$ and a constant $\nu\in\Rat$. 
	To check whether there exists a word $w$ s.t.\ $\A(w)=\nu$, we first construct a $\theta$-NMDA $B$ that expresses the constant function $\nu$.
	Such an NMDA is identical to a transducer $\T$ that represents $\theta$, while duplicating the initial state, so the initial state of $\B$ has no incoming transitions.
	All the transitions from the initial state of $\B$ have a weight of $\nu$, while the weight of all other transitions is $0$. 
	The discount factors are as of $\T$.
	Alternatively, if a transducer for $\theta$ is not provided, we can perform the same process on the input automaton $\A$ to achieve a $\theta$-NMDA for the constant function $\nu$.
		
	Then, similarly to the algorithms of \Cref{lem:PSPACE}, we check for a witness $\A(w) = \B(w)$.
	We nondeterministically generate a word $u$, letter by letter, and determinize on-the-fly an NMDA $\C$ that is the union of $\A$ and $\B$ into a DMDA $\D$.
	Denote the states of $\A$ as $\set{p_1,...,p_n}$ and the states of $\B$ as $\set{p_{n+1},...,p_{n+m}}$.
	
	Considering the exact-value problem with respect to finite words, we have $\D(u)=\nu$ iff $g_i, g_j= 0$ for some $i\in[1..n], j\in[n{+}1..n{+}m]$. Indeed, $g_i=0$ implies $\A(u)=\C(u)$ and $g_j = 0$ implies $\C(u)=\B(u)=\nu$.

	Considering the exact-value problem with respect to infinite words, the condition for a positive answer is that a state $S$ is repeated twice, and contains $g_i, g_j \neq \infty$ for some $i\in[0..n], j\in[n{+}1..n{+}m]$.
	(The witness is a word $u_1u_2^\omega$, where $u_1$ leads $\D$ from the initial state to $S$ and $u_2$ leads $\D$ from $S$ back to $S$.)
	The correctness argument for this condition is provided in the proof of \Cref{lem:PSPACE} for the case of strict containment($>$) on infinite words.
\end{proof}

We continue with the universality problems which are special cases of the containment problems.
\begin{thm} \label{thm:universality}
	The universality problems of tidy NMDAs are in PSPACE.
	
	\noindent
	The universality($<$) w.r.t.\ finite words, universality($\leq$) w.r.t.\ finite words, and universality($\leq$) w.r.t.\ infinite words are PSPACE-complete.
\end{thm}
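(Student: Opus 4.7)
The PSPACE upper bound for all four variants (strict and non-strict, over finite and infinite words) will follow by reducing universality to containment, and then invoking \Cref{lem:PSPACE}. The plan is as follows. Given a tidy input $\theta$-NMDA $\A$ and a rational threshold $\nu$, I would first construct a $\theta$-NMDA $\B$ that outputs the constant value $\nu$ on every (finite or infinite) word. Exactly as in the proof of \Cref{lem:exact_PSPACE}, such a $\B$ is obtained by taking a transducer representing $\theta$ (or reusing the underlying transition structure induced by $\A$), duplicating the initial state to remove incoming transitions, assigning weight $\nu$ to every transition leaving the initial state, weight $0$ to all other transitions, and copying the discount factors from $\theta$. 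This construction is polynomial and produces a $\theta$-NMDA, so $\A$ and $\B$ share the same choice function.

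Given this $\B$, universality($<$) of $\A$ with threshold $\nu$ holds iff for every word $w$ we have $\nu = \B(w) > \A(w)$, namely iff $\B$ is contained($>$) in $\A$; and universality($\leq$) holds iff $\B$ is contained($\geq$) in $\A$. Since $\A$ and $\B$ are both $\theta$-NMDAs and their sizes are polynomial, \Cref{lem:PSPACE} provides PSPACE algorithms for both the strict and non-strict containment checks, over both finite and infinite words, giving the PSPACE upper bound for all four universality variants.

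For PSPACE-hardness of the three claimed variants, no new work is needed: each hardness follows by quoting an earlier reduction from NFA universality, which targets integral NDAs (a sub-class of tidy NMDAs). Specifically, \Cref{lemma:lowerBoundUniversalityAndExact} yields hardness of universality($<$) over finite words, \Cref{lemma:lowerBoundContainmentAndEquivalence} yields hardness of universality($\leq$) over finite words, and \Cref{lemma:lowerBoundContainmentAndEquivalenceInf} yields hardness of universality($\leq$) over infinite words. Combined with the PSPACE upper bound, each of these three problems is PSPACE-complete. No obstacle is anticipated; the only care point is ensuring that the constructed constant-$\nu$ automaton is genuinely a $\theta$-NMDA (so that \Cref{lem:PSPACE} applies), which is immediate from the construction since its discount factors are inherited directly from a transducer for $\theta$.
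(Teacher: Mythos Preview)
Your proposal is correct and matches the paper's own proof essentially line for line: reduce universality to containment by building a constant-$\nu$ $\theta$-NMDA (via the construction from \Cref{lem:exact_PSPACE}), invoke \Cref{lem:PSPACE} for the PSPACE upper bound, and cite \Cref{lemma:lowerBoundUniversalityAndExact,lemma:lowerBoundContainmentAndEquivalence,lemma:lowerBoundContainmentAndEquivalenceInf} for the three hardness results. The only cosmetic difference is that the paper names the input automaton $\B$ and the constant automaton $\A$, whereas you swap these roles.
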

\begin{proof}
	Consider a tidy NMDA $\B$ for some choice function $\theta$, and a threshold $\nu$.
The universality($<$) is a special case of the containment($>$) problem, with a (PSPACE) initialization phase that creates a $\theta$-NMDA $\A$ for the constant function $\nu$ (the process of creating such an automaton is shown in the proof of \Cref{lem:exact_PSPACE}).
Similarly, the non-strict universality is a special case of the non-strict containment, when using a $\theta$-NMDA for $\nu$ as $\A$.

	Hardness directly follows from \Cref{lemma:lowerBoundUniversalityAndExact} for universality($<$) with respect to finite words, from \Cref{lemma:lowerBoundContainmentAndEquivalence} for universality($\leq$) with respect to finite words, and  from \Cref{lemma:lowerBoundContainmentAndEquivalenceInf} for universality($\leq$) with respect to infinite words.
\end{proof}

We summarize below the PSPACE algorithms of \Cref{lem:PSPACE,lem:exact_PSPACE} and the hardness proofs given in \Cref{sec:PspaceHardness}. Notice that while for some of the problems we provide PSPACE-completeness, for others we only show membership in PSPACE. Observe that since universality is a special case of containment, hardness of the former also shows hardness of the latter.

\begin{thm} \label{thm:containmentFinite}
	For every choice function $\theta$, the containment problem of $\theta$-NMDAs on finite words is PSPACE-complete for both strict and non-strict inequalities.
\end{thm}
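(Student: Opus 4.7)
The plan is to combine the PSPACE upper bound already established in \Cref{lem:PSPACE} with hardness inherited from the NDA universality results in \Cref{sec:PspaceHardness}. For membership, nothing new is required: \Cref{lem:PSPACE} has already shown that for every choice function $\theta$, both strict and non-strict containment problems of $\theta$-NMDAs on finite words lie in PSPACE, via the on-the-fly determinization of the union automaton.

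For the lower bound, I would reduce universality to containment. Given a $\theta$-NMDA $\B$ and a threshold $\nu\in\Rat$, I would first build in polynomial time a $\theta$-NMDA $\A_\nu$ computing the constant function $\nu$, exactly as described in the proof of \Cref{lem:exact_PSPACE}: take (a duplicated initial-state copy of) a transducer representing $\theta$, put weight $\nu$ on every transition out of the new initial state and weight $0$ on all others. By construction, $\A_\nu(w)=\nu$ for every $w\in\Sigma^+$. Consequently, $\B$ is universal($<$) with threshold $\nu$ iff $\A_\nu$ is contained($>$) in $\B$, and $\B$ is universal($\leq$) iff $\A_\nu$ is contained($\geq$) in $\B$.

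Now I would invoke the known hardness of NDA universality on finite words. Every integral NDA with discount factor $\lambda$ is a $\theta$-NMDA for the constant choice function $\theta\equiv\lambda$, so the PSPACE-hardness established in \Cref{lemma:lowerBoundUniversalityAndExact} for universality($<$) and in \Cref{lemma:lowerBoundContainmentAndEquivalence} for universality($\leq$) transfers, via the reduction above, to the corresponding containment problems within the $\theta$-NMDA class for this $\theta$. Combining this with the PSPACE membership yields PSPACE-completeness for both strict and non-strict containment, completing the proof.

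There is no genuine obstacle here; the only subtle point is to note that the hardness is witnessed already by a constant choice function, so it transfers cleanly to the general tidy setting without requiring any fresh construction.
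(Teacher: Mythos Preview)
Your proposal is correct and follows essentially the same approach as the paper: cite \Cref{lem:PSPACE} for membership, and for hardness reduce from the PSPACE-hard universality problems of \Cref{lemma:lowerBoundContainmentAndEquivalence,lemma:lowerBoundUniversalityAndExact}. The paper's proof is a two-line citation of exactly these results, relying on the remark (stated just before the theorem) that universality is a special case of containment; you simply spell out the constant-$\nu$ automaton reduction explicitly.
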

\begin{proof}
	A PSPACE algorithm is provided in \Cref{lem:PSPACE}, and hardness in  \Cref{lemma:lowerBoundContainmentAndEquivalence,lemma:lowerBoundUniversalityAndExact}.
\end{proof}

\begin{thm} \label{thm:containmentInfinite}
	For every choice function $\theta$, the containment problem of $\theta$-NMDAs w.r.t.\ infinite words and non-strict inequality is PSPACE-complete.
\end{thm}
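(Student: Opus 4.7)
Membership in PSPACE is already provided by \Cref{lem:PSPACE}, which handles the non-strict containment problem of $\theta$-NMDAs on infinite words: the algorithm nondeterministically builds a candidate witness word letter by letter, on-the-fly determinizes the union $\C=\A\cup\B$ via the construction behind \Cref{thm:DetTidy}, and only keeps in memory the current gap-vector state of the deterministic automaton $\D$ together with one previously guessed copy (for detecting the lasso structure characterizing an infinite witness). Since by \Cref{thm:DetTidy} every state of $\D$ admits a polynomial-size representation, the whole search runs in nondeterministic polynomial space, so I simply invoke \Cref{lem:PSPACE} for the upper bound.

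For the matching lower bound, the plan is to reduce from the universality($\leq$) problem of integral NDAs on infinite words, which is PSPACE-hard by \Cref{lemma:lowerBoundContainmentAndEquivalenceInf}. An integral NDA with discount factor $\lambda$ is precisely a $\theta$-NMDA for the constant choice function $\theta(u)=\lambda$ for every $u\in\Sigma^+$, so the hardness already lives inside one of our classes. Given an instance $(\B,\nu)$ of universality($\leq$) with $\B$ an integral NDA of discount factor $\lambda$, I construct a constant-valued $\theta$-DMDA $\A$ that outputs $\nu$ on every infinite word. Concretely, $\A$ has a single state with a self-loop for every $\sigma\in\Sigma$ of discount factor $\lambda$ and weight $\nu\cdot\frac{\lambda-1}{\lambda}$, so that by \Cref{lem:undecidabilityContainment} (or directly by summing the geometric series) the value of $\A$ on any infinite word equals $\nu$.

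With this reduction, by the paper's containment convention, $\A$ non-strictly contains $\B$ iff $\A(w)\geq\B(w)$ for every infinite word $w$, iff $\B(w)\leq\nu$ for every infinite word $w$, i.e., iff $\B$ is universal($\leq$) w.r.t.\ $\nu$. Combining this hardness with the PSPACE upper bound from \Cref{lem:PSPACE} yields PSPACE-completeness. I foresee no serious obstacle: both ingredients are essentially in place, and the only minor point to verify is that $\A$ and $\B$ share the same choice function (they do, since $\theta$ is constant $\lambda$ on both sides) and that the geometric-sum argument gives exactly $\nu$ on infinite words rather than merely in the limit on finite prefixes, which follows from $\sum_{i=0}^{\infty}\nu\cdot\frac{\lambda-1}{\lambda}\cdot\lambda^{-i}=\nu$.
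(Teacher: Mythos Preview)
Your proposal is correct and follows essentially the same approach as the paper: invoke \Cref{lem:PSPACE} for the PSPACE upper bound and derive hardness from \Cref{lemma:lowerBoundContainmentAndEquivalenceInf} by viewing universality($\leq$) as the special case of non-strict containment where one automaton is a constant-$\nu$ $\theta$-NMDA. The paper's proof is a two-line citation of exactly these two results; you simply spell out the reduction explicitly (and build the constant automaton with uniform weights $\nu\cdot\tfrac{\lambda-1}{\lambda}$, whereas the paper, in the proof of \Cref{lem:exact_PSPACE}, uses weight $\nu$ on the first transition and $0$ afterward --- both yield $\nu$ on every infinite word).

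One minor inaccuracy worth noting: in your summary of the PSPACE algorithm you mention storing a second gap-vector state ``for detecting the lasso structure characterizing an infinite witness.'' That lasso detection is needed for \emph{strict} containment on infinite words; for the non-strict case, the witness condition in \Cref{lem:PSPACE} is simply that some $\A$-state has gap $0$ while every $\B$-state has gap $\infty$, so no state repetition is required. This does not affect correctness since you ultimately just cite \Cref{lem:PSPACE}, but the expository description is slightly off.
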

\begin{proof}
	A PSPACE algorithm is provided in \Cref{lem:PSPACE}, and hardness in \Cref{lemma:lowerBoundContainmentAndEquivalenceInf}. 
\end{proof}

\begin{thm} \label{thm:equivalence}
	For every choice function $\theta$, the equivalence problem of $\theta$-NMDAs,  w.r.t.\ both finite and infinite words, is PSPACE-complete.
\end{thm}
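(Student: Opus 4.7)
The plan is to establish membership in PSPACE by reduction to the already-settled non-strict containment problems, and to obtain hardness directly from the reductions that were built for the lower-bound lemmas on NDAs.

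For membership, I would observe that $\A \equiv \B$ if and only if both $\A(w) \geq \B(w)$ for all words $w$ and $\B(w) \geq \A(w)$ for all words $w$, i.e., $\A$ is non-strictly contained in $\B$ and vice versa. Since both $\A$ and $\B$ are $\theta$-NMDAs over the same choice function, \Cref{thm:containmentFinite,thm:containmentInfinite} give PSPACE algorithms for each of these two containment checks, and PSPACE is closed under sequential composition, so the equivalence test runs in PSPACE for both finite and infinite words. Alternatively, and perhaps cleaner, one can reuse the on-the-fly determinization algorithm from \Cref{lem:PSPACE} almost verbatim: nondeterministically generate a witness $w$ for inequivalence, determinize the union NMDA $\C$ of $\A$ and $\B$ on-the-fly as in \Cref{thm:DetTidy}, and accept if the current state $\tuple{g_1,\ldots,g_n,g_{n+1},\ldots,g_{n+m}}$ of the determinized DMDA $\D$ witnesses either $\A(w)<\B(w)$ or $\B(w)<\A(w)$; the exact conditions (a $0$-gap on one side with strictly positive gaps on the other, for finite words; and for infinite words, a repeated configuration with a non-$\infty$ gap on one side and all-$\infty$ gaps on the other) are already worked out in the proof of \Cref{lem:PSPACE}. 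Since each state of $\D$ is of polynomial size in $|\A|+|\B|$, the nondeterministic procedure uses polynomial space, and PSPACE = NPSPACE gives the result.

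For hardness, the reductions from NFA universality given in \Cref{lemma:lowerBoundContainmentAndEquivalence} (finite words) and \Cref{lemma:lowerBoundContainmentAndEquivalenceInf} (infinite words) already produce an integral NDA $\Tilde{\A}$ that is equivalent to the constant-$0$ automaton if and only if the input NFA is universal. Since NDAs are a special case of $\theta$-NMDAs for the constant choice function $\theta \equiv \lambda$ (where $\lambda=2$ in those reductions), it suffices to also produce, within the same class of $\theta$-NMDAs, a constant-$0$ automaton as the second input to the equivalence test. This is straightforward: take a transducer $\T$ representing $\theta$ (a single-state transducer outputting $\lambda$ on every letter in this case) and view it as a DMDA with all weights set to $0$, as done in the proof of \Cref{lem:exact_PSPACE}. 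Then NFA universality reduces in polynomial time to equivalence of two $\theta$-NMDAs, giving PSPACE-hardness.

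No new technical obstacle is expected: the only mild point to check is that the reductions of \Cref{lemma:lowerBoundContainmentAndEquivalence,lemma:lowerBoundContainmentAndEquivalenceInf} really yield automata with a \emph{fixed} single integral discount factor, so that both sides of the candidate equivalence can be presented as $\theta$-NMDAs for a common choice function; this is immediate from inspection of those constructions, which all use discount factor $2$ on every transition.
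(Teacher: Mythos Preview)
Your proposal is correct and follows essentially the same approach as the paper: membership via the reduction $\A\equiv\B \iff (\A\geq\B \text{ and } \B\geq\A)$ and appeal to \Cref{thm:containmentFinite,thm:containmentInfinite}, and hardness via \Cref{lemma:lowerBoundContainmentAndEquivalence,lemma:lowerBoundContainmentAndEquivalenceInf}. Your additional remarks (the alternative direct on-the-fly algorithm and the explicit verification that both sides of the hardness instance share a common choice function) are helpful elaborations but not departures from the paper's argument.
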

\begin{proof}
A PSPACE algorithm for equivalence directly follows from the fact that $\A\equiv\B$ if and only if $\A\geq \B$ and $\B\geq \A$, thus from \Cref{thm:containmentFinite,thm:containmentInfinite}.
Hardness is provided in \Cref{lemma:lowerBoundContainmentAndEquivalence,lemma:lowerBoundContainmentAndEquivalenceInf}
\end{proof}

\begin{thm} \label{thm:containmentInfiniteStrict}
	For every choice function $\theta$, the containment problem of $\theta$-NMDAs w.r.t.\ infinite words and strict inequality is in PSPACE.
\end{thm}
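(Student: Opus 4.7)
The plan is to observe that this statement is essentially already contained in \Cref{lem:PSPACE}, which establishes PSPACE membership for both strict and non-strict containment, over both finite and infinite words. Specifically, the case analysis in the proof of \Cref{lem:PSPACE} explicitly treats the ``Strict containment($>$) infinite words'' case, and that analysis fully suffices here.

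The first step is to recall the nondeterministic procedure: given $\theta$-NMDAs $\A$ and $\B$, form their union $\C$ (which is again a $\theta$-NMDA by tidiness of both and by their shared $\theta$), and perform on-the-fly determinization of $\C$ into a $\theta$-DMDA $\D$ per the construction of \Cref{sec:Determinizability}. By \Cref{thm:DetTidy}, each state of $\D$ is representable in space polynomial in $|\A|{+}|\B|$, since it stores gaps (drawn from the polynomially-representable set $\bar G$ from the proof of \Cref{lem:Termination}) for each of the $|Q_\A|+|Q_\B|$ original states. The second step is to guess, letter by letter, a lasso word $u_1 u_2^\omega$ by guessing $u_1$, recording the reached state $S$ of $\D$, guessing $u_2$, and verifying that $\D$ returns to $S$. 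Finally, check the certificate condition identified in \Cref{lem:PSPACE}: that $S$ contains some gap $g_i \neq \infty$ with $i\in[1..n]$ (indicating an $\A$-state whose run remains competitive around the loop). This condition, as argued in \Cref{lem:PSPACE}, precisely characterizes the existence of an infinite word $w$ with $\A(w)\leq \B(w)$.

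Since PSPACE is closed under complement and under nondeterminism (PSPACE = coNPSPACE = NPSPACE), the above nondeterministic polynomial-space procedure for non-containment yields a deterministic PSPACE algorithm for strict containment on infinite words. There is no new obstacle beyond what is already resolved in \Cref{lem:PSPACE}; in particular, the correctness direction establishing that a repeating state with a finite $\A$-gap witnesses $\A(u_1 u_2^\omega)\leq \B(u_1 u_2^\omega)$ was handled there via the limit argument $\Cost(p_i, u_1 u_2^k) - \C(u_1 u_2^k) \to 0$ as $k\to\infty$, using the fact that the accumulated discount factor $\rho(u_1 u_2^k)$ grows without bound. The only distinction from \Cref{thm:containmentInfinite} is that here we do not claim PSPACE-hardness, since the reduction of \Cref{lemma:lowerBoundContainmentAndEquivalenceInf} produces instances where $\A$ and $\B$ may agree, giving hardness only for the non-strict variant; accordingly, the theorem asserts only PSPACE membership.
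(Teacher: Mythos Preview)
Your proposal is correct and takes essentially the same approach as the paper: the paper's own proof simply states that the result follows directly from \Cref{lem:PSPACE}, and your elaboration faithfully unpacks that lemma's argument for the strict-containment-on-infinite-words case.
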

\begin{proof}
	Directly follows from \Cref{lem:PSPACE}. 
\end{proof}

\begin{thm}\label{thm:exact}
	The exact-value problem of tidy NMDAs is in PSPACE (and PSPACE-complete w.r.t.\ finite words).
\end{thm}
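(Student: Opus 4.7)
My plan is to obtain the theorem by simply assembling two results that are already in place earlier in the paper, so no genuinely new argument is needed.

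For the PSPACE upper bound in both the finite-word and the infinite-word settings, I would cite \Cref{lem:exact_PSPACE} directly. That lemma already gives a nondeterministic polynomial-space procedure: build (on-the-fly, or via a preprocessing step on $\A$) a $\theta$-NMDA $\B$ expressing the constant function $\nu$, then determinize the union $\C$ of $\A$ and $\B$ on-the-fly while nondeterministically guessing a witnessing word $u$ (letter by letter for finite words; as a lasso $u_1 u_2^\omega$ for infinite words), checking after each step whether the current determinized state carries simultaneous $0$-gaps on both an $\A$-state and a $\B$-state (finite case) or a repetition of a state with non-$\infty$ gaps on both sides (infinite case). By \Cref{thm:DetTidy}, each determinized state admits a polynomial representation, which is what keeps the procedure in PSPACE.

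For the PSPACE-hardness on finite words, I would invoke \Cref{lemma:lowerBoundUniversalityAndExact}, which already establishes PSPACE-hardness of the exact-value problem for integral NDAs on finite words by reduction from NFA universality. Since every integral NDA is trivially a tidy NMDA (the underlying choice function is the constant function returning the single discount factor of the NDA, and its witnessing transducer is the one-state Mealy machine outputting that value on every letter), this hardness lifts verbatim to the class of tidy NMDAs.

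The only thing to make sure of is that \Cref{lemma:lowerBoundUniversalityAndExact} was proved for the finite-word setting, which it explicitly is, so no additional adaptation is required. There is no real obstacle here; the theorem is essentially a bookkeeping statement combining \Cref{lem:exact_PSPACE} with \Cref{lemma:lowerBoundUniversalityAndExact}.
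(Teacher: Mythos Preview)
Your proposal is correct and matches the paper's own proof essentially verbatim: the paper simply cites \Cref{lem:exact_PSPACE} for the PSPACE upper bound and \Cref{lemma:lowerBoundUniversalityAndExact} for PSPACE-hardness on finite words. Your additional remark that integral NDAs are trivially tidy (so the hardness transfers) is implicit in the paper and harmless to make explicit.
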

\begin{proof}
	PSPACE algorithms are given in \Cref{lem:exact_PSPACE}, and hardness with respect to finite words in \Cref{lemma:lowerBoundUniversalityAndExact}.
\end{proof}

\section{Conclusions and Future Work}\label{sec:Conclusions}
The measure functions most commonly used in the field of quantitative verification, whether for describing system properties \cite{DiscountingInSystems,Cha07,DiscountedDeterministicMarkov}, automata valuation schemes \cite{CDH10,BH12,BH14,ComparatorAutomataInQuantitativeVerification}, game winning conditions \cite{ZP96,Andersson06,DDGRT10}, or temporal specifications \cite{AFHMS05,THHY12,ABK14,BCHK14}, are the limit-average (mean payoff) and the discounted-sum functions.

Limit-average automata cannot always be determinized \cite{CDH10} and checking their (non-strict) universality is undecidable \cite{DDGRT10}. Therefore, the tendency is to only use deterministic such automata, possibly with the addition of algebraic operations on them \cite{CDEHR10}. 

Discounted-sum automata with an arbitrary rational discount factor also cannot always be determinized \cite{CDH10} and are not closed under algebraic operations \cite{BH14}. Yet, with an arbitrary integral discount factor, they do enjoy all of these closure properties and their decision problems are decidable \cite{BH14}. They thus provide a very interesting automata class for quantitative verification. Yet, they have a main drawback of only allowing a single discount factor. 

We define a rich class of discounted-sum automata with multiple integral factors (tidy NMDAs) that strictly extends the expressiveness of automata with a single factor, while enjoying all of the good properties of the latter, including the same complexity of the required decision problems.

While we show that the containment problem of two tidy NMDAs with the same choice function is decidable, and of 
general integral NMDAs is undecidable, we leave for future work the question with respect to two tidy NMDAs with different choice functions. Though the problem with respect to two NDAs with different discount factors is decidable in PSPACE \cite{BH23}, we believe that considering two different choice functions requires more involved techniques.

Another natural future direction is to consider NMDAs with real, as opposed to rational, discount factors, and in particular NMDAs with Pisot discount factors -- while NDAs with a single arbitrary rational discount factor do not behave well, it was recently shown that there are irrational discount factors with which NDAs are well behaved, and specifically that for every Pisot number $\lambda$, the class of $\lambda$-NDAs enjoys all of the closure properties that integral NDAs enjoy \cite{Bok24}. There is thus an interesting potential for extending tidy NMDAs to allow for multiple Pisot discount factors, while preserving their good properties.

\section*{Acknowledgments}
We thank an anonymous reviewer for their insightful comments and constructive suggestions, which helped clarify the explanations and simplify the proofs.

\bibliographystyle{alphaurl}
\bibliography{bib}

\end{document}